\numberwithin{equation}{section}
\theoremstyle{plain}
\newtheorem{thm}{Theorem}
\newtheorem{prop}{Proposition}[section]
\newtheorem{lem}[prop]{Lemma}
\newtheorem*{thm*}{Theorem}
\newtheorem{thmrough}{Theorem}
\newtheorem*{conj*}{Conjecture}
\newtheorem{conj}{Conjecture}
\theoremstyle{definition}
\newtheorem{defn}[prop]{Definition}
\theoremstyle{remark}
\newtheorem{rk}[prop]{Remark}
\crefname{thm}{Theorem}{Theorems} 
\crefname{thmrough}{Theorem}{Theorems} 
\crefname{lem}{Lemma}{Lemmas} 
\crefname{conj}{Conjecture}{Conjectures}
\crefname{prop}{Proposition}{Propositions}
\renewcommand{\Bbb}{\mathbb}
\newcommand{\ve}{\varepsilon}
\newcommand{\les}{\lesssim}
\newcommand{\id}{\operatorname{id}}
\newcommand{\Ric}{\mathrm{Ric}}
\newcommand{\out}{\mathrm{out}}
\newcommand{\ing}{\mathrm{in}}
\newcommand{\loc}{\mathrm{loc}}
\newcommand{\Int}{\operatorname{int}}
\newcommand{\R}{\Bbb R}
\newcommand{\stab}{\mathrm{stab}}
\renewcommand{\paragraph}{%
  \@startsection{paragraph}{4}%
  {\z@}{1.25ex \@plus 1ex \@minus .2ex}{-1em}%
  {\normalfont\normalsize\bfseries}%
}
\begin{document}

\title{Nonlinear stability of extremal Reissner--Nordstr\"om black holes \\ in spherical symmetry}

\author[1]{Yannis Angelopoulos\thanks{yannis@bimsa.cn}}
\author[2]{Christoph~Kehle\thanks{kehle@mit.edu}}
\author[3]{Ryan Unger\thanks{runger@berkeley.edu}}
\affil[1]{\small Beijing Institute of Mathematical Sciences and Applications,

No.~544, Hefangkou Village, Huairou District, 101408 Beijing, China \vskip.1pc \ 
}
\affil[2]{\small  Massachusetts Institute of Technology, Department of Mathematics,

Building 2, 77 Massachusetts Avenue, Cambridge, MA 02139, United States of America \vskip.1pc \ 
}
 \affil[3]{\small  University of California, Berkeley,  Department of Mathematics,
	
970 Evans Hall, Berkeley, CA 94720, United States of America \vskip.1pc \  
	}

\date{January 14, 2026}

\maketitle

\begin{abstract}
In this paper, we prove the codimension-one nonlinear asymptotic stability of the extremal Reissner--Nordstr\"om family of black holes in the spherically symmetric Einstein--Maxwell-neutral scalar field model, up to and including the event horizon. 

 More precisely, we show that there exists a teleologically defined, codimension-one ``submanifold'' $\mathfrak M_\mathrm{stab}$ of the moduli space of spherically symmetric characteristic data for the Einstein--Maxwell-scalar field system lying close to the extremal Reissner--Nordstr\"om family, such that any data in $\mathfrak M_\mathrm{stab}$ evolve into a solution with the following properties as time goes to infinity: (i) the metric decays to a member of the extremal Reissner--Nordstr\"om family uniformly up to the event horizon, (ii) the scalar field decays to zero pointwise and in an appropriate energy norm, (iii) the first translation-invariant ingoing null derivative of the scalar field is approximately constant on the event horizon $\mathcal H^+$, (iv) for ``generic'' data, the second translation-invariant ingoing null derivative of the scalar field grows linearly along the event horizon. Due to the coupling of the scalar field to the geometry via the Einstein equations, suitable components of the Ricci tensor exhibit non-decay and growth phenomena along the event horizon.

 Points (i) and (ii) above reflect the ``stability'' of the extremal Reissner--Nordstr\"om family and points (iii) and (iv) verify the presence of the celebrated \emph{Aretakis instability} \cite{Aretakis-instability-2} for the linear wave equation on extremal Reissner--Nordstr\"om black holes in the full nonlinear Einstein--Maxwell-scalar field model.
\end{abstract}

\thispagestyle{empty}
\newpage 
\tableofcontents
\thispagestyle{empty}
\newpage

\section{Introduction}

Extremal black holes are special solutions of Einstein's equations of general relativity which have absolute zero temperature in the celebrated thermodynamic analogy of black hole mechanics. The simplest examples of extremal black holes are given by the \emph{extremal Reissner--Nordstr\"om} (ERN) metrics
\begin{equation}
    g_\mathrm{ERN}\doteq -\left(1-\frac{M}{r}\right)^2dt^2+\left(1-\frac{M}{r}\right)^{-2}dr^2+r^2(d\vartheta^2+\sin^2\vartheta\,d\varphi^2),\label{eq:ERN-1}
\end{equation}
where $M$ is a positive parameter known as the \emph{mass}. The metric \eqref{eq:ERN-1} solves the \emph{Einstein--Maxwell} equations,
\begin{equation}\label{eq:Einstein--Maxwell}
    R_{\mu\nu}-\tfrac 12 Rg_{\mu\nu} = 2F_{\mu\alpha}F_\nu{}^\alpha-\tfrac 12 g_{\mu\nu}F_{\alpha\beta}F^{\alpha\beta}\quad\text{and}\quad \nabla_\mu F^{\mu\nu}=0,
\end{equation}
and is spherically symmetric, asymptotically flat, and static, with time-translation Killing vector field $T\doteq \partial_t$. 

The extremal Reissner--Nordstr\"om metrics \eqref{eq:ERN-1} arise as an exceptional one-parameter subfamily of the full \emph{Reissner--Nordstr\"om family} \cite{reissner1916eigengravitation, nordstrom1918energy} of solutions to the Einstein--Maxwell equations,
\begin{equation}
  \label{eq:RN}  g_{M,e}\doteq -\left(1-\frac{2M}{r}+\frac{e^2}{r^2}\right)dt^2+\left(1-\frac{2M}{r}+\frac{e^2}{r^2}\right)^{-1}dr^2+r^2(d\vartheta^2+\sin^2\vartheta\,d\varphi^2),
\end{equation}
where $e$ is a real parameter representing the \emph{charge} of the electromagnetic field. The extremal case corresponds to the parameter values $|e|=M$. For the parameter range $|e|\le M$, the metric $g_{M,e}$ describes a black hole spacetime. When $|e|<M$, the solution is called \emph{subextremal}, and the $|e|=M$ case corresponds to the extremal Reissner--Nordstr\"om metric \eqref{eq:ERN-1} above. When $e=0$, $g_{M,e}$ reduces to the celebrated \emph{Schwarzschild solution} \cite{schw} of the Einstein vacuum equations. When $|e|>M$, the \emph{superextremal case}, the metric $g_{M,e}$ no longer describes a black hole. The role of superextremality will be discussed in \cref{sec:conjectures} below.

For any of the Reissner--Nordstr\"om black hole spacetimes, the Killing field satisfies 
\begin{equation}\label{eq:surface-gravity}
    \nabla_TT|_{\mathcal H^+}=\varkappa T|_{\mathcal H^+},
\end{equation} where $\mathcal H^+$ denotes the \emph{event horizon}, the boundary of the black hole region. The number $\varkappa=\varkappa(M,e)$, called the \emph{surface gravity} of $\mathcal H^+$, is given by 
\begin{equation*}
    \varkappa(M,e)\doteq \frac{\sqrt{M^2-e^2}}{(M+\sqrt{M^2-e^2})^2}
\end{equation*}
and quantifies the celebrated \emph{horizon redshift effect}: the null generators of $\mathcal H^+$ have exponentially decaying energy, with rate determined by $\varkappa$ (see for instance \cite{Sbierski-Gaussian}). 

The event horizon of subextremal Reissner--Nordstr\"om has $\varkappa>0$, while the event horizon of extremal Reissner--Nordstr\"om has $\varkappa=0$, a distinction which has fundamental repercussions for the behavior of perturbations of these spacetimes. In the seminal work \cite{Price-law}, Dafermos and Rodnianski proved the nonlinear asymptotic stability of the \emph{subextremal} Reissner--Nordstr\"om family as solutions of the Einstein--Maxwell equations coupled to a neutral scalar field in spherical symmetry. The horizon redshift effect is central to \cite{Price-law} and is a cornerstone of our understanding of linear waves on subextremal Reissner--Nordstr\"om black holes without symmetry assumptions \cite{dafermos2009red,dafermos2013lectures}. By the work of Dafermos--Holzegel--Rodnianski \cite{DHR19} and Blue \cite{Blue-maxwell} (see also \cite{Pasqualotto-Maxwell}) for $e=0$ and Giorgi \cite{Elena-small-Q,Elena-linear-stability} for $|e|<M$, subextremal Reissner--Nordstr\"om is now known to be linearly stable in Einstein--Maxwell theory outside of symmetry.

The theory of linear waves on extremal black holes---and by extension, nonlinear perturbations of extremal black holes---is very different. In a remarkable series of papers \cite{Aretakis-instability-1,Aretakis-instability-2,Aretakis-instability-3}, Aretakis showed that ingoing null derivatives of solutions to the linear wave equation on extremal Reissner--Nordstr\"om---even those arising from well-localized initial data---generically do not decay on the event horizon, and higher derivatives may even grow polynomially in time. This horizon instability for the linear wave equation, which has become known as the \emph{Aretakis instability}, has also been extended to gravitational perturbations \cite{LMR13,apetroaie} of extremal Reissner--Nordstr\"om and to axisymmetric linear waves on extremal Kerr \cite{Aretakis-Kerr,lucietti2012gravitational}. Gajic has recently shown that extremal Kerr is subject to additional stronger instabilities arising from higher azimuthal modes \cite{Gajic23} (see also the earlier heuristic analysis \cite{cgz-exkerr}), which will be discussed further in \cref{sec:EK} below. 

These horizon instabilities (along with the specter of superextremality which we will address in \cref{sec:conjectures} below) present a substantial obstacle to understanding the moduli space of solutions to the Einstein equations near extremal Reissner--Nordstr\"om, Kerr, and Kerr--Newman black holes. In a pioneering numerical study \cite{Reall-numerical}, Murata, Reall, and Tanahashi studied spherically symmetric perturbations of extremal Reissner--Nordstr\"om in the Einstein--Maxwell-neutral scalar field model and observed that the Aretakis instability for the scalar field is still activated on the dynamical background, but that the geometry is not completely disrupted in the process. These numerical results and rigorous work on nonlinear model problems by the first-named author, Aretakis, and Gajic \cite{A16,AAG-nonlinear-1,AAG20}, have given rise to the hope that extremal Reissner--Nordstr\"om could be \emph{stable} in spite of the Aretakis instability; see \cite[Conjecture~IV.2]{DHRT} and the recent essay by Dafermos \cite{Daf24}.

\subsection{Stability and instability of extremal Reissner--Nordstr\"om for the spherically symmetric Einstein--Maxwell-neutral scalar field system}

In this paper, we initiate the rigorous study of the (in)stability properties of extremal Reissner--Nordstr\"om black holes as solutions to the full nonlinear Einstein field equations. We work with the spherically symmetric Einstein--Maxwell-neutral scalar field model, which is the same model as in Murata--Reall--Tanahashi \cite{Reall-numerical} (and has been used in other influential works in recent years \cite{Dafermos-thesis,dafermos2005interior,Price-law,LukOhI}). This system consists of a spherically symmetric, charged spacetime $(\mathcal M^{3+1},g,F)$ together with a spherically symmetric massless scalar field $\phi:\mathcal M\to\Bbb R$ satisfying the linear wave equation \begin{equation}
    \Box_g\phi=0\label{eq:wave-1}
\end{equation} on the dynamical background, with total energy-momentum tensor given by
\begin{equation*}
    T_{\mu\nu}\doteq F_{\mu\alpha}F_\nu{}^\alpha-\tfrac 14 g_{\mu\nu}F_{\alpha\beta}F^{\alpha\beta}+\partial_\mu\phi\partial_\nu\phi -\tfrac 12g_{\mu\nu}\partial_\alpha\phi\partial^\alpha\phi.
\end{equation*}
 This is one of the simplest self-gravitating models in which one can entertain dynamical nonlinear perturbations of Reissner--Nordstr\"om, as a toy model for the electro-vacuum equations. See already \cref{sec:the-model} for the precise definitions and equations of the model.

We now state rough versions of our main theorems; the detailed statements and associated definitions will be presented in \cref{sec:setup-statement} below.

\begin{thmrough}[Codimension-one nonlinear stability of ERN, rough version]\label{thm:stability-rough} Let $\mathfrak M$ denote the moduli space of characteristic data for the spherically symmetric Einstein--Maxwell-neutral scalar field system posed on a bifurcate null hypersurface $C_\out\cup\underline C{}_\ing$, as in \cref{fig:stability-intro} below, which lie close to extremal Reissner--Nordstr\"om in an appropriate norm. There exists a ``codimension-one submanifold'' $\mathfrak M_\stab\subset \mathfrak M$ such that any data in $\mathfrak M_\stab$ evolve into a spacetime with the following properties:
\begin{enumerate}[(i)]
    \item Future null infinity $\mathcal I^+$ is complete and the causal past of future null infinity, $J^-(\mathcal I^+)$, is bounded by a regular event horizon $\mathcal H^+$, which itself bounds a nonempty black hole region $\mathcal{BH}\doteq \mathcal M\setminus J^-(\mathcal I^+)$.
    \item The metric remains close to the initial extremal Reissner--Nordstr\"om metric in the domain of outer communication and appropriately defined energy fluxes and pointwise $C^1$ norms of the scalar field $\phi$ are bounded in terms of their initial data on $C_\out\cup\underline C{}_\ing$.
    \item The metric decays polynomially in $C^0$ (as an appropriate notion of ``time'' tends to infinity) to a nearby member of the extremal Reissner--Nordstr\"om family, relative to a teleologically defined double null gauge uniformly in the entire domain of outer communication. The renormalized Hawking mass (see already \cref{sec:double-null-gauge}) converges uniformly to the final mass of the black hole. The scalar field decays polynomially to zero pointwise and in an appropriate energy norm. 
    \item The spacetime does not contain strictly trapped surfaces. Any marginally trapped surfaces lie on $\mathcal H^+$.  
\end{enumerate}
\end{thmrough}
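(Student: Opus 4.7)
The plan is to run a teleological bootstrap argument in a double null gauge that is anchored to the (unknown) final extremal Reissner--Nordstr\"om solution of mass $M_f$ to which the spacetime settles. Because one cannot know $M_f$ \emph{a priori} from the initial data, the setup is a guess-and-verify scheme: one posits a pair $(M_f,$ gauge$)$ and treats the metric quantities $(\Omega,r)$ and the scalar field $\phi$ as perturbations of the ERN values. One then considers the largest development $\mathcal M_\mathrm{boot}$ on which a hierarchy of weighted bootstrap assumptions (closeness of the renormalized Hawking mass to $M_f$, smallness of weighted energy fluxes of $\phi$, and $C^1$ pointwise bounds on $\phi$) hold, and argues by continuity that these assumptions can be improved on $\mathcal M_\mathrm{boot}$, so the development must be global. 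Completeness of $\mathcal I^+$, regularity of $\mathcal H^+$, and nonemptiness of the black hole region (item (i)) fall out of the global existence of the gauge and the $C^0$ proximity of $(\Omega,r)$ to their ERN values.

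\textbf{Decay mechanism and the degenerate horizon.} Away from $\mathcal H^+$, one can extract polynomial decay for $\phi$ via an $r^p$-weighted hierarchy \`a la Dafermos--Rodnianski coupled to an integrated local energy decay estimate that handles the trapping at the photon sphere; this part mimics the subextremal treatment of \cite{Price-law}. The principal novelty must occur near $\mathcal H^+$, where the redshift effect is absent ($\varkappa=0$). Here the plan is to adapt the Aretakis-type degenerate conservation law and the associated $v$-weighted hierarchy (with a weight that degenerates at $\mathcal H^+$, encoding the vanishing surface gravity) to the dynamical setting, yielding polynomial decay of $\phi$ and of the transversal derivative $\partial_v\phi$ in the interior region. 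The transversal derivative $\partial_u\phi$ along $\mathcal H^+$ is allowed to remain bounded but \emph{not} to decay; higher transversal derivatives are allowed to grow polynomially. The crucial structural fact to be exploited is that the nonlinearity in the Einstein equations couples $\partial_u\phi$ only through the combination $\partial_u\phi\cdot\partial_v\phi$ in the stress-energy tensor, so the decaying factor $\partial_v\phi$ protects the geometry even as $\partial_u\phi$ remains large.

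\textbf{Codimension-one condition and the final mass.} The codimension-one nature of $\mathfrak M_\mathrm{stab}$ reflects the presence of a single unstable direction in the moduli space, heuristically the direction that would push the spacetime either into the subextremal regime (generic perturbations of ERN, excluded here) or, conjecturally, into the superextremal regime. The plan is to realize $\mathfrak M_\mathrm{stab}$ as the graph of a function from a codimension-one hyperplane into $\mathfrak M$, with the missing direction tuned by an implicit function / shooting argument to kill the unstable mode. Concretely, one identifies a late-time functional $\mathcal F[\phi,g]$ whose vanishing is equivalent to saturating extremality; one then solves $\mathcal F=0$ via continuous dependence on initial data, and the selected final mass $M_f$ emerges as a by-product. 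The trapped surface statement (iv) follows immediately from the $C^0$ closeness to ERN, since the extremal Reissner--Nordstr\"om horizon is marginally trapped and sufficient proximity precludes strictly trapped surfaces outside $\mathcal H^+$.

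\textbf{Main obstacle.} The hardest step will be closing the bootstrap up to and including $\mathcal H^+$ in the presence of the nonlinear Aretakis instability. The non-decay of $\partial_u\phi$ and the polynomial growth of $\partial_u^2\phi$ feed back into the Ricci tensor, so one must show that the ``bad'' terms appearing in the propagation equations for $\Omega,r$ and the Hawking mass arrange themselves into products with decaying factors or into derivatives of the extremality constraint that is fixed teleologically. This requires a very careful separation between horizon-tangential quantities (which must decay) and horizon-transversal quantities (which are permitted to misbehave), together with a regularity scheme truncated at a finite number of derivatives compatible with the expected Aretakis-type growth. Closing this loop --- stability of the geometry driven by an unstable scalar field, with the instability confined to transversal directions --- is the main new analytic challenge in comparison with the subextremal case.
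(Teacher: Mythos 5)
Your overall architecture (teleological bootstrap, $r^p$-hierarchy at infinity, degenerate weighted hierarchy at the horizon, tuning one parameter to hit extremality) matches the paper at a high level, but two of your key claimed mechanisms are wrong as stated and would cause the argument to fail.

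First, the ``crucial structural fact'' you invoke --- that the nonlinearity sees $\partial_u\phi$ only through $\partial_u\phi\cdot\partial_v\phi$ --- is false. In spherical symmetry, $T^{\mathrm{SF}}_{uu}=(\partial_u\phi)^2$, and the equations that threaten to destroy the geometry near $\mathcal H^+$ are precisely $\partial_u\varpi = (1-\mu)\frac{r^2}{2\nu}(\partial_u\phi)^2$ and $\partial_u\kappa=\frac{r\kappa}{\nu}(\partial_u\phi)^2$. There is no protective factor of $\partial_v\phi$ here. The actual mechanism the paper exploits is the factor of $1-\mu\sim(\bar r-M)^2$ (respectively, the weight $(\bar r-M)^{2-p}$ one puts in front of $\kappa-\bar\kappa$) degenerating at the horizon, combined with the $(\bar r-M)^{2-p}$-weighted flux hierarchy $\underline{\mathcal E}_p$ --- and critically it must be pushed beyond $p=2$ all the way to $p=3-\delta$, because the raw $\partial_u\kappa$ equation only sees $\underline{\mathcal E}_2$, which decays like $\tau^{-1+\delta}$, too slowly to close. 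Without the degenerate weighting, the bootstrap for the geometry does not close, and your proposed rescue via a $\partial_v\phi$ factor is not available.

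Second, the absence-of-trapped-surfaces claim does not ``follow immediately from $C^0$ closeness to ERN.'' Trapped surfaces are a sign condition on $\lambda=\partial_vr$, not controlled by $C^0$ proximity of the metric; moreover the statement concerns the region \emph{behind} $\mathcal H^+$, where the bootstrap provides no closeness to a background solution at all. The paper's proof is a genuinely separate rigidity argument: (a) Raychaudhuri monotonicities of $\varpi$ and $r$ in $\{\lambda\ge0\}$ force any outermost marginally trapped sphere to have $r=\varpi=M$ and hence lie on $\mathcal H^+$ (using that $r,\varpi\to M$ along $\mathcal H^+$); (b) a Taylor expansion of $\lambda$ in $u$ along $\mathcal H^+$, using $\partial_u\lambda=0$ and $\partial_u^2\lambda>0$ at a marginally trapped point, rules out $\lambda\le0$ behind the horizon. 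You would need something of this kind.

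Two smaller remarks. Your implicit-function/shooting characterization of $\mathfrak M_\mathrm{stab}$ differs from the paper's construction (nested compact intervals $\mathfrak A_i$ on dyadic timescales combined with an intermediate value theorem argument). The implicit function theorem needs differentiability of the solution map with respect to the modulation parameter, which is delicate in this low-regularity, instability-infested setting; the paper's topological argument sidesteps that, and moreover lets them assume the \emph{sharp} $\tau^{-3+\delta}$ rate for $\varpi-M$ throughout, gaining only in $\ve$ rather than in the time decay rate. Finally, in the neutral scalar field model the final mass $M_f=|e|$ is in fact known a priori (charge is conserved); what must be tuned is the initial renormalized Hawking mass $\varpi_0$, not the target.
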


 \begin{figure}[ht]
\centering{
\def\svgwidth{11pc}
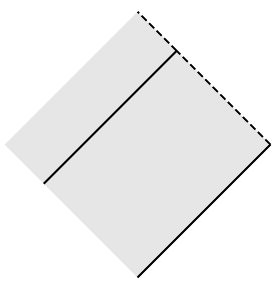}
\caption{A Penrose diagram showing the maximal development of a solution considered in \cref{thm:stability-rough}. The Cauchy data ends on the left in the solid point, where it is incomplete (but not singular).}
\label{fig:stability-intro}
\end{figure}

Some comments about this statement are in order:
\begin{enumerate}
    \item Because the extremal Reissner--Nordstr\"om family is already codimension-one within the full Reissner--Nordstr\"om family, any stability statement for it is necessarily a \emph{positive codimension} statement, with codimension one being sharp. This aspect of the problem is familiar from the proof of nonlinear stability of Schwarzschild outside of symmetry by Dafermos, Holzegel, Rodnianski, and Taylor \cite{DHRT}. For conjectures regarding the regularity of $\mathfrak M_\stab$ and what happens ``on either side'' of it, see already \cref{sec:conjectures}. 

        \item The theorem states that the metric decays in $C^0$ (and some Christoffel symbols) to that of extremal Reissner--Nordstr\"om and that the scalar field decays in $C^0$ to zero. We will show in \cref{thm:instability-rough} below that the metric does not necessarily decay to extremal Reissner--Nordstr\"om in $C^2$ (and may grow in $C^3$) and that the scalar field does not necessarily decay in $C^1$ (and may grow in $C^2$).
        
   \item The norms and decay rates for the scalar field $\phi$ are consistent with the norms and decay rates for spherically symmetric solutions of the linear wave equation on extremal Reissner--Nordstr\"om. We do not need to commute in order to close our bootstrap assumptions and hence do not prove sharp decay rates for $\phi$ everywhere. See already \cref{sec:proof-overview,sec:tails}.

    \item Since we show that there are no trapped surfaces behind the event horizon, we in fact prove that the maximal development of data in $\mathfrak M_\stab$ is the \emph{full double null rectangle} depicted in \cref{fig:stability-intro}. For the behavior of the scalar field and geometry in the black hole interior, see already \cref{sec:interior}.
\end{enumerate}

Our second main theorem shows that the Aretakis instability of the event horizon, i.e., non-decay of the first transverse derivative and linear growth of the second transverse derivative, persists in the Einstein--Maxwell-scalar field model. Moreover, in this coupled model, the instability affects the geometry as well.

Before stating the theorem, we recall briefly some features of the geometry of extremal Reissner--Nordstr\"om black holes. Let $Y$ denote the coordinate vector field $\partial_r$ in ingoing Eddington--Finkelstein coordinates $(v,r,\vartheta,\varphi)$. This null vector field is translation-invariant and transverse to the event horizon. Moreover, it is canonical in the sense that for \emph{any} spherically symmetric double null coordinates $(u,v,\vartheta,\varphi)$ on extremal Reissner--Nordstr\"om with $u$ ``ingoing,'' $Y$ can be written as $(\partial_ur)^{-1}\partial_u$. The $YY$-component of the Ricci tensor, $R_{YY}$, vanishes identically on any Reissner--Nordstr\"om solution.

\begin{thmrough}[Dynamical horizon instability, rough version]\label{thm:instability-rough} For any initial data lying in $\mathfrak M_\mathrm{stab}$, the following holds on the event horizon $\mathcal H^+$ of its maximal globally hyperbolic development:
    \begin{enumerate}[(i)]
        \item Let $Y\doteq (\partial_ur)^{-1}\partial_u$ denote the gauge-invariant null derivative transverse to $\mathcal H^+$, where $r$ is the area-radius of the spacetime, and $u$ is a retarded time coordinate, i.e., increasing towards the black hole. Then $R_{YY}$ and $Y(r\phi)$ are approximately constant on $\mathcal H^+$, i.e., do not necessarily decay.
        
        \item There exists a relatively open subset of $\mathfrak M_\stab$ for which the ``asymptotic Aretakis charge'' 
        \begin{equation}
         \label{eq:aretakis-intro-1}   H_0[\phi]\doteq \lim_{v\to\infty}Y(r\phi)|_{\mathcal H^+}(v)
        \end{equation}
        is nonvanishing. Here $v$ is an advanced time coordinate on the spacetime such that $v=\infty$ at future timelike infinity $i^+$. It holds that
        \begin{equation}
            \lim_{v\to\infty}R_{YY}|_{\mathcal H^+}(v)= 2M^{-2}\big(H_0[\phi]\big)^2.
        \end{equation}
        \item If $H_0[\phi]\ne 0$, then there exists a constant $c>0$ such that
        \begin{equation}
          \label{eq:aretakis-intro-2}\big|\nabla_YR_{YY}|_{\mathcal H^+}(v)\big|\ge cv,\qquad \big|Y^2(r\phi)|_{\mathcal H^+} (v)\big|\ge cv.
        \end{equation}
    \end{enumerate}
\end{thmrough}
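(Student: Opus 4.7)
The plan is to derive everything from two structural facts. First, an algebraic identity valid on any spherically symmetric Einstein--Maxwell-scalar solution,
\begin{equation*}
R_{YY} = 2(Y\phi)^2,
\end{equation*}
which reduces the geometric statements about $R_{YY}$ and $\nabla_Y R_{YY}$ to corresponding statements about $Y\phi$ and $Y^2\phi$. Second, the observation that on a background asymptoting to $g_\mathrm{ERN}$, the classical Aretakis horizon conservation law becomes an approximate conservation law whose source is integrable along $\mathcal H^+$. The identity follows from a direct computation: taking the trace of the Einstein equations gives $R = 2\,\partial^\alpha\phi\,\partial_\alpha\phi$, so $R_{YY} = 2 T_{YY}$ since $Y$ is null; the scalar field contributes $(Y\phi)^2$, while the Maxwell piece $F_{Y\alpha}F^\alpha{}_Y$ vanishes because in spherical symmetry the only nonzero component of $F$ is $F_{uv}$, and $F_{Yu} = (\partial_u r)^{-1} F_{uu} = 0$. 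Combined with $Y(r\phi) = rY\phi + \phi$ and the $C^0$ convergence $r \to M$, $\phi \to 0$ on $\mathcal H^+$ from \cref{thm:stability-rough}, this immediately yields $\lim_v R_{YY}|_{\mathcal H^+} = 2M^{-2}(H_0[\phi])^2$ once the limit of $Y(r\phi)$ is shown to exist.

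For part (i), I would write $\Box_g\phi = 0$ in the teleological double null gauge, apply $Y = (\partial_u r)^{-1}\partial_u$, and restrict to $\mathcal H^+$. On exact ERN, this reproduces the Aretakis identity $\partial_v\bigl[Y(r\phi) + M^{-1}(r\phi)\bigr] = 0$ on the horizon; on the dynamical background, the same manipulation produces a schematic equation
\begin{equation*}
\partial_v\bigl[Y(r\phi) + M^{-1}(r\phi)\bigr] = \mathcal E(v)
\end{equation*}
along $\mathcal H^+$, where the source $\mathcal E$ is bilinear in (a) the deviation of the metric from $g_\mathrm{ERN}$ (in particular in the renormalized Hawking mass and in $\Omega^2$), and (b) the low-order scalar quantities $\phi$ and $Y\phi$. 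The polynomial decay of both factors supplied by \cref{thm:stability-rough} renders $\mathcal E$ integrable in $v$, so $Y(r\phi) + M^{-1}(r\phi)$ converges on $\mathcal H^+$, and using $\phi \to 0$ the limit $H_0[\phi]$ of $Y(r\phi)$ exists. For the openness in part (ii), the map $\mathfrak M_\stab \ni \text{data} \mapsto H_0[\phi] \in \R$ is continuous by a stability reading of the previous estimate; to see it is not identically zero, I would perturb exact ERN data by a small scalar profile with prescribed nonzero linear Aretakis constant and observe that the nonlinear $H_0$ differs from that constant by a quadratically small error.

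For part (iii), I would apply $Y$ to the wave equation a second time and restrict to $\mathcal H^+$, producing an ODE of the schematic form
\begin{equation*}
\partial_v Y^2(r\phi) = c\, Y(r\phi) + (\text{integrable error}),
\end{equation*}
with a nonzero constant $c$ inherited from the ERN geometry. Integrating in $v$ and using $Y(r\phi) \to H_0 \neq 0$ yields the linear lower bound $|Y^2(r\phi)|(v) \geq c' v$. The corresponding lower bound on $\nabla_Y R_{YY}$ follows from $R_{YY} = 2(Y\phi)^2$: its dominant derivative is $4(Y\phi)(Y^2\phi)$, which grows linearly since $Y\phi$ tends to the nonzero constant $H_0/M$, while the Christoffel remainder $\nabla_Y R_{YY} - Y(R_{YY})$ decays by \cref{thm:stability-rough}. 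The main obstacle is controlling the ``integrable error'' in the commuted ODE: after one application of $Y$, the source contains one more derivative of both $\phi$ and the metric than in the uncommuted step, and sharp decay at this higher order is not directly furnished by \cref{thm:stability-rough}, which stops short of commuting. I would expect to handle this by isolating the linearly growing Aretakis mode driven by $H_0$ as an externally prescribed modulation and proving integrability only for the stable remainder, in the spirit of the nonlinear Aretakis-instability analyses of \cite{A16,AAG-nonlinear-1,AAG20}.
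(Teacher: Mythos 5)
Your overall scheme matches the paper: reduce the geometric statements to scalar statements via the algebraic identity $R_{YY}=2(Y\phi)^2$ (which you verify correctly, including the vanishing of $T^{\mathrm{EM}}_{YY}$), then use an almost-conservation law for $Y\psi$ on $\mathcal H^+$, commute once for (iii), and perturb data for (ii). Two substantive corrections are in order.

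First, your stated form of the conservation law on exact ERN, $\partial_v[Y(r\phi)+M^{-1}(r\phi)]=0$, has an extraneous term. The Aretakis identity in terms of the radiation field $\psi=r\phi$ is simply $\partial_v(Y\psi)|_{\mathcal H^+}=0$; the $M^{-1}\phi$ term in Aretakis' original form is exactly absorbed when passing from $\phi$ to $\psi$. On the dynamical background, commuting $Y=\nu^{-1}\partial_u$ through $\partial_u\partial_v\psi=2\kappa\nu\varkappa\phi$ gives the clean transport equation
\begin{equation*}
\partial_v(Y\psi)+(2\kappa\varkappa)Y\psi=2\kappa\varkappa\,\phi,
\end{equation*}
and on $\mathcal H^+$ one integrates this with an integrating factor, using that $\kappa\varkappa$ decays integrably (rate $\ve^2\tau^{-2+\delta}$) by the geometric estimates. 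That the error is $O(\ve^3)$ and not merely $O(\ve)$ is what gives \eqref{eq:Aretakis-main}.

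Second, and more importantly, your concern about derivative loss at second order in part (iii), and the modulation workaround you propose, is misplaced. If you actually carry out the second commutation, the equation for $Y^2\psi$ on $\mathcal H^+$ reads
\begin{equation*}
\partial_v(Y^2\psi)+(4\kappa\varkappa)Y^2\psi=-\frac{2\kappa e^2}{r^4}Y\psi+E,
\end{equation*}
where $E$ involves \emph{only} $\phi$, $Y\phi$, and low-order metric quantities ($\kappa$, $1-\mu$, $\varkappa$): the cubic nonlinear terms generated by differentiating $\kappa\varkappa$ with respect to $Y$ are controlled by $\partial_u\varpi$ and $\partial_u\kappa$, which are quadratic in $Y\phi$ and carry no new derivatives. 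So there is no loss of derivatives, no need for a modulation of the Aretakis mode, and the estimates of Theorem I close directly. What you do have to notice is that $E$ is \emph{not} integrable in $v$: its worst-decaying piece is the zeroth-order term $\propto\kappa e^2 r^{-3}\phi$, which on $\mathcal H^+$ decays only like $\tau^{-1+\delta/2}$. Its $v$-integral therefore grows like $v^{\delta/2}$, which is sublinear and thus dominated by the linear growth driven by the $Y\psi\to H_0\neq 0$ term — this is precisely why \eqref{eq:aretakis-main-2} is stated only for $\hat v\gtrsim 1+|\ve H_0^{-1}|^{1/(1-\delta)}$. Finally, a small point on $\nabla_Y R_{YY}$: the ``Christoffel remainder'' is $-4r(Y\phi)^4$, which converges to the nonzero constant $-4M(H_0/M)^4$ rather than decaying, but this is bounded and hence harmless against the linear growth of $4Y\phi\,Y^2\phi$.
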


As was mentioned before, the problem considered here was previously investigated numerically by Murata, Reall, and Tanahashi in \cite{Reall-numerical}. \cref{thm:stability-rough,thm:instability-rough} rigorously confirm all of their findings about the black hole exterior and event horizon for initial data in $\mathfrak M_\stab$ and make precise the nature of the ``fine tuning'' required to asymptote to extremality. Their findings about the black hole interior at extremality are also verified by combining \cref{thm:stability-rough} with the work of Gajic and Luk \cite{gajic-luk}; see already \cref{sec:interior}.

\begin{rk}
    The non-decay and growth of $R_{YY}$ and $\nabla_YR_{YY}$ along $\mathcal H^+$ found in the present paper is distinct from the non-decay and growth of $\underline\alpha$ and $\slashed\nabla_Y\underline \alpha$ found by Apetroaie in \cite{apetroaie} for the generalized Teukolsky system on extremal Reissner--Nordstr\"om, where $\underline\alpha{}_{AB}\doteq W(e_A,Y,e_B,Y)$, where $W$ is the Weyl tensor, and $\{e_A\}_{A=1,2}$ span the symmetry spheres. Indeed, $\underline\alpha$ vanishes identically for a spherically symmetric metric. 
\end{rk}

In this paper, we do not pursue the interesting question of which other geometric quantities exhibit instabilities at higher orders of differentiability.

\subsection{Overview of the proof}\label{sec:proof-overview}

The proof of \cref{thm:stability-rough} involves a bootstrap argument with a teleologically normalized double null gauge coupled to a discrete modulation argument performed on dyadic timescales. \cref{thm:instability-rough} is proved by combining the method of characteristics for the wave equation for $\phi$ with precise estimates on the dynamical degenerate redshift factor along the event horizon $\mathcal H^+$. We now describe the proofs in some detail, beginning with the relevant theory for the linear wave equation on extremal Reissner--Nordstr\"om. 

\subsubsection{Review of spherically symmetric linear waves on extremal Reissner--Nordstr\"om and the Aretakis instability} \label{sec:uncoupled}

Here we briefly review the theory of spherically symmetric solutions to the linear wave equation \eqref{eq:wave-1} on extremal Reissner--Nordstr\"om. Aretakis initiated the study of this problem in \cite{Aretakis-instability-1,Aretakis-instability-2} but we will make use of technical advances made by the first-named author, Aretakis, and Gajic in \cite{angelopoulos2018vector,angelopoulos2018late,angelopoulos2020late}. For a brief review of the geometry of extremal Reissner--Nordstr\"om, we refer the reader to \cref{sec:geometry-RN} of the present paper, \cite[Section 2]{Aretakis-instability-1}, and the appendix of \cite{stefanos-ern_full}. 

The general strategy to prove energy decay statements for waves on extremal Reissner--Nordstr\"om consists of, as in the subextremal case, deriving a hierarchy of weighted energy boundedness inequalities and time-integrated energy decay estimates. This hierarchy takes the form
\begin{multline}\label{eq:boundedness-intro-1} 
    \int_{C(\tau_2)}r^p(\partial_v\psi)^2\, dv+ \int_{\underline C(\tau_2)} (r-M)^{2-p} \frac{(\partial_u\psi)^2}{-\partial_ur}\,du\\ \les \int_{C(\tau_1)}r^p(\partial_v\psi)^2\, dv+ \int_{\underline C(\tau_1)} (r-M)^{2-p} \frac{(\partial_u\psi)^2}{-\partial_ur}\,du+\mathrm{l.o.t.},
\end{multline} \vspace{-5mm}
\begin{align}
\label{eq:rp-intro}    \int_{\tau_1}^{\tau_2}\int_{C(\tau)} r^{p-1}(\partial_v\psi)^2\, dvd\tau&\les \int_{C(\tau_1)}r^p(\partial_v\psi)^2\, dv + \text{l.o.t.},\\
\label{eq:r-Mp-intro}    \int_{\tau_1}^{\tau_2}\int_{\underline C(\tau)} (r-M)^{3-p} \frac{(\partial_u\psi)^2}{-\partial_ur}\,dud\tau&\les\int_{\underline C(\tau_1)}  (r-M)^{2-p} \frac{(\partial_u\psi)^2}{-\partial_ur}\,du + \text{l.o.t.},
\end{align}
where $(u,v)$ denote Eddington--Finkelstein double null coordinates on the domain of outer communication, $\tau$ is proper time along a timelike curve $\Gamma$ with constant area-radius, $\tau_1\le\tau_2$, $p\in[0,3)$ in \eqref{eq:boundedness-intro-1}, $p\in[1,3)$ in \eqref{eq:rp-intro} and \eqref{eq:r-Mp-intro}, ``l.o.t.'' denotes terms lower in the $p$-hierarchy, the foliations $C(\tau)$ and $\underline C(\tau)$ are defined pictorially in \cref{fig:ERN-intro} above, and $\psi$ denotes the \emph{radiation field}
\begin{equation*}
    \psi\doteq r\phi.
\end{equation*}

 \begin{figure}
\centering{
\def\svgwidth{11pc}
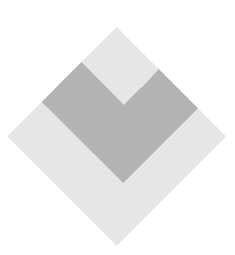}
\caption{A Penrose diagram of extremal Reissner--Nordstr\"om depicting the foliations $C(\tau)$ and $\underline C(\tau)$ used in the estimates \eqref{eq:boundedness-intro-1}--\eqref{eq:r-Mp-intro}. The region of integration in \eqref{eq:rp-intro} and \eqref{eq:r-Mp-intro} is shaded darker.}
\label{fig:ERN-intro}
\end{figure}

\begin{rk}
   The range of our horizon and infinity hierarchies is considered for $p \in [0,3-\delta]$. Specifically, at the horizon, this range is sharp, and it is necessary for us to go beyond $p \geq 2$ in order to simultaneously close the bootstrap argument for both the scalar field and the geometric quantities.
\end{rk}

\begin{rk}
    The expression $(\partial_u\psi)^2(-\partial_ur)^{-1}du$ is invariant under reparametrization of the double null gauge and represents a \emph{non-degenerate} one-form along $\underline C(\tau)$. Indeed, written in ingoing Eddington--Finkelstein coordinates $(v,r)$, it corresponds to $(\partial_r\psi)^2\,dr$ along $\underline C(\tau)$, which is manifestly nondegenerate. 
\end{rk}

The inequality \eqref{eq:rp-intro} is the celebrated \emph{$r^p$-weighted estimate} of Dafermos and Rodnianski \cite{dafermos2010new} and relies only on the asymptotic flatness of the extremal Reissner--Nordstr\"om metric. The estimate \eqref{eq:r-Mp-intro} is specific to extremal Reissner--Nordstr\"om and can be thought of as a ``horizon analogue'' of the $r^p$ estimates at infinity: The event horizon $\mathcal H^+$ of extremal Reissner--Nordstr\"om is located at $r=M$ and hence $r-M$ is a degenerate weight on $\mathcal H^+$. The estimate \eqref{eq:r-Mp-intro} states that the time integral of the $(p-1)$-weighted horizon flux is bounded by the initial value of the $p$-weighted horizon flux. This horizon hierarchy---some special cases of which were introduced in \cite{Aretakis-instability-1} and in full generality was presented in \cite{angelopoulos2020late}---replaces the fundamental redshift estimate from \cite{dafermos2009red,dafermos2013lectures} which holds for any stationary event horizon with positive surface gravity. Note that the $p=0$ horizon flux is equivalent to the $T$-energy, where $T$ is the time-translation Killing field. The duality between $r$ and $(r-M)^{-1}$ is related to the so-called \emph{Couch--Torrence conformal isometry} \cite{couch1984conformal} that exchanges $\mathcal H^+$ and $\mathcal I^+$ in extremal Reissner--Nordstr\"om. 

Using the pigeonhole principle as in \cite{dafermos2010new}, \eqref{eq:boundedness-intro-1}--\eqref{eq:r-Mp-intro} can be used to prove the energy decay estimate
\begin{equation}\label{eq:ED}
    \int_{C(\tau)} r^p(\partial_v\psi)^2\, dv+\int_{\underline C(\tau)} (r-M)^{2-p} \frac{(\partial_u\psi)^2}{-\partial_ur}\,du \le C_\star\tau^{-3+\delta+p},
\end{equation} for every $p\in[0,3-\delta]$ and $\tau\ge\tau_0$, where $C_\star$ is a constant depending on $\delta$ and the data at $\tau=\tau_0$. This estimate can then be used to prove pointwise decay of $\psi$ itself. By commuting the wave equation, higher order versions of \eqref{eq:boundedness-intro-1}--\eqref{eq:ED} can be proved, but we shall not need to do so in the present paper. 

\begin{rk}\label{rk:subextremal-decay}
    Observe that \eqref{eq:ED} for $p=2$ states that the nondegenerate energy of $\partial_u\psi$ decays only like $\tau^{-1+\delta}$, compared to $\tau^{-3+\delta}$ or $\tau^{-2+\delta}$ in the subextremal case (depending on the assumptions for the decay of the data at infinity---see for example \cite{angelopoulos2018vector}). Unfortunately, this is in fact sharp for generic data (see \cite{AAG-trapping}) and persists in the nonlinear theory, see already \cref{sec:tails,sec:sharpness}. This slow decay is responsible for many of the technical difficulties we face in the present paper and will be discussed again in \cref{sec:intro-geometry} below.
\end{rk}

While certain energies for $\phi$ do indeed decay by \eqref{eq:ED}, the Aretakis instability states that nondegenerate ingoing null derivatives of $\phi$ on $\mathcal H^+$ \emph{do not decay}, or can even \emph{grow polynomially}. We now briefly explain this mechanism. Let $Y\doteq \partial_r$ in ingoing Eddington--Finkelstein coordinates $(v,r)$.\footnote{The vector field $Y$ can be expressed as $(\partial_ur)^{-1}\partial_u$ in Eddington--Finkelstein double null coordinates $(u,v)$, or any reparametrization thereof.} An elementary calculation using the wave equation \eqref{eq:wave-1} shows that
\begin{equation}\label{eq:Y-psi-1}
    \partial_v\big(Y\psi|_{\mathcal H^+}\big)=0,
\end{equation}
i.e., $Y\psi$ is \emph{constant along $\mathcal H^+$}. Therefore, in sharp contrast to the subextremal case, $Y\psi$ does not decay along $\mathcal H^+$. This constant is written as $H_0[\phi]$ and is called the (zeroth) \emph{Aretakis charge} of $\phi$. 

By commuting the wave equation with $Y$, we can likewise derive an evolution equation for $Y^2\psi$ along $\mathcal H^+$. If $H_0[\phi]\ne 0$, we have that 
\begin{equation}\label{eq:Y-psi-2}
    \partial_v\big(Y^2\psi|_{\mathcal H^+}\big)= -2M^{-2}H_0[\phi]+\text{decaying terms},
\end{equation}
so upon integrating in $v$ we conclude that $|Y^2\psi|\gtrsim |H_0[\phi]|v$ on $\mathcal H^+$ for $v$ large. In fact, we have $|Y^k\psi|\gtrsim |H_0[\phi]|v^{k-1}$ on $\mathcal H^+$ for any $k\ge 1$ and $v$ large.

\subsubsection{Prescription of seed data and the modulation parameter \texorpdfstring{$\alpha$}{alpha}}\label{sec:intro-data}

Keeping in mind the ideas of \cref{sec:uncoupled}, we now turn to the outline of the proofs of \cref{thm:stability-rough,thm:instability-rough}. 

We recall the notion of \emph{renormalized Hawking mass}, which is the appropriate analogue of the usual Hawking mass $m\doteq \frac{r}{2}(1-g(\nabla r,\nabla r))$ for solutions of the spherically symmetric Einstein--Maxwell-neutral scalar field model:
\begin{equation*}
    \varpi\doteq m+\frac{e^2}{2r},
\end{equation*}
where $e$ is the constant charge of the solution. In Reissner--Nordstr\"om with parameters $M$ and $e$, $\varpi=M$.

We refer back to the Penrose diagram of the setup of our main results, \cref{fig:stability-intro}. Bifurcate characteristic seed data for the spherically symmetric Einstein--Maxwell-neutral scalar field model on $C_\out\cup\underline C{}_\ing$ consists of the restriction of $\phi$ to $C_\out\cup\underline C{}_\ing$, denoted by $\mathring\phi$, the area-radius $\Lambda$ of $C_\out\cap\underline C{}_\ing$, the renormalized Hawking mass $\varpi_0$ of $C_\out\cap\underline C{}_\ing$, and the electric charge $e$ of the Maxwell field $F$, which is defined in \eqref{eq:charge-definition} below. Since the scalar field is neutral, $e$ is in fact \emph{globally conserved} and is thus fixed once and for all by the initial data. The collection of such quadruples $(\mathring\phi,\Lambda,\varpi_0,e)$ is what we call the \emph{moduli space} of initial data $\mathfrak M$.

Fix extremal parameters $M_0=|e_0|$ and an area-radius $100M_0$ (which lies far outside the horizon located at $r=M_0$). We consider perturbations of the bifurcate cone in the extremal Reissner--Nordstr\"om solution with parameters $(M_0,e_0)$ with bifurcation sphere at $r=100M_0$. We consider the seed data norm 
\begin{equation}\label{eq:D-intro}
    \mathfrak D\approx |\Lambda-100M_0|+|\varpi_0-M_0|+|e-e_0|+ \sup_{\underline C_\ing}\big(|\phi|+|\partial_{\hat u}\phi|\big)+\sup_{ C_\out}\left(|\psi|+|r^2\partial_{\hat v}\psi|\right)
\end{equation}
and define a master smallness parameter $\ve\ge \mathfrak D$. Here $(\hat u,\hat v)$ are ``initial data normalized'' coordinates such that $\partial_{\hat u}r=-1$ on $\underline C{}_\ing$ and $\partial_{\hat v}r=1$ on $C_\out$. In particular, $\hat u$ will be regular across the event horizon. 

As in \cite{DHRT}, we in fact consider a \emph{family} of initial data which are indexed by a real parameter $\alpha$ such that $\varpi_0=M_0+\alpha$. Therefore, for $(\mathring \phi,\Lambda,e)$ fixed, the ``codimension-one'' aspect of \cref{thm:stability-rough} means that we find (at least one) $\alpha=\alpha_\star$ such that the solution generated by the seed data $(\mathring \phi,\Lambda,M_0+\alpha_\star,e)$ converges to extremal Reissner--Nordstr\"om with parameters $M=|e|$ and $e$. The critical parameter $\alpha_\star$ is determined in evolution and cannot be read off from the initial data. See already \cref{sec:intro-modulation}.

\begin{rk}
    Varying $\alpha$ in our construction corresponds exactly to varying the parameter $M_i$ in the numerical setup of \cite{Reall-numerical}, so we are indeed finding the ``same'' codimension-one family of asymptotically extremal Reissner--Nordstr\"om solutions. However, our modulation scheme is completely different from theirs.
\end{rk}

The setup for the initial data is given in \cref{sec:seed-data}.

\subsubsection{Setup of the bootstrap argument and the teleological gauge}\label{sec:intro-gauge}

 \begin{figure}
\centering{
\def\svgwidth{11pc}
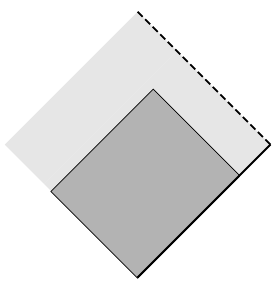}
\caption{A Penrose diagram of one of the bootstrap domains $\mathcal D_{\tau_f}\doteq J^-(\Gamma(\tau_f))$ used in the proof of \cref{thm:stability-rough}. Here $\Gamma\doteq\{r=\Lambda\}$ is the timelike curve which anchors the bootstrap domains and $(u,v)$ are double null coordinates teleologically normalized as depicted.}
\label{fig:bootstrap-intro}
\end{figure}

Our general setup for the bootstrap argument (depicted in \cref{fig:bootstrap-intro} below) is inspired by \cite{DHRT} and the work of Luk--Oh \cite{luk2019strong} on stability of subextremal Reissner--Nordstr\"om in spherical symmetry.

 In \cref{fig:bootstrap-intro}, the timelike curve $\Gamma$ has constant area-radius $r=\Lambda$ (from the seed data) and is parametrized by proper time $\tau\ge 1$. The parameter $\tau_f$ determines the bootstrap domain $\mathcal D_{\tau_f}$ and is sent to infinity in the course of the proof. The double null coordinates $(u,v)$ are teleologically normalized according to the conditions $\partial_ur=-(1-\frac{2m}{r})$ on the ingoing future boundary of $\mathcal D_{\tau_f}$, where $m$ is the Hawking mass of the spacetime, and $\partial_vr=1-\frac{2m}{r}$ on $\Gamma$. This latter gauge condition for the $v$ coordinate is nonstandard and marks a crucial difference with the subextremal case in \cite{luk2019strong,DHRT}. We extend $\tau$ to a function on $\mathcal D_{\tau_f}$ by setting it equal to proper time along $\Gamma$ and then declaring it to be constant along ingoing cones to the left of $\Gamma$ and constant along outgoing cones to the right of $\Gamma$.

Since the charge $e$ is conserved, we know a priori that we should aim to converge to extremal Reissner--Nordstr\"om with parameters $M\doteq|e|$ and $e$. We anchor a comparison extremal Reissner--Nordstr\"om solution with area-radius function $\bar r$ and parameters $(M,e)$ in Eddington--Finkelstein double null gauge to the bootstrap domain $\mathcal D_{\tau_f}$ by setting $\bar r(\Gamma(\tau_f))=\Lambda$. Relative to this background solution, we define energy norms motivated by the uncoupled case (recall \cref{sec:uncoupled}),
\begin{align*}
  \mathcal E_p(\tau)  &\doteq \int_{C_{\Gamma^u(\tau)}\cap\mathcal D_{\tau_f}}r^p(\partial_v\psi)^2\, dv+\cdots ,\\
   \underline{\mathcal E}{}_p(\tau) & \doteq\int_{\underline C{}_{\Gamma^u(\tau)}\cap\mathcal D_{\tau_f}} (\bar r-M)^{2-p}\frac{(\partial_u\psi)^2}{-\partial_u\bar r} \,du+\cdots,
\end{align*}
where we have only written the most important terms for now.

\begin{rk}
    In the definition of $\mathcal E_p$, we use the ``dynamical'' $r$ and not the background $\bar r$. This is because the $r^p$ estimates in spherical symmetry do not generate nonlinear errors and $r$ works just as well as $\bar r$ in the far region. The use of $\bar r-M$ in $\underline{\mathcal E}{}_p$ is crucial, however. 
\end{rk}

For the modulation parameter $\alpha$ lying in an appropriate range (to be explained in \cref{sec:intro-modulation} below), we make the bootstrap assumptions
\begin{equation}\label{eq:boot-geo-intro}
    \left|\frac{\partial_ur}{\partial_u\bar r}-1\right|\les \ve^{3/2}\tau^{-1+\delta},\quad |r-\bar r|\les \ve^{3/2}\tau^{-2+\delta},\quad |\varpi-M|\les \ve^{3/2}\tau^{-3+\delta}
\end{equation}
on $\mathcal D_{\tau_f}$ for the geometry and 
\begin{equation}\label{eq:boot-sf-intro}
    \mathcal E_p(\tau)+ \underline{\mathcal E}{}_p(\tau)\les \ve^2\tau^{-3+\delta+p}
\end{equation}
for $\tau\in[1,\tau_f]$ for the scalar field; compare with \eqref{eq:ED}.\footnote{We also have bootstrap assumptions for energy fluxes along outgoing cones in the near region and along ingoing cones in the far region, but suppress these at this level of discussion.} The main analytic content of \cref{thm:stability-rough} consists of recovering the bootstrap assumptions \eqref{eq:boot-geo-intro} and \eqref{eq:boot-sf-intro}.

The setup for the bootstrap argument is given in \cref{sec:setup}, the anchoring procedure is given in \cref{sec:anchor}, and the bootstrap assumptions are precisely stated in \cref{sec:bootstrap-definitions}.

\subsubsection{Estimates for the geometry: the role of the degenerate redshift}\label{sec:intro-geometry}

We utilize the well-known quantities 
\begin{equation*}
    \kappa\doteq \frac{\partial_vr}{1-\frac{2m}{r}},\quad \gamma\doteq \frac{\partial_ur}{1-\frac{2m}{r}},
\end{equation*}
which satisfy good evolution equations in $u$ and $v$, respectively. The background extremal Reissner--Nordstr\"om solution $\bar r $ has $\bar\kappa=-\bar\gamma=1$ globally. Therefore, the anchoring and gauge conditions ensure that $r(\Gamma(\tau_f))=\bar r(\Gamma(\tau_f))$, $\kappa=\bar \kappa=1$ along $\Gamma\cap\mathcal D_{\tau_f}$, and $\gamma=\bar\gamma=-1$ along the final ingoing cone in $\mathcal D_{\tau_f}$. 

The $v$-equation for $\gamma$ sees the flux $\mathcal E_0$ and hence gives us that $\gamma=-1 + O(\ve^2\tau^{-3+\delta})$, which is the best that can be expected. On the other hand, the $u$-equation for $\kappa$ reads $\partial_u\kappa = r\kappa (\partial_ur)^{-1}(\partial_u\phi)^2$, 
which sees $\underline{\mathcal E}{}_2$, and hence decays much slower.\footnote{In fact, if we only took $p$ up to 2 in our master hierarchy, we would not even have decay for the $p=2$ flux! Note again (recall \cref{rk:subextremal-decay}) that in the subextremal case, we would have $\tau^{-3+\delta}$ decay for $\kappa-1$ if we take $p$ up to $3-\delta$, and $\tau^{-2}$ decay (which is still integrable) if we only take $p$ up to $2$.} On the other hand, the quantity $(\bar r-M)^{2-p}(\kappa-\bar \kappa)$, for $p\in[0,2]$, obeys a $u$-evolution equation that sees the flux $\underline{\mathcal E}{}_p$ \emph{when integrating to the left of $\Gamma$}, and hence we show that
\begin{equation}\label{eq:kappa-intro}
    (\bar r-M)^{2-p}\kappa = (\bar r-M)^{2-p}+O(\ve^2\tau^{-3+\delta+p})
\end{equation}
to the left of $\Gamma$. This hierarchy of decay rates for $\kappa$ is a fundamental aspect of our geometric estimates. 

Because of this, it is important not to waste any powers of $\bar r-M$ in the system and our scheme is essentially sharp at the horizon in this regard. Of particular importance are the quantities $1-\frac{2m}{r}$ and $\partial_vr$, which we show admit ``Taylor expansions'' of the form
\begin{align}
  \label{eq:Taylor-1}1-\frac{2m}{r} &= \overline{1-\frac{2m}{r}} + \frac{2M}{\bar r^3}(\bar r-M)(r-\bar r)+O(\ve^{3/2}\tau^{-3+\delta}),\\
 \label{eq:Taylor-2}   \partial_vr &=\partial_v\bar r + \frac{2M\kappa}{\bar r^3}(\bar r-M)(r-\bar r)+ O(\ve^{3/2}\tau^{-3+\delta}).
\end{align}
In the background solution, the quantities $ \overline{1-\frac{2m}{r}} $ and $\partial_v\bar r$ vanish quadratically on $\mathcal H^+$, so these are to be viewed as expansions in powers of $\bar r-M$ with rapidly decaying error. 

The coefficients of the terms which are linear in $\bar r-M$ and the strong decay of the error terms in \eqref{eq:Taylor-1} and \eqref{eq:Taylor-2} are important. For example, the function $1-\frac{2m}{r}$ appears naturally in the $u$-equation for $\varpi$, which reads $\partial_u\varpi = \tfrac 12 (1-\frac{2m}{r})r^2(\partial_ur)^{-1}(\partial_u\phi)^2$, and proving $\tau^{-3+\delta}$ decay for this requires taking advantage of the linear term in \eqref{eq:Taylor-1}. In the subextremal case, the redshift effect implies one can prove $\tau^{-3+\delta}$ decay for $\varpi$ while completely ignoring the $1-\frac{2m}{r}$ weight, so no expansion of the form \eqref{eq:Taylor-1} is required (or helpful). Moreover, we use the linear term in \eqref{eq:Taylor-1} to control the error term in \eqref{eq:Taylor-2}.

The sign of the linear term in \eqref{eq:Taylor-2} is crucial: it is positive in the domain of outer communication, which reflects the \emph{global redshift effect} on extremal Reissner--Nordstr\"om. However, unlike the subextremal case, where the redshift leads to an exponentially decaying integrating factor for the analogue of \eqref{eq:Taylor-2} (compare \cite[Lemma 8.19]{luk2019strong}), on an asymptotically extremal black hole the effect degenerates on the horizon and the rapidly decaying error term becomes important, since integrating \eqref{eq:Taylor-2} in $v$ now loses a power of $\tau$. See also \cref{rk:redshift-tilde-r} below.

The geometric estimates are proved in \cref{sec:geometry}.

\subsubsection{Integrated local energy decay and the \texorpdfstring{$r^p$}{rp} and \texorpdfstring{$(\bar r-M)^{2-p}$}{(bar r-M){2-p}} energy hierarchies}\label{sec:intro-energy}

To recover the bootstrap assumption \eqref{eq:boot-sf-intro} for the scalar field, we follow the general strategy outlined in \cref{sec:uncoupled}. We prove the estimates 
\begin{equation}\label{eq:intro-energy-1}
\mathcal E_p(\tau_2)+ \underline{\mathcal E}{}_p(\tau_2) \les  \mathcal E_p(\tau_1)+ \underline{\mathcal E}{}_p(\tau_1)+ \text{decaying nonlinear error}
\end{equation}
for $1\le \tau_1\le \tau_2$ and $p\in[0,3-\delta]$, and 
\begin{equation}\label{eq:intro-energy-2}
  \int_{\tau_1}^{\tau_2}\big( \mathcal E_{p-1}(\tau)+ \underline{\mathcal E}{}_{p-1}(\tau)\big)\,d\tau \les  \mathcal E_p(\tau_1)+ \underline{\mathcal E}{}_p(\tau_1)+ \text{decaying nonlinear error}
\end{equation}
for $p\in[1,3-\delta]$. The choice of multiplier vector fields and lower order currents is inspired by work in the uncoupled case and \cite{luk2019strong}. We utilize a mix of ``dynamical'' (i.e., defined relative to the dynamical metric) and ``background'' (i.e., defined relative to the background extremal Reissner--Nordstr\"om metric $\bar r$) multipliers to minimize the number of error terms. The expansion \eqref{eq:Taylor-1} is used crucially to estimate nonlinear errors in the near-horizon region that are specific to the extremal case. Once \eqref{eq:intro-energy-1} and \eqref{eq:intro-energy-2} have been proved, decay is inferred by a standard application of the pigeonhole principle as in \cite{dafermos2010new}. 

The energy hierarchies are defined in \cref{sec:energy} and energy decay is proved in \cref{sec:decay-energy}.

\subsubsection{The codimension-one ``submanifold'' \texorpdfstring{$\mathfrak M_\stab$}{M stab} and modulation on dyadic timescales}\label{sec:intro-modulation}

While our bootstrap argument is performed continuously in time, the choice of allowed modulation parameters $\alpha$ is only decided when $\tau_f$ is dyadic, i.e., a power of 2. This approach is motivated by the purely dyadic approach of Dafermos--Holzegel--Rodnianski--Taylor in \cite{DHRT22} and turns out to be quite fortuitous compared to the continuous in time modulation theory employed in \cite{DHRT}.

In practice, we construct a sequence of nested compact intervals $\mathfrak A_0\supset\mathfrak A_1\supset\cdots$ such that $\mathfrak A_i$ consists of those $\alpha$'s which we consider for bootstrap time $\tau_f\in[2^i,2^{i+1})$. These sets are defined implicitly by the requirement that the renormalized Hawking mass $\varpi$ at the point $\Gamma(2^i)$ must lie within $\varepsilon^{3/2}2^{(-3+\delta)i}$ of $M$ for $\alpha\in\mathfrak A_i$.  By a shooting argument at each dyadic time $2^i$, we show that the implicitly defined set $\mathfrak A_i$ is nonempty. Therefore, as $\tau_f\to\infty$ (and, consequently, $i\to\infty$), we extract (at least one) critical parameter $\alpha_\star\in \bigcap_{i\ge 0}\mathfrak A_i$ for which the solution tends to extremality. The stable ``submanifold'' $\mathfrak M_\stab\subset\mathfrak M$ is the collection of all seed data sets of the form $(\mathring\phi,\Lambda,M_0+\alpha_\star,e)$ and is codimension-one in the sense that every ``line'' of constant $\mathring\phi$, $\Lambda$, and $e$ intersects it at least once.\footnote{In \cite{DHRT}, the codimension-three ``submanifold'' is characterized in a similar manner via three-planes in the moduli space of seed data sets.} Refer to \cref{fig:mod-space-lines} below.

Since we construct $\mathfrak A_i$ only after each dyadic time, we may always assume a $\tau^{-3+\delta}$ decay rate for the difference $\varpi-M$, which is sharp. This should be contrasted with \cite{DHRT}, where the assumption on angular momentum in the modulation set has a decay rate of $\tau_f^{-1}$, but the \emph{improvement} (i.e., the change in angular momentum along $\mathcal I^+$) is required to decay like $\tau_f^{-2}$, which is a significant technical issue. In our work, both the assumption and the improvement have the sharp decay rate $\tau_f^{-3+\delta}$ and the gain is purely in the power of $\ve$ ($\ve^{3/2}$ vs. $\ve^2$); see already \eqref{eq:algebra}. This strong decay rate for $\varpi-M$ is important, in particular because of the absence of redshift in the geometric estimates. 

We have depicted our modulation scheme in \cref{fig:mod-space-lines} above. The modulation sets $\mathfrak A_i$ are defined in \cref{sec:bootstrap-definitions} and the modulation argument is performed in \cref{sec:modulation}. Modulation on dyadic timescales, motivated by \cite{DHRT22}, has also been performed by K\'ad\'ar in \cite{Kadar,Kadar2}.

 \begin{figure}
\centering{
\def\svgwidth{22pc}
\begingroup%
  \makeatletter%
  \providecommand\color[2][]{%
    \errmessage{(Inkscape) Color is used for the text in Inkscape, but the package 'color.sty' is not loaded}%
    \renewcommand\color[2][]{}%
  }%
  \providecommand\transparent[1]{%
    \errmessage{(Inkscape) Transparency is used (non-zero) for the text in Inkscape, but the package 'transparent.sty' is not loaded}%
    \renewcommand\transparent[1]{}%
  }%
  \providecommand\rotatebox[2]{#2}%
  \newcommand*\fsize{\dimexpr\f@size pt\relax}%
  \newcommand*\lineheight[1]{\fontsize{\fsize}{#1\fsize}\selectfont}%
  \ifx\svgwidth\undefined%
    \setlength{\unitlength}{197.08441066bp}%
    \ifx\svgscale\undefined%
      \relax%
    \else%
      \setlength{\unitlength}{\unitlength * \real{\svgscale}}%
    \fi%
  \else%
    \setlength{\unitlength}{\svgwidth}%
  \fi%
  \global\let\svgwidth\undefined%
  \global\let\svgscale\undefined%
  \makeatother%
  \begin{picture}(1,0.46744847)%
    \lineheight{1}%
    \setlength\tabcolsep{0pt}%
    \put(0,0){\includegraphics[width=\unitlength,page=1]{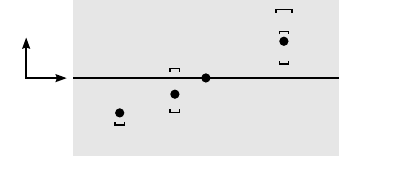}}%
    \put(0.09294654,0.23505551){\color[rgb]{0,0,0}\makebox(0,0)[lt]{\lineheight{1.25}\smash{\begin{tabular}[t]{l}$p$\end{tabular}}}}%
    \put(0.0158228,0.30538865){\color[rgb]{0,0,0}\makebox(0,0)[lt]{\lineheight{1.25}\smash{\begin{tabular}[t]{l}$\alpha$\end{tabular}}}}%
    \put(0,0){\includegraphics[width=\unitlength,page=2]{mod-space-with-lines.pdf}}%
    \put(0.58151244,0.23557674){\color[rgb]{0,0,0}\makebox(0,0)[lt]{\lineheight{1.25}\smash{\begin{tabular}[t]{l}$\mathfrak M_0$\end{tabular}}}}%
    \put(0.50572019,0.01219153){\color[rgb]{0,0,0}\makebox(0,0)[lt]{\lineheight{1.25}\smash{\begin{tabular}[t]{l}$\mathcal L$\end{tabular}}}}%
    \put(0.72176394,0.40254775){\color[rgb]{0,0,0}\makebox(0,0)[lt]{\lineheight{1.25}\smash{\begin{tabular}[t]{l}$\mathfrak M_\stab$\end{tabular}}}}%
    \put(0.5238294,0.39972322){\color[rgb]{0,0,0}\makebox(0,0)[lt]{\lineheight{1.25}\smash{\begin{tabular}[t]{l}ERN\end{tabular}}}}%
    \put(0,0){\includegraphics[width=\unitlength,page=3]{mod-space-with-lines.pdf}}%
  \end{picture}%
\endgroup%
}
\caption{A schematic depiction of our modulation scheme. Let $p\mapsto\mathring\phi(p)$ be a one-parameter family of characteristic initial data for the scalar field with $\mathring\phi(0)=0$. We can then consider the plane in $\mathfrak M$ parametrized by $(p,\alpha)$. Each $p$ generates a line segment $\mathcal L$ in $\mathfrak M$ which intersects the ``submanifold'' $\mathfrak M_\mathrm{stab}$ at least once. The horizontal line $\mathfrak M_0$ denotes the hyperplane in $\mathfrak M$ consisting of data sets with $\varpi_0=M_0$. On the three $\mathcal L$'s depicted here, we have also drawn three of the nested modulation sets $\mathfrak A_i$ which converge to $\mathcal L\cap \mathfrak M_\mathrm{stab}$. Note that we have drawn $\mathfrak M_\stab$ as a smooth, connected curve here, which is in line with our conjectures in \cref{sec:conjectures}, but we do not prove any such fine structure of it in this paper.}
\label{fig:mod-space-lines}
\end{figure}

\subsubsection{Construction of the eschatological gauge and background solution}

Recall from \cref{sec:intro-gauge} that the teleologically normalized coordinate $u$ and the background extremal Reissner--Nordstr\"om solution $\bar r$ depend on the bootstrap time $\tau_f$. Therefore, an important final step when taking $\tau_f\to\infty$ is to prove that we can define a unique ``final background solution'' which we converge to and an ``eschatological gauge'' (i.e., final teleological gauge) in which we converge to it. The regularity of the limiting gauge is actually related to the decay assumptions on initial data,\footnote{A similar phenomenon occurs in \cite{DHRT}.} and the best that can be hoped for given only the finiteness of \eqref{eq:D-intro} is that the final $u$ coordinate is a $C^2$ function of the initial data coordinate $\hat u$. These limiting arguments are carried out in \cref{sec:gauges,sec:limiting-argument}.

\subsubsection{Absence of trapped surfaces and rigidity of the apparent horizon}\label{sec:rigidity-intro}

After we have constructed the dynamical extremal spacetimes, it remains to prove part (iv) of \cref{thm:stability-rough}. This argument is originally due to Kommemi (in unpublished work) and a very similar argument appears in \cite[Appendix A]{LukOhI}. The first aspect of the argument is a proof that any point on the outermost apparent horizon $\mathcal A'$ must lie on $\mathcal H^+$ because of simple monotonicities to the right of $\mathcal A'$ and the fact that $r$ and $\varpi$ both asymptote to $M$ along $\mathcal H^+$. The second part involves Taylor expanding $\partial_vr$ in $u$ along $\mathcal H^+$ to show that there are no trapped or marginally trapped spheres behind $\mathcal H^+$. This second part of the argument is reminiscent of the proof of ``Israel’s observation'' (Proposition 1.1) in \cite{KU22}. We carry out these arguments in \cref{sec:putting-together}. 

\begin{rk}
These soft arguments, using monotonicity and the wave equation for $r$, do not directly carry over to the charged scalar field \cite{Kommemi13} or charged Vlasov \cite{KU24} models. 
\end{rk}

\subsubsection{The Aretakis instability on the dynamical geometry}\label{sec:aretakis-instability-intro}

Using the geometric estimates proved in \cref{sec:geometry}, the exact conservation law \eqref{eq:Y-psi-1} is replaced by the ``almost conservation law'' 
\begin{equation*}
    \partial_v\big(Y\psi|_{\mathcal H^+}\big) = O\big(\ve^{3}\tau^{-2+\delta}\big).
\end{equation*}
Note that the error is integrable in $\tau$ (again, thanks to going up to $p=3-\delta$ in the hierarchy!) and much better in $\ve$ than $Y\psi$. Therefore, integrating this immediately gives (i) of \cref{thm:instability-rough}. Part (ii) is completely soft and is essentially an immediate consequence of (i). To prove part (iii), we commute the wave equation on the dynamical background with $Y$ and show that the nonlinear error terms are decaying, which results in an estimate entirely analogous to \eqref{eq:Y-psi-2}. The instability of the geometry is obtained by directly inserting this behavior for $\phi$ into the Einstein equations. We do not prove growth of derivatives higher than second order because this would require proving higher order estimates for the geometry, which is not necessary for the proof of \cref{thm:stability-rough}. These arguments are carried out in \cref{sec:aretakis-proof}.

\subsection{The conjectural picture of the local moduli space}\label{sec:conjectures}

\cref{thm:stability-rough} makes no statement about the regularity (or even connectedness!) of the stable ``submanifold'' $\mathfrak M_\stab$, nor about behavior of solutions arising from data in $\mathfrak M\setminus\mathfrak M_\stab$. In this section, we propose conjectures addressing these issues, motivated by conjectures in \cite{DHRT}, \cite{Daf24}, and \cite{KU24} by the second- and third-named authors of the present paper.

\subsubsection{Asymptotically extremal black holes as a locally separating hypersurface between collapse and dispersion}\label{sec:separating}

In the following conjectures, we will write the symbol $\mathfrak M$ to mean a moduli space of initial data posed ``in the same way'' as in our \cref{thm:stability-rough}, but possibly topologized by a different norm than in \cref{sec:seed-data} below. In particular, it could be that the following conjectures are only true under stronger asymptotic decay assumptions. However, we will always suppose that $\mathfrak M$ carries a Banach space structure, so that we have access to the notion of $C^1$ submanifolds of $\mathfrak M$.

For $\mathfrak r\in [0,1]$, we can consider the subset $\mathfrak M^\mathfrak r_\stab\subset\mathfrak M$ consisting of initial data which form a black hole with asymptotic parameter ratio $|e|/M_\infty=\mathfrak r$, where $e$ is the conserved charge of the solution and $M_\infty\doteq\lim_{v\to\infty}\varpi|_{\mathcal H^+}$ is the teleologically determined final mass of the black hole. In this notation, the stable set from \cref{thm:stability-rough} is $\mathfrak M_\stab^1$ and $\bigcup_{\mathfrak r<1}\mathfrak M_\stab^\mathfrak r$ was studied by Dafermos and Rodnianski in \cite{Price-law}.

\begin{conj}[Regularity of $\mathfrak M_\mathrm{stab}^\mathfrak r$]\label{conj:regularity}
    For every $\mathfrak r\in[0,1]$, $\mathfrak M^\mathfrak r_\stab$ is a $C^1$ hypersurface in $\mathfrak M$. Moreover, if $\mathfrak M_\mathrm{black}$ denotes the subset of $\mathfrak M$ consisting of seed data which form a black hole in evolution, then $\mathfrak M_\mathrm{black}$ is foliated by the collection $\{\mathfrak M_\stab^\mathfrak r\}$. 
\end{conj}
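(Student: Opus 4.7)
The plan is to combine a uniqueness argument for the teleological parameter $\alpha_\star$ with an application of the implicit function theorem in the Banach space $\mathfrak M$. First, I would fix ambient seed data $(\mathring\phi,\Lambda,e)$ and attempt to upgrade the existence result of \cref{thm:stability-rough} to the statement that the critical parameter $\alpha_\star=\alpha_\star(\mathring\phi,\Lambda,e)$ is \emph{uniquely} determined. The natural strategy is to show that the map $\alpha\mapsto M_\infty(\alpha)$, defined on those $\alpha\in\mathfrak A_0$ for which a black hole forms in evolution, is strictly monotone in $\alpha$, so that the level set $\{M_\infty(\alpha)=|e|\}$ reduces to a single point. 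Monotonicity would reduce, by a Gronwall-type comparison of nearby evolutions, to the claim that a linearized variation $\partial_\alpha$ of the seed data produces an asymptotic mass variation that is bounded below in magnitude. This is essentially a coercivity statement for a linearized Aretakis-type conservation law along $\mathcal H^+$.

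For the $C^1$ structure, assuming uniqueness, the next step is to apply the implicit function theorem to the functional $F\colon(\mathring\phi,\Lambda,e,\alpha)\mapsto M_\infty-|e|$ at a point of $\mathfrak M_\stab^1$. Invertibility of $\partial_\alpha F$, which is exactly the monotonicity property above, combined with continuity of $F$ in all seed variables (which should follow from a perturbative extension of \cref{thm:stability-rough} allowing the full seed to vary), would yield $C^0$ dependence of $\alpha_\star$ on $(\mathring\phi,\Lambda,e)$, and with further differentiability of $F$ in the seed, $C^1$ dependence. For subextremal ratios $\mathfrak r\in[0,1)$, the analogous statement should be obtained by adapting the Dafermos--Rodnianski proof \cite{Price-law} to track parameters smoothly, since no modulation is needed there and $M_\infty$ is a genuine continuous functional of the data. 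The foliation statement then follows by defining the $C^1$ map $\mathfrak r(\cdot)\doteq|e|/M_\infty$ on $\mathfrak M_\mathrm{black}$ and observing that its level sets coincide with $\mathfrak M_\stab^{\mathfrak r}$; transversality of $d\mathfrak r$ at an asymptotically extremal point is again the monotonicity of $\partial_\alpha F$.

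The main obstacle lies in the interplay between the teleological construction and the Aretakis instability. Even on the exactly extremal background, the linearized scalar field fails to decay in its transverse-to-horizon derivatives, and linearized geometric quantities inherit this non-decay (as witnessed by \cref{thm:instability-rough}). Consequently, the tangent space to $\mathfrak M_\stab^1$ at a given point can only be expected to embed continuously into the norm \eqref{eq:D-intro}, with no additional regularity at $\mathcal H^+$, so the Banach space $\mathfrak M$ must be chosen with care---probably requiring strictly stronger decay at spacelike infinity so that one can safely commute with $T$ near the horizon. Proving the requisite invertibility of $\partial_\alpha F$ amounts to ruling out a linearized ``zero mode'' transverse to the mass variation, which appears to require a delicate analysis of spherically symmetric linear perturbations of ERN with nondecaying Aretakis charge; this is the central difficulty in making the scheme above work. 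The $C^1$ claim in \cref{conj:regularity} may well be sharp: there is little reason to expect $\mathfrak M_\stab^1$ to be smoother in any topology in which the forward problem remains well-posed, so any attempt at a higher-regularity improvement should fail.
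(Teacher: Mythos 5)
This statement is a \emph{conjecture} in the paper (\cref{conj:regularity}), not a theorem: the authors explicitly flag at the start of \cref{sec:conjectures} that \cref{thm:stability-rough} ``makes no statement about the regularity (or even connectedness!) of the stable `submanifold' $\mathfrak M_\stab$,'' and \cref{rk:modulation} emphasizes that the construction does not even establish uniqueness of $\alpha_\star$. There is therefore no proof in the paper against which to compare your argument; what you have written is a heuristic outline of why the conjecture might be true, not a proof, and the paper deliberately stops short of claiming it.

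That said, your sketch correctly identifies the two genuine obstacles and does not paper over them. First, the uniqueness and monotonicity of $\alpha\mapsto M_\infty(\alpha)$ is exactly the missing ingredient that would upgrade the paper's ``at least one $\alpha_\star$'' to ``exactly one,'' and the paper itself notes (\cref{rk:modulation}) that this would follow from \cref{conj:local-moduli-space}; you would need a new comparison or sign argument for the linearized flow, which the modulation-on-dyadic-timescales construction does not supply. Second, the difficulty you raise about the choice of Banach topology is real and perhaps the deepest issue: the Aretakis instability causes the solution map to lose transverse derivatives at $\mathcal H^+$, so differentiability of a functional like $F=M_\infty-|e|$ in the seed variable $\mathring\phi$ is far from automatic, and the implicit function theorem step cannot currently be justified. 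You are also right that the subextremal case ($\mathfrak r<1$) is closer to being within reach via \cite{Price-law}, though even there a $C^1$ dependence of $M_\infty$ on the seed has not been established in the literature. In short: your proposal is a sensible roadmap, it is honest about where it breaks down, and it correctly locates the open problems, but it does not close any of them and should not be read as a proof of the conjecture.
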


In particular, we conjecture that the stable ``submanifold'' from \cref{thm:stability-rough} is in fact a connected, regular hypersurface in $\mathfrak M$. Our next conjecture addresses what happens on ``either side'' of $\mathfrak M_\stab^1$; see \cref{fig:local-moduli-space} above.

To motivate the following conjecture, it is helpful to consider the Reissner--Nordstr\"om family itself for a moment. If we consider characteristic data for the spherically symmetric Einstein--Maxwell system posed on a bifurcate null hypersurface as in \cref{fig:stability-intro} and then vary the renormalized Hawking mass $\varpi$ of the bifurcation sphere, the resulting developments sweep out a portion of the family of Reissner--Nordstr\"om solutions. In particular, we can observe a transition from \emph{dispersion within the domain of dependence of the data} (when the parameters are superextremal) to \emph{existence of a black hole} (when the parameters are extremal or subextremal). Extremal Reissner--Nordstr\"om is the critical solution in this phase transition. Note that none of these developments contain naked singularities. We conjecture that this critical behavior is preserved when we pass to the spherically symmetric Einstein--Maxwell-neutral scalar field model:

\begin{conj}[Extremality as a stable critical phenomenon] \label{conj:local-moduli-space}
Sufficiently close to extremal Reissner--Nordstr\"om, $\mathfrak M\setminus \mathfrak M_\stab^1$ has two connected components, which we denote by $\mathfrak M_\mathrm{sub}$ and $\mathfrak M_\mathrm{disp}$. These sets have the following properties:
\begin{enumerate}[(i)]
    \item The set $\mathfrak M_\mathrm{sub}$ consists of seed data leading to the formation of an asymptotically subextremal Reissner--Nordstr\"om black hole and is foliated by the hypersurfaces $\mathfrak M_\stab^\mathfrak r$ with $\mathfrak r<1$.
    \item Solutions evolving from seed data in $\mathfrak M_\mathrm{disp}$ do not form a black hole in the domain of dependence of the initial data hypersurface $C_\out\cup\underline C{}_\ing$.
\end{enumerate}
In particular, the ``extremal threshold'' $\mathfrak M_\stab^1$ delineates the boundary between black hole formation and dispersion in the domain of dependence of $C_\out\cup\underline C{}_\ing$.
\end{conj}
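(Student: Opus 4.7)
The plan is to establish the decomposition $\mathfrak M \setminus \mathfrak M_\stab^1 = \mathfrak M_\mathrm{sub} \sqcup \mathfrak M_\mathrm{disp}$ by combining \cref{thm:stability-rough}, the subextremal stability theorem of Dafermos--Rodnianski \cite{Price-law}, and a new global dispersion result for ``superextremal-type'' perturbations near extremal Reissner--Nordstr\"om. The guiding intuition is that along each modulation line $\mathcal L$ parametrized by $\alpha$, the critical value $\alpha_\star$ produced by \cref{thm:stability-rough} separates two open stable regimes: increasing $\alpha$ beyond $\alpha_\star$ pushes $\varpi$ above $|e|$ along the anchor curve $\Gamma$ and produces a subextremal black hole, while decreasing $\alpha$ below $\alpha_\star$ drives $\varpi$ below $|e|$ and prevents trapped-surface formation in the domain of dependence of $C_\out\cup\underline C{}_\ing$.

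First I would establish openness of $\mathfrak M_\mathrm{sub}$ together with its foliation by $\{\mathfrak M_\stab^{\mathfrak r}\}_{\mathfrak r<1}$. For seed data close to a strictly subextremal Reissner--Nordstr\"om bifurcate cone, this is essentially contained in \cite{Price-law}: a black hole forms, the geometry converges to Reissner--Nordstr\"om parameters $(M_\infty,e)$ with $|e|/M_\infty<1$, and $M_\infty$ depends continuously on the seed. The nontrivial point is to connect this to data which is close to ERN but has $\varpi$ slightly larger than $|e|$ along the bifurcation sphere. My approach would be to run a short-time continuation of the bootstrap of \cref{thm:stability-rough} with an anchored background of slightly superextremal mass $M_0+\alpha$ for $\alpha>\alpha_\star$ until one reaches a spacelike hypersurface on which the geometry is manifestly strictly subextremal (using the sharp decay $|\varpi-M|\les\ve^{3/2}\tau^{-3+\delta}$ along $\Gamma$ from \eqref{eq:boot-geo-intro}, until this estimate fails in a controlled way on the subextremal side), and then invoke \cite{Price-law} from that hypersurface onward.

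The main obstacle is constructing $\mathfrak M_\mathrm{disp}$. For $\alpha<\alpha_\star$ one expects $\varpi<|e|$ along $\Gamma$ in an appropriate sense, so the effective background is superextremal and there is no event horizon. The strategy would be to rerun the bootstrap of \cref{sec:intro-gauge} with a superextremal background in which $1-\tfrac{2m}{r}+\tfrac{e^2}{r^2}$ has a positive lower bound everywhere in the development; the $r^p$-hierarchy at infinity still closes, but the degenerate near-horizon hierarchy \eqref{eq:r-Mp-intro} is replaced by simpler non-degenerate estimates in the absence of a horizon, and the Taylor expansions \eqref{eq:Taylor-1}--\eqref{eq:Taylor-2} need to be redone around a superextremal background where $\overline{1-\tfrac{2m}{r}}$ no longer vanishes. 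The principal difficulty is to rule out naked singularity formation: in the spherically symmetric Einstein--Maxwell-scalar field model, one must show that the scalar field cannot focus $r$ to zero in the domain of dependence of $C_\out\cup\underline C{}_\ing$. A combination of charge-induced repulsion from the $e^2/r^2$ term, Christodoulou-type BV extension criteria, and the smallness of $\phi$ at initial data should suffice, but this is a substantial undertaking and is the step I expect to require genuinely new ideas.

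Granting openness of both $\mathfrak M_\mathrm{sub}$ and $\mathfrak M_\mathrm{disp}$, the final step is topological. Along any modulation line $\mathcal L$ transverse to $\mathfrak M_\stab^1$ at $\alpha_\star$, the two open sets $\mathcal L\cap\mathfrak M_\mathrm{sub}$ and $\mathcal L\cap\mathfrak M_\mathrm{disp}$ must lie on opposite sides of $\alpha_\star$: indeed, the defining condition $|\varpi(\Gamma(2^i))-M|\les \ve^{3/2}2^{(-3+\delta)i}$ that selects $\mathfrak A_i$ in \cref{sec:intro-modulation} fails in a \emph{signed} way when $\alpha$ leaves $\mathfrak A_i$, and the sign of the excess selects which regime the solution enters. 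This yields a two-sided decomposition of $\mathcal L\setminus\{\alpha_\star\}$ near $\alpha_\star$; combined with the codimension-one character of $\mathfrak M_\stab^1$ from \cref{conj:regularity} and a standard tubular neighborhood argument, the global two-component decomposition of $\mathfrak M\setminus\mathfrak M_\stab^1$ in a neighborhood of ERN follows.
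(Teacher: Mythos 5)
This statement is \cref{conj:local-moduli-space}, which the paper explicitly labels a \emph{conjecture} and does not prove. There is therefore no paper proof to compare your proposal against; the authors themselves describe the expected structure of the moduli space in \cref{sec:conjectures} as motivation and leave both \cref{conj:regularity} and \cref{conj:local-moduli-space} open. Your ``proof proposal'' is really a research program sketch, and you should recognize it as such rather than as a verification of an existing argument.

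That said, a few substantive remarks on the sketch. Your topological step explicitly invokes \cref{conj:regularity} (the $C^1$ regularity of $\mathfrak M_\stab^{\mathfrak r}$), which is itself unproven, so the argument is circular as a standalone proof of \cref{conj:local-moduli-space}. The bridging step from ``$\alpha$ slightly above $\alpha_\star$'' to the regime where \cite{Price-law} applies is not a short-time continuation issue: the entire technical difficulty of the paper is that the redshift degenerates as one approaches extremality, and \cite{Price-law} uses the horizon redshift with a nondegenerate surface gravity in an essential way. A solution with $|e|/M_\infty$ close to $1$ needs a quantitative interpolation between the degenerate $(\bar r - M)^{2-p}$ hierarchy and the exponential redshift, and this is precisely the content of the conjectured foliation, not a corollary of it. For the dispersive side, you correctly identify that ruling out trapped-surface and naked-singularity formation in the superextremal regime is the crux and would require new ideas; note also that, as the paper points out at the end of \cref{sec:separating}, in the neutral scalar field model $\underline C{}_\ing$ necessarily terminates at a sphere of positive radius, so ``dispersion'' here is only the statement that no black hole forms inside the domain of dependence, not global existence. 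Your sketch is consistent with the heuristics the authors offer (e.g., the sign of the excess of $\varpi$ over $M$ along $\Gamma$ distinguishing the two regimes), but it does not constitute a proof of either part (i) or part (ii), and the authors make no claim that it should.
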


\begin{figure}
\centering{
\def\svgwidth{13pc}
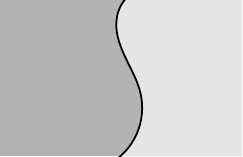}
\caption{A cartoon depiction of the conjectured structure of a neighborhood of extremal Reissner--Nordstr\"om in the moduli space $\mathfrak M$ of initial data posed as in \cref{fig:stability-intro}. We have suppressed infinitely many dimensions and emphasize the codimension-one property of the submanifolds $\mathfrak M_\stab^\mathfrak r$. We have drawn a distinguished point, which is extremal Reissner--Nordstr\"om. We have also drawn one of the lines $\mathcal L$ from \cref{fig:mod-space-lines} with the natural orientation given by increasing modulation parameter $\alpha$. The solid point on $\mathcal L$ corresponds to $\alpha=\alpha_\star$ (recall \cref{sec:intro-modulation}). See also \cref{fig:critical-behavior} below.}\label{fig:local-moduli-space} 
\end{figure}

This conjecture relates to \cref{fig:mod-space-lines} and our modulation scheme in the present paper as follows: We expect there to exist exactly one $\alpha_\star$ for which the modulation argument outlined in \cref{sec:intro-modulation} applies (as opposed to \emph{at least one}, which is what is shown in the present paper). For $\alpha>\alpha_\star$, we expect the seed data set $(\mathring\phi,\Lambda,M_0+\alpha,e)$ to lie in $\mathfrak M_\mathrm{sub}$ and if $\alpha<\alpha_\star$, we expect it to lie in $\mathfrak M_\mathrm{disp}$. We note that these conjectures are consistent with the numerical findings in \cite{Reall-numerical}. 

\begin{rk}
    By a standard Cauchy stability argument, generic solutions in $\mathfrak M_\mathrm{sub}$ which are very close to $\mathfrak M_\mathrm{stab}^1$ will experience a ``transient'' Aretakis instability, i.e., $Y\psi$ will remain constant and $Y^2\psi$ will grow for a long time before eventually decaying due to the redshift effect. This behavior was observed in \cite{Reall-numerical}.
\end{rk}

Since charge is conserved in the neutral scalar field model, it is not possible to have solutions with a nontrivial charge and a regular center. Therefore, \cref{conj:regularity,conj:local-moduli-space} are necessarily ``local near the event horizon'' because the ingoing cone $\underline C{}_\ing$ must terminate at a symmetry sphere with positive area-radius. Therefore, even in the ``dispersive'' case, the solutions are necessarily geodesically incomplete. The second- and third-named authors of the present paper have shown that extremal Reissner--Nordstr\"om black holes can form dynamically from initial data posed on $\Bbb R^3$ in the Einstein--Maxwell-charged scalar field and Einstein--Maxwell--charged Vlasov models \cite{KU22,KU24}. Therefore, \cref{conj:regularity,conj:local-moduli-space} can be formulated for these models, where data in $\mathfrak M_\mathrm{disp}$ now really ``exist globally'' and ``disperse'' in the literal sense. We refer to the introduction of \cite{KU24} for more details.

 \begin{figure}
\centering{
\def\svgwidth{35pc}
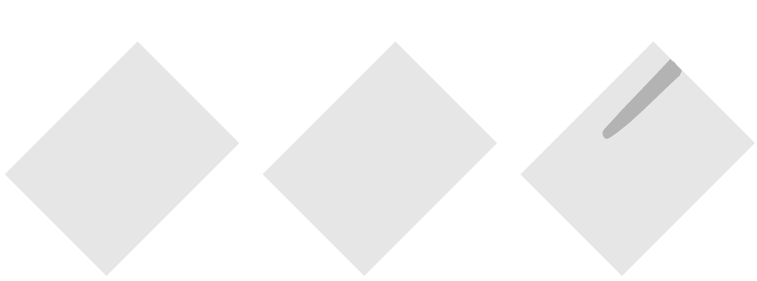}
\caption{Penrose diagrams depicting evolutions of seed data in the sets $\mathfrak M_\mathrm{disp}$, $\mathfrak M_\stab^1$, and $\mathfrak M_\mathrm{sub}$. One can think of these as arising from a one-parameter family of seed data crossing the extremal threshold in \cref{fig:local-moduli-space}. Spacetimes arising from $\mathfrak M_\mathrm{disp}$ have incomplete null infinity $\mathcal I^+$ because the ingoing cone $\underline C{}_\ing$ is incomplete and no black hole has formed. As proved in \cref{thm:stability-rough}, solutions on the critical threshold $\mathfrak M_\stab^1$ contain no trapped surfaces, but it follows from the work of Dafermos \cite{dafermos2005interior} that solutions arising from $\mathfrak M_\mathrm{sub}$ have a nonempty trapped region as depicted. The explicit Reissner--Nordstr\"om family itself already displays this transition behavior, in which case the trapped region in the third Penrose diagram would intersect the initial data, and the solid point would lie in the region $\{ r< M - \sqrt{M^2 - e^2} \}$ (in the left half of region III in \cite[Fig.~25]{HE73}) in the maximal analytic extension of subextremal Reissner--Nordstr\"om.}
\label{fig:critical-behavior}
\end{figure}

\subsubsection{The black hole interior at extremality}\label{sec:interior}

As explained in \cref{sec:rigidity-intro}, the black hole regions of the spacetimes evolving from data in $\mathfrak M_\stab$ are free of trapped surfaces, see \cref{fig:stability-intro}.  In particular, this shows that the interior is bounded to the future by a smooth outgoing null hypersurface (a trivial, smooth Cauchy horizon) emanating from the final sphere of $\underline C{}_\ing$ (the solid point in \cref{fig:stability-intro}) and a potentially singular ingoing Cauchy horizon $\mathcal{CH}^+$ emanating from $i^+$. Analogously to the surface gravity of $\mathcal H^+$ defined in \eqref{eq:surface-gravity}, one can define the surface gravity of $\mathcal{CH}^+$ in Reissner--Nordstr\"om for $0 < |e| \leq M$. In the extremal case $|e| = M$, this surface gravity also vanishes, resulting in the absence of Penrose's \emph{blueshift} instability \cite{Penrose:1968ar} and, somewhat ironically, renders the extremal Reissner--Nordstr\"om Cauchy horizon $\mathcal{CH}^+$ \emph{more} stable than its subextremal variant \cite{gajic1,gajic2}. Indeed, our decay estimates of \cref{thm:stability-rough} on the geometry and 
 the scalar field along $\mathcal H^+$ satisfy the assumptions in \cite[Theorem~5.1]{gajic-luk} and therefore the spacetimes evolving from $\mathfrak M_\stab$ can be extended in $C^{0,1/2}\cap H_{\mathrm{loc}}^1$ across $\mathcal{CH}^+$ and the Hawking mass remains uniformly bounded. This is in sharp contrast to the subextremal case, where generically the Hawking mass blows up identically along $\mathcal{CH}^+$ \cite{Dafermos-thesis,dafermos2005interior,LOSR23,Manohar} and extensions are believed to be less regular than $C^{0,1/2}\cap H^1_\mathrm{loc}$ \cite{LukOhI,luk2019strong}. In the extremal case, the existence of Cauchy data leading to any type of singularity at $\mathcal{CH}^+$ remains open---even for the linear wave equation! 
 
Note that according to \cref{conj:local-moduli-space}, asymptotically extremal Reissner--Nordstr\"om black holes only form from the (manifestly nongeneric) ``codimension~1 submanifold'' $\mathfrak M_\stab$, so their more regular Cauchy horizons do not endanger Penrose's strong cosmic censorship conjecture.

\subsection{Outlook}

We conclude the introduction with some thoughts and comments about future directions and related problems. We refer the reader also to the recent essay by Dafermos \cite{Daf24} about the stability problem for extremal black holes and the introduction of \cite{KU24} for connections with critical collapse.  

\subsubsection{Observational signatures and late-time tails}
\label{sec:tails}
In \cref{sec:sharp_asym}, we prove asymptotics for the scalar field in the near-horizon region, using the energy decay estimates from the proof of \cref{thm:stability-rough} (in particular the $(\bar{r}-M)^{2-p}$ hierarchy up to $p=3-\delta$). In \cref{sec:sharpness}, we then use this sharp decay estimate to show that the range of the $(\bar{r}-M)^{2-p}$ hierarchy is indeed sharp, as long as the asymptotic Aretakis charge is not too small compared to the master smallness parameter $\ve$ of the problem. Namely, we show that the non-degenerate integrated energy is \emph{unbounded} in any infinite area close to the horizon. This is in sharp contrast to the subextremal case, where the celebrated redshift estimate \cite{dafermos2009red} implies that this integrated energy is finite near the horizon. 

We note that compared to the uncoupled case, the leading order term in the decay for $\psi$ does not only include the asymptotic Aretakis charge $H_0[\phi]$ but also an error of order $\ve^3$ that comes from the geometric estimates. This marks a difference with the uncoupled case and is a manifestation of the failure of the Couch--Torrence conformal isometry in the dynamical case. This creates a marked difference in the proof of the failure of the redshift estimate compared to the uncoupled case, see already \cref{rk:failure}. 

In the present paper, we only obtain sharp leading asymptotics for the scalar field near the horizon. It would be an interesting problem to obtain asymptotics for the scalar field everywhere for our asymptotically extremal black holes, depending on the initial assumptions for the behavior of the scalar field at infinity. In the uncoupled case, assuming either compactly supported initial data or sufficiently fast decay so that the so-called \emph{Newman--Penrose constant} vanishes, one can extend the $r^p$ hierarchy to the range of $p\in(0,5)$. After some technical work (commutation and a ``time inversion'' process), one can show that the radiation field decays like $u^{-2}$ in a region close to $\mathcal I^+$, where the constant of the leading order decaying term depends not only on the Newman--Penrose constant but also on the Aretakis charge. This can be considered an \emph{observational signature} for extremal Reissner--Nordstr\"{o}m black holes and it would be interesting to extract this signature in our dynamical setting. For more details on observational signatures for extremal Reissner--Nordstr\"{o}m black holes, see \cite{angelopoulos2018late,angelopoulos2018late} and the numerical results of \cite{AKS-signature}.

\subsubsection{Outside of symmetry}\label{sec:EK}

The present work in symmetry is the first step towards addressing the stability problem for (and, more generally, the structure of moduli space near) extremal black holes in Einstein(--Maxwell) theory outside of symmetry. We wish to consider the \emph{extremal Kerr--Newman} family which consists of axisymmetric, rotating charged black holes with mass $M$, charge $e$, and specific angular momentum $a$ satisfying the relation $M^2=e^2+a^2$. When $a=0$, this reduces to extremal Reissner--Nordstr\"om and when $e=0$, to extremal Kerr. 

In the very slowly rotating case ($|a|\ll M$), we have the following:
\begin{conj}[Dafermos--Holzegel--Rodnianski--Taylor \cite{DHRT,Daf24}]\label{conj:DHRT}
    \cref{conj:regularity,conj:local-moduli-space} hold, suitably interpreted, in a neighborhood of very slowly rotating extremal Kerr--Newman, without symmetry. 
\end{conj}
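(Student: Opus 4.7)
The plan is to synthesize the teleological/modulation framework of Dafermos--Holzegel--Rodnianski--Taylor \cite{DHRT} for very slowly rotating subextremal Kerr with the horizon $(\bar r - M)^{2-p}$ hierarchy and dyadic modulation scheme developed in the present paper. First I would set up a double null foliation for the Einstein--Maxwell vacuum equations teleologically normalized to a final member of the extremal Kerr--Newman family, anchored on a timelike hypersurface of large constant area-radius as in \cref{fig:bootstrap-intro}, replacing the single modulation parameter $\alpha$ with a multi-parameter family modulating the final mass $M_\infty$, charge $e_\infty$, specific angular momentum $a_\infty$, and the direction of the rotation axis. Codimension-one is expected because $M_\infty^2 = e_\infty^2 + a_\infty^2$ cuts out a single real constraint within the full Kerr--Newman moduli; the other parameters are allowed to vary freely and are pinned by matching asymptotic quantities (Bondi mass, charge, angular momentum) measured on $\mathcal I^+$ and $\mathcal H^+$, just as in \cite{DHRT}.

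The analytic core is the construction of the analogues of the energy norms $\mathcal E_p$ and $\underline{\mathcal E}_p$ and the master hierarchy \eqref{eq:boundedness-intro-1}--\eqref{eq:r-Mp-intro} outside of symmetry. For the linearized problem this should be carried out at the level of the generalized Teukolsky/Regge--Wheeler system on extremal Kerr--Newman, where the ``horizon $r^p$'' framework introduced in \cite{angelopoulos2020late} and extended in this paper must be combined with the frequency-localized techniques used for very slowly rotating Kerr in \cite{DHR19}. The restriction $|a|\ll M$ is essential here to ensure that superradiance is controlled perturbatively and, crucially, that Gajic's higher azimuthal mode instabilities \cite{Gajic23} do not destabilize the hierarchy at the orders of differentiability we control; the upper endpoint $p=3-\delta$, which is sharp on the horizon and was indispensable in the present paper to simultaneously close the scalar field and geometric bootstraps, is expected to remain sharp. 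Once boundedness and integrated local energy decay are available, one then runs the nonlinear bootstrap on dyadic intervals $[2^i, 2^{i+1})$, extracting at each dyadic step a nested compact set in the modulation parameter space containing a configuration whose measured asymptotic parameters satisfy the extremality relation up to an error decaying like $2^{(-3+\delta)i}$, and in the limit one obtains a ``codimension-one submanifold'' $\mathfrak M_\stab^1$ analogous to the one in \cref{thm:stability-rough}.

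The main obstacles, in order of increasing severity, will be the following. The first is purely technical: lifting the $(\bar r - M)^{2-p}$ hierarchy and the sharp Taylor expansions \eqref{eq:Taylor-1}--\eqref{eq:Taylor-2} from the scalar wave equation to the Teukolsky--Regge--Wheeler system, where separability fails at the nonlinear level and where the degenerate redshift must be controlled simultaneously with the trapping at the photon sphere. The second is the interaction of the slow $\tau^{-1+\delta}$ decay of the nondegenerate $\partial_u$-energy (see \cref{rk:subextremal-decay}) with the nonlinear error terms arising from commuting with angular momentum operators, which in the extremal case do not commute with the wave operator in a way that preserves the hierarchy; this is a much more delicate issue than in \cite{DHRT} because there is no positive surface gravity to absorb derivative losses. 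The hardest step, and the one I would expect to require genuinely new ideas, is the proof of \cref{conj:local-moduli-space} outside of symmetry: namely, identifying the correct notion of ``dispersion'' (now in the sense that no trapped surface forms in the domain of dependence, and the eventual future Cauchy horizon of the maximal development is regular rather than a black hole horizon) and proving that the two sides of $\mathfrak M_\stab^1$ are precisely $\mathfrak M_\mathrm{sub}$ and $\mathfrak M_\mathrm{disp}$. This latter part requires a quantitative understanding of the apparent horizon in the fully nonlinear problem that goes beyond the soft monotonicity argument of \cref{sec:rigidity-intro}, which relied heavily on spherical symmetry.
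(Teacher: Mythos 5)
This is a \emph{conjecture}, not a theorem of the paper; the paper offers no proof, and indeed explicitly flags it as open (``an understanding of even the linear wave equation on very slowly rotating extremal Kerr--Newman (with $a\ne 0$) remains open''). There is therefore no ``paper's own proof'' to compare your proposal against. What you have written is a plausible and well-organized research program, not a proof, and it should be presented and assessed as such.

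On the substance of your sketch: the overall architecture (a multi-parameter teleological/modulation scheme \`a la \cite{DHRT} combined with a $(\bar r - M)^{2-p}$ horizon hierarchy, with codimension one arising from the extremality constraint $M_\infty^2 = e_\infty^2 + a_\infty^2$) is consistent with what the paper itself anticipates in \cref{sec:EK}. You are also right to identify the sharp $p = 3-\delta$ endpoint and the slow $\tau^{-1+\delta}$ decay of the nondegenerate ingoing energy as central difficulties that must survive the passage to the Teukolsky/Regge--Wheeler system. However, two points deserve sharpening. First, your framing treats the extremal Kerr--Newman linear theory (with $a\ne 0$) as a difficult but in-principle-available ingredient, whereas the paper emphasizes that even boundedness for the linear wave equation is not established there; the transverse-derivative instabilities of Aretakis \cite{Aretakis-Kerr} and especially the azimuthal instabilities of Gajic \cite{Gajic23} are not yet reconciled with a decay hierarchy even at the linear level, so the step ``once boundedness and ILED are available'' conceals the principal open problem rather than a technical hurdle. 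Second, the proposed mechanism for \cref{conj:local-moduli-space} outside symmetry --- distinguishing dispersion from subextremal black hole formation on either side of $\mathfrak M_\stab^1$ --- has no analogue of the Kommemi-type monotonicity used in \cref{sec:rigidity-intro}, and it is not presently clear what should replace it; this is arguably the deepest gap and you correctly flag it as requiring genuinely new ideas. In short: the proposal is a reasonable map of the territory, but it should not be mistaken for a route that currently reaches the destination, and the statement itself remains a conjecture.
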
 

Significant progress on linear stability in the subextremal case has been made by Giorgi and Wan in \cite{Giorgi-KN-1,Giorgi-KN-2,Giorgi-KN-3,Giorgi-KN-4}). For extremal Reissner--Nordstr\"om, a complete understanding of the Teukolsky equation was recently obtained by Apetroaie \cite{apetroaie}, who showed both decay statements and that a version of the Aretakis instability is present (see also the earlier \cite{LMR13}).\footnote{One could also consider this conjecture further restricted to the codimension-three set of data with vanishing final angular momentum (so as to only consider asymptotically Reissner--Nordstr\"om solutions, as in \cite{DHRT}), in which case \cite{apetroaie} would become all the more relevant.} Unfortunately, an understanding of even the linear wave equation on very slowly rotating extremal Kerr--Newman (with $a\ne 0$) remains open. In addition to requiring a detailed understanding of linear theory around very slowly rotating extremal Kerr--Newman, we expect that resolving \cref{conj:DHRT} will also rely on a nontrivial understanding of the null structure of the Einstein equations in the near-horizon region, analogous to the nonlinear results of \cite{A16,AAG-nonlinear-1,AAG20} and the present work.

The case of extremal Kerr is considerably more speculative. While mode stability for the linear wave equation has been shown by Teixeira da Costa \cite{Rita-mode-stability}, axisymmetric scalar perturbations have been shown to exhibit a similar non-decay and growth hierarchy as general scalar perturbations of extremal Reissner–Nordstr\"om \cite{Aretakis-Kerr}. Moreover, even pointwise boundedness for linear waves remains open. Recently, Gajic has shown the existence of stronger \emph{azimuthal instabilities} (i.e., associated to $m\ne0$) on extremal Kerr \cite{Gajic23} (see also the earlier heuristic analysis \cite{cgz-exkerr}). In particular, growth is already triggered at the first order of differentiability. It remains to be seen if these instabilities are compatible with global existence and stability for nonlinear wave equations! For more discussion, we refer to \cite{Daf24}. 

\subsection*{Acknowledgments} The authors would like to thank Mihalis Dafermos for insightful conversations on teleology and dyadicism. They would also like to thank John Anderson, Jonathan Luk, Harvey Reall, Igor Rodnianski, Rita Teixeira da~Costa, and Claude Warnick for their interest, helpful discussions, and comments. R.U. acknowledges support from the NSF grant DMS-2401966 and a Miller Fellowship.

\section{Preliminaries}

\subsection{The Einstein--Maxwell-uncharged scalar field system in spherical symmetry}\label{sec:the-model}

\subsubsection{Double null gauge}\label{sec:double-null-gauge}

Let $(\mathcal M,g)$ be a smooth, connected, time-oriented, four-dimensional Lorentzian manifold. We say that $(\mathcal M,g)$ is \emph{spherically symmetric} if $\mathcal M $ splits diffeomorphically as $\mathring{\mathcal Q}\times S^2$ with metric 
\begin{align*}
    g = g_\mathcal{Q} + r^2g_{S^2},
\end{align*}
where $(\mathcal Q,g_\mathcal Q)$ is a (1+1)-dimensional Lorentzian spacetime with boundary (corresponding to the initial data hypersurface\footnote{In this paper, our spacetimes do not include a center of symmetry $\Gamma\subset\{r=0\}$.}), $g_{S^2}\doteq d\vartheta^2+\sin^2\vartheta\,d\varphi^2$ is the round metric on the unit sphere, and $r$ is a nonnegative function on $\mathcal Q$ which can be geometrically interpreted as the area-radius of the orbits of the isometric $\mathrm{SO}(3)$ action on  $(\mathcal M,g)$. We assume that $(\mathcal Q,g_\mathcal Q)$ admits a \emph{global double-null coordinate system} $(u,v)$ such that the metric $g$ takes the form
\begin{equation}
g = -\Omega^2\,dudv + r^2  g_{S^2}\label{eq:dn}
\end{equation}
for a positive function $\Omega^2\doteq -2g_\mathcal{Q}(\partial_u,\partial_v)$ on $\mathcal Q$ and such that $\partial_u$ and $\partial_v$ are future-directed. The constant $u$ and $v$ curves are null in $(\mathcal Q,g_\mathcal Q)$ and correspond to null hypersurfaces ``upstairs'' in the full spacetime $(\mathcal M,g)$. We will often refer interchangeably to  $(\mathcal M,g)$ and the reduced spacetime $(\mathcal Q,r,\Omega^2)$.

The double null coordinates $(u,v)$ above are not uniquely defined. For any strictly increasing smooth functions $U,V: \R \to \R$, $(\tilde u,\tilde v)=(U(u),V(v))$ defines a double null coordinate system on $\mathcal Q$ for which $g= -\tilde \Omega^2 \,d \tilde u d \tilde v + \tilde r^2g_{S^2}$, where $\tilde \Omega^2(\tilde u , \tilde v) = (U' V')^{-1} \Omega^2(U^{-1}(\tilde u), V^{-1}(\tilde v)) $ and $r (\tilde u, \tilde v) = r(U^{-1} (\tilde u), V^{-1} (\tilde v))$.

Recall the \emph{Hawking mass} $m:\mathcal M  \to\mathbb R$, which is defined by the relation  \begin{equation*}
   1-\frac{2m}{r} \doteq g(\nabla r, \nabla r)
\end{equation*}
and can be viewed as a function on $\mathcal Q$ according to 
\begin{equation}\label{eq:Hawking-mass}
    m = \frac r2 \left( 1+  \frac{4\partial_ur\partial_vr}{\Omega^2}\right).
\end{equation}
This function is clearly independent of the choice of double null gauge. 

We will consider spherically symmetric spacetimes equipped with spherically symmetric electromagnetic fields with constant Coulomb charge $e\in\Bbb R$. The field strength tensor takes the simple form 
\begin{equation}
    F = -\frac{\Omega^2e}{2r^2}du\wedge dv\label{eq:F-sph-sym}
\end{equation}
and the charge can be recovered from Gauss' formula 
\begin{equation}\label{eq:charge-definition}
    e=\frac{1}{4\pi}\int_{\{(u,v)\}\times S^2}\star F
\end{equation}
where $\star$ is the Hodge star operator of $(\mathcal M,g)$ defined relative to the orientation $du\wedge dv\wedge d\vartheta\wedge d\varphi$.

To see best the good structure of the spherically symmetric Einstein equations, it is very helpful to introduce some shorthand notation. We recall the \emph{renormalized Hawking mass} 
\begin{equation}
    \varpi\doteq m+\frac{e^2}{2r},\label{eq:varpi-defn}
\end{equation}
 the \emph{mass aspect function}
\begin{equation}
    \mu\doteq \frac{2m}{r}= \frac{2\varpi}{r}-\frac{e^2}{r^2},\label{eq:mu-defn}
\end{equation} and the traditional notation \begin{equation}
    \nu\doteq\partial_ur,\quad\lambda\doteq \partial_vr, \quad\kappa\doteq-\frac{\Omega^2}{4\partial_ur} = \frac{\lambda}{1-\mu},\quad \gamma\doteq - \frac{\Omega^2}{4\partial_vr} =\frac{\nu}{1-\mu}.\label{eq:greek-letters}
\end{equation}
As the degenerate redshift effect of the event horizon of extremal Reissner--Nordstr\"om will play an important role in this paper, we define the \emph{redshift factor}
\begin{equation}
 \label{eq:varkappa}   \varkappa\doteq \frac{1}{r^2}\left(\varpi-\frac{e^2}{r}\right).
\end{equation}
The connection of this \emph{function} $\varkappa$ with the \emph{constant} $\varkappa$ in \eqref{eq:surface-gravity} will be explained in \cref{rk:redshift-factor} below.

\subsubsection{The system of equations} 

\begin{defn}
    The \emph{Einstein--Maxwell-neutral scalar field system} consists of a spacetime $(\mathcal M,g)$ equipped with an electromagnetic field $F$ and a scalar field $\phi:\mathcal M\to\Bbb R$ satisfying the equations 
    \begin{equation}
    \Ric(g)-\tfrac 12 R(g)g = 2(T^\mathrm{EM}+T^\mathrm{SF}),\label{eq:EFE}
\end{equation}
\begin{equation}
    dF=0,\quad d\star F=0,\label{eq:Maxwell}
\end{equation}
\begin{equation}
    \Box_g \phi =0,\label{eq:phi-wave-general}
\end{equation}
where the energy-momentum tensors are defined by 
\begin{align}
 \nonumber   T^\mathrm{EM}_{\mu\nu}&\doteq F_{\mu\alpha}F_\nu{}^\alpha-\tfrac 14g_{\mu\nu}F_{\alpha\beta}F^{\alpha\beta},\\
 \label{eq:T-sf}   T^\mathrm{SF}_{\mu\nu}&\doteq \partial_\mu\phi\partial_\nu\phi-\tfrac 12 g_{\mu\nu}\partial_\alpha\phi\partial^\alpha\phi.
\end{align}
\end{defn}

We say that $(\mathcal M,g,F,\phi)$ is \emph{spherically symmetric} if $(\mathcal M,g)$ is spherically symmetric as defined in \cref{sec:double-null-gauge}, $F$ has the form \eqref{eq:F-sph-sym}, and $\phi$ is independent of the angular coordinates $\vartheta$ and $\varphi$. In this case, Einstein's equation \eqref{eq:EFE} reduces to the wave equations
\begin{align}
\label{eq:r-wave}    \partial_u\partial_vr&=-\frac{\Omega^2}{4r}-\frac{\partial_ur\partial_vr}{r}+\frac{\Omega^2e^2}{4r^3},\\
    \label{eq:Omega-wave}\partial_u\partial_v{\log\Omega^2}&=\frac{\Omega^2}{2r^2}+\frac{2\partial_ur\partial_vr}{r^2}-\frac{\Omega^2e^2}{r^4} - 2 \partial_u \phi \partial_v \phi ,
\end{align}
and Raychaudhuri's equations
\begin{align}
   \label{eq:Ray-u} \partial_u\left(\frac{\partial_ur}{\Omega^2}\right)&=-\frac{r}{\Omega^2}(\partial_u\phi)^2,\\
  \label{eq:Ray-v}\partial_v\left(\frac{\partial_vr}{\Omega^2}\right)&=-\frac{r}{\Omega^2}(\partial_v\phi)^2.
\end{align}
The Maxwell equations \eqref{eq:Maxwell} are automatically satisfied since $e$ is constant. Finally, the wave equation \eqref{eq:phi-wave-general} is equivalent to 
\begin{equation}
     \partial_u\partial_v\phi= - \frac{\partial_vr\partial_u\phi}{r}-\frac{\partial_ur\partial_v\phi}{r}.\label{eq:phi-wave-1}
\end{equation}

We will not actually work with the equations \eqref{eq:r-wave}--\eqref{eq:phi-wave-1} as written here and will instead use the quantities \eqref{eq:varpi-defn}--\eqref{eq:varkappa} which satisfy more helpful equations. First, the wave equation \eqref{eq:r-wave} can be written in the compact form 
\begin{equation}
    \label{eq:nu-v}\partial_u\lambda=\partial_v\nu=2\kappa\nu\varkappa,
\end{equation}
which allows us to write \eqref{eq:phi-wave-1} as an inhomogeneous wave equation for the \emph{radiation field} $\psi\doteq r\phi$,
\begin{equation}\label{eq:wave-equation-psi}
    \partial_u\partial_v\psi=(\partial_u\partial_vr)\phi=2\kappa\nu\varkappa\phi.
\end{equation}
Using \eqref{eq:r-wave}--\eqref{eq:Ray-v}, we derive the fundamental relations \begin{align}
  \label{eq:varpi-u}  \partial_u\varpi &= (1-\mu)\frac{r^2}{2\nu}(\partial_u\phi)^2,\\
 \label{eq:varpi-v}   \partial_v\varpi &= \frac{r^2}{2\kappa}(\partial_v\phi)^2,\\
\label{eq:kappa-u}    \partial_u\kappa &=\frac{r\kappa}{\nu}(\partial_u\phi)^2,\\
    \label{eq:gamma-v}    \partial_v\gamma&=\frac{r\gamma}{\lambda}(\partial_v\phi)^2.
\end{align}

Finally, we note that for any $\phi$ satisfying the wave equation \eqref{eq:phi-wave-1} and any $C^1$ vector field $X=X^u\partial_u+X^v\partial_v$ and any $C^2$ function $h=h(r)$, we have the multiplier identities \begin{equation}\label{eq:energy-identity}
\partial_v(r^2(\partial_u\phi)^2X^u)+\partial_u(r^2(\partial_v\phi)^2X^v)= r^2\partial_vX^u(\partial_u\phi)^2+r^2\partial_uX^v(\partial_v\phi)^2-2r\left(\nu X^u+\lambda X^v\right)\partial_u\phi\partial_v\phi
    \end{equation}
      and  \begin{equation}  \label{eq:lower-order-multiplier}
           \partial_u\partial_v ( h r\phi^2) - \partial_u( r  h' \lambda  \phi^2 ) - \partial_v ( r  h' \nu \phi^2 ) + (r\lambda \nu h'' + 2 r \kappa \nu \varkappa h' - 2 \kappa\varkappa \nu h 
        ) \phi^2 - 2h r\partial_u \phi \partial_v \phi = 0.
    \end{equation}

\subsection{The geometry of Reissner--Nordstr\"om}\label{sec:geometry-RN}

In this section, we briefly review some important geometric aspects of Reissner--Nordstr\"om black hole spacetimes. See \cref{fig:ERN} below for the Penrose diagram of extremal Reissner--Nordstr\"om. Let $M>0$ and $e\in\Bbb R$. Then the Reissner--Nordstr\"om metric is written in Schwarzschild coordinates $(t,r,\vartheta,\varphi)$ as
\begin{equation*}
     g_{M,e}= -D(r)dt^2+D(r)^{-1}dr^2+r^2g_{S^2},
\end{equation*}
 where $D(r)\doteq 1-\frac{2M}{r}+\frac{e^2}{r^2}$. When $|e|\le M$ and $g_{M,e}$ analytically extends to describe a black hole spacetime, these coordinates cover the domain of outer communication of the black hole for $(t,r) \in \mathbb R \times (r_+,\infty)$, where $r_+ \doteq  M+\sqrt{M^2 - e^2}$. In order to cover the event horizon $\mathcal H^+$ located at $r=r_+$, we introduce the \emph{tortoise coordinate} \begin{equation*}
    r_*(r)\doteq \int^r \frac{dr'}{D(r')},
\end{equation*} where we have left the integration constant ambiguous. In the extremal case $|e|=M$, we have
\begin{equation}
 r_*(r)=   r-M-\frac{M^2}{r-M}+2M\log(r-M) + \mathrm{constant},\label{eq:tortoise}
\end{equation} which will be used in \cref{sec:sharp_asym} below.
By defining the advanced time coordinate $v\doteq \frac 12(t+r_*)$, we bring $g_{M,e}$ into the \emph{ingoing Eddington--Finkelstein} form 
\begin{equation*}
    g_{M,e}= - 4Ddv^2+4dvdr+r^2g_{S^2},
\end{equation*} which is regular across $\mathcal H^+$. The vector field $T\doteq \frac 12\partial_v$ is the time-translation Killing vector field (equal to $\partial_t$ in Schwarzschild coordinates in the domain of outer communication).
In these coordinates, the vector field $Y\doteq \partial_r$ is past-directed null and is transverse to $\mathcal H^+$. It is also clearly translation-invariant in the sense that 
\begin{equation*}
    \mathcal L_TY=[\partial_v,\partial_r]=0.
\end{equation*}

In order to put the Reissner--Nordstr\"om metric in double null gauge, we define the retarded time coordinate $u\doteq \frac 12(t-r_*)$, so that the metric takes the \emph{Eddington--Finkelstein double null} form
\begin{equation*}
    g_{M,e}=-4D\,dudv+r^2g_{S^2}.
\end{equation*} The area-radius $r$ is now an implicit function of $u$ and $v$. These coordinates once again only cover the domain of outer communication if $|e|\le M$. The event horizon $\mathcal H^+$ formally corresponds to $u=+\infty$ and null infinity $\mathcal I^+$ formally corresponds to $v=+\infty$. From the identity $r_*=v-u$, we infer that in these coordinates, $\partial_ur=-D$ and $\partial_vr=D$. Since clearly $\Omega^2=4D$ and $D=1-\mu$, this implies that 
\begin{equation*}
    \gamma=-1\quad\text{and}\quad \kappa = 1
\end{equation*}
in these coordinates. Since $\partial_ur=-D$, we have that 
\begin{equation*}
    Y = (\partial_ur)^{-1}\partial_u.
\end{equation*} 

Finally, by simply changing the origin of the $(u,v)$ coordinates, we prove:

 \begin{figure}
\centering{
\def\svgwidth{11pc}
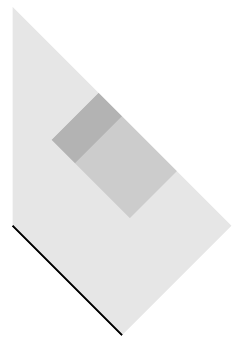}
\caption{A Penrose diagram of (one period of) the maximally extended extremal Reissner--Nordstr\"om solution. The union of the two darker shaded regions is the domain of dependence of the bifurcate null hypersurface $C_\out\cup\underline C{}_\ing$ and represents the solution we are perturbing around in \cref{thm:stability-rough}. We prove stability of the medium gray colored region.}
\label{fig:ERN}
\end{figure}

\begin{lem}\label{lem:existence-anchoring}
    Let $M>0$ and $|e|\le M$ be Reissner--Nordstr\"om black hole parameters. Then for any $R_0\in (r_+,\infty)$ and $(u_0,v_0)\in\Bbb R^2$, there exists a unique function $\bar r:\Bbb R^2\to (r_+,\infty)$ such that 
    \begin{equation}\label{eq:ERN-EF}
        \bar g\doteq -4D(\bar r)\,dudv+ \bar r^2(u,v)g_{S^2}
    \end{equation}
    on $\Bbb R^2\times S^2$ is isometric to the Reissner--Nordstr\"om black hole exterior with parameters $(M,e)$ and such that 
    \begin{equation*}\bar\varpi=M,\quad
      \partial_u\bar r=-(1-\bar\mu),\quad \partial_v\bar r=1-\bar\mu,\quad\text{and}\quad   \bar r(u_0,v_0)=R_0.
    \end{equation*}
\end{lem}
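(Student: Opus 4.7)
The plan is to construct $\bar r$ explicitly by inverting the tortoise coordinate of Reissner--Nordstr\"om, and then verify the anchoring, gauge, isometry, and uniqueness statements one by one. Set $D(r) \doteq 1 - \tfrac{2M}{r} + \tfrac{e^2}{r^2}$. Since $D(r_+)=0$ and $D$ is strictly increasing past $r_+$, $D$ is positive on $(r_+,\infty)$, so the tortoise function
$$
r_*(r) \doteq \int_{R_0}^r \frac{dr'}{D(r')}
$$
is smooth, strictly increasing, and vanishes at $r=R_0$. Because $D$ vanishes at least to first order at $r_+$ (to second order in the extremal case, by \eqref{eq:tortoise}) and tends to $1$ at infinity, $r_*$ diverges to $\mp\infty$ at the endpoints, so $r_*:(r_+,\infty)\to\Bbb R$ is a smooth diffeomorphism. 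I would then define
$$
\bar r(u,v) \doteq r_*^{-1}\bigl((v-v_0)-(u-u_0)\bigr),
$$
for which $\bar r(u_0,v_0)=R_0$ is immediate, and the inverse function theorem (using $(r_*^{-1})'(s)=D(r_*^{-1}(s))$) gives $\partial_u\bar r = -D(\bar r)$, $\partial_v\bar r = D(\bar r)$.

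To match the gauge conditions phrased in terms of $1-\bar\mu$, I would compute the Hawking mass of $\bar g$ directly from \eqref{eq:Hawking-mass}. Reading $\Omega^2 = 4D(\bar r)$ off of \eqref{eq:ERN-EF} and substituting the computed $\partial_u\bar r, \partial_v\bar r$ yields $\bar m = \tfrac{\bar r}{2}\bigl(1 - D(\bar r)\bigr) = M - e^2/(2\bar r)$, hence $\bar\varpi = M$ by \eqref{eq:varpi-defn}. In particular $1 - \bar\mu = D(\bar r)$, which confirms the stated gauge conditions $\partial_u\bar r = -(1-\bar\mu)$ and $\partial_v\bar r = 1-\bar\mu$.

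The isometry with Reissner--Nordstr\"om in Schwarzschild form is realized by the map $\Phi(u,v,\vartheta,\varphi) \doteq (u+v,\bar r(u,v),\vartheta,\varphi)$, i.e., $t=u+v$ and $r=\bar r$. Since $dr = D(\bar r)(dv-du)$, a direct computation gives $-4D(\bar r)\,du\,dv = -D\,dt^2 + D^{-1}\,dr^2$, so $\Phi^*g_{M,e} = \bar g$. Uniqueness of $\bar r$ is then soft: summing the two gauge conditions gives $(\partial_u + \partial_v)\bar r' = 0$ for any other candidate $\bar r'$, so $\bar r' = F(v-u)$ with $F'=D\circ F$ and $F(v_0-u_0)=R_0$. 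This first-order ODE admits a unique maximal solution by Picard--Lindel\"of (using smoothness of $D$ on $(r_+,\infty)$), which must coincide with $r_*^{-1}\bigl(\,\cdot\, - (v_0-u_0)\bigr)$, so $\bar r' = \bar r$.

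There is no real obstacle here: the proof is essentially a direct computation, and the only subtlety is ensuring $\bar r$ stays in $(r_+,\infty)$ on all of $\Bbb R^2$. This is automatic from the bijectivity of $r_*:(r_+,\infty)\to\Bbb R$, which in turn relied crucially on the divergence of the tortoise integral at the horizon---a divergence that holds uniformly in the parameter range $|e|\le M$, covering both the subextremal and extremal cases in a single stroke.
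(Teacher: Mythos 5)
Your proof is correct and is essentially the fully-spelled-out version of the paper's one-line proof ("by simply changing the origin of the $(u,v)$ coordinates"): the paper's Section 2.2 already sets up $r_* = v - u$, $\partial_u r = -D$, $\partial_v r = D$, and the lemma amounts to shifting the origin so that $r_*^{-1}(0) = R_0$. Your tortoise-coordinate construction, verification of $\bar\varpi = M$, and reduction of uniqueness to Picard--Lindel\"of for the ODE $F' = D\circ F$ are all correct and are the natural way to flesh out that one line.
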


As mentioned above, the Eddington--Finkelstein double null coordinates defined in this section do not cover the event horizon, but it can be formally attached as the null hypersurface $u=+\infty$. With this understanding, we may extend geometric quantities associated to the metric \eqref{eq:ERN-EF} to the horizon by setting
\begin{equation*}
    \bar r(\infty,\cdot)=M,\quad  \bar \lambda(\infty,\cdot)=0, \quad \bar \mu(\infty,\cdot)=1,\quad \bar\varkappa(\infty,\cdot) =0.
\end{equation*}

\begin{rk}\label{rk:redshift-factor}
    In exact Reissner--Nordstr\"om, we may write the time-translation Killing vector field relative to a generic double null coordinate system as the \emph{Kodama vector field}
    \begin{equation}\label{eq:ERN-Kodama}
        T= \frac{2\lambda}{\Omega^2}\partial_u-\frac{2\nu}{\Omega^2}\partial_v,
    \end{equation} as long as $u$ and $v$ are chosen so that $T$ is future-directed timelike for $r>r_+$. 
 One can then compute
    \begin{equation*}
        \nabla_TT = -\varkappa T^u\partial_u+\varkappa T^v\partial_v,
    \end{equation*} where $\varkappa$ is the redshift factor defined in \eqref{eq:varkappa}.
    Therefore, on the event horizon $\mathcal H^+$, where $T^u=0$, 
    \begin{equation*}
        \nabla_TT|_{\mathcal H^+}=\varkappa T|_{\mathcal H^+}.
    \end{equation*}
Compare this identity with \eqref{eq:surface-gravity}.
\end{rk}

\subsection{The characteristic initial value problem}\label{sec:data-and-IVP}

\subsubsection{Characteristic data and existence in thin slabs}

Given $U_0<U_1$ and $V_0<V_1$, let
\begin{align*}
  \mathcal C(U_0,U_1,V_0,V_1)  &\doteq (\{U_0\}\times[V_0,V_1])\cup([U_0,U_1]\times\{V_0\}), \\
   \mathcal R(U_0,U_1,V_0,V_1) &\doteq [U_0,U_1]\times [V_0,V_1].
\end{align*} We will omit the decoration $(U_0,U_1,V_0,V_1)$ from $\mathcal C$ and $\mathcal R$ when the implied meaning is clear. A continuous function $f:\mathcal C\to \Bbb R$ is said to be \emph{smooth} if $f_\out\doteq f|_{\{U_0\}\times [V_0,V_1]}$ and $f_\ing\doteq f|_{[U_0,U_1]\times\{V_0\}}$ are $C^\infty$ functions.

\begin{defn}
  A smooth \emph{(bifurcate) characteristic initial data set} for the Einstein--Maxwell-scalar field system consists of a triple of smooth functions $\mathring r,\mathring\Omega^2,\mathring\phi:\mathcal C\to\Bbb R$ with $\mathring r$ and $\mathring\Omega^2$ positive, together with a real number $e$. The functions $\mathring r,\mathring\Omega^2$, and $\mathring\phi$ are assumed to satisfy \eqref{eq:Ray-u} on $[U_0,U_1]\times\{V_0\}$ and \eqref{eq:Ray-v} on $\{U_0\}\times[V_0,V_1]$.
\end{defn}

\begin{prop}\label{prop:slab-existence}
    For any $U_0<U_1$, $V_0<V_1$ and $B>0$, there exists a constant $\ve_\mathrm{slab}$ with the following property. Let $(\mathring r,\mathring\Omega^2,\mathring \phi,e)$ be a characteristic initial data set of the Einstein--Maxwell-scalar field system on $\mathcal C(U_0,U_1,V_0,V_1)$ satisfying
    \begin{equation*}
        \|{\log \mathring r}\|_{C^1(\mathcal C)}+\|{\log\mathring\Omega^2}\|_{C^1(\mathcal C)}+\|\mathring\phi\|_{C^1(\mathcal C)}+|e|\le B.
    \end{equation*}
    Then there exists a unique, smooth solution $(r,\Omega^2,\phi,e)$ to the spherically symmetric Einstein--Maxwell-scalar field system on the slab
    \begin{equation*}
      \mathcal D\doteq  \mathcal R(U_0,U_0+\min\{\ve_\mathrm{slab},U_1-U_0\},V_0,V_1)\cup\mathcal R(U_0,U_1,V_0,V_0+\min\{\ve_\mathrm{slab},V_1-V_0\})
    \end{equation*}
    which extends the initial data. Moreover, the norms $\|{\log r}\|_{C^k(\mathcal D)},$ $\|{\log\Omega^2}\|_{C^k(\mathcal D)}$, and $\|\phi\|_{C^k(\mathcal D)}$ are bounded in terms of appropriate higher order initial data norms. 
\end{prop}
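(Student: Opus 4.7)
This is a standard local existence result for a quasilinear hyperbolic system in characteristic gauge, and I would prove it by a Picard iteration whose contraction constant is controlled by the slab width. First I would reformulate the system as integral equations along characteristics, using the unknowns $(r,\kappa,\gamma,\varpi,\phi)$. Integrating the transport equation \eqref{eq:kappa-u} along the $u$-direction gives
\[
  \kappa(u,v) = \mathring\kappa(U_0,v)\exp\!\left(\int_{U_0}^u \tfrac{r}{\nu}(\partial_u\phi)^2\, du'\right),
\]
and similarly for $\gamma$ along $v$ from \eqref{eq:gamma-v}; $\varpi$ is recovered by integrating \eqref{eq:varpi-u} along $u$ starting from its initial value on $C_\mathrm{out}$; $r$ is recovered from $\nu=(1-\mu)\gamma$ and $\lambda=(1-\mu)\kappa$ with $\mu = 2\varpi/r - e^2/r^2$; and $\phi$ is reconstructed from the wave equation \eqref{eq:phi-wave-1} by the Duhamel-type formula
\[
  \phi(u,v) = \mathring\phi_\out(v) + \mathring\phi_\ing(u) - \mathring\phi(U_0,V_0) - \int_{U_0}^u\!\!\int_{V_0}^v \tfrac{\lambda\partial_u\phi + \nu\partial_v\phi}{r}\,dv'du'.
\]
The Raychaudhuri equations on the initial cones and the constancy of $e$ are imposed as compatibility.

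The second step is to iterate the resulting map $\mathcal T$ on a closed ball in $C^1$ of one of the two component rectangles comprising $\mathcal D$. On the rectangle $\mathcal R(U_0,U_0+\ve_\mathrm{slab},V_0,V_1)$, the short direction is $u$, and analogously on the other rectangle; in each case I would use the representations above in which every nonlinear coupling involves at least one integral in the short direction, producing a factor of $O(\ve_\mathrm{slab})$. Because the data are $C^1$-bounded by $B$ and $\mathring r,\mathring\Omega^2$ are uniformly bounded above and below on $\mathcal C$, $\mathcal T$ maps a fixed ball of radius $C(B)$ in $C^1$ into itself and is a contraction on it, provided $\ve_\mathrm{slab}$ is chosen small in terms of $B$. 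The two resulting fixed points agree in their overlap by uniqueness and assemble into a single solution on $\mathcal D$; the Raychaudhuri constraints propagate from $\mathcal C$ automatically via the other equations.

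For the higher-order bounds, I would commute $\partial_u$ and $\partial_v$ through the integral equations and inductively derive, at each differentiability level, a linear transport/integral inequality for the top-order derivatives whose coefficients and forcing are controlled by the lower-order bounds established at the previous step and by higher-order initial-data norms; Gr\"onwall then yields $C^k$ bounds in terms of $C^k$ norms of the seed. I do not expect any serious obstacle: the main technical point is simply to align the ``short'' integration direction on each of the two rectangles so that nonlinear couplings carry the small factor $\ve_\mathrm{slab}$, and the result is essentially the standard Rendall/Christodoulou characteristic local existence theorem adapted to this specific system.
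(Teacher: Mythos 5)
Your plan takes a genuinely more direct route than the paper. The paper only sketches the proof: it invokes standard local existence on small coordinate rectangles and then continues to a thin slab by exploiting the ``null structure''---namely that the system can be read as linear ODEs in $u$ for the $v$-derivatives of the unknowns and vice versa---citing \cite{KU24}, \cite{Luk-local-existence}, and, as an alternative, the generalized extension principle of \cite{Kommemi13}. You instead propose to set up the Picard iteration on the full thin slab from the start. That is a legitimate approach, but as written it has a real gap.

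The gap is your claim that ``every nonlinear coupling involves at least one integral in the short direction, producing a factor of $O(\ve_\mathrm{slab})$,'' which is false for the $\gamma$-component (and, once you differentiate your Duhamel formula to close the iteration in $C^1$, also for $\partial_u\phi$). On $\mathcal R(U_0,U_0+\ve_\mathrm{slab},V_0,V_1)$ you update $\gamma$ by integrating \eqref{eq:gamma-v} over the \emph{long} $v$-direction, so the difference of two iterates of $\gamma$ is controlled only by $(V_1-V_0)$ times the sup-norm of the differences of the inputs $(r,\lambda,\partial_v\phi)$; there is no factor of $\ve_\mathrm{slab}$, and $V_1-V_0$ is of order one. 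Likewise $\partial_u\phi$ obeys the long-direction ODE $\partial_v(\partial_u\phi)=-\tfrac{\lambda}{r}\partial_u\phi-\tfrac{\nu}{r}\partial_v\phi$. Consequently the map $\mathcal T$ you describe is \emph{not} a contraction with constant $O(\ve_\mathrm{slab})$ on the naive $C^1$ ball. What rescues the argument is exactly the null structure the paper flags: both long-direction equations are \emph{linear} in the quantity being evolved (your exponential formula for $\gamma$ is already exploiting this, even if the reason is misattributed). To make the fixed-point argument close you must exploit this linearity explicitly---for instance by (i) slaving $\gamma$ and $\partial_u\phi$ to the remaining unknowns via Gr\"onwall in $v$ at each stage, so that only quantities with a genuine $\ve_\mathrm{slab}$ gain enter the contraction estimate, or (ii) equipping the long direction with a Bielecki weight $e^{-Mv}$ with $M$ large depending on $B$, so that long integrals contribute $O(1/M)$, or (iii) reverting to the paper's small-squares-plus-continuation scheme. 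Without one of these devices the contraction as you have written it does not close.
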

\begin{proof}
  Local existence for the characteristic initial value problem in small rectangles is obtained by a standard iteration argument, see for instance \cite[Appendix A]{KU24}. In order to extend the region of existence to a thin slab, one exploits the null structure exhibited by the system \eqref{eq:r-wave}--\eqref{eq:phi-wave-1}: the equations can be viewed as linear ODEs in $u$ for the $v$-derivatives of the unknowns, and vice-versa. See \cite{Luk-local-existence} for an elaboration of this principle in a much more complicated setting. Alternatively, one can use the ``generalized extension principle'' for this model (see \cite{Kommemi13}) and argue as in \cite[Proposition 3.17]{KU24}.
\end{proof}

We also have a natural notion of maximal globally hyperbolic development \cite{CBG69,Zorn-slayed} in spherical symmetry, which can be directly realized as a subset of the domain of dependence of $\mathcal C$ viewed as a subset of $(1+1)$-dimensional Minkowski space.

\begin{prop}
 Let $(\mathring r,\mathring\Omega^2,\mathring \phi,e)$ be a characteristic initial data set of the Einstein--Maxwell-scalar field system on $\mathcal C(U_0,U_1,V_0,V_1)$, where $U_1$ and $V_1$ are allowed to take the value $+\infty$. Then there exists a set $\mathcal Q_\mathrm{max}\subset \mathcal R(U_0,U_1,V_0,V_1)$ with the following properties:
 \begin{enumerate}
 \item $\mathcal Q_\mathrm{max}$ is globally hyperbolic as a subset of $\Bbb R^{1+1}_{u,v}$ with Cauchy surface $\mathcal C$.
\item The solution $(r,\Omega^2,\phi,e)$ extends uniquely to $\mathcal Q_\mathrm{max}$.
\item $\mathcal Q_\mathrm{max}$ is maximal with respect to properties 1.~and 2.
\end{enumerate}
\end{prop}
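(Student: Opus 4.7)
The plan is to construct $\mathcal Q_\mathrm{max}$ as the union of all smooth globally hyperbolic developments of the seed data, and to verify that this union is itself a legitimate development by means of a characteristic uniqueness argument combined with a soft union-of-domains construction, exactly as in the general theory of MGHDs going back to \cite{CBG69, Zorn-slayed}.

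More precisely, I would consider the collection $\mathcal P$ of pairs $(\mathcal Q,(r,\Omega^2,\phi))$, where $\mathcal Q \subset \mathcal R(U_0,U_1,V_0,V_1)$ is an open set containing $\mathcal C$ which is globally hyperbolic as a subset of $\Bbb R^{1+1}_{u,v}$ with Cauchy surface $\mathcal C$, and $(r,\Omega^2,\phi)$ is a smooth solution of the Einstein--Maxwell-scalar field system on $\mathcal Q$ restricting to the seed data on $\mathcal C$. This collection is nonempty by \cref{prop:slab-existence}. The candidate is
\begin{equation*}
\mathcal Q_\mathrm{max} \doteq \bigcup_{\alpha \in \mathcal P} \mathcal Q_\alpha,
\end{equation*}
equipped with the tentative solution $(r,\Omega^2,\phi)(p)\doteq (r_\alpha,\Omega^2_\alpha,\phi_\alpha)(p)$ for any $\alpha \in \mathcal P$ with $p \in \mathcal Q_\alpha$. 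Property 3 (maximality) will then hold by construction.

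The crux of the argument is a characteristic uniqueness statement: any two elements of $\mathcal P$ agree on the intersection of their domains. Given $(\mathcal Q_1,\mathcal S_1),(\mathcal Q_2,\mathcal S_2) \in \mathcal P$, the intersection $\mathcal Q_1 \cap \mathcal Q_2$ is itself globally hyperbolic with Cauchy surface $\mathcal C$, hence connected. Let $A \subset \mathcal Q_1 \cap \mathcal Q_2$ denote the set of points where $\mathcal S_1$ and $\mathcal S_2$ together with all of their derivatives coincide. This set is closed by continuity and contains $\mathcal C$. To show openness at a point $p \in A$, I would select a small closed rectangle $\mathcal R_p \subset \mathcal Q_1 \cap \mathcal Q_2$ with past corner $q \in A$, and with null edges $e_u, e_v$ emanating from $q$ chosen to lie inside $A$ (using that the past null segments from $p$ meet $\mathcal C$ inside $A$ and that $A$ is closed). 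Applying \cref{prop:slab-existence} to the common characteristic data $\mathcal S_1|_{e_u \cup e_v} = \mathcal S_2|_{e_u \cup e_v}$, together with the uniqueness clause in the thin-slab statement, one obtains $\mathcal S_1 = \mathcal S_2$ on a thin slab inside $\mathcal R_p$; iterating this finitely many times covers $\mathcal R_p$, so $A$ is open and hence $A = \mathcal Q_1 \cap \mathcal Q_2$.

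With uniqueness in hand, the tentative solution on $\mathcal Q_\mathrm{max}$ is well-defined and smooth, and property 2 holds. Property 1 (global hyperbolicity of $\mathcal Q_\mathrm{max}$ with Cauchy surface $\mathcal C$) follows from the fact that a union of open globally hyperbolic subsets of $\Bbb R^{1+1}$ sharing a common Cauchy surface is again globally hyperbolic with the same Cauchy surface---a routine causal-theoretic check in $(1+1)$-dimensional Minkowski. The main technical obstacle is the characteristic uniqueness step, which is not explicitly stated in \cref{prop:slab-existence}; I would establish it either by reading uniqueness out of the contraction-mapping iteration underlying that proposition, or, alternatively, by a direct Grönwall estimate on the difference of two solutions in the rectangular slabs it produces. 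Once this is isolated, the rest of the construction is purely formal.
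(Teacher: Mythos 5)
The paper gives no proof of this proposition; it is stated as a standard consequence of local characteristic existence (\cref{prop:slab-existence}), with a pointer to the Choquet-Bruhat--Geroch/Sbierski philosophy and to Kommemi's characterization of the future boundary. Your sketch reconstructs exactly the argument the paper is implicitly invoking: take the union over all globally hyperbolic developments, verify well-definedness via characteristic uniqueness, and confirm global hyperbolicity of the union by $(1+1)$-dimensional causal theory.

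Your outline is correct. A few remarks on the two points you flag as needing care. First, the uniqueness clause you want is indeed not stated explicitly in \cref{prop:slab-existence} but is part of the standard iteration underlying it; a Gr\"onwall estimate on the difference of two solutions in a thin rectangle (using the null-structure ODE formulation of \eqref{eq:r-wave}--\eqref{eq:phi-wave-1}) is the cleanest route and is what the reference in the proof of \cref{prop:slab-existence} supplies. Second, for the openness step, a slightly cleaner formulation than the one you give is to work with the ``past-saturated'' agreement set $A'\doteq\{(u,v):\,[U_0,u]\times[V_0,v]\cap\mathcal Q_1\cap\mathcal Q_2\subset A\}$, which is manifestly both closed and (by the thin-slab uniqueness, applied to the data induced on the bifurcate cone at the northeast corner of each such sub-rectangle) relatively open in $\mathcal Q_1\cap\mathcal Q_2$; this avoids having to separately argue that the null edges $e_u,e_v$ emanating from your chosen corner $q$ lie in $A$. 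Finally, the causal-theoretic fact you invoke---that a union of globally hyperbolic subsets of $\mathbb R^{1+1}$ with common bifurcate Cauchy surface $\mathcal C$ is again globally hyperbolic with Cauchy surface $\mathcal C$---holds because each such subset is past-complete relative to $\mathcal C$ (it contains the full sub-rectangle $[U_0,u]\times[V_0,v]$ below each of its points), so any past-inextendible causal curve starting in the union stays in one of the constituent sets until it reaches $\mathcal C$. These are the only details to fill in; the rest of your construction, including the verification of maximality and of uniqueness of the solution on $\mathcal Q_\mathrm{max}$, is correct and purely formal as you say.
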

For a characterization of the future boundary of $\mathcal Q_\mathrm{max}$, see \cite{Kommemi13}.

\subsubsection{Gauge-normalized seed data}

In defining the moduli space of initial data $\mathfrak M$ for our main theorem, it will be convenient to parametrize the space of bifurcate characteristic initial data as a linear space in such a way that certain gauge conditions are automatically satisfied on $\mathcal C$ in the maximal development. Note that this notion of seed data is distinct from the one used in \cite{KU22} because the gauge condition is different. 

\begin{defn}
    Let $\mathcal C$ be a spherically symmetric bifurcate null hypersurface. A \emph{seed data set} is a quadruple
    \begin{equation*}
        \mathcal S\doteq (\mathring \phi,r_0,\varpi_0,e),
    \end{equation*}
    where $\mathring\phi$ is a smooth function on $\mathcal C$, $r_0$ is a positive real number, and $\varpi_0,e$ are real numbers satisfying the condition
    \begin{equation*}
        \frac{2\varpi_0}{r_0}-\frac{e^2}{r_0^2}< 1. 
    \end{equation*}
\end{defn}

\begin{prop}\label{prop:seed-data-generation}
   Let $\mathcal S=(\mathring\phi,r_0,\varpi_0,e)$ be a seed data set on $\mathcal C(U_0,U_1,V_0,V_1)$ with $U_1-U_0<r_0$. Then there exists a unique characteristic initial data set $(\mathring r,\mathring \Omega^2,\mathring\phi,e)$ on $\mathcal C(U_0,U_1,V_0,V_1)$ such that the maximal development $(\mathcal Q_\mathrm{max},r,\Omega^2,\phi,e)$ of $(\mathring r,\mathring \Omega^2,\mathring\phi,e)$ has the following properties: 
   \begin{enumerate}
       \item $r(U_0,V_0)=r_0$,
       \item $\varpi(U_0,V_0)=\varpi_0$,
       \item $\nu=-1$ on $[U_0,U_1]\times\{V_0\}$, and
       \item $\lambda= 1$ on $\{U_0\}\times[V_0,V_1]$.
   \end{enumerate}
\end{prop}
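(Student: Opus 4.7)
The plan is to explicitly construct $\mathring r$ and $\mathring\Omega^2$ on the two branches of $\mathcal C$ by integrating the gauge conditions (3)--(4) together with the Raychaudhuri constraints, pinning down the free constant of integration using the requirements (1)--(2), and then invoking \cref{prop:slab-existence} together with the maximal-development construction.

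First I would define $\mathring r$ on $\mathcal C$. On $[U_0,U_1]\times\{V_0\}$ the gauge condition $\nu=-1$ integrates from the anchor $\mathring r(U_0,V_0)=r_0$ to give $\mathring r(u,V_0)=r_0-(u-U_0)$; the hypothesis $U_1-U_0<r_0$ keeps this positive. On $\{U_0\}\times[V_0,V_1]$, the condition $\lambda=1$ integrates to $\mathring r(U_0,v)=r_0+(v-V_0)>0$. Next I would fix $\mathring\Omega^2$ at the bifurcation sphere. Using $1-\mu=-4\nu\lambda/\Omega^2$ (from \eqref{eq:Hawking-mass}) and the requirement that $\varpi(U_0,V_0)=\varpi_0$, equivalently $\mu(U_0,V_0)=2\varpi_0/r_0-e^2/r_0^2$, we are forced to take
\begin{equation*}
\mathring\Omega^2(U_0,V_0)=\frac{4}{1-\mu_0},\qquad \mu_0\doteq\frac{2\varpi_0}{r_0}-\frac{e^2}{r_0^2},
\end{equation*}
which is positive provided $\mu_0<1$ (which holds in our regime of interest, where $r_0$ is far outside the horizon).

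Next I would propagate $\mathring\Omega^2$ along each cone using the Raychaudhuri equations, which have been repackaged as the linear ODEs \eqref{eq:kappa-u} and \eqref{eq:gamma-v} for $\kappa$ and $\gamma$. With the gauge choice $\nu=-1$ one has $\kappa=\mathring\Omega^2/4$, so \eqref{eq:kappa-u} becomes
\begin{equation*}
\partial_u\kappa=-\mathring r\,\kappa(\partial_u\mathring\phi)^2
\end{equation*}
along $[U_0,U_1]\times\{V_0\}$, and the initial value $\kappa(U_0,V_0)=1/(1-\mu_0)$ makes this solvable explicitly as an exponential. Analogously, with $\lambda=1$ one has $\gamma=-\mathring\Omega^2/4$ and \eqref{eq:gamma-v} gives a linear ODE for $\gamma$ on $\{U_0\}\times[V_0,V_1]$ with initial value $-1/(1-\mu_0)$. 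Compatibility at the corner is automatic since both ODEs are initialized at $(U_0,V_0)$ by the same value of $\mathring\Omega^2$ computed above. By construction $(\mathring r,\mathring\Omega^2,\mathring\phi,e)$ satisfies \eqref{eq:Ray-u}--\eqref{eq:Ray-v} on the respective branches of $\mathcal C$, so it is a bona fide characteristic initial data set.

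Finally I would invoke \cref{prop:slab-existence} to obtain a local development, glue these local patches together by the standard Choquet-Bruhat--Geroch argument in spherical symmetry (as cited in the paragraph preceding the statement), and observe that conditions (1)--(4) are inherited directly from the initial data. Uniqueness at the level of seed data is immediate: once $\nu=-1,\lambda=1$ and $(r_0,\varpi_0)$ are prescribed, the formulas above for $\mathring r$ and the ODEs for $\kappa,\gamma$ (with the corner value fixed by $\varpi_0$) leave no freedom. There is no genuine obstacle — the only mildly delicate point is the sign of $1-\mu_0$ at the bifurcation sphere, which must be positive for $\mathring\Omega^2$ to be a valid Lorentzian conformal factor, and is automatic in the near-extremal regime of the main theorem where $r_0=100M_0$ and $\varpi_0\approx M_0$.
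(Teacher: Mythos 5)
Your proof is correct and follows essentially the same route as the paper's: integrate the gauge conditions $\nu=-1$, $\lambda=1$ along the two cones to obtain $\mathring r$, fix $\mathring\Omega^2$ at the bifurcation sphere so that conditions (1)--(2) hold, propagate $\mathring\Omega^2$ by the Raychaudhuri constraints (which, in the chosen gauge, reduce to the linear ODEs $\partial_u\log\mathring\Omega^2_\ing=-\mathring r_\ing(\partial_u\mathring\phi_\ing)^2$ and $\partial_v\log\mathring\Omega^2_\out=\mathring r_\out(\partial_v\mathring\phi_\out)^2$, equivalently your ODEs for $\kappa,\gamma$), and then invoke local existence plus maximality. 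The one substantive difference is that you have correctly identified the corner normalization: since $1-\mu=-4\nu\lambda/\Omega^2$, the conditions $\nu=-1$, $\lambda=1$, $m(U_0,V_0)=\varpi_0-e^2/(2r_0)$ force $\mathring\Omega^2(U_0,V_0)=4/(1-\mu_0)$. The paper's proof sets $\mathring\Omega^2(U_0,V_0)=4$ (i.e.\ the exponentials start from $4$), which with $\nu=-1$, $\lambda=1$ gives $m(U_0,V_0)=0$ and hence $\varpi(U_0,V_0)=e^2/(2r_0)$ rather than $\varpi_0$; your normalization appears to be the right one and the paper's looks like a slip. You also correctly observe that positivity of $1-\mu_0$ is an implicit hypothesis required for $\mathring\Omega^2>0$ (it cannot fail given the sign requirements $\nu<0$, $\lambda>0$, $\Omega^2>0$), and that it holds trivially in the near-extremal regime of the main theorem.
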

We refer to characteristic data obtained from $\mathcal S$ in this manner as \emph{gauge-normalized} characteristic data determined by $\mathcal S$.
\begin{proof} For $(u,v)\in[U_0,U_1]\times[V_0,V_1]$, set
\begin{equation*}
    \mathring r_\ing(u)\doteq r_0-u,\quad \mathring r_\out(v)\doteq r_0+v,
\end{equation*}
and then 
\begin{equation*}
  \mathring\Omega^2_\ing(u)\doteq  \frac{4}{1-\frac{2\varpi_0}{r_0}+\frac{e^2}{r_0^2}} \exp\left(-\int_{U_0}^u \mathring r_\ing (\partial_u\mathring\phi_\ing)^2\,du'\right),\quad   \mathring\Omega^2_\out(v)\doteq  \frac{4}{1-\frac{2\varpi_0}{r_0}+\frac{e^2}{r_0^2}} \exp\left(\int_{V_0}^v \mathring r_\out (\partial_v\mathring\phi_\out)^2\,dv'\right).
\end{equation*}

Assembling these functions into a characteristic data set $(\mathring r,\mathring\Omega^2,\mathring\phi,e)$, we can immediately see that conditions 1.--4.~will be satisfied on the maximal development by virtue of the definitions.
\end{proof}

\section{Stability and instability of extremal Reissner--Nordstr\"om: setup and statements of the main theorems}\label{sec:setup-statement}

In this section, we give the detailed statements of our main results as well as the precise definitions needed to make the statements. In \cref{sec:seed-data}, we define the moduli space of seed data $\mathfrak M$ that features in \cref{thm:stability-rough}. In \cref{sec:setup}, we define the bootstrap domains $\mathcal D_{\tau_f}$ and the associated teleologically normalized gauges. In \cref{sec:anchor}, we define the anchored background extremal Reissner--Nordstr\"om spacetimes on $\mathcal D_{\tau_f}$ and the associated energy hierarchies. Finally, in \cref{sec:statements}, we give the precise statements of \cref{thm:stability-rough,thm:instability-rough}.

\subsection{Definition of the moduli space of seed data \texorpdfstring{$\mathfrak M$}{mathfrak M}}\label{sec:seed-data}

Fix a mass parameter $M_0>0$ and let $e_0$ be a charge parameter satisfying $|e_0|=M_0$. Let
\begin{equation*}
    U_*\doteq \frac{995}{10}M_0
\end{equation*}
and let
\begin{equation*}
    \hat{\mathcal C}\doteq \mathcal C(0,U_*,0,\infty) = \underline C{}_\ing \cup C_\out
\end{equation*}
denote the bifurcate null hypersurface on which we pose our data. We denote the null coordinates on $\hat{\mathcal C}$ by $\hat u$ and $\hat v$. Later, in the proof of the main theorem, $\hat u$ and $\hat v$ will be renormalized. 

Let $\mathcal S=(\mathring \phi,\Lambda,\varpi_0,e)$ denote a seed data set on $\hat{\mathcal C}$ with bifurcation sphere area-radius $r_0$ denoted by $\Lambda$. We define the \emph{seed data norm}
\begin{multline*}
       \mathfrak D[\mathcal S]\doteq |\Lambda-100M_0|+ |\varpi_0-M_0|+|e-e_0| +\sup_{\underline C{}_\ing}\left(|\mathring\phi_\ing|+|\partial_{\hat u}\mathring\phi_\ing|\right)\\+\sup_{C_\out} \left((1+\hat v)|\mathring\phi_\out|+(1+\hat v)^2|\partial_{\hat v}\mathring\phi_\out|+(1+\hat v)^2|\partial_{\hat v}(\hat v\mathring\phi_\out)|\right).
\end{multline*}

    The seed data norm $\mathfrak D$ and many of the constructions and smallness parameters in this paper will implicitly depend on the (fixed) choice of $M_0$.
 
\begin{rk}
 The unique seed data with $\mathfrak D=0$, namely $(0,100M_0,M_0,e_0)$, corresponds to extremal Reissner--Nordstr\"om with parameters $(M_0,e_0)$ and bifurcation sphere area-radius $100M_0$. Note that since $\Lambda$ is not fixed to be $100M_0$, there is a one-parameter family of seed data corresponding to the extremal Reissner--Nordstr\"om black hole with parameters $(M_0,e_0)$, but all except for the one with $\Lambda=100M_0$ will have $\mathfrak D>0$. The event horizon for the evolution of $(0,100M_0,M_0,e_0)$ is located at 
   \begin{equation}
        \hat u_{\mathcal H^+,0}\doteq 99M_0\label{eq:original-horizon}
    \end{equation}
    in the $(\hat u,\hat v)$ gauge determined by \cref{prop:seed-data-generation}.
\end{rk}

We now show that for $\mathfrak D[\mathcal S]$ sufficiently small, the associated characteristic data have ``no antitrapped spheres of symmetry.''

\begin{lem}\label{lem:epsilon-loc}
    There exists an $\ve_\mathrm{loc}>0$ depending only on $M_0$ such that if $\mathcal S$ is a seed data set for which
    \begin{equation}\label{eq:seed-smallness}
        \mathfrak D[\mathcal S]\le 3\ve_\mathrm{loc}, 
    \end{equation}
    then the maximal globally hyperbolic development $(\hat{\mathcal Q}_\mathrm{max},r,\Omega^2,\phi,e)$ of $\mathcal S$ has $\partial_{\hat u}r<0$ everywhere on $\hat{\mathcal Q}_\mathrm{max}$.
\end{lem}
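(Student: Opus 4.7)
The goal is to prove $\nu<0$ throughout $\hat{\mathcal Q}_\mathrm{max}$ in two steps: first establish $\nu<0$ on the initial hypersurface $\hat{\mathcal C}$, and then propagate this to the interior via Raychaudhuri. On $\underline C{}_\ing$ the gauge condition from \cref{prop:seed-data-generation} gives $\nu\equiv -1<0$ for free, so all the real work is concentrated on $C_\out$ and in the bulk.

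\emph{Step 1: $\nu<0$ on $C_\out$.} On $C_\out$ the seed data prescribes $\mathring r_\out(v)=\Lambda+v$ and determines $\mathring\Omega^2_\out(v)=4\exp(\int_0^v\mathring r_\out(\partial_{v'}\mathring\phi_\out)^2\,dv')$ explicitly via \eqref{eq:Ray-v}; the decay bound $(1+v)^2|\partial_v\mathring\phi_\out|\le\mathfrak D$ built into the seed data norm implies $\mathring\Omega^2_\out\in[4,4+O(\mathfrak D^2)]$ uniformly in $v$. The transverse quantities $\nu(0,v)$ and $\varpi(0,v)$ are governed by restricting \eqref{eq:nu-v} and \eqref{eq:varpi-v} to $C_\out$; using $\kappa=-\Omega^2/(4\nu)$ they become
$$
\partial_v\nu=-\frac{\Omega^2(\varpi r-e^2)}{2r^3},\qquad \partial_v\varpi=-\frac{2r^2\nu(\partial_v\phi)^2}{\Omega^2},
$$
with initial conditions $\nu(0,0)=-1$ and $\varpi(0,0)=\varpi_0$. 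The integrability in $v$ of both sources (again from the $(1+v)^{-2}$ weighted bound on $\partial_v\mathring\phi_\out$) lets a standard Gronwall/continuity argument give $\nu(0,v)=-1+O(\mathfrak D^2)$ and $\varpi(0,v)=\varpi_0+O(\mathfrak D^2)$ uniformly in $v\in[0,\infty)$. Since $\Lambda\approx 100M_0\gg M_0\approx\varpi_0\approx |e_0|$, this keeps $\varpi r-e^2$ bounded below by a positive constant on $C_\out$ and, more importantly, yields $\nu<0$ on all of $C_\out$ provided $\mathfrak D\le 3\ve_\mathrm{loc}$ with $\ve_\mathrm{loc}$ chosen small in terms of $M_0$.

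\emph{Step 2: Propagation via Raychaudhuri.} The Raychaudhuri equation \eqref{eq:Ray-u},
$$
\partial_u\!\left(\frac{\nu}{\Omega^2}\right)=-\frac{r}{\Omega^2}(\partial_u\phi)^2\le 0,
$$
shows $\nu/\Omega^2$ is non-increasing in $u$ along every outgoing cone $\{v=\mathrm{const}\}$. Given any $(u,v)\in\hat{\mathcal Q}_\mathrm{max}$, integrating this inequality from $(0,v)\in C_\out$ to $(u,v)$ gives $(\nu/\Omega^2)(u,v)\le(\nu/\Omega^2)(0,v)<0$ by Step 1, and since $\Omega^2>0$ one concludes $\nu(u,v)<0$.

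\emph{Main obstacle.} The only substantive work lies in Step 1, where $\nu<0$ must be propagated uniformly over the infinite range $v\in[0,\infty)$ on $C_\out$. This is made possible precisely by the $(1+v)^{-2}$ decay of $\partial_v\mathring\phi_\out$ encoded in $\mathfrak D$, which renders the sources in the $(\nu,\varpi)$ ODE system integrable and allows a single-shot Gronwall argument; without such decay one would have to contend with possible drift of $\varpi$ along the infinitely long cone $C_\out$, where the sign of $\varpi r-e^2$ (and hence of $\partial_v\nu$) could in principle reverse. Step 2 is then essentially automatic from the null energy condition embodied in Raychaudhuri.
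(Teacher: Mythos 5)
Your overall two-step architecture (establish $\nu<0$ on $C_\out$, then push it inward by Raychaudhuri's monotonicity) is the same as the paper's, and Step~2 is identical. The problem is in Step~1, where you make a quantitative claim that is false: $\nu(0,v)=-1+O(\mathfrak D^2)$ does \emph{not} hold. The source in your $\nu$-ODE is $-\frac{\Omega^2(\varpi r-e^2)}{2r^3}$; with $\Omega^2\approx 4$, $\varpi\approx M_0$, $r=\Lambda+v$, this is $\approx -\frac{2M_0}{(\Lambda+v)^2}$. Its integral over $v\in[0,\infty)$ is $\approx 2M_0/\Lambda\approx 1/50$, which is $O(1)$ in $\mathfrak D$, not $O(\mathfrak D^2)$. (Indeed, along $C_\out$ in \emph{exact} extremal Reissner--Nordstr\"om in this gauge, $\gamma$ is constant, $1-\mu$ increases from $(1-M_0/\Lambda)^2$ to $1$, and hence $\nu=\gamma(1-\mu)$ drifts from $-1$ to about $-1.02$.) So the ``single-shot Gronwall'' reasoning does not deliver the bound you wrote. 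Only the $\varpi$-equation has a genuinely small, $\mathfrak D^2$-weighted source coming from $(\partial_v\phi)^2$; the $\nu$-source is driven by background geometry alone and is merely bounded.

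The conclusion $\nu<0$ on $C_\out$ is still true, but you need a different mechanism. Two options. (a) \emph{Monotonicity}: once the $\varpi$-estimate gives $\varpi(0,v)\approx M_0$ and $r=\Lambda+v\geq 100M_0$, one has $\varpi r-e^2\gtrsim M_0^2>0$, so your source is strictly negative and $\nu(0,v)\leq\nu(0,0)=-1<0$ without any smallness of the source at all. (b) \emph{The paper's route}: never solve for $\nu$; instead bootstrap $\frac12\leq 1-\mu\leq 2$ on $C_\out$, close it with the same $\varpi$-estimate, and conclude directly from the algebraic identity $\hat\Omega^2=-4\hat\lambda\hat\nu/(1-\mu)$ that $\hat\nu<0$, since $\hat\Omega^2>0$, $\hat\lambda=1>0$, and $1-\mu>0$. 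Route (b) is what the paper does; it avoids the drift of $\nu$ entirely and requires no quantitative estimate on $\nu$ at all, only on $\mu$ (equivalently $\varpi$). Either fix works, but as written your Step~1 contains a gap.
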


\begin{rk}
    The role of the number $3$ in \eqref{eq:seed-smallness} is  that every seed data set in the moduli space $\mathfrak M$, defined below in \eqref{eq:M-defn}, will satisfy \eqref{eq:seed-smallness}.
\end{rk}

\begin{proof}[Proof of \cref{lem:epsilon-loc}] Let $\ve>0$ and suppose $\mathfrak D[\mathcal S]\le \ve$. We will show that $\hat\nu<0$ on $C_\out$ for $\ve$ sufficiently small, which is then propagated to the rest of $\hat{\mathcal Q}_\mathrm{max}$ by the inequality $\partial_{\hat u}(\hat \Omega^{-2}\hat\nu)\le 0$ obtained from Raychaudhuri's equation \eqref{eq:Ray-u}. We make the bootstrap assumption
\begin{equation}\label{eq:1-mu-boot}
    \tfrac 12\le 1-\mu\le 2
\end{equation}
on $\{0\}\times[0,\hat v_0]$, which is clearly satisfied for $\ve$ and $\hat v_0$ sufficiently small. Since $\hat\lambda=1$ on $C_\out$, $r(0,\hat v)=\Lambda+\hat v$ and
\begin{equation*}
    \left|\partial_{\hat v}\varpi(0,\hat v)\right| = \left|\tfrac 12 (1-\mu)(0,\hat v)(\Lambda+\hat v)^2(\partial_{\hat v}\mathring\phi_\out)^2 \right|\les \ve^2(1+\hat v)^{-2}
\end{equation*}
by \eqref{eq:varpi-v} and \eqref{eq:1-mu-boot}. Therefore, $|\varpi(0,\hat v)-M_0 |\les \ve^2$ for every $\hat v\in[0,\hat v_0]$, which easily allows us to propagate \eqref{eq:1-mu-boot} and therefore  \eqref{eq:1-mu-boot} holds on $C_\out$. It now follows easily from the relation $\hat\Omega^2 =-4\hat\lambda\hat\nu /(1-\mu)$
that $\hat\nu<0$ on $C_\out$.
\end{proof}

  We denote by $\mathcal S_0$ seed data sets on $\hat{\mathcal C}$ for which $\varpi_0=M_0$. These constitute a codimension-one affine subspace, denoted by $\mathfrak M_0$, of the vector space of seed data. Given such an $\mathcal S_0=(\mathring\phi,\Lambda,M_0,e)\in\mathfrak M_0$ and $\alpha\in\Bbb R$, we define the \emph{modulated} seed data set
  \begin{equation*}
      \mathcal S_0(\alpha)\doteq (\mathring\phi,\Lambda,M_0+\alpha,e).
  \end{equation*}
For $\ve>0$, we then define
\begin{equation*}
    \mathcal L(\mathcal S_{0},\ve)\doteq \{\mathcal S_{0}(\alpha):\alpha\in [-2\ve,2\ve]\},
\end{equation*}
which is a line segment in the vector space of seed data. 

\begin{defn}
Let $M_0>0$. The \emph{moduli space of seed data} centered on mass $M_0$ is the set
\begin{equation}\label{eq:M-defn}
    \mathfrak M\doteq \bigcup_{\mathcal S_{0}\in\mathfrak M_0:\mathfrak D[\mathcal S_0]\le \ve_\loc}\mathcal L(\mathcal S_0,\ve_\loc),
\end{equation} where $\ve_\loc$ is the small parameter from \cref{lem:epsilon-loc}. For $0<\ve\le\ve_\loc$, we define the moduli space with \emph{smallness parameter $\ve$} by
\begin{equation}\label{eq:M-eps-defn}
    \mathfrak M(\ve)\doteq \bigcup_{\mathcal S_{0}\in\mathfrak M_0:\mathfrak D[\mathcal S_0]\le \ve}\mathcal L(\mathcal S_0,\ve).
\end{equation}
We endow $\mathfrak M$ and $\mathfrak M(\ve)$ with the metric space topology associated to the norm $\mathfrak D$.
\end{defn}

\subsection{The geometric setup for the statement of stability}\label{sec:setup}

In this section, we explain several geometric constructions that will be required to state the precise version of our main theorem below and will be crucial to the proof. In order to explain these constructions, we will be required to make several assumptions, which we will later formalize as bootstrap assumptions in \cref{sec:bootstrap-definitions} below. Refer to \cref{fig:bootstrap-setup} for a diagram explaining our gauge choice and notation explained in this section and the following.

Let $\mathcal S\in \mathfrak M(\ve)$ with $0<\ve\le \ve_\loc$ 
and denote the maximal development of $\mathcal S$ by $(\hat{\mathcal Q}_\mathrm{max}, r,\Omega^2,\phi,e)$. Since we are aiming to converge to an extremal Reissner--Nordstr\"om black hole with charge $e$ (note that this parameter is conserved in the neutral scalar field model), we define the target mass parameter
\begin{equation*}
    M\doteq |e|. 
\end{equation*} We define the set
\begin{equation*}
    \Gamma\doteq\{r=\Lambda \}.
\end{equation*}
This is clearly a timelike curve near $\hat{\mathcal C}$  for $\ve$ sufficiently small and we will verify in the course of the proof of the main theorem that $\Gamma$ is an intextendible timelike curve in $\hat{\mathcal Q}_\mathrm{max}$. Assuming for the moment that this is the case, we may parametrize $\Gamma$ by its proper time $\tau$, which we normalize to start at $1$ at $\Gamma\cap\hat{\mathcal C}$. We write the components of $\Gamma$ as $\Gamma(\tau)=(\Gamma^{\hat u}(\tau),\Gamma^{\hat v}(\tau))$ in the ``initial data normalized'' $(\hat u,\hat v)$ coordinates on $\hat{\mathcal Q}_\mathrm{max}$.

For $\tau_f\in[1,\infty)$ such that $[1,\tau_f]$ lies in the domain of definition $\Gamma$, we define
\begin{equation*}
    \hat{\mathcal D}_{\tau_f}\doteq [0,\Gamma^{\hat u}(\tau_f)]\times[0,\Gamma^{\hat v}(\tau_f)].
\end{equation*} Assuming that $\hat\gamma<0$ on the final ingoing cone in $\hat{\mathcal D}_{\tau_f}$ and $\hat\kappa>0$ on $\Gamma\cap \hat{\mathcal D}_{\tau_f}$, we may define strictly increasing functions $
    \mathfrak u_{\tau_f}:[0,\Gamma^{\hat u}(\tau_f)]\to\Bbb R$ and $\mathfrak v:[0,\Gamma^{\hat v}(\tau_f)]\to \Bbb R$ by 
\begin{align}
 \label{eq:teleology-1}      \mathfrak u_{\tau_f}(\hat u)& \doteq  -\int_0^{\hat u} \hat\gamma(\hat u',\Gamma^{\hat v}(\tau_f)) \,d\hat u',\\
   \label{eq:teleology-2}          \mathfrak v(\hat v)&\doteq \int_0^{\hat v} \hat\kappa(\Gamma^{\hat u}((\Gamma^{\hat v})^{-1}(\hat v')),\hat v')\,d\hat v',
\end{align}
which then assemble into a map
\begin{align}
 \label{defn:Phi}   \Phi_{\tau_f}:\hat{\mathcal D}_{\tau_f}&\to \Bbb R^2\\
\nonumber(\hat u,\hat v) &\mapsto (\mathfrak u_{\tau_f}(\hat u),\mathfrak v(\hat v)),
\end{align}
which is a diffeomorphism onto its image. We denote the image of $\Phi_{\tau_f}$ by $\mathcal D_{\tau_f}$, which comes equipped with the double null coordinates $(u_{\tau_f},v)=\Phi_{\tau_f}(\hat u,\hat v)$. We call these coordinates the \emph{teleologically normalized coordinates}.  Let $\hat\Phi_{\tau_f}$ denote the inverse of $\Phi_{\tau_f}$. In the $(u_{\tau_f},v)$ coordinate system, the solution $(r,\hat\Omega^2,\phi,e)$ is given by $(r_{\tau_f},\Omega^2_{\tau_f},\phi_{\tau_f},e)$, where $r_{\tau_f}\doteq r\circ \hat\Phi_{\tau_f}$, $\phi_{\tau_f}\doteq \phi\circ\hat\Phi_{\tau_f}$, and 
\begin{equation*}
    \Omega^2_{\tau_f}\doteq \frac{1}{\mathfrak u_{\tau_f}'\circ\mathfrak u_{\tau_f}^{-1}}\frac{1}{\mathfrak v'\circ\mathfrak v^{-1}}\hat\Omega^2\circ\hat\Phi_{\tau_f}.
\end{equation*} In the $(u_{\tau_f},v)$ coordinate system, we write the coordinates of $\Gamma$ as $\Gamma(\tau)=(\Gamma^{u_{\tau_f}}(\tau),\Gamma^v(\tau))$.
We will frequently omit the subscript $\tau_f$ on $u_{\tau_f}$ and $(r_{\tau_f},\Omega^2_{\tau_f},\phi_{\tau_f},e)$ when it is clear that $\tau_f$ has been fixed. 

 \begin{figure}
\centering{
\def\svgwidth{13pc}
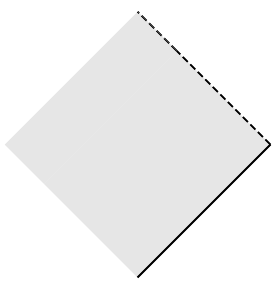}
\caption{A Penrose diagram showing the gauge conditions, null hypersurfaces, and energies in our bootstrap domain $\mathcal D_{\tau_f}$. The function $\tau$ measures advanced time to the left of $\Gamma$ and retarded time to the right of $\Gamma$.}
\label{fig:bootstrap-setup}
\end{figure}

By a slight abuse of notation, we define a continuous function $\tau$ on $\mathcal D_{\tau_f}$ implicitly by
\begin{equation}
 \label{def:tau}   \tau(u_{\tau_f},v)\doteq \begin{cases}
     \tau:\Gamma^{u_{\tau_f}}(\tau)=u_{\tau_f}   & \text{if } r(u_{\tau_f},v)\ge\Lambda  \\
       \tau: \Gamma^v(\tau)=v & \text{if } r(u_{\tau_f},v)< \Lambda 
    \end{cases}. 
\end{equation} The function $\tau$ measures (approximately) Bondi time in the region near null infinity and (approximately) Eddington--Finkelstein time near the event horizon. This will be made precise in \cref{lem:tau-properties} below.

We define four classes of null hypersurfaces in $\mathcal D_{\tau_f}$:
\begin{align*}
   C_u^{\tau_f}&\doteq (\{u\}\times[0,\Gamma^v(\tau)])\cap \{r\ge \Lambda \} ,&\underline C{}_v^{\tau_f}&\doteq([0,\Gamma^{u_{\tau_f}}(\tau_f)]\times\{v\})\cap\{r\le \Lambda \},\\
   H_u^{\tau_f}&\doteq (\{u\}\times[0,\Gamma^v(\tau)])\cap \{r\le \Lambda \},&\underline H{}_v^{\tau_f}&\doteq([0,\Gamma^{u_{\tau_f}}(\tau_f)]\times\{v\})\cap\{r\ge \Lambda \}.
\end{align*}
We shall often suppress the dependence of these hypersurfaces on $\tau_f$. It is also convenient to write
\begin{equation*}
    C^{\tau_f}(\tau)\doteq C_{\Gamma^{u_{\tau_f}}(\tau)}^{\tau_f}\quad\text{and}\quad \underline C^{\tau_f}(\tau)=\underline C{}_{\Gamma^v(\tau)}^{\tau_f}.
\end{equation*}

For the seed data sets we will ultimately consider, $\Gamma$ exists and remains timelike for all $\tau\in[1,\infty)$. We then define a number
\begin{equation}\label{eq:horizon-location}
    \hat u_{\mathcal H^+}\doteq \lim_{\tau\to\infty} \Gamma^{\hat u}(\tau).
\end{equation}
Since $\tau\mapsto \Gamma^{\hat u}(\tau)$ is monotone increasing, the existence of this limit is automatic and we will show the strict inequality 
\begin{equation*}
    \hat u_{\mathcal H^+}< U_*,
\end{equation*} see already \cref{lem:u-hat-H+}. We then set
\begin{equation*}
    \hat{\mathcal D}_\infty\doteq [0,\hat u_{\mathcal H^+})\times[0,\infty).
\end{equation*}
We will also show that there exists a surjective, strictly increasing $C^2$ function $u_\infty:[0,\hat u_{\mathcal H^+})\to [0,\infty)$, such that, if we use $(u_\infty,v)$ as coordinates on $\hat{\mathcal D}_\infty$, then $\partial_{u_\infty}r\to -1$ at null infinity $\mathcal I^+$. We denote $\hat{\mathcal D}_\infty$ by $\mathcal D_\infty$ under this change of coordinates. We refer to this ``final'' set of teleological double null coordinates as the \emph{eschatological gauge}. 

Finally, our geometric estimates will imply that for any $R\in(M,\infty)$, the set $\Gamma_R\doteq\{r=R\}$ is a timelike curve in $\mathcal D_{\tau_f}$ for any $\tau_f\in[1,\infty]$ (or possibly empty). For any given $v$, the intersection $(\Bbb R\times\{v\})\cap \Gamma_R$ is either empty or consists of a single point, which we then denote by $(u^R(v),v)$. Likewise, for any given $u$, the intersection $(\{u\}\times\Bbb R)\cap \Gamma_R$ is either empty or consists of a single point, which we then denote by $(u,v^R(u))$.

\subsection{Anchored extremal Reissner--Nordstr\"om solutions and definitions of the energies}\label{sec:anchor}

Let $(r,\Omega^2,\phi,e)$ be a spherically symmetric solution of the Einstein--Maxwell-scalar field system defined on a coordinate rectangle $\mathcal D_{\tau_f}$ with gauge conditions as explained in \cref{sec:setup}. 

First, assume $\tau_f<\infty$. We define the \emph{$\tau_f$-anchored background solution} to be the extremal Reissner--Nordstr\"om metric $(\bar r_{\tau_f},\bar\Omega^2_{\tau_f})$ with parameters $M=|e|$ in Eddington--Finkelstein double null form \eqref{eq:ERN-EF} which is uniquely determined by \cref{lem:existence-anchoring} according to  \begin{equation*}
        \bar r_{\tau_f}(\Gamma(\tau_f)) = \Lambda.
    \end{equation*} 
Given the anchored background solution we now adopt the following notation:
\begin{itemize}
    \item Barred quantities such as $\bar \lambda_{\tau_f},\bar\varpi_{\tau_f}=M$, or $\bar\kappa_{\tau_f}=1$ correspond to those of $(\bar r_{\tau_f},\bar\Omega^2_{\tau_f})$.
    \item Differences are denoted with a tilde, such as $\tilde r_{\tau_f}\doteq r_{\tau_f}-\bar r_{\tau_f}$, $\tilde\varpi_{\tau_f}\doteq \varpi_{\tau_f}-M$, or $\tilde\gamma_{\tau_f} \doteq \gamma_{\tau_f}+1$.
\end{itemize}

In the proof of \cref{thm:stability-rough}, we will send $\tau_f\to \infty$ and thus need to extend this definition to the case when $\tau_f=\infty$. Instead of trying to anchor directly ``at $\tau_f=\infty$,'' it is much easier to simply anchor the background solution at $\Gamma\cap\hat{\mathcal C}=\{(0,0)\}$. Therefore, we define the \emph{$\infty$-anchored background solution} to be the unique extremal Reissner--Nordstr\"om metric $(\bar r_{\infty},\bar\Omega^2_{\infty})$ with parameters $M=|e|$ in Eddington--Finkelstein double null form \eqref{eq:ERN-EF} with the property that  
\begin{equation*}
    \bar r_\infty(0,0)=\bar r_\star\doteq\lim_{\tau_f\to\infty}\bar r_{\tau_f}(0,0),
\end{equation*} see already \cref{fig:stability}.
We will show below in \cref{prop:r-convergence} that this limit actually exists. We will also show that this background solution is indeed anchored ``at $\tau_f=\infty$'' in the sense that
\begin{equation*}
    \lim_{\tau\to\infty}\bar r_\infty(\Gamma(\tau))=\Lambda,
\end{equation*}
see already \eqref{eq:r-infty-est}.

We may now define the fundamental weighted energy norms for the scalar field. Some of the norms will depend explicitly on the background solution $\bar r_{\tau_f}$ in a nontrivial manner. 

\begin{defn}\label{def-of-energies}
    Let $(r_{\tau_f},\Omega^2_{\tau_f},\phi_{\tau_f},e)$ be defined on $\mathcal D_{\tau_f}$ with teleologically normalized coordinates $(u_{\tau_f},v)$, where $\tau_f\in[1,\infty]$. Let $\bar r_{\tau_f}$ be the associated $\tau_f$-anchored background solution. Let 
    \begin{equation*}
        \psi_{\tau_f}\doteq r_{\tau_f}\phi_{\tau_f}
    \end{equation*}
    denote the \emph{radiation field} of $\phi_{\tau_f}$. For $\tau,\tau'\in[1,\tau_f]$, $p\in[0,3)$, and $(u,v)\in\mathcal D_{\tau_f}$, we define:
    \begin{enumerate}
        \item  The $(\bar r-M)^{2-p}$-weighted flux to the horizon:
\begin{equation*}
      \underline{\mathcal E}{}_p^{\tau_f}(\tau)\doteq\int_{\underline C^{\tau_f}(\tau)} \left((\bar r_{\tau_f}-M)^{2-p}\left[ \frac{(\partial_{u_{\tau_f}}\psi_{\tau_f})^2}{-\bar \nu_{\tau_f}}+ \frac{(\partial_{u_{\tau_f}}\phi_{\tau_f})^2}{-\bar \nu_{\tau_f}}\right]-c_{p_\star}(\bar r_{\tau_f}-M)^{-p_\star}\bar\nu_{\tau_f}\phi^2_{\tau_f} \right)du_{\tau_f},
\end{equation*}
where $p_\star= p$ if $p\in[0,1)$ and vanishes otherwise, and $c_{p_\star}\doteq(p_\star-1)^2$.

        \item  The $r^p$-weighted flux to null infinity: 
        \begin{equation}
             \mathcal E_p^{\tau_f}(\tau)\doteq \int_{C^{\tau_f}(\tau)}\big(r^p_{\tau_f}(\partial_v\psi_{\tau_f})^2+c_{p_\star}r^{p_\star+2}_{\tau_f}(\partial_v\phi_{\tau_f})^2+c_{p_\star}r^{p_\star}_{\tau_f}\phi^2_{\tau_f}\big)\,dv.
        \end{equation}
        \item The energy flux along outgoing cones in the near region:
        \begin{equation*}
           \mathcal F^{\tau_f}(u,\tau')\doteq \int_{H_u^{\tau_f}\cap\{\tau\ge\tau'\}}\big((\partial_v\phi_{\tau_f})^2+\bar\lambda_{\tau_f}\phi_{\tau_f}^2\big)\,dv.
        \end{equation*}
        \item The energy flux along ingoing cones in the far region:
        \begin{equation*}
            \underline{\mathcal F}^{\tau_f}(v,\tau')\doteq \int_{\underline H{}_v^{\tau_f}\cap\{\tau\ge\tau'\} }\big(r^2_{\tau_f}(\partial_{u_{\tau_f}}\phi_{\tau_f})^2+\phi_{\tau_f}^2\big)\,du_{\tau_f}.
        \end{equation*}
    \end{enumerate}
\end{defn}

\begin{rk}
   The reason for the dependence on $p_\star$ in $\underline{\mathcal E}{}_p^{\tau_f}$ and $\mathcal E_p^{\tau_f}$ is that when $p\ge 1$, $(\bar r_{\tau_f}-M)^{-p}(-\bar\nu_{\tau_f})\phi^2_{\tau_f}$ and $r^{p+2}_{\tau_f}(\partial_v\phi_{\tau_f})^2+r^p_{\tau_f}\phi^2_{\tau_f}$ are not integrable in $u_{\tau_f}$ and $v$ (uniformly in $\tau_f$), respectively, and hence cannot be controlled in the energy estimates. The reason for including the factor $c_{p_\star}$ is that the Hardy inequalities relating $(\bar r_{\tau_f}-M)^{-p}(-\bar\nu_{\tau_f})\phi^2_{\tau_f}$ to $(\bar r_{\tau_f}-M)^{2-p}(-\bar\nu_{\tau_f})^{-1}(\partial_{u_{\tau_f}}\psi_{\tau_f})^2$ and $r^{p+2}_{\tau_f}(\partial_v\phi_{\tau_f})^2+r^p_{\tau_f}\phi^2_{\tau_f}$ to $r^p_{\tau_f}(\partial_u\psi_{\tau_f})^2$ (which only hold for $p<1$, see already \cref{lem:horizon-Hardy-1,lem:null-infinity-Hardy-1}), degenerate as $p\nearrow 1$. Therefore, in order to have a $p$-independent estimate such as \eqref{eq:nrg-decay-main} below, we need to build this degeneration into the definition of the energy.
\end{rk}

\subsection{Detailed statements of the main theorems}\label{sec:statements}

We can now state our main theorems using the notation and definitions from \cref{sec:seed-data,sec:setup,sec:anchor}.

\subsubsection{Nonlinear stability}

\begin{thm}[Stability of extremal Reissner--Nordstr\"om in spherical symmetry]\label{thm:stability}

Let $M_0>0$, $e_0\in \mathbb R$ with $|e_0|=M_0$, and let $\delta$ be an arbitrary parameter satisfying \begin{equation}\label{defn:delta}
    0<\delta < \frac{1}{100}.
\end{equation} There exists a number $\ve_\stab(M_0,\delta)>0$, a set $\mathfrak M_\mathrm{stab}\subset \mathfrak M$, and a constant $C(M_0,\delta)$ (which is implicit in the notation $\les$ below) with the following properties:

\begin{enumerate}
    \item \ul{$\mathfrak M_{\stab}$ is ``codimension-one'' inside of $\mathfrak M(\ve)$}: For every $0<\ve\le \ve_\stab$ and $\mathcal S_0\in \mathfrak M_0$ with $\mathfrak D[\mathcal S_0]\le\ve$, it holds that
\begin{equation}
    \mathfrak M_\stab\cap \mathcal L(\mathcal S_0,\ve)\ne\emptyset.\label{eq:intersection}
\end{equation}
\item 
\underline{Existence of a black hole region}: Let $(\hat{\mathcal Q}_\mathrm{max},r,\Omega^2,\phi,e)$ be the maximal development of a seed data set in the intersection \eqref{eq:intersection}. Then $\hat{\mathcal Q}_\mathrm{max}=[0,U_*]\times[0,\infty)$. There exists a $\hat u_{\mathcal H^+}\in (0,U_*)$ such that $r(\hat u,\hat v)\to \infty$ as $\hat v\to \infty$  for every $\hat u\in[0,\hat u_{\mathcal H^+})$ and $r(\hat u_{\mathcal H^+},\hat v)\to |e|$ as $\hat v\to\infty$. Therefore, $[0,\hat u_{\mathcal H^+})\times\{\hat v=\infty\}$ may be regarded as future null infinity $\mathcal I^+$, there exists a nonempty black hole region
\begin{equation*}
    \mathcal{BH}\doteq \hat{\mathcal Q}_\mathrm{max}\setminus J^-(\mathcal I^+)=[\hat u_{\mathcal H^+},U_*]\times[0,\infty),
\end{equation*} and 
\begin{equation*}
    \mathcal H^+\doteq \partial J^-(\mathcal I^+)= \{\hat u_{\mathcal H^+}\}\times[0,\infty)
\end{equation*}
is the event horizon. Moreover, future null infinity is complete in the sense of Christodoulou \cite{christodoulou1999global}. There exist $C^2$ double null coordinates $(u_\infty,v)$ on the domain of outer communication $[0,\hat u_{\mathcal H^+})\times[0,\infty)$ such that $u_\infty$ is Bondi normalized, i.e., the event horizon $\mathcal H^+$ can be formally regarded as $\{u_\infty=\infty\}$ and $\partial_{u_\infty}r\to -1$ along any outgoing cone in the domain of outer communication.

 \begin{figure}
\centering{
\def\svgwidth{13pc}
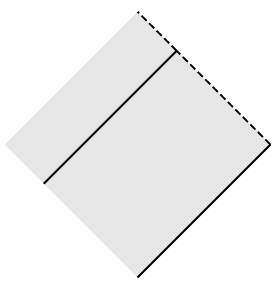}
\caption{A Penrose diagram depicting the maximal development of seed data lying in $\mathfrak M_\mathrm{stab}$.}
\label{fig:stability}
\end{figure}

\item \underline{Orbital stability}: There exists an $\infty$-anchored extremal Reissner--Nordstr\"om solution $\bar r_\infty$ in the $(u_\infty,v)$ coordinates whose parameters satisfy
\begin{equation}\label{eq:parameter-stability}
    |M-M_0|+|e-e_0|\les \ve.
\end{equation}
Relative to this background solution, the $p=3-\delta$ energy of the scalar field is bounded by its initial value,
\begin{equation}\label{eq:energy-orbital}
   \sup_{\tau\in[1,\infty)} \big(\mathcal E_{3-\delta}^\infty(\tau)+\underline{\mathcal E}{}_{3-\delta}^\infty(\tau)\big)\les \mathcal E_{3-\delta}^\infty(1) +\underline{\mathcal E}{}_{3-\delta}^\infty(1),
\end{equation}
and the scalar field is pointwise bounded in $C^1$ in the domain of outer communication in terms of its initial values,
\begin{multline}\label{eq:pointwise-orbital}
\sup_{J^-(\mathcal I^+)}\left(|r\phi|+  |r^2\partial_{\hat v}\psi|+
   |r^2\partial_{\hat v}\phi|+
   \left|\frac{\partial_{\hat u}\psi}{-\hat\nu}\right|+
  \left|r\frac{\partial_{\hat u}\phi}{-\hat\nu}\right|\right)\\\les \sup_{J^-(\mathcal I^+)\cap\hat{\mathcal C}}\left(|r\phi|+  |r^2\partial_{\hat v}\psi|+
   |r^2\partial_{\hat v}\phi|+
   \left|\frac{\partial_{\hat u}\psi}{-\hat\nu}\right|+
  \left|r\frac{\partial_{\hat u}\phi}{-\hat\nu}\right|\right).
\end{multline}
The right-hand side of \eqref{eq:energy-orbital} is $\les\ve^2$ and the right-hand side of $\eqref{eq:pointwise-orbital}$ is $\les \ve$.

\item \underline{Asymptotic stability}: The geometry decays towards the $\infty$-anchored extremal Reissner--Nordstr\"om solution in the following sense: 
\begin{equation}
    |\gamma_\infty+1|\les \ve^2r^{-1}\tau^{-3+\delta} \label{eq:geo-main-1}
\end{equation}
holds on $\mathcal D_{\infty} \cap\{r\ge\Lambda\}$,
\begin{equation}
    \left|\frac{\nu_\infty}{\bar\nu_\infty}-1\right|\les \ve^{2}\tau^{-1+\delta}
\end{equation}
holds on $\mathcal D_\infty$, and 
\begin{equation}\label{eq:geo-main-3}
     |r-\bar r_\infty|\les \ve^{2}\tau^{-2+\delta},\quad    |\lambda_\infty-\bar\lambda_\infty|\les \ve^{2}\tau^{-2+\delta},\quad  |\kappa_\infty-1|\les \ve^2\tau^{-1+\delta},\quad   |\varpi -M|\les \ve^2\tau^{-3+\delta}
\end{equation}
hold on $\mathcal D_\infty$, up to and including the event horizon $\mathcal H^+$. The scalar field decays to zero in energy norm,
\begin{equation}\label{eq:nrg-decay-main}
      \underline{\mathcal E}{}_p^\infty(\tau)+ \mathcal E_p^\infty(\tau)\les \ve^2\tau^{-3+\delta+p}
\end{equation}
for every $\tau\in[1,\infty)$ and $p\in[0,3-\delta]$ and its amplitude decays to zero pointwise,
\begin{equation}\label{eq:psi-decay-final}
    |(\bar r_\infty-M)^{1/2}\phi|\les \ve \tau^{-3/2+\delta/2},\quad |\psi|\les\ve \tau^{-1+\delta/2}
\end{equation}
on $\mathcal D_\infty$, up to and including the event horizon $\mathcal H^+$.

\item \ul{Absence of trapped surfaces}: On the spacetime $(\hat{\mathcal Q}_\mathrm{max},r,\hat\Omega^2)$, it holds that $\hat\nu<0$, i.e., there exist no antitrapped spheres of symmetry. Moreover: 
 \begin{enumerate} 
    \item $\hat\lambda\ge 0$ on $\hat{\mathcal Q}_\mathrm{max}$, i.e., there exist no (strictly) trapped symmetry spheres.  
    \item If $\hat\lambda(\hat u_0,\hat v_0)=0$ for some $(\hat u_0,\hat v_0)\in \hat{\mathcal Q}_\mathrm{max}$, then $\hat u_0=\hat u_{\mathcal H^+}$ and $\phi(\hat u_{\mathcal H^+},\hat v)=0$ for all $\hat v\ge \hat v_0$.
\end{enumerate}
\end{enumerate}
\end{thm}

\begin{rk}
    By \cite[Proposition B.1]{KU22}, $\hat{\mathcal Q}_\mathrm{max}\times S^2$ with the $(3+1)$-dimensional metric \eqref{eq:dn} does not contain \emph{any} trapped surfaces (i.e., without restriction to spherical symmetry).
\end{rk}

The proof of \cref{thm:stability} is given in \cref{sec:proof-main} and we will now briefly indicate to the reader where the various parts are shown. The set $\mathfrak M_\stab$ is defined in \cref{sec:M-defn}, where the codimension-one property \eqref{eq:intersection} is also proved. Part 2.~is proved in \cref{prop:existence-of-horizon}, while the eschatological gauge $(u_\infty,v)$ is constructed in \cref{sec:final-gauge}. The culmination of the proofs of Points 3.--5.~is given in \cref{sec:putting-together}. Points 3.~and 4.~rely in particular on estimates proved in \cref{sec:geometry,sec:energy,sec:decay-energy}. 

\subsubsection{The Aretakis instability}

 Let \begin{equation*}
     Y\doteq \hat\nu^{-1}\partial_{\hat u}
 \end{equation*}
 denote the gauge-invariant null derivative which is transverse to the event horizon $\mathcal H^+$, analogous to $\partial_r$ in ingoing Eddington--Finkelstein coordinates $(v,r)$ in Reissner--Nordstr\"om (recall \cref{sec:geometry-RN}). 

\begin{thm}[The Aretakis instability for dynamical extremal horizons] \label{thm:instability} Let $\mathfrak M_\stab$ denote the subset of the moduli space $\mathfrak M$ given by \cref{thm:stability} consisting of seed data asymptotically converging to extremal Reissner--Nordstr\"om in evolution. Then the following holds: 
    \begin{enumerate}[i)]
        \item For any solution $(\hat{\mathcal Q}_\mathrm{max},r,\hat\Omega^2,\phi,e)$ arising from $\mathcal S\in\mathfrak M_\stab$, the ``asymptotic Aretakis charge'' 
        \begin{equation*}
            H_0[\phi]\doteq\lim_{\hat v\to\infty} Y\psi|_{\mathcal H^+}
        \end{equation*}
        exists and it holds that 
\begin{align}
      |Y\psi|_{\mathcal H^+}(\hat v)- H_0[\phi]|&\les\ve^{3}(1+\hat v)^{-1+\delta},\label{eq:Aretakis-main}\\
      \big|R_{YY}|_{\mathcal H^+}(\hat v)-2M^{-2}\big(H_0[\phi]\big)^2\big|&\les \ve^2 (1+\hat v)^{-1+\delta/2},\label{eq:main-Ricci-1}
\end{align}
        where $\ve\ge \mathfrak D[\mathcal S]$. 
        
        \item The set
        \begin{equation*}
            \mathfrak M_\mathrm{stab}^{\ne 0}\doteq \{\mathcal S\in \mathfrak M_\mathrm{stab}:H_0[\phi]\ne 0\}
        \end{equation*}
        has nonempty interior as a subset of $\mathfrak M_\stab$. 
        
        \item For any solution arising from data lying in $\mathfrak M_\stab^{\ne 0}$, which satisfies the additional smallness assumption
\begin{equation}\label{eq:extra-ass}
    \sup_{\underline C{}_\ing}|Y^2\psi|\le C_2\ve,
\end{equation} it holds that 
        \begin{align}
       \big|Y^2(r\phi)|_{\mathcal H^+}(\hat v)\big|&\gtrsim |H_0[\phi]|\hat v \label{eq:aretakis-main-2},\\
              \big|\nabla_YR_{YY}|_{\mathcal H^+}(\hat v)\big|&\gtrsim \big(H_0[\phi]\big)^2\hat v\label{eq:main-Ricci-2}
        \end{align}
        for $\hat v\gtrsim 1+ |\ve H_0[\phi]^{-1}|^{1/(1-\delta)}$, where the implicit constants in these estimates also depend on $C_2$.
    \end{enumerate}
\end{thm}

The proof of this theorem is given in \cref{sec:aretakis-proof}.

\begin{rk} In this work, we do not prove pointwise decay of $\partial_v\phi$ as it is not required to close our energy estimates. Assuming the decay rate $\partial_v\phi|_{\mathcal H^+}=O(v^{-2})$ as in the uncoupled case, we would have the following:
    Relative to the null frame $\{e_1,e_2,e_3,e_4\}$ with $e_1=\partial_{\vartheta}$, $e_2=\partial_{\varphi}$, $e_3=Y$, and $e_4=\hat\kappa^{-1}\partial_{\hat v}$ (so that $e_3$ and $e_4$ are gauge invariant and $g(e_3,e_4)=-4$), the only components of the Riemann tensor that do not necessarily converge to those of extremal Reissner--Nordstr\"om along $\mathcal H^+$ are $R_{1313}$, $R_{2323}$, and appropriate permutations of indices (i.e., $\underline\alpha$ computed relative to the Riemann tensor). The only components of the covariant derivative of the Riemann tensor that are allowed to not decay or grow are $\nabla_{3}R_{1313}$, $\nabla_3R_{2323}$, and appropriate permutations of indices. At higher orders of differentiation, other components should also display non-decay and growth properties.
\end{rk}

\begin{rk}
    Using the Einstein equations, \eqref{eq:Aretakis-main}, and \eqref{eq:aretakis-main-2}, one can also prove
    \begin{align}\label{eq:kappa-aretakis-1}
        \big|Y\log\kappa|_{\mathcal H^+}(v)-M^{-2}\big(H_0[\phi]\big)^2\big|&\les \ve^2(1+ v)^{-1+\delta/2}\\
       \big|Y^2\log\kappa|_{\mathcal H^+}(v)\big| &\gtrsim \left(H_0[\phi]\right)^2 v,\label{eq:kappa-aretakis-2}
    \end{align} where we have interpreted $\kappa$ as being written in the ``mixed'' coordinate system $(\hat u,v)$ which still extends smoothly to $\mathcal H^+$. In this coordinate system, we have that $|{\log\kappa}|\les \ve^2(1+v)^{-1+\delta}$ up to $\mathcal H^+$ by \eqref{eq:geo-main-3}, so \eqref{eq:kappa-aretakis-1} and \eqref{eq:kappa-aretakis-2} represent yet another aspect of the extremal horizon instability.  
\end{rk}

\section{Setup for the proof of stability}\label{sec:bootstrap}

In this section, we set up the proof of \cref{thm:stability} and discuss the logic of the proof. In \cref{sec:bootstrap-definitions}, we state the bootstrap assumptions for the proof of \cref{thm:stability}, which also involves defining a sequence of sets used in our codimension-one modulation argument. In \cref{sec:B-nonempty}, we show that the bootstrap set is nonempty by virtue of the local existence theory. Finally, in \cref{sec:B-improving}, we state two propositions that encode the main analytic content of the proof of \cref{thm:stability}---improvability of the bootstrap assumptions.

\subsection{The bootstrap and modulation parameter sets}\label{sec:bootstrap-definitions}

We define in this section two sets of parameters: a bootstrap set $\mathfrak B$ containing the $\tau_f$'s for which we assume the solution exists on $\mathcal D_{\tau_f}$ and satisfies certain properties, and a sequence of compact intervals
\begin{equation*}
   \mathfrak A_0\supset \mathfrak A_1\supset \mathfrak A_2\supset\cdots
\end{equation*}
of $\alpha$ parameters which are used in the modulation argument to hit extremality. 

We first set 
\begin{equation}
    \label{modulation-definition-1}  \mathfrak A_0\doteq[|e|-|e_0|-\ve^{3/2},|e|-|e_0|+\ve^{3/2}],
\end{equation}
but the sets $\mathfrak A_i$ for $i\ge 1$ will only be properly defined in the course of the proof of \cref{thm:stability}; see already \cref{sec:modulation}.\footnote{This is because the sets $\mathfrak A_i$ for $i\ge 1$ are defined teleologically, i.e., they cannot be directly read off from the initial data. Their existence can only be inferred in the context of our continuity argument, in particular, the argument requires proving quantitative decay rates for $\tilde\varpi=\varpi-M$.} We now briefly indicate their construction to make the purpose of the bootstrap assumption~1.~below clear. We define continuous functions
\begin{align}
 \nonumber   \Pi_i:\mathfrak A_{i}&\to\Bbb R\\
  \label{eq:Pi-defn}  \alpha&\mapsto \varpi(\Gamma(L_i))-M=\tilde\varpi(\Gamma(L_i)),
\end{align}
where $L_i\doteq 2^i$, which will be assumed to satisfy the fundamental estimate
\begin{equation}
    \sup_{\mathfrak A_i}|\Pi_i|\le \ve^{3/2}L_i^{-3+\delta}.\label{modulation-definition-2}
\end{equation}
Of course, $\Pi_i$ is only defined if the solution corresponding to $\mathcal S_0(\alpha)$ for $\alpha\in\mathfrak A_i$ exists until the time $\tau=L_i$. Assuming that this is the case, we will then inductively define 
\begin{equation}
    \mathfrak A_{i+1}=[\alpha_{i+1}^-,\alpha_{i+1}^+],\label{modulation-definition-3}
\end{equation}
where $\alpha^\pm_{i+1}\in \Int(\mathfrak A_i)$ are chosen according to a simple algorithm ensuring that the improved estimate
\begin{equation*}
    \sup_{\mathfrak A_{i+1}}|\Pi_{i+1}|\le \ve^{3/2}L^{-3+\delta}_{i+1}
\end{equation*}
 holds; see already \cref{lem:inductive-mod}.

It is convenient to define the function $I(\tau_f)\doteq  \lfloor \log_2\tau_f \rfloor$, i.e., the largest integer such that $2^{I(\tau_f)}\le\tau_f$.

\begin{defn}\label{def:B} Let $A\ge 1$, $0<\ve\le \ve_\loc$, and $\mathcal S_0\in\mathfrak M_0$ with $\mathfrak D[\mathcal S_0]\le \ve$.   Then $\mathfrak B(\mathcal S_0,\ve,A)$ denotes the set of $\tau_f\in [1,\infty)$ such that:
   \begin{enumerate}
\item  For every $i\in\{0,1,\dotsc,I(\tau_f)\}$, there exist numbers $\alpha^\pm_i\in[|e|-|e_0|-\ve^{3/2},|e|-|e_0|+\ve^{3/2}]$, which may depend on $\mathcal S_0$ and $\ve$, but are independent of $A$ and $\tau_f$, with $\alpha_i^-<\alpha_i^+$ and $\alpha_0^\pm=|e|-|e_0|\pm\ve^{3/2}$, such that the nesting condition $\mathfrak A_{i+1}\subset \mathfrak A_i$
holds, where $\mathfrak A_i\doteq [\alpha_i^-,\alpha_i^+]$.
\end{enumerate}

Given $\alpha\in\mathfrak A_{I(\tau_f)}$, let $(\hat{\mathcal Q}_\mathrm{max},r,\hat\Omega^2,\phi,e)$ denote the maximal development of the modulated seed data $\mathcal S_0(\alpha)$ in the initial data gauge $(\hat u,\hat v)$ determined by \cref{prop:seed-data-generation} and \cref{lem:epsilon-loc}.

\begin{enumerate}
\item[2.]  For every $\alpha\in \mathfrak A_{I(\tau_f)}$, there exists a timelike curve $\Gamma:[1,\tau_f]\to \hat{\mathcal Q}_\mathrm{max}$, which is the unique smooth solution of the ODE
\begin{equation*}
\frac{d}{d\tau}(\Gamma^{\hat u},\Gamma^{\hat v})=\left.\left(\frac{\sqrt{1-\mu}}{-2\hat\nu},\frac{\sqrt{1-\mu}}{2\hat\lambda}\right)\right|_{\Gamma(\tau)},
\end{equation*}
with initial condition $\Gamma(1)=(0,0)$. 
\end{enumerate}

By global hyperbolicity of $\hat{\mathcal Q}_\mathrm{max}$, point 2.~implies that for every $\alpha\in \mathfrak A_{I(\tau_f)}$, \[\hat{\mathcal D}_{\tau_f}\doteq [0,\Gamma^{\hat u}(\tau_f)]\times[0,\Gamma^{\hat v}(\tau_f)]\subset\hat{\mathcal Q}_\mathrm{max}.\]

\begin{enumerate}
    \item[3.] For every $i\in\{0,1,\dotsc,I(\tau_f)\}$, the map $\Pi_i$ (recall \eqref{eq:Pi-defn}) is defined on $\mathfrak A_i$,
\begin{equation*}
    \Pi_i:\mathfrak A_i\to [-\ve^{3/2}L_i^{-3+\delta},\ve^{3/2}L_i^{-3+\delta}]
\end{equation*}
is surjective, and $\Pi_i(\alpha_i^\pm)=\pm \ve^{3/2}L_i^{-3+\delta}$.

       \item[4.] For every $\alpha\in \mathfrak A_{I(\tau_f)}$, $\hat\gamma$ is strictly negative on $[0,\Gamma^{\hat u}(\tau_f)]\times\{\Gamma^{\hat v}(\tau_f)\}$ and $\hat\kappa$ is strictly positive on $\Gamma$. Therefore, the teleologically normalized coordinates $(u_{\tau_f},v)$ are defined on $\hat{\mathcal D}_{\tau_f}$.
       \end{enumerate}

       On $\mathcal D_{\tau_f}$, let $(\bar r_{\tau_f},\bar\Omega^2_{\tau_f})$ be the $\tau_f$-anchored extremal Reissner--Nordstr\"om solution as defined in \cref{sec:anchor}. 

       \begin{enumerate}
        \item[5.]  For every $\alpha\in \mathfrak A_{I(\tau_f)}$, the following bootstrap assumptions for the geometry hold on $\mathcal D_{\tau_f}$:
        \begin{align}
\label{eq:boot-nu}   \left|\frac{\nu_{\tau_f}}{\bar\nu_{\tau_f}}-1\right| &\le A^3 \ve^{3/2}\tau^{-1+\delta},\\
 \label{eq:boot-r}    |\tilde r_{\tau_f}|   &\le A^2 \ve^{3/2} \tau^{-2+\delta},\\
  \label{eq:boot-pi}|\tilde\varpi_{\tau_f}|  &\le A\ve^{3/2} \tau^{-3+\delta}.
        \end{align} 
        \item[6.] For every $\alpha\in \mathfrak A_{I(\tau_f)}$, the following bootstrap assumptions for the scalar field hold:
        \begin{align}
   \label{eq:boot-rp}        \mathcal E_{p}^{\tau_f}(\tau) &\le A\ve^2 \tau^{-3+\delta+p},\\
  \label{eq:boot-horizon}         \underline{\mathcal E}{}_p^{\tau_f}(\tau) &\le A\ve^2 \tau^{-3+\delta+p},\\
        \label{eq:boot-flux-1}       \mathcal F^{\tau_f}(u,\tau)&\le A\ve^2 \tau^{-3+\delta},\\
      \label{eq:boot-flux-2}     \underline{\mathcal F}^{\tau_f}(v,\tau)&\le A\ve^2\tau^{-3+\delta}
        \end{align}
        for every $\tau\in [1,\tau_f]$, $(u,v)\in\mathcal D_{\tau_f}$, and $p\in [0,3-\delta]$.
   \end{enumerate}
\end{defn}

\subsection{Nonemptiness of the bootstrap set}\label{sec:B-nonempty}

We begin the continuity argument with the following simple consequence of the local existence theory:

\begin{prop}\label{prop:B-nonempty}
    For any $A\ge1$ and $\ve$ sufficiently small,
    \begin{equation}
        \{\mathcal S_0(\alpha):\alpha\in\mathfrak A_0\}\subset\mathcal L(\mathcal S_0,\ve)\label{eq:set-inclusion}
    \end{equation}
    and $\mathfrak B(\mathcal S_0,\ve,A)$ is nonempty. 
\end{prop}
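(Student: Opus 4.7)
The inclusion \eqref{eq:set-inclusion} follows by direct computation: since $\mathfrak D[\mathcal S_0]\le\ve$ gives $|e-e_0|\le\ve$ and hence $\big||e|-|e_0|\big|\le\ve$, any $\alpha\in\mathfrak A_0$ satisfies $|\alpha|\le\ve+\ve^{3/2}\le 2\ve$ for $\ve$ sufficiently small, so $\mathcal S_0(\alpha)\in\mathcal L(\mathcal S_0,\ve)$.

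For the nonemptiness of $\mathfrak B(\mathcal S_0,\ve,A)$, my plan is to show that $\tau_f=1+\eta\in\mathfrak B$ for all sufficiently small $\eta>0$ (depending on $A$, $\ve$, and $M_0$). Combining \cref{prop:seed-data-generation} with \cref{prop:slab-existence} produces, for each $\alpha\in\mathfrak A_0$, a smooth solution on a thin slab near $\hat{\mathcal C}$ of uniform width, since the resulting characteristic initial data satisfy $\|\log\mathring r\|_{C^1}+\|\log\mathring\Omega^2\|_{C^1}+\|\mathring\phi\|_{C^1}+|e|\lesssim 1$ uniformly in $\alpha$. The curve $\Gamma$ is the integral curve of the vector field in Condition~2 starting at $(0,0)$; since $\hat\nu(0,0)=-1$, $\hat\lambda(0,0)=1$, and $1-\mu(0,0)\ge 1/2$ for $\ve$ small (using $\Lambda\approx 100M_0$), $\Gamma$ exists as a timelike curve on $[1,1+\eta_0]$ for some $\eta_0>0$ uniform in $\alpha\in\mathfrak A_0$. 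The teleological coordinates $(u_{\tau_f},v)$ from \eqref{eq:teleology-1}--\eqref{eq:teleology-2} and the anchored background $\bar r_{\tau_f}$ from \cref{lem:existence-anchoring} are then well-defined on $\mathcal D_{\tau_f}$ for $\tau_f\in(1,1+\eta_0]$.

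Conditions~1--6 will be verified at $\tau=1$ and extended to nearby $\tau_f$ by continuity. Condition~1 is trivial since $I(\tau_f)=0$ for $\tau_f<2$, so only $\mathfrak A_0$ from \eqref{modulation-definition-1} is needed. For Condition~3, the identity
\begin{equation*}
\Pi_0(\alpha)=\varpi(\Gamma(1))-M=(M_0+\alpha)-|e|=\alpha-(|e|-M_0)
\end{equation*}
exhibits $\Pi_0$ as an affine bijection $\mathfrak A_0\to[-\ve^{3/2},\ve^{3/2}]$ sending $\alpha_0^\pm$ to $\pm\ve^{3/2}$. Condition~4 follows from the positivity of $\hat\kappa(0,0)$ and negativity of $\hat\gamma(0,0)$ together with continuity. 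The crucial point for Conditions~5 and~6 is the sharp bound $|\tilde\varpi(0,0)|=|\Pi_0(\alpha)|\le\ve^{3/2}$, which is \emph{not} merely $O(\ve)$---this cancellation is built into the definition of $\mathfrak A_0$ through the precise shift by $|e|-|e_0|$, and a bound of $O(\ve)$ would be insufficient. This $\ve^{3/2}$ bound propagates algebraically to $|\tilde\mu|,|\tilde r|,|\nu_{\tau_f}/\bar\nu_{\tau_f}-1|\lesssim\ve^{3/2}/M_0$ at $(0,0)$, and since $\Lambda=100M_0$ makes the implied constants small, the geometric bootstrap inequalities \eqref{eq:boot-nu}--\eqref{eq:boot-pi} hold at $\tau=1$ with strict slack for any $A\ge 1$. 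The scalar field bounds \eqref{eq:boot-rp}--\eqref{eq:boot-flux-2} reduce to $\mathcal E_p^{\tau_f}(1)+\underline{\mathcal E}{}_p^{\tau_f}(1)+\mathcal F^{\tau_f}(\cdot,1)+\underline{\mathcal F}^{\tau_f}(\cdot,1)\lesssim\ve^2$ directly from the seed data norm $\mathfrak D\le\ve$, which is $\ll A\ve^2$ for $\ve$ sufficiently small relative to the universal implicit constant and $A\ge 1$.

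The main technical subtlety is the verification of the geometric bootstrap bounds at $\tau=1$ with the precise constants: one must exploit the $O(\ve^{3/2})$ cancellation in $\tilde\varpi$ and the smallness of $1/\Lambda$ to ensure that the universal implicit constants fit strictly inside the factors $A^3$, $A^2$, $A$. Once the bounds hold strictly at $\tau=1$ with slack, continuous dependence of the solution and gauge on $\tau_f$ (available from the local existence theory) extends them to $\tau_f\in(1,1+\eta]$ for some $\eta>0$, placing such $\tau_f$ in $\mathfrak B$.
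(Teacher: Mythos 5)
Your overall approach matches the paper's: use local existence (\cref{prop:slab-existence}) and Cauchy stability to obtain a solution on a thin slab near the bifurcation sphere, observe that Conditions~1 and~3 of \cref{def:B} are trivial for $\tau_f\in[1,2)$, and verify the remaining bootstrap conditions for $\tau_f$ close to $1$. Your explicit computation exhibiting $\Pi_0(\alpha)=\alpha-(|e|-M_0)$ as an affine bijection of $\mathfrak A_0$ onto $[-\ve^{3/2},\ve^{3/2}]$ with $\Pi_0(\alpha_0^\pm)=\pm\ve^{3/2}$, and your identification of why the shift by $|e|-|e_0|$ in the definition of $\mathfrak A_0$ is essential for the $\ve^{3/2}$ cancellation in $\tilde\varpi(0,0)$, are correct and make explicit what the paper leaves implicit.

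There is, however, a gap in your claim that the bootstrap inequalities hold ``with strict slack for any $A\ge 1$.'' At the endpoints $\alpha=\alpha_0^\pm$ of $\mathfrak A_0$ one has $\tilde\varpi(\Gamma(1))=\Pi_0(\alpha_0^\pm)=\pm\ve^{3/2}$ with \emph{equality}, so \eqref{eq:boot-pi} is already saturated at $(0,0)$ when $A=1$. Since $\partial_v\varpi\ge 0$ by \eqref{eq:varpi-v} and $\tau\equiv 1$ on $C_\out\cap\mathcal D_{\tau_f}$, one has $\tilde\varpi(0,v)\ge\ve^{3/2}$ for $v>0$ while the right-hand side of \eqref{eq:boot-pi} stays fixed at $A\ve^{3/2}$ there; thus for $A=1$ and $\alpha=\alpha_0^+$ the inequality fails for every $\tau_f>1$ unless $\phi$ is trivial. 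The $\Lambda$-smallness you invoke is irrelevant to \eqref{eq:boot-pi} because $\tilde\varpi$ carries no $\Lambda$-dependent prefactor. Similarly, the implicit constant $C$ in $\mathcal E_p(1)+\underline{\mathcal E}{}_p(1)\le C\,\mathfrak D^2$ (\cref{lem:initial-energy-est}) need not satisfy $C\le 1$, and shrinking $\ve$ does not help since both sides of \eqref{eq:boot-rp} scale as $\ve^2$. The paper's own proof sidesteps all of this by requiring ``$A$ sufficiently large''---which is all that is needed in the sequel, where $A$ is fixed at the large value $A_0$---and you should do the same rather than assert slack at $A=1$.
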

\begin{proof} Observe that $|\alpha|\le |e-e_0|+\ve^{3/2}\le 2\ve$ for $\alpha\in\mathfrak A_0$ and $\ve\le 1$ by the definition of $\mathfrak D$, which verifies \eqref{eq:set-inclusion}. Points 1.~and 3.~of \cref{def:B} are trivial for $\tau_f\in[1,2)$. By local well-posedness and continuous dependence on initial data, every element of $\mathcal S_0(\alpha)$ extends to a neighborhood of the initial data hypersurface $\hat{\mathcal C}$ which is uniform near $(0,0)$. Since the solution is close to Reissner--Nordstr\"om in this small region by Cauchy stability, the bootstrap assumptions 2.~and 4.--6.~are automatically satisfied for $A$ sufficiently large, $\ve$ sufficiently small, and $|\tau_f-1|$ sufficiently small. Therefore, $\tau_f\in  \mathfrak B(\mathcal S_0,\ve,A)$  for $|\tau_f-1|$ sufficiently small.
\end{proof}

\subsection{Improving the bootstrap assumptions---the main estimates}\label{sec:B-improving}

The main analytic content of the proof of \cref{thm:stability} consists of the following two propositions, which will be used to show that the bootstrap set is \emph{open}.

\begin{prop}[Improving the bootstrap assumptions for the geometry] \label{prop:geometry-est-improve}
    There exist positive constants $A_0$ and $\ve_0$ depending only on $M_0$ and $\delta$, such that if $A= A_0$, $\ve\le \ve_0$, $\tau_f\in  \mathfrak B(\mathcal S_0,\ve,A)$, and $\alpha\in\mathfrak A_{I(\tau_f)}$, then the estimates 
            \begin{align}
   \left|\frac{\nu_{\tau_f}}{\bar\nu_{\tau_f}}-1\right| &\le \tfrac 12 A^3\ve^{3/2}\tau^{-1+\delta},\\
     |\tilde r_{\tau_f}|   &\le \tfrac 12 A^2 \ve^{3/2} \tau^{-2+\delta}, \\
  |\tilde\varpi_{\tau_f}|  &\le \tfrac 12 A\ve^{3/2} \tau^{-3+\delta} \label{varpi_dec}
        \end{align} hold on $\mathcal D_{\tau_f}$. 
\end{prop}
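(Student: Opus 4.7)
The plan is to use a careful combination of the evolution equations for the renormalized Hawking mass $\varpi$, the area-radius $r$, and the null lapses $\gamma$ and $\kappa$, integrated against the scalar field flux bounds provided by bootstrap assumption~6 of \cref{def:B}. The Taylor expansions \eqref{eq:Taylor-1} and \eqref{eq:Taylor-2} around the anchored background $\bar r_{\tau_f}$ sit at the heart of the argument: they encode the replacement of the missing redshift effect. I would proceed in the order $\tilde\varpi\to\tilde r\to \nu_{\tau_f}/\bar\nu_{\tau_f}-1$, exploiting the fact that each estimate feeds into the next and that the loss of powers of $A$ in the bootstrap hierarchy corresponds exactly to the order in which the estimates are improved.

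First, I would establish the Taylor expansions \eqref{eq:Taylor-1}--\eqref{eq:Taylor-2} using the bootstrap assumptions on $\tilde\varpi$ and $\tilde r$ and the definitions of $\mu$ and $\lambda$. Then I would improve $\tilde\varpi$. On $\Gamma$, I integrate $\partial_v \varpi|_\Gamma = r^2(\partial_v\phi)^2/(2\kappa)$ from the largest dyadic anchor point $\Gamma(L_i)$ with $L_i = 2^{I(\tau)}$, using the modulation condition $|\Pi_i(\alpha)| \le \ve^{3/2}L_i^{-3+\delta}$ as initial data and controlling the integrand by $\mathcal E_0^{\tau_f}$. From $\Gamma$, I integrate $\partial_v\varpi$ outward (seeing $\mathcal E_0^{\tau_f} \lesssim A\ve^2\tau^{-3+\delta}$) and $\partial_u\varpi = \tfrac12(1-\mu)r^2\nu^{-1}(\partial_u\phi)^2$ inward toward the horizon. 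The near-horizon integral is the delicate piece: the naive $(\partial_u\phi)^2/(-\nu)\,du$ integrand sees only $\underline{\mathcal E}{}_0^{\tau_f}$, which is insufficient, but the Taylor expansion \eqref{eq:Taylor-1} supplies an extra factor of $\bar r-M$ at the leading order, so that the effective integrand can be absorbed into $\underline{\mathcal E}{}_1^{\tau_f} \lesssim A\ve^2\tau^{-2+\delta}$ multiplied by bootstrap-small factors. Over the dyadic interval $[L_i,\tau]$ this produces $C\ve^2 \tau^{-3+\delta}$ which, combined with the modulated initial size, beats $\tfrac12 A\ve^{3/2}\tau^{-3+\delta}$ once $\ve$ is small enough.

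Next I would improve $|\tilde r|$. Using $\nu = (1-\mu)\gamma$, $\lambda=(1-\mu)\kappa$ and the identities $\partial_u\tilde r = \nu-\bar\nu$, $\partial_v\tilde r = \lambda-\bar\lambda$, I integrate inward from the final ingoing cone (where $\gamma_{\tau_f}=-1$ by gauge) in the far region and outward from $\Gamma$ (where $r=\Lambda$ and $\kappa_{\tau_f}=1$) in the near region. The pointwise integrand is controlled by the freshly improved $\tilde\varpi$ estimate fed into \eqref{eq:Taylor-1} together with bootstrap assumption \eqref{eq:boot-nu} for $\tilde\gamma$. Finally, to improve $\nu_{\tau_f}/\bar\nu_{\tau_f}-1$ in the far region I integrate \eqref{eq:gamma-v} along outgoing cones back to $\Gamma$, seeing $\mathcal E_0^{\tau_f}$. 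For the near-horizon piece, which is the origin of the $A^3$ loss, I use \eqref{eq:Taylor-2}, the hierarchy \eqref{eq:kappa-intro} obtained by integrating \eqref{eq:kappa-u} against each $\underline{\mathcal E}{}_p^{\tau_f}$ for $p\in[0,2]$, and the already-improved bounds on $\tilde r$ and $\tilde\varpi$. All three improvements are then obtained by choosing $A_0$ large enough to beat universal constants and then $\ve_0$ small enough to convert $C\ve^2$ errors into $\tfrac12 A\ve^{3/2}$ gains.

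The main obstacle is the total absence of the classical redshift effect at the extremal horizon, where $\varkappa = 0$ on the background. In the subextremal problem \cite{luk2019strong,Price-law}, positive surface gravity provides exponential decay of $\gamma-\bar\gamma$ transverse to $\mathcal H^+$ and underpins the mass estimate; here this mechanism is replaced by the structural cancellation encoded in \eqref{eq:Taylor-1}, which says that $1-\mu$ vanishes \emph{linearly} in $\bar r - M$ to leading order with a coefficient proportional to $\tilde r$, plus a very fast-decaying error. Exploiting this cancellation exactly, and tracking how each power of $\bar r - M$ it supplies upgrades a lower-$p$ horizon flux into a higher-$p$ one, is the technical heart of the argument and the step where the use of the sharp horizon hierarchy up to $p = 3-\delta$ is indispensable.
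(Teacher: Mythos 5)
Your overall strategy tracks the paper's (establish the Taylor expansions, then improve each geometric quantity via null ODEs integrated against the scalar-field flux bootstraps), but two details are wrong in ways that matter, and one key idea is missing.

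The flux bookkeeping in the near-horizon $\varpi$ argument is backwards. The paper's convention is that $\underline{\mathcal{E}}{}_p$ carries the weight $(\bar r-M)^{2-p}$, so $\underline{\mathcal{E}}{}_0$ is the \emph{most} degenerate flux and decays like $\tau^{-3+\delta}$. The leading term of $1-\mu$ is the background $\overline{1-\mu}\sim(\bar r-M)^2$, which vanishes \emph{quadratically} (not linearly, as you write) in $\bar r-M$; feeding this into $\partial_u\varpi=\tfrac12(1-\mu)r^2\nu^{-1}(\partial_u\phi)^2$ makes the leading integrand exactly the $\underline{\mathcal{E}}{}_0$ integrand, whose $\tau^{-3+\delta}$ decay is what is needed and sufficient. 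The naive bound (throwing away the $1-\mu$ factor) sees $\underline{\mathcal{E}}{}_2\lesssim\tau^{-1+\delta}$, which is what is insufficient; your claim that the main term is absorbed into $\underline{\mathcal{E}}{}_1\lesssim A\ve^2\tau^{-2+\delta}$ would not yield the required $\tau^{-3+\delta}$. (There is also no $\tau$-integration ``over $[L_i,\tau]$'' here: the estimate at a point is obtained by integrating $\partial_u\varpi$ along a single ingoing cone back to $\Gamma$ and invoking the flux bound at that single $\tau$.) Relatedly, your reading of the $A$-power hierarchy is off: the $\tilde r$ and $\nu/\bar\nu$ improvements are obtained with implicit constants \emph{independent} of $A$, so $A$ large supplies the factor $\tfrac12$; the $\tilde\varpi$ improvement produces $O(A\ve^2)+|\Pi_{I(\tau_f)}|$ with $|\Pi_{I(\tau_f)}|\le\ve^{3/2}\tau_f^{-3+\delta}$ by the modulation sets, and it is $\ve$ small — not $A$ large — that closes it.

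More seriously, you do not address the sign of the integrating factor in the near-horizon $\tilde r$ estimate, which is the heart of the argument replacing the missing classical redshift. Integrating $\partial_v\tilde r=\frac{2M\kappa}{\bar r^3}(\bar r-M)\tilde r+E_v$ backwards in $v$ from $\Gamma$ produces the factor $\exp\big(-\int_v^{v^\Lambda(u)}\frac{2M\kappa}{\bar r^3}(\bar r-M)\,dv'\big)\le1$. This good sign is essential: the exponent $\int(\bar r-M)\,dv'$ has a logarithmic divergence on cones entering the horizon (substitute $dv'=d\bar r/(1-\bar\mu)\sim d\bar r/(\bar r-M)^2$), so integrating in the opposite $v$-direction would give an unboundedly large factor. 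The alternative of propagating $\tilde r$ in the $u$-direction from $\Gamma$ toward the horizon also fails for a different reason: $\tau$ is constant along ingoing cones in the near region, so the error $\int E_u\,du'\lesssim\ve^{3/2}\tau^{-3+\delta}\,(u-u^\Lambda(v))$ accumulates over a length that can be of order $\tau_f\gg\tau$ and is not small. Your sketch says ``integrate outward from $\Gamma$ in the near region'' without distinguishing these, and without the sign observation the estimate does not close.
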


This proposition is proved in \cref{sec:geometry-improvement}.

\begin{prop}[Improving the bootstrap assumptions for the scalar field] \label{prop:energy-est-improve}
  There exist positive constants $A_0$ and $\ve_0$ depending only on $M_0$ and $\delta$, such that if $A= A_0$, $\ve\le \ve_0$, $\tau_f\in  \mathfrak B(\mathcal S_0,\ve,A)$, and $\alpha\in\mathfrak A_{I(\tau_f)}$, then the estimates 
        \begin{align}
           \mathcal E_{p}^{\tau_f}(\tau) &\le \tfrac 12A\ve^2 \tau^{-3+\delta+p}, \label{ep_decay}\\
           \underline{\mathcal{E}}{}_{p}^{\tau_f} (\tau) &\le \tfrac 12A\ve^2 \tau^{-3+\delta+p}, \label{uep_decay}\\
           \mathcal F^{\tau_f}(u,\tau)&\le \tfrac 12A\ve^2 \tau^{-3+\delta}, \label{f_decay}\\
           \underline{\mathcal F}^{\tau_f}(v,\tau)&\le \tfrac 12A\ve^2\tau^{-3+\delta} \label{uf_decay}
        \end{align}
       hold for every $\tau\in [1,\tau_f]$, $(u,v)\in\mathcal D_{\tau_f}$, and $p\in [0,3-\delta]$.
\end{prop}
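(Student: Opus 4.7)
The plan is to implement the two-tier strategy outlined in \cref{sec:intro-energy}: prove (i) a boundedness/hierarchy inequality \eqref{eq:intro-energy-1} and (ii) an integrated local energy decay estimate \eqref{eq:intro-energy-2}, then extract polynomial decay via a standard Dafermos--Rodnianski-type pigeonhole. The improvement from $A\ve^2$ to $\tfrac12 A\ve^2$ will come from two sources: universal constants for the linear (``background'') part of the energy identities (controlled by choosing $A$ large), and nonlinear error terms that carry at least one extra factor of $\ve^{1/2}$ (controlled by choosing $\ve$ small), exploiting the fact that the geometry is at size $\ve^{3/2}$ while the scalar is at size $\ve^2$.

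The first step is to construct the multiplier currents. In the far region $\{r\ge\Lambda\}$ use $X=r^{p}\partial_v$, and in the near region $\{r\le\Lambda\}$ use $X=-(\bar r_{\tau_f}-M)^{2-p}\,\bar\nu_{\tau_f}^{-1}\partial_u$ (built from background quantities to avoid losing weights), together with the zeroth-order current \eqref{eq:lower-order-multiplier} needed to absorb the inhomogeneous term $2\kappa\nu\varkappa\phi$ on the right of \eqref{eq:wave-equation-psi}. Applying the multiplier identity \eqref{eq:energy-identity} and a Hardy-type argument yields, for $p\in[0,3-\delta]$ and $1\le\tau_1\le\tau_2\le\tau_f$,
\begin{equation*}
\mathcal E_p^{\tau_f}(\tau_2)+\underline{\mathcal E}{}_p^{\tau_f}(\tau_2)+\text{(flux contributions)}\le C_\sharp\bigl(\mathcal E_p^{\tau_f}(\tau_1)+\underline{\mathcal E}{}_p^{\tau_f}(\tau_1)\bigr)+N_p(\tau_1,\tau_2),
\end{equation*}
where $C_\sharp=C_\sharp(M_0,\delta)$ and $N_p$ gathers all nonlinear errors. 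These errors are products of either $\partial\phi$ or $\psi$ with the geometric deviations $\tilde\nu,\tilde\lambda,\tilde\kappa,\tilde\gamma,\tilde\varpi$. Using the already-improved geometric estimates from \cref{prop:geometry-est-improve} together with the Taylor expansions \eqref{eq:Taylor-1}--\eqref{eq:Taylor-2}, each such error factorizes as (bootstrap scalar energy)$\times$(bootstrap geometric smallness), producing a bound of the form $A^4\ve^{7/2}\tau_1^{-3+\delta+p}$. Crucially, at the horizon one must not lose any power of $\bar r_{\tau_f}-M$; the linear term in \eqref{eq:Taylor-1} and the sharp $\tau^{-3+\delta}$ decay of $\tilde\varpi$ are precisely what permit this.

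Next, prove the Morawetz-type estimate
\begin{equation*}
\int_{\tau_1}^{\tau_2}\bigl(\mathcal E_{p-1}^{\tau_f}(\tau)+\underline{\mathcal E}{}_{p-1}^{\tau_f}(\tau)\bigr)d\tau\le C_\sharp\bigl(\mathcal E_p^{\tau_f}(\tau_1)+\underline{\mathcal E}{}_p^{\tau_f}(\tau_1)\bigr)+N_p'(\tau_1,\tau_2)
\end{equation*}
for $p\in[1,3-\delta]$, using the far-region $r^p$-method of \cite{dafermos2010new} and the near-horizon $(\bar r-M)^{2-p}$ analogue \eqref{eq:r-Mp-intro}. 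The bulk terms produced by the multipliers applied to $\bar r_{\tau_f}$ are exactly of the stated sign with the correct weight, and the dynamical correction is absorbed into $N_p'$ exactly as above. Combining boundedness and integrated decay in a dyadic pigeonhole on $[2^i,2^{i+1}]$ produces a time $\tau_i^\star\in[2^i,2^{i+1}]$ at which the $(p-1)$-level energy is $O(2^{-i})$ times the initial $p$-level energy; iterating this from $p=3-\delta$ downwards and then interpolating back up via the boundedness estimate yields
\begin{equation*}
\mathcal E_p^{\tau_f}(\tau)+\underline{\mathcal E}{}_p^{\tau_f}(\tau)\le C_\sharp\bigl(\mathcal E_{3-\delta}^{\tau_f}(1)+\underline{\mathcal E}{}_{3-\delta}^{\tau_f}(1)\bigr)\tau^{-3+\delta+p}+C_\sharp A^4\ve^{7/2}\tau^{-3+\delta+p}
\end{equation*}
for all $p\in[0,3-\delta]$. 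Since the data satisfy $\mathcal E_{3-\delta}^{\tau_f}(1)+\underline{\mathcal E}{}_{3-\delta}^{\tau_f}(1)\les\ve^2$, the first term is at most $\tfrac14 A\ve^2\tau^{-3+\delta+p}$ upon choosing $A=A_0\ge 4C_\sharp$ (with constant depending only on $M_0,\delta$), and the second is at most $\tfrac14 A\ve^2\tau^{-3+\delta+p}$ upon choosing $\ve\le\ve_0$ small enough that $C_\sharp A_0^3\ve^{3/2}\le \tfrac14$. This gives \eqref{ep_decay} and \eqref{uep_decay}. The flux estimates \eqref{f_decay} and \eqref{uf_decay} drop out as boundary terms in the same multiplier identities applied between $\tau$ and $\tau_f$ (using the same multipliers, localized to $\{r\le\Lambda\}$ and $\{r\ge\Lambda\}$ respectively), together with the $p=0$ case that has just been established.

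The main obstacle is controlling the near-horizon nonlinear errors at the top of the hierarchy $p=3-\delta$. There, $(\bar r_{\tau_f}-M)^{2-p}$ is extremely degenerate, so every cross term involving a geometric deviation must be balanced by a matching power of $\bar r_{\tau_f}-M$ coming from either \eqref{eq:Taylor-1}--\eqref{eq:Taylor-2} or from the $u$-equation \eqref{eq:kappa-u} integrated against the hierarchy \eqref{eq:kappa-intro}. This is exactly where the sharp $\tau^{-3+\delta}$ decay rate for $\tilde\varpi$ guaranteed by the dyadic modulation scheme (and by the improved estimate \eqref{varpi_dec}) is used in an essential way, since there is no redshift available to absorb slower-decaying errors on the horizon.
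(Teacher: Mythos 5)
Your overall strategy---boundedness plus integrated decay in the $r^p$/$(\bar r-M)^{2-p}$ hierarchies, then a dyadic pigeonhole, with nonlinear errors at strength $\ve^{7/2}$ absorbed by choosing $A$ large and $\ve$ small---is the same as the paper's (Sections 6 and 7). However, there is a genuine gap in the middle of the argument that would prevent it from closing as written.

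You propose to prove the hierarchy estimates using only the multipliers $r^p\partial_v$ (far) and $-(\bar r_{\tau_f}-M)^{2-p}\bar\nu_{\tau_f}^{-1}\partial_u$ (near), augmented by the zeroth-order current \eqref{eq:lower-order-multiplier} and ``Hardy-type arguments.'' But those ingredients alone cannot close the argument: the near-horizon Hardy inequalities (the paper's Lemmas in Section~6.3.1, e.g.~\eqref{eq:Hardy-h-3}, \eqref{eq:Hardy-h-5}) and the far-region Hardy inequalities \eqref{eq:Hardy-I-1}--\eqref{eq:Hardy-I-2} inevitably produce boundary terms of the form $\int_\Gamma\phi^2\,ds$, and the $\mathcal H^+$- and $\mathcal I^+$-localized bulks leave behind lower-order spacetime terms in the intermediate region that are not of hierarchy type. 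These are \emph{not} controlled by the hierarchy fluxes, nor can they be absorbed by the zeroth-order current you list. They are controlled by a separate spacetime Morawetz/integrated-local-energy-decay estimate (the paper's Proposition \ref{prop:Morawetz}), built from the Kodama energy boundedness statement (Proposition \ref{prop:Kodama-estimate}) together with the Morawetz multipliers $-f(r)(\gamma^{-1}\partial_u+\kappa^{-1}\partial_v)$ and $-f(r)(\partial_v-\partial_u)$ in Lemmas \ref{lem:mor-proof-1}--\ref{lem:mor-proof-4}. This Morawetz estimate is an independent ingredient and is used explicitly to close both the horizon hierarchy (Lemma \ref{lem:horizon-main}) and the $r^p$ hierarchy (the proof of Proposition \ref{prop:I-h-1}); your proposal does not contain it, and without it the ``$+$(flux contributions)'' on your left-hand side cannot be made to absorb the Hardy boundary terms. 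Note also that the $(\bar r-M)^{2-p}$ hierarchy for $\partial_u\psi$ is not literally a multiplier identity \eqref{eq:energy-identity} applied to $\phi$; the paper derives it by integrating $\partial_v\bigl((\bar r-M)^{2-p}(\partial_u\psi)^2/(-\bar\nu)\bigr)$ directly and using the wave equation for $\psi$, which is a mild but necessary adjustment to your ``multiplier'' formulation.

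A secondary imprecision: ``iterating from $p=3-\delta$ downwards and then interpolating back up'' does not work literally, because the admissible range is $[0,3-\delta]$, so three integer descents from $3-\delta$ exit the range before reaching $p=0$. The paper interleaves: pigeonhole from $p=3-\delta$ to $p=2-\delta$, interpolate (Lemma \ref{lem:interpolation}) between $p=3-\delta$ and $p=2-\delta$ to recover the integer level $p=2$, pigeonhole again to $p=1$, then to $p=0$, and only then interpolate back up across the whole range. Without the intermediate interpolations the scheme does not produce the claimed rates.
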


This proposition is proved in \cref{sec:improving-energy} and relies on energy estimates proved in \cref{sec:energy}.

\begin{rk}
    The modulation argument for $\varpi$, i.e., the construction of the sets $\mathfrak A_i$, takes place in \cref{sec:modulation} when we show that  $\mathfrak B(\mathcal S_0,\ve,A)$  is \emph{closed}. It relies crucially on an estimate for $\tilde\varpi_{\tau_f}$ that will be proved in the course of the proof of \cref{prop:geometry-est-improve}. See already \cref{lem:varpi-est}.
\end{rk}

\section{Estimates for the geometry}\label{sec:geometry}

In this section, we prove estimates for the geometry in a bootstrap domain $\mathcal D_{\tau_f}$. In \cref{sec:initial-estimates}, we estimate $\tilde\kappa_{\tau_f}$, $\tilde \gamma_{\tau_f}$, $\tilde\kappa_{\tau_f}$, the behavior of $\Gamma$ and $\tau$, and provide the fundamental Taylor expansions for $\tilde\lambda_{\tau_f}$ and $\tilde\nu_{\tau_f}$. In \cref{sec:closing-geometry}, we estimate $\tilde r_{\tau_f}$, $\nu_{\tau_f}/\bar \nu_{\tau_f}$, and $\tilde\varpi_{\tau_f}$. We improve the bootstrap assumptions for the geometry (\cref{prop:geometry-est-improve}) in \cref{sec:geometry-improvement}.

\subsection{Conventions for \texorpdfstring{\cref{sec:geometry,sec:energy,sec:decay-energy}}{Sections 5 to 7}}\label{sec:notation} In this section, \cref{sec:energy}, and \cref{sec:decay-energy}, we will fix $\tau_f$ and essentially exclusively work in the teleological coordinate system $(u_{\tau_f},v)$. Therefore, for ease of reading, we will omit the subscript $\tau_f$ on $u_{\tau_f}$, the solution $(r_{\tau_f},\Omega^2_{\tau_f},\phi_{\tau_f},e)$, the background extremal Reissner--Nordstr\"om solution $\bar r_{\tau_f}$, and the energies. In order to keep track of the constant $A$ appearing in the bootstrap assumptions \eqref{eq:boot-nu}--\eqref{eq:boot-flux-2}, the notations $a\lesssim b$, $a\gtrsim b$, and $a\sim b$ mean that the estimate does not depend on $u$, $v$, $\alpha$, $A$, $\ve$, $\tau$, or $\tau_f$. We will omit the decoration $(\mathcal S_0,\ve,A)$ on the bootstrap set  $\mathfrak B(\mathcal S_0,\ve,A)$. We will also often absorb powers of the bootstrap constant $A$ by sacrificing a little bit of $\ve$ without comment, for example, $A\ve^{3/2}\les\ve$ for $\ve$ sufficiently small. Note that $A$ will be fixed at the end of \cref{sec:improving-energy} depending only on $M_0$ and $\delta$.

\subsection{Initial estimates}\label{sec:initial-estimates}

\subsubsection{Basic consequences of the bootstrap assumptions}

\begin{lem}\label{lem:basic-consequences} For any $A\ge 1$, $\ve$ sufficiently small, $\tau_f\in\mathfrak B$, and $\alpha\in\mathfrak A_{I(\tau_f)}$, it holds that 
    \begin{gather}\label{eq:rough-bound-varpi} 
     \tfrac 12M  \le \varpi\le 2M,\\
     \tfrac 12\le \frac{\nu}{\bar \nu}\le 2,\label{eq:nu-ratio}\\
     \lambda>0 \label{eq:lambda-positive}
    \end{gather}
    in $\mathcal D_{\tau_f}$ and 
    \begin{gather}
   \label{eq:mu-lower-bound}  1-\mu\ge \tfrac 34,\\
     \label{eq:lambda-lower-bound}   \lambda \geq \tfrac 34
     ,\\
   \label{eq:nu-1}     -2\le \nu\le-\tfrac 12,
    \end{gather}
    in $\mathcal D_{\tau_f}\cap\{r\ge \Lambda \}$.
\end{lem}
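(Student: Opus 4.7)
The first two bounds are essentially immediate from the bootstrap. For \eqref{eq:nu-ratio}, it suffices to choose $\ve$ small enough that $A^3\ve^{3/2} \leq 1/2$ in \eqref{eq:boot-nu}. For \eqref{eq:rough-bound-varpi}, I would combine \eqref{eq:boot-pi} with the control $|M - M_0| = \big||e|-|e_0|\big| \leq |e - e_0| \leq \ve$ coming from the seed data norm $\mathfrak D$ and the constraint $|\alpha| \leq \ve^{3/2}$ defining the modulation sets $\mathfrak A_i \subset \mathfrak A_0$. Together these yield $|\varpi - M| \leq A\ve^{3/2}$, so for $\ve$ sufficiently small $\varpi \in [M/2, 2M]$.

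For the far-region estimates, the key quantitative input is $\Lambda \geq 99 M$ (which follows from $|\Lambda - 100M_0|\leq\ve$ and $|M-M_0|\leq\ve$). Thus in $\{r\geq\Lambda\}$ the mass aspect $\mu = 2\varpi/r - e^2/r^2 \leq 4M/\Lambda \leq 1/4$, giving \eqref{eq:mu-lower-bound}. The bootstrap \eqref{eq:boot-r} implies $\bar r \geq r - A^2\ve^{3/2} \geq \Lambda/2$ in this region, so $|\bar\nu| = 1-\bar\mu$ lies in a small neighborhood of $1$; combining with \eqref{eq:nu-ratio} recovers \eqref{eq:nu-1}. For the lower bound \eqref{eq:lambda-lower-bound} on $\lambda = \kappa(1-\mu)$, I would integrate the Raychaudhuri equation \eqref{eq:kappa-u} from $\Gamma$ (where $\kappa = 1$ holds by the teleological gauge condition) along an ingoing ray reaching any point in $\{r \geq \Lambda\}$. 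Using $-\nu \gtrsim 1$ and $r \geq \Lambda$, the integrand $(r/(-\nu))(\partial_u\phi)^2$ is controlled by $\Lambda^{-1} r^2 (\partial_u\phi)^2$, and the integral along the ingoing cone is bounded by $\Lambda^{-1}\underline{\mathcal F}$ via the bootstrap \eqref{eq:boot-flux-2}. Hence $|\log\kappa| \lesssim A\ve^2/\Lambda \ll 1$, so $\kappa$ is within $1/10$ of $1$ and therefore $\lambda \geq (9/10)(3/4) > 1/2$.

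The main obstacle is the global positivity \eqref{eq:lambda-positive} throughout $\mathcal D_{\tau_f}$, especially in the near-horizon region where neither factor of $\lambda = \kappa(1-\mu)$ is automatically well-separated from zero. For $\kappa > 0$ globally, the same Raychaudhuri integration from $\Gamma$ works provided the integrand $(r/(-\nu))(\partial_u\phi)^2 = r \bar r^2/[(-\nu)(\bar r-M)^2] \cdot (\bar r-M)^2(\partial_u\phi)^2/\bar r^2$ is integrable in $u$ across the near region; this is ensured by the bootstrap \eqref{eq:boot-horizon} at $p = 2$, which directly controls $\int (\partial_u\phi)^2/(-\bar\nu)\,du$ along $\underline C(\tau)$ together with the ratio $-\nu/(-\bar\nu)$ from \eqref{eq:nu-ratio}. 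For $1 - \mu > 0$, I would use the key identity
\begin{equation*}
1-\mu = r^{-2}\big[(r-M)^2 - 2\tilde\varpi\, r\big],
\end{equation*}
which follows from $e^2 = M^2$ and $\varpi = M + \tilde\varpi$. The hard part is the thin neighborhood of $\{r=M\}$ where $(r-M)^2$ could a priori be overwhelmed by $2\tilde\varpi r = O(\ve^{3/2})$. I would handle this by noting that such a neighborhood in $\mathcal D_{\tau_f}$ can only be reached along the final ingoing cone, where $\tau$ remains of order $1$ only near the corner $(u_f, 0)$; along the rest of that cone, the strong decay $|\tilde\varpi| \lesssim A\ve^{3/2}\tau^{-3+\delta}$ and $|\tilde r|\lesssim A^2\ve^{3/2}\tau^{-2+\delta}$ together with the fixed lower bound $\bar r - M$ dictated by the anchored background solution give $(r-M)^2 \gg |\tilde\varpi| r$. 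At the single corner $(u_f,0)$ the positivity is inherited from the initial data via Cauchy stability, and monotonicity in evolution from $v=0$ upward along the final ingoing cone (controlled by \eqref{eq:nu-v} with $\kappa>0$, $\nu<0$, and $\varkappa>0$ once $r>M$) propagates it.
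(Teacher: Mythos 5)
Your handling of the first five bounds is essentially correct and matches the paper's strategy, modulo two minor slips: for \eqref{eq:nu-1} you should cite the bootstrap \eqref{eq:boot-nu} directly rather than the already-weakened \eqref{eq:nu-ratio} (the crude $[1/2,2]$ bound combined with $\bar\nu\in[-1,-3/4]$ gives only $\nu\leq -3/8$, not $\nu\leq -1/2$; the point is that for small $\ve$, \eqref{eq:boot-nu} makes $\nu/\bar\nu$ arbitrarily close to $1$); and for \eqref{eq:lambda-lower-bound} your quantitative Raychaudhuri integration works but is more than needed — since $\partial_u\kappa = (r\kappa/\nu)(\partial_u\phi)^2\leq 0$ and $\kappa=1$ on $\Gamma$, the one-sided monotonicity already gives $\kappa\geq 1$ in $\{r\geq\Lambda\}$ (which sits to the past of $\Gamma$ in $u$), and that is all the paper uses. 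Also, positivity of $\kappa$ throughout $\mathcal D_{\tau_f}$ is automatic from the \emph{definition} $\kappa=-\Omega^2/(4\nu)$ together with $\Omega^2>0$ and $\nu<0$ (\cref{lem:epsilon-loc}); no energy estimate is required for that.

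The real gap is in your argument for \eqref{eq:lambda-positive}. You reduce $\lambda>0$ to $1-\mu>0$ and then try to establish the latter directly from the identity $1-\mu = r^{-2}[(r-M)^2 - 2\tilde\varpi\,r]$. This fails near the corner $(u_f,0)$ of the final ingoing cone: there $\tau=1$ (so the bootstrap only gives $|\tilde\varpi|\lesssim\ve^{3/2}$, no decay), while $\bar r(u_f,0)\to M$ as $\tau_f\to\infty$ (since $r_*(\bar r(u_f,0)) = -u_f\to -\infty$), and \eqref{eq:boot-r} only bounds $|r-\bar r|$ by $\sim\ve^{3/2}$ at $\tau=1$. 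So $(r-M)^2$ can be as small as $O(\ve^3)$ while $|\tilde\varpi|r$ can be $O(\ve^{3/2})$, and the quadratic term does \emph{not} dominate. Your two proposed repairs don't close this: Cauchy stability is a statement about continuity for fixed data, not a lower bound that is uniform as $\tau_f\to\infty$; and the $\partial_v\lambda = 2\kappa\nu\varkappa$ monotonicity needs $\varkappa>0$, i.e.\ $r\varpi>M^2$, which is precisely the kind of thing that is unclear in this same corner if $\tilde\varpi<0$. The paper's argument sidesteps all of this: the Raychaudhuri equation \eqref{eq:Ray-v} gives $\partial_v(\lambda/\Omega^2)=-(r/\Omega^2)(\partial_v\phi)^2\leq 0$ \emph{unconditionally}, so $\lambda/\Omega^2$ is nonincreasing in $v$; integrating backward from $\Gamma$, where $\lambda=\kappa(1-\mu)|_\Gamma=1\cdot(1-\mu)|_\Gamma>0$ is already known from the far-region bound, yields $\lambda/\Omega^2>0$ and hence $\lambda>0$ throughout $\mathcal D_{\tau_f}$. (Once $\lambda>0$ is in hand, $1-\mu=\lambda/\kappa>0$ follows as a consequence, not a prerequisite.)
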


\begin{proof}
The bounds \eqref{eq:rough-bound-varpi} and \eqref{eq:nu-ratio} follow directly from \eqref{eq:boot-pi} and \eqref{eq:boot-nu}, respectively. For \eqref{eq:mu-lower-bound} we notice that  \begin{equation*} 1-\mu = 1 - \frac{2\varpi}{r} + \frac{e^2}{r^2}\geq 1 - \frac{4M}{\Lambda} \geq \frac 34. \end{equation*}
For \eqref{eq:lambda-lower-bound} we first observe  that the monotonicity of \eqref{eq:kappa-u} implies that $\kappa \geq 1$ in $\mathcal D_{\tau_f}\cap \{r \geq \Lambda\}$. Therefore, $\lambda = \kappa (1-\mu) \geq \frac 34$ implies \eqref{eq:lambda-lower-bound}. By monotonicity of Raychaudhuri's equation \eqref{eq:Ray-v}, $\lambda>0$ throughout $\mathcal D_{\tau_f}$. Finally, 
 to show \eqref{eq:nu-1} we note that $\bar \nu = - (1-\bar \mu) $ and thus $ -1 \leq \bar \nu \leq -\frac 34$ in  $\mathcal D_{\tau_f}\cap\{r\ge \Lambda \}$ using \eqref{eq:boot-r}. The estimate \eqref{eq:nu-1} follows then from \eqref{eq:boot-nu}. \end{proof}

\subsubsection{Estimates for \texorpdfstring{$\tilde \kappa$}{kappa} and \texorpdfstring{$\tilde \gamma$}{gamma}}

We now use the bootstrap assumptions for the energy decay of $\phi$ to derive bounds for $\tilde\kappa$ and $\tilde\gamma$. 

\begin{lem}\label{lem:kappa-bounds}
For any $A\ge 1$, $\ve$ sufficiently small, $\tau_f\in\mathfrak B$, and $\alpha\in\mathfrak A_{I(\tau_f)}$, it holds that
    \begin{align}
      \label{eq:kappa-1}   |\tilde\kappa|&\les A\ve^2 \tau^{-1+\delta},\\
      \label{eq:kappa-2}    (\bar r-M)|\tilde\kappa|&\les A\ve^2\tau^{-2+\delta},\\
       \label{eq:kappa-3}   (\bar r-M)^2|\tilde\kappa|&\les A\ve^2\tau^{-3+\delta}
    \end{align}
    in $\mathcal D_{\tau_f}\cap\{r\le\Lambda \}$ and
    \begin{equation}
     \label{eq:kappa-4}    |\tilde\kappa|\les A\ve^2\tau^{-3+\delta}
    \end{equation}
    in $\mathcal D_{\tau_f}\cap\{r\ge\Lambda \}$. 
\end{lem}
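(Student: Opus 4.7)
The strategy is to integrate the ODE \eqref{eq:kappa-u} for $\kappa$ in the $u$-direction, starting from the anchoring curve $\Gamma$ where the teleological gauge imposes $\kappa=1=\bar\kappa$ (so $\tilde\kappa=0$ on $\Gamma$). Since $\nu<0$ throughout $\mathcal D_{\tau_f}$ by \cref{lem:basic-consequences}, the ODE forces $\kappa$ to be monotone in $u$, and one obtains the explicit expression
\[
|\tilde\kappa|(u,v) = \left|\int_{u^\Lambda(v)}^{u}\frac{r\kappa}{\nu}(\partial_u\phi)^2\,du'\right|,
\]
where $u^\Lambda(v)$ denotes the $u$-coordinate of $\Gamma\cap\underline C_v$. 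The limits of integration are ordered according to whether $(u,v)$ lies in $\{r\le\Lambda\}$ (so $u\ge u^\Lambda(v)$) or in $\{r\ge\Lambda\}$ (so $u\le u^\Lambda(v)$).

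For the three weighted estimates in $\{r\le\Lambda\}$, the key geometric observation is that $\bar r$ is strictly decreasing in $u$ because $\partial_u\bar r=\bar\nu=-(\bar r-M)^2/\bar r^2<0$. Hence for any $p\in[0,2]$ and $u'\in[u^\Lambda(v),u]$, the weight satisfies $(\bar r(u,v)-M)^{2-p}\le(\bar r(u',v)-M)^{2-p}$. Multiplying the integral formula by the weight at $(u,v)$ and pulling it inside produces an integrand of the form $(\bar r-M)^{2-p}(\partial_u\phi)^2/(-\nu)$, which matches exactly the horizon-flux density appearing in $\underline{\mathcal E}_p$, after appealing to \cref{lem:basic-consequences} to compare $-\nu\sim-\bar\nu$ and to bound $r\kappa\lesssim 1$. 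The bootstrap assumption \eqref{eq:boot-horizon} then yields
\[
(\bar r(u,v)-M)^{2-p}|\tilde\kappa|(u,v) \lesssim \underline{\mathcal E}_p(\tau) \lesssim A\ve^2\tau^{-3+\delta+p},
\]
and specializing to $p=2,1,0$ gives \eqref{eq:kappa-1}--\eqref{eq:kappa-3}.

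In the far region $\{r\ge\Lambda\}$, the weight $\bar r-M$ is bounded below by a positive constant, so there is nothing to extract. Integrating from $u$ to $u^\Lambda(v)$ and using $-\nu\sim 1$ from \eqref{eq:nu-1}, $r\ge\Lambda\sim 1$, and $\kappa\lesssim 1$, one estimates
\[
|\tilde\kappa|(u,v) \lesssim \int_u^{u^\Lambda(v)}(\partial_u\phi)^2\,du' \lesssim \Lambda^{-2}\int_u^{u^\Lambda(v)}r^2(\partial_u\phi)^2\,du' \lesssim \underline{\mathcal F}(v,\tau),
\]
which is $\lesssim A\ve^2\tau^{-3+\delta}$ by \eqref{eq:boot-flux-2}, proving \eqref{eq:kappa-4}. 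The argument is essentially bookkeeping; the only conceptual point is that the weight $(\bar r-M)^{2-p}$ is monotone in $u$ in the correct direction, which is what makes it compatible with the horizon hierarchy. There is no substantive obstacle at this step—the real analytic difficulty has been displaced onto the derivation of the hierarchy $\underline{\mathcal E}_p$ itself, reflecting the degenerate-redshift structure emphasized in \cref{sec:intro-geometry}.
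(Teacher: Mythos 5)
Your overall strategy—integrate the $u$-ODE \eqref{eq:kappa-u} from $\Gamma$ where $\tilde\kappa=0$, observe that $\bar r$ is monotone in $u$ so the weight $(\bar r-M)^{2-p}$ can be brought inside the integral, and compare with the relevant flux—is exactly right, and this is the same skeleton as the paper's proof. The gap is in how you dispose of the extra factor of $\kappa$ in the integrand. You justify $r\kappa\lesssim 1$ (near region) and $\kappa\lesssim 1$ (far region) ``by \cref{lem:basic-consequences},'' but that lemma gives no bound on $\kappa$ at all: there is no bootstrap assumption on $\kappa$ in \cref{def:B}, and the uniform bound $\tfrac12\le\kappa\le 2$ (see \eqref{eq:kappa-bdd}) is stated in the paper only \emph{after} this lemma, as a consequence of \eqref{eq:kappa-1} and \eqref{eq:kappa-4}. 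So as written the argument is circular.

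The fix is different in the two regions. In the near region $\{r\le\Lambda\}$, the monotonicity you already note ($\partial_u\log\kappa=\frac{r}{\nu}(\partial_u\phi)^2\le 0$, since $\nu<0$) together with the gauge condition $\kappa|_\Gamma=1$ gives $\kappa\le 1$ for $u\ge u^\Lambda(v)$, so $r\kappa\le\Lambda$ and your estimate closes—you simply needed to connect your monotonicity remark to the bound rather than cite \cref{lem:basic-consequences}. (The paper instead keeps $\kappa=\tilde\kappa+1$ on the right-hand side and applies Gr\"onwall, with a bounded factor $\exp(\underline{\mathcal E}_2)$; both are fine.) In the far region $\{r\ge\Lambda\}$, however, the same monotonicity gives $\kappa\ge 1$, which is the \emph{wrong} direction for your bound, so the argument as written genuinely fails there. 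You can repair it either by a one-line absorption—pull out $\sup_{[u,u^\Lambda(v)]}\kappa\le 1+\tilde\kappa(u,v)$ (by monotonicity), obtain $\tilde\kappa\lesssim(1+\tilde\kappa)\,\underline{\mathcal F}(v,\tau)$, and absorb since $\underline{\mathcal F}\lesssim A\ve^2\tau^{-3+\delta}\ll1$—or, as the paper does, integrate the ODE for $\log\kappa$ directly with the integrating factor, which eliminates the $\kappa$ from the integrand altogether and yields $\log\kappa\lesssim\underline{\mathcal F}(v,\tau)$, whence $|\tilde\kappa|\lesssim|\log\kappa|$ for $\ve$ small.
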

\begin{proof} \textsc{Estimate for $r(u,v)\le\Lambda $}: 
From \eqref{eq:kappa-u} and $\kappa = \tilde \kappa +1$ we have 
\begin{equation*}
    \partial_u \tilde \kappa = \frac{r (\tilde \kappa +1)}{\nu} (\partial_u\phi)^2.
\end{equation*}
Integrating to the future from $\Gamma$, where $\tilde\kappa=0$ by our gauge choice, and multiplying by $(\bar r- M)^{2-p}$ for $p\in \{0,1,2\}$ gives
\begin{equation*}
    (\bar r - M)^{2-p} |\tilde \kappa (u,v) | \leq (\bar r  - M)^{2-p}\int_{\underline C{}_v\cap\{u'\le u\}}  \frac{  |\tilde \kappa| +1}{-\nu} r(\partial_u\phi)^2 du \lesssim\int_{\underline C{}_v\cap\{u'\le u\}}   (\bar r - M)^{2-p} \frac{ |\tilde \kappa|+1 }{-\bar \nu} (\partial_u\phi)^2 du,
\end{equation*}
where we used that $\nu\sim\bar\nu $ by \eqref{eq:boot-nu} and that $\bar \nu  <0 $ on $\mathcal D_f$ in order to move the factor $(\bar r- M)^{2-p}$ inside the integral. Applying Grönwall's lemma then gives
\begin{equation*}
     (\bar r - M)^{2-p} |\tilde \kappa(u,v) | \lesssim \underline {\mathcal E}{}_p (\tau ) \exp\big({\underline {\mathcal E}{}_2 (\tau)}\big)\lesssim A \varepsilon^2 \tau^{-3 + \delta +p},
\end{equation*}
from which \eqref{eq:kappa-1}--\eqref{eq:kappa-3} follow.

    \textsc{Estimate for $r(u,v)\ge\Lambda $}: We integrate \eqref{eq:kappa-u} backwards from $\Gamma$ using an integrating factor and the gauge condition to obtain
    \begin{equation}
        \log\kappa(u,v)=\int_{\underline H{}_v\cap\{u'\ge u\}} r\left(\frac{\partial_u\phi}{-\bar \nu}\right)^2\left(\frac{\bar\nu}{\nu}\right)(-\bar\nu)\,du' \lesssim \underline{\mathcal F} (v, \tau(u,v)),
    \end{equation}
    using \eqref{eq:boot-nu}. The estimate \eqref{eq:kappa-4}  follows now from \eqref{eq:boot-flux-2}.
\end{proof}

Note that the estimates \eqref{eq:kappa-1} and \eqref{eq:kappa-4} imply that \begin{gather}
       \label{eq:kappa-bdd}   \tfrac 12 \leq \kappa \leq 2 ,\\
      \label{eq:lambda-bdd}    \lambda\le 2
    \end{gather}
        in $\mathcal D_{\tau_f}$.

\begin{lem}\label{lem:gamma} For any $A\ge 1$, $\ve$ sufficiently small, $\tau_f\in\mathfrak B$, and $\alpha\in\mathfrak A_{I(\tau_f)}$, it holds that
    \begin{align}
    \label{eq:gamma-1}    |\tilde \gamma|&\les A\ve^2  r^{-1}\tau^{-3+\delta},\\
    |\tilde \gamma|&\les A\ve^2  r^{-3/2}\tau^{-5/2+\delta}\label{eq:gamma-2}
    \end{align}    in $\mathcal D_{\tau_f}\cap\{r\ge\Lambda \}$.
\end{lem}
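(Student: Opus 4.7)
\textbf{Proof proposal for \cref{lem:gamma}.}

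The plan is to integrate the $v$-evolution equation \eqref{eq:gamma-v} for $\gamma$ along each outgoing cone in $\mathcal D_{\tau_f}\cap\{r\ge\Lambda\}$, starting from the final ingoing cone $\{v=\Gamma^v(\tau_f)\}$ where the teleological gauge normalization forces $\gamma=-1$. From \cref{lem:basic-consequences} we have $\gamma\sim-1$ (via $\nu/\bar\nu\sim 1$, $\nu\sim-1$, $1-\mu\sim 1$) and $\lambda\sim 1$ throughout this region. Writing \eqref{eq:gamma-v} as $\partial_v\log(-\gamma)=-r(\partial_v\phi)^2/\lambda$ and integrating from $v$ up to $\Gamma^v(\tau_f)$ gives
\begin{equation*}
-\gamma(u,v)=\exp\!\left(\int_v^{\Gamma^v(\tau_f)} \frac{r(\partial_{v'}\phi)^2}{\lambda}\,dv'\right),
\end{equation*}
so once the exponent is small (which will follow a posteriori from the estimates below, or from a crude use of $\mathcal E_0$), one obtains the linearized bound
\begin{equation*}
|\tilde\gamma(u,v)|\lesssim \int_v^{\Gamma^v(\tau_f)} r(\partial_{v'}\phi)^2\,dv'.
\end{equation*}

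The next step is to convert $r(\partial_v\phi)^2$ into a quantity controlled by $\mathcal E_p^{\tau_f}$. Using $\partial_v\psi=\lambda\phi+r\partial_v\phi$ and $\lambda\sim 1$, one has $r^2(\partial_v\phi)^2\lesssim (\partial_v\psi)^2+\phi^2$ and hence $r(\partial_v\phi)^2\lesssim r^{-1}[(\partial_v\psi)^2+\phi^2]$. Since $\lambda>0$, the area radius $r$ is monotone increasing in $v$ along the outgoing cone $C_u$, so for any $p\in[0,1)$ the weight $r^{-1-p}(u,v')$ is maximized at $v'=v$. This allows us to factor out $r^{-1-p}(u,v)$:
\begin{equation*}
|\tilde\gamma(u,v)|\lesssim \int_v^{\Gamma^v(\tau_f)} r^{-1-p}(u,v')\cdot r^{p}(u,v')\big[(\partial_{v'}\psi)^2+\phi^2\big]\,dv'\lesssim r^{-1-p}(u,v)\,\mathcal E_p^{\tau_f}(\tau(u,v)),
\end{equation*}
where in the last step we used that the segment $[v,\Gamma^v(\tau_f)]$ of the cone $\{u'=u\}$ is exactly a subset of $C^{\tau_f}(\tau(u,v))$, and that for $p\in[0,1)$ the energy $\mathcal E_p^{\tau_f}$ controls both $\int r^p(\partial_v\psi)^2\,dv$ and $\int r^p\phi^2\,dv$.

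Combining with the bootstrap assumption \eqref{eq:boot-rp}, which yields $\mathcal E_p^{\tau_f}(\tau)\lesssim A\ve^2\tau^{-3+\delta+p}$, gives
\begin{equation*}
|\tilde\gamma(u,v)|\lesssim A\ve^2\, r^{-1-p}(u,v)\,\tau(u,v)^{-3+\delta+p}.
\end{equation*}
Specializing to $p=0$ yields \eqref{eq:gamma-1} and specializing to $p=1/2$ yields \eqref{eq:gamma-2}. There is no real obstacle: the argument is a clean interpolation between two choices of the $r^p$-weighted hierarchy, and the only subtleties are the correct use of the gauge condition on the final ingoing cone and the monotonicity of $r$ in $v$ that lets us pull the weight out of the integral.
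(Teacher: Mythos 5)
Your proof is correct and follows essentially the same route as the paper: integrate the $v$-evolution equation for $\gamma$ backwards from the teleological normalization $\gamma=-1$ on the final ingoing cone, use the monotonicity of $r$ along outgoing cones to pull the weight $r^{-1-p}(u,v)$ out of the integral, and bound the remaining flux by $\mathcal E_p^{\tau_f}$ with $p\in\{0,1/2\}$. The only cosmetic difference is that you convert $r(\partial_v\phi)^2\lesssim r^{-1}[(\partial_v\psi)^2+\phi^2]$ and use the $(\partial_v\psi)^2$ and $\phi^2$ terms of $\mathcal E_p$, whereas the paper multiplies the integrand by $r^{1+p}$ and uses the $r^{p+2}(\partial_v\phi)^2$ term that appears in $\mathcal E_p$ directly for $p\in[0,1)$; the paper also phrases the linearization as an application of Gr\"onwall's inequality rather than exponentiating and then expanding, but these are equivalent.
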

\begin{proof}
  From \eqref{eq:gamma-v} and $\gamma = \tilde \gamma -1$ we have  \begin{equation*}
        \partial_v \tilde \gamma = \frac{r (\tilde\gamma -1 )}{\lambda} (\partial_v \phi)^2.
    \end{equation*}
Integrating from $\underline C{}_{\Gamma^v(\tau_f)}$ and multiplying by $r^{1+p}$ for $p\in\{0,1/2\}$ gives for $(u,v) \in \mathcal D_{\tau_f} \cap \{ r \geq \Lambda\}$ that
\begin{equation*}
 |   r^{1+p} \tilde \gamma(u,v)|\leq r^{1+p} \int_{C_u \cap \{ v \leq v'\} } r \frac{  |\tilde \gamma| + 1 }{\lambda} (\partial_v\phi)^2 \,dv' \leq   \int_{C_u \cap \{ v \leq v'\} } r^{p+2} \frac{  |\tilde \gamma| + 1 }{\lambda}  (\partial_v\phi)^2\, dv',
\end{equation*}
where we used \eqref{eq:lambda-lower-bound} to move the $r$ weight into the integral. Applying Grönwall's inequality as before and noting that $\lambda \sim 1$ gives 
\begin{equation*}
    | r^{1+p} \tilde \gamma(u,v) |\lesssim \mathcal E_p(\tau) \exp \big(\mathcal E_0(\tau)  \big)
\end{equation*}
for $p\in \{0,1/2\}$. 
This shows \eqref{eq:gamma-1} and \eqref{eq:gamma-2} using bootstrap assumption \eqref{eq:boot-rp}.
\end{proof}

\begin{rk}
    The estimate \eqref{eq:gamma-2}, with its integrable $\tau$-weight \emph{and} integrable $r$-weight, is important in controlling the gauge $\Phi_{\tau_f}$ as $\tau_f\to\infty$. See already \cref{sec:convergence-to-id}.
\end{rk}

\subsubsection{Properties of \texorpdfstring{$\Gamma$}{Gamma} and \texorpdfstring{$\tau$}{tau}}

In this section, we derive some basic properties of the timelike curve $\Gamma=\{r=\Lambda \}$ and the associated function $\tau(u,v)$ defined by \eqref{def:tau}. Recall that $\tau$ was also used to denote the proper time parametrization of $\Gamma$, defined via the condition  
\begin{equation}
 \label{def:proper-time} g(\dot\Gamma,\dot\Gamma)=-1,
\end{equation}
where $\dot{}$ is used to denote $\frac{d}{d\tau}$. We write the components of $\Gamma$ with this parametrization in the teleological coordinate chart $(u,v)$ as $\Gamma^u(\tau)$ and $\Gamma^v(\tau)$.

\begin{lem}\label{lem:gamma-slope} For any $A\ge 1$, $\ve$ sufficiently small, $\tau_f\in\mathfrak B$, and $\alpha\in\mathfrak A_{I(\tau_f)}$, it holds that \begin{equation}
    \dot\Gamma^u(\tau)\sim\dot\Gamma^v(\tau)\sim 1\label{eq:Gamma-slope}
\end{equation}
for $\tau\in[1,\tau_f]$.
\end{lem}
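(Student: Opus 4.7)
The plan is to derive explicit formulas for $\dot\Gamma^u$ and $\dot\Gamma^v$ and then bound the metric quantities appearing in them. The curve $\Gamma$ is characterized by the two conditions that $r=\Lambda$ along it and that $g(\dot\Gamma,\dot\Gamma)=-1$. Differentiating $r=\Lambda$ along $\Gamma$ yields the null constraint
\begin{equation*}
\nu\dot\Gamma^u + \lambda\dot\Gamma^v = 0,
\end{equation*}
while the proper time condition in the $(u,v)$ coordinates reads
\begin{equation*}
1 = \Omega^2\dot\Gamma^u\dot\Gamma^v.
\end{equation*}
Using the identity $\Omega^2 = -4\nu\lambda/(1-\mu)$ from \eqref{eq:greek-letters} together with the fact that $\dot\Gamma$ is future-directed (so $\dot\Gamma^u,\dot\Gamma^v>0$ since $-\nu,\lambda>0$ in the relevant region), one readily solves the two equations above to obtain the explicit formulas
\begin{equation*}
\dot\Gamma^u = \frac{\sqrt{1-\mu}}{-2\nu}\bigg|_\Gamma, \qquad \dot\Gamma^v = \frac{\sqrt{1-\mu}}{2\lambda}\bigg|_\Gamma.
\end{equation*}

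The lemma then reduces to verifying that each of $1-\mu$, $\lambda$, and $-\nu$ is comparable to $1$ along $\Gamma$. On $\Gamma$ we have $r=\Lambda$, and since $|\Lambda-100M_0|\le\ve$ (from $\mathfrak D\le \ve$) and $|M-M_0|\le|e-e_0|\le\ve$, the bootstrap bound \eqref{eq:boot-pi} gives $\varpi = M + O(A\ve^{3/2})$, whence
\begin{equation*}
1-\mu = 1 - \frac{2\varpi}{\Lambda} + \frac{e^2}{\Lambda^2} \sim 1
\end{equation*}
on $\Gamma$ for $\ve$ sufficiently small (the lower bound is already contained in \eqref{eq:mu-lower-bound} and the upper bound is immediate). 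The gauge condition $\kappa=1$ along $\Gamma\cap\mathcal D_{\tau_f}$ then gives $\lambda = (1-\mu)\kappa = 1-\mu \sim 1$. For $\nu = \gamma(1-\mu)$, we apply \cref{lem:gamma} at $r=\Lambda$ to obtain $|\tilde\gamma|\les A\ve^2\Lambda^{-1}\tau^{-3+\delta}\les\ve^{1/2}$, so $\gamma \sim -1$ on $\Gamma$, and therefore $-\nu\sim 1$ as well. Substituting these three estimates into the formulas for $\dot\Gamma^u$ and $\dot\Gamma^v$ yields the claim \eqref{eq:Gamma-slope}. There is no substantive obstacle here; the content of the lemma is essentially that the gauge conditions $\kappa=1$ on $\Gamma$ and the smallness of $\tilde\gamma$ on $\Gamma$ (which is the geometric locus $r=\Lambda\gg M$, far from the horizon) allow the proper-time parametrization of $\Gamma$ to be controlled by its extremal Reissner--Nordstr\"om analogue.
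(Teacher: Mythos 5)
Your proof is correct and follows essentially the same route as the paper: both reduce the claim to bounds on $1-\mu$, $\lambda$, and $-\nu$ along $\Gamma$ (using the gauge condition $\kappa|_\Gamma=1$ and the bootstrap estimates) via the two constraints $\nu\dot\Gamma^u+\lambda\dot\Gamma^v=0$ and $g(\dot\Gamma,\dot\Gamma)=-1$. The only cosmetic difference is that you solve for $\dot\Gamma^u$, $\dot\Gamma^v$ explicitly, whereas the paper separately controls the ratio $\dot\Gamma^u/\dot\Gamma^v$ and the product $\dot\Gamma^u\dot\Gamma^v$, and the paper bounds $-\nu\sim 1$ directly from \eqref{eq:nu-1} rather than going through $\gamma$ and \cref{lem:gamma} as you do.
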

\begin{proof} Since $r$ is constant along $\Gamma$, we have $\nu\dot\Gamma^u+\lambda\dot\Gamma^v=0$ on $\Gamma$. By the gauge condition $\lambda  = \kappa (1-\mu) |_{\Gamma} = 1-\mu|_{\Gamma} \sim 1$ and \eqref{eq:nu-1}, it follows that $\dot\Gamma^u\sim\dot\Gamma^v$. The proper time condition \eqref{def:proper-time} can be written as 
\begin{equation*}
    \frac{4\nu\lambda}{1-\mu}\dot\Gamma^u\dot\Gamma^v=-1,
\end{equation*}
which implies that $\dot\Gamma^u\dot\Gamma^v\sim 1$ by \eqref{eq:mu-lower-bound}. Now \eqref{eq:Gamma-slope} easily follows.
\end{proof} 

The following lemma relates the ``time function'' $\tau$ (which was defined in \eqref{def:tau}) to (approximate) Bondi time $u$ in the far region and (approximate) ingoing Eddington--Finkelstein time $v$ in the near-horizon region.
 
\begin{lem}\label{lem:tau-properties} For any $A\ge 1$, $\ve$ sufficiently small, $\tau_f\in\mathfrak B$, and $\alpha\in\mathfrak A_{I(\tau_f)}$, it holds that 
 \begin{equation}
        \tau(u_1,v)-\tau(u_2,v)\sim u_1-u_2
    \end{equation}
    for $(u_1,v),(u_2,v)\in\mathcal D_{\tau_f}\cap\{r\ge \Lambda  \}$ and 
\begin{equation}
        \tau(u,v_1)-\tau(u,v_2)\sim v_1-v_2\label{eq:tau-difference-v}
    \end{equation}
for $(u,v_1),(u,v_2)\in\mathcal D_{\tau_f}\cap\{r\le \Lambda \}$. Moreover, for any $\eta>1$ and $1\le \tau_1\le\tau_2\le \tau_f$, it holds that
\begin{align}
 \label{eq:Hbarv-integral}    \int_{\underline H{}_v\cap\{\tau_1\le\tau\le \tau_2\}}\tau^{-\eta}\,du &\les_\eta \tau_1^{-\eta+1},\\
    \label{eq:Hu-integral}      \int_{H_u\cap\{\tau_1\le\tau\le \tau_2\}}\tau^{-\eta}\,dv &\les_\eta \tau_1^{-\eta+1}.
\end{align}
\end{lem}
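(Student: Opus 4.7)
The plan is to reduce everything to the comparability $\dot\Gamma^u(\tau)\sim \dot\Gamma^v(\tau)\sim 1$ established in \cref{lem:gamma-slope}, together with the fact that the time function $\tau$ defined in \eqref{def:tau} is piecewise a function of only one null coordinate: in the far region $\{r\ge\Lambda\}$ one has $\tau(u,v)=(\Gamma^u)^{-1}(u)$, while in the near region $\{r\le\Lambda\}$ one has $\tau(u,v)=(\Gamma^v)^{-1}(v)$. Under the bootstrap assumptions, both $\Gamma^u$ and $\Gamma^v$ are strictly monotone $C^1$ functions of $\tau$ on $[1,\tau_f]$, so these inverses are well defined.

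For the first two comparability estimates, fix $v$ and consider $(u_1,v),(u_2,v)\in \mathcal D_{\tau_f}\cap\{r\ge\Lambda\}$. Since $\tau$ depends only on $u$ in this region, the mean value theorem gives
\begin{equation*}
\tau(u_1,v)-\tau(u_2,v)=\int_{u_2}^{u_1}\frac{1}{\dot\Gamma^u(\tau(u',v))}\,du',
\end{equation*}
and \eqref{eq:Gamma-slope} implies that the integrand is $\sim 1$, yielding $\tau(u_1,v)-\tau(u_2,v)\sim u_1-u_2$. The argument for $(u,v_1),(u,v_2)\in \mathcal D_{\tau_f}\cap\{r\le\Lambda\}$ is identical, using that $\tau$ depends only on $v$ in the near region and that $\dot\Gamma^v\sim 1$.

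For the two integral estimates, I will change variables from the null coordinate to $\tau$. On $\underline H{}_v=\underline H{}_v^{\tau_f}$, which lies in the far region $\{r\ge\Lambda\}$, we have $\tau=(\Gamma^u)^{-1}(u)$, so $du=\dot\Gamma^u(\tau)\,d\tau\sim d\tau$. Therefore
\begin{equation*}
\int_{\underline H{}_v\cap\{\tau_1\le\tau\le\tau_2\}}\tau^{-\eta}\,du \;\sim\; \int_{\tau_1}^{\tau_2}\tau^{-\eta}\,d\tau \;=\; \frac{1}{\eta-1}\big(\tau_1^{-\eta+1}-\tau_2^{-\eta+1}\big)\;\les_\eta\; \tau_1^{-\eta+1},
\end{equation*}
which gives \eqref{eq:Hbarv-integral}. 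The estimate \eqref{eq:Hu-integral} is proved identically, now using $dv=\dot\Gamma^v(\tau)\,d\tau\sim d\tau$ on $H_u\subset\{r\le\Lambda\}$.

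There is no substantive obstacle here; the proof is essentially a direct consequence of \cref{lem:gamma-slope} and the piecewise definition of $\tau$. The only mild care required is to keep track of which region one is in so that $\tau$ is unambiguously a function of a single null coordinate, which is automatic from the definition of $\underline H{}_v$ and $H_u$.
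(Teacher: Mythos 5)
Your proof is correct and is essentially the same argument the paper intends, which it states in one line as ``immediate from \eqref{eq:Gamma-slope}, the fundamental theorem of calculus, and the change of variables formula.'' You have simply unpacked this: the observation that $\tau$ is a function of $u$ alone where $r\ge\Lambda$ and of $v$ alone where $r\le\Lambda$ (and that these regions are convex in the relevant null direction by the monotonicity of $r$ in $u$ and $v$) reduces the first two estimates to inverting $\Gamma^u$ and $\Gamma^v$ with derivative $\sim 1$, and the integral bounds follow by the change of variables $u\mapsto\tau$ or $v\mapsto\tau$.
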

\begin{proof}
    This is immediate from \eqref{eq:Gamma-slope}, the fundamental theorem of calculus, and the change of variables formula.
\end{proof}

Next, we show that the $v$ and $\hat v$ coordinates are comparable. 

\begin{lem}\label{lem:v-est}
    For $\tau_f\in \mathfrak B$, $\alpha\in\mathfrak A_{I(\tau_f)}$, $\ve$ sufficiently small depending on $A$, and $\hat v_1,\hat v_2\in [0,\Gamma^{\hat v}(\tau_f)]$,
    it holds that 
    \begin{equation}
        \mathfrak v(\hat v_2)-\mathfrak v(\hat v_1)\sim \hat v_2-\hat v_1.\label{eq:v-est}
    \end{equation}
    Moreover,  on $\hat{\mathcal D}_{\tau_f}$ it holds that
    \begin{equation}
        \hat\kappa\sim 1.\label{eq:hat-kappa-est}
    \end{equation}
\end{lem}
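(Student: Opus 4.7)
The plan is to use the simple observation that, since the change of coordinates from $\hat v$ to $v = \mathfrak v(\hat v)$ depends only on $\hat v$, the quantities $\hat\kappa$ and $\kappa$ transform as $\hat\kappa = \mathfrak v'(\hat v)\kappa$ pointwise on $\hat{\mathcal D}_{\tau_f}$. This follows directly from the product form $\Phi_{\tau_f}(\hat u,\hat v) = (\mathfrak u_{\tau_f}(\hat u),\mathfrak v(\hat v))$ of the gauge transformation and the resulting transformations $\hat\nu = \mathfrak u'_{\tau_f}(\hat u)\nu$, $\hat\lambda = \mathfrak v'(\hat v)\lambda$, and $\hat\Omega^2 = \mathfrak u'_{\tau_f}(\hat u)\mathfrak v'(\hat v)\Omega^2$. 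Since $\kappa \sim 1$ on $\mathcal D_{\tau_f}$ by \eqref{eq:kappa-bdd}, both conclusions of the lemma reduce to showing $\mathfrak v'(\hat v) \sim 1$ for $\hat v\in[0,\Gamma^{\hat v}(\tau_f)]$.

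To show $\mathfrak v'(\hat v) \sim 1$, I will evaluate the transformation law $\hat\kappa = \mathfrak v'(\hat v)\kappa$ at $\hat u = 0$ (which is also $u = 0$ because $\mathfrak u_{\tau_f}(0) = 0$ by \eqref{eq:teleology-1}), obtaining
\begin{equation*}
\mathfrak v'(\hat v) = \frac{\hat\kappa(0,\hat v)}{\kappa(0,v)}.
\end{equation*}
The denominator satisfies $\kappa(0,v) \sim 1$ by \eqref{eq:kappa-bdd}. For the numerator, the initial-data gauge $\hat\lambda = 1$ on $C_\out$ together with the identity $\hat\kappa = \hat\lambda/(1-\mu)$ gives $\hat\kappa(0,\hat v) = 1/(1-\mu(0,\hat v))$. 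On $C_\out$ we have $r(0,\hat v) = \Lambda + \hat v \ge \Lambda \sim 100 M_0 \gg M$, and by integration of \eqref{eq:varpi-v} on $C_\out$ using the seed data bound on $|\partial_{\hat v}\mathring\phi_\out|$ (exactly as in the proof of \cref{lem:epsilon-loc}), $\varpi$ stays within $O(\ve^2)$ of $M_0$ along $C_\out$. Hence $1-\mu$ is bounded away from $0$ and $\infty$ uniformly on $C_\out$, giving $\hat\kappa(0,\hat v) \sim 1$ and therefore $\mathfrak v'(\hat v) \sim 1$.

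The first claim of the lemma then follows from the fundamental theorem of calculus applied to $\mathfrak v(\hat v_2) - \mathfrak v(\hat v_1) = \int_{\hat v_1}^{\hat v_2}\mathfrak v'(\hat v)\,d\hat v$, and the second claim follows from $\hat\kappa = \mathfrak v'(\hat v)\kappa$ together with $\mathfrak v'(\hat v) \sim 1$ and $\kappa \sim 1$. There is no serious obstacle here: the entire argument is a consequence of the specific product form of the coordinate transformation $\Phi_{\tau_f}$, the already-established bound \eqref{eq:kappa-bdd} on $\kappa$, and standard initial-data bounds on $C_\out$.
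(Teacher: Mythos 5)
Your proof is correct and takes essentially the same approach as the paper, namely exploiting the product form of $\Phi_{\tau_f}$ to reduce both claims to showing $\mathfrak v'(\hat v)\sim 1$, then controlling $\mathfrak v'$ via gauge quantities evaluated at $\hat u = 0$. The paper's version is a touch more direct: since $\hat\lambda(0,\hat v)=1$ by the initial-data normalization and $\hat\lambda = \mathfrak v'(\hat v)\,\lambda$, one has $\mathfrak v'(\hat v) = 1/\lambda(0,v)$, which is immediately bounded above and below by \eqref{eq:lambda-lower-bound} and \eqref{eq:lambda-bdd} with no fresh work on $C_\out$. Your expression $\hat\kappa(0,\hat v)/\kappa(0,v)$ contains a redundant factor $(1-\mu(0,\cdot))^{-1}$ in numerator and denominator that cancels to give the same $1/\lambda(0,v)$, so the extra integration of \eqref{eq:varpi-v} along $C_\out$ (to control $\hat\kappa(0,\hat v)$ via $1-\mu$) is unnecessary; and if you did want $1-\mu$ bounded on $C_\out$, you could simply cite \eqref{eq:rough-bound-varpi} together with $r\geq \Lambda$ rather than rerunning the argument of \cref{lem:epsilon-loc}.
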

\begin{proof}
Recall the definition of the map $\hat v\mapsto \mathfrak v(\hat v)$ in \eqref{eq:teleology-2}. The derivative of the inverse map $v\mapsto \mathfrak v^{-1}(v)$ is given by $\lambda(0,v)$, so  \eqref{eq:v-est} follows from \eqref{eq:lambda-lower-bound}, \eqref{eq:lambda-bdd}, and the fundamental theorem of calculus, and \eqref{eq:hat-kappa-est} follows from the identity
\begin{equation*}
    \hat\kappa(\hat u,\hat v)=\frac{\kappa(\mathfrak u(\hat u),\mathfrak v(\hat v))}{\lambda(0,\mathfrak v(\hat v))}
\end{equation*}
and \eqref{eq:kappa-bdd}.
\end{proof}

\subsubsection{Taylor expansions of  \texorpdfstring{$\widetilde{1-\mu}$, $\partial_u\tilde r$,}{1 minus mu} and \texorpdfstring{$\partial_v\tilde r$}{tilde lambda}}

We now derive Taylor expansions for $\widetilde{1-\mu}$, $\partial_u\tilde r$, and $\partial_v\tilde r$ in terms of tilde quantities. The precise form of the terms linear in $\tilde r$ will be crucial for arguments later in the paper. 

\begin{lem}\label{lem:mu}  For any $A\ge 1$, $\ve$ sufficiently small, $\tau_f\in\mathfrak B$, and $\alpha\in\mathfrak A_{I(\tau_f)}$, it holds that 
\begin{align}
  \label{eq:mu-1}   \left|\widetilde{1-\mu} - \frac{2M}{\bar r^3}(\bar r-M)\tilde r\right|&\les \frac{|\tilde r|^2}{\bar r^3}+\frac{|\tilde\varpi|}{\bar r},\\
     |\widetilde{1-\mu} |&\les A^2\ve^{3/2}\tau^{-2+\delta}\label{eq:mu-2}
\end{align}
in $\mathcal D_{\tau_f}$.
\end{lem}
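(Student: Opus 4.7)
Since the target mass satisfies $M = |e|$, we have $e^2 = M^2$, so
\[
1-\mu = 1 - \frac{2\varpi}{r} + \frac{M^2}{r^2}, \qquad 1-\bar\mu = \frac{(\bar r - M)^2}{\bar r^2}.
\]
Thus $\widetilde{1-\mu}$ is simply $f(r,\varpi) - f(\bar r, M)$ for $f(x,y) = 1 - 2y/x + M^2/x^2$, and the plan is to Taylor expand in $(\tilde r, \tilde\varpi)$ around $(\bar r, M)$. As a sanity check, the partial derivatives evaluated at $(\bar r, M)$ are $\partial_x f = \tfrac{2M(\bar r - M)}{\bar r^3}$ and $\partial_y f = -2/\bar r$, which already explains the shape of the claimed expansion and the form of the error terms.

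Concretely, I would substitute $r = \bar r + \tilde r$, $\varpi = M + \tilde\varpi$ and combine over a common denominator to obtain
\[
\widetilde{1-\mu} = \frac{M \tilde r \bigl[ 2 r\bar r - M(r+\bar r)\bigr]}{r^2 \bar r^2} - \frac{2\tilde\varpi}{r}.
\]
The algebraic identity $2r\bar r - M(r+\bar r) = 2\bar r(\bar r - M) + \tilde r (2\bar r - M)$ then extracts the claimed leading term $\tfrac{2M(\bar r-M)\tilde r}{\bar r^3}$, modulo two kinds of errors: (a) the correction $2M\tilde r(\bar r - M)\bigl[\tfrac{1}{r^2\bar r} - \tfrac{1}{\bar r^3}\bigr] = -\tfrac{2M\tilde r^2 (\bar r -M)(r+\bar r)}{r^2 \bar r^3}$ coming from replacing $r$ with $\bar r$ in the denominator, and (b) the intrinsically quadratic contribution $\tfrac{M\tilde r^2 (2\bar r - M)}{r^2 \bar r^2}$. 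On $\mathcal D_{\tau_f}$ the bootstrap assumption \eqref{eq:boot-r} together with $\bar r \ge M$ ensures $r \sim \bar r$, and using $\bar r - M \le \bar r$ and $r + \bar r \les \bar r$ both of these errors are $\les |\tilde r|^2/\bar r^3$. The remaining $-2\tilde\varpi/r$ piece is $\les |\tilde\varpi|/\bar r$. This establishes \eqref{eq:mu-1}.

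The bound \eqref{eq:mu-2} is then a direct consequence of \eqref{eq:mu-1} combined with the bootstrap assumptions \eqref{eq:boot-r} and \eqref{eq:boot-pi}, again using $\bar r \ge M$: the leading term is $\les |\tilde r|/\bar r^2 \les A^2\ve^{3/2}\tau^{-2+\delta}$; the quadratic remainder is $\les |\tilde r|^2 \les A^4\ve^3\tau^{-4+2\delta}$, which is absorbed into the target for $\ve$ small; and the $\tilde\varpi$ term is $\les A\ve^{3/2}\tau^{-3+\delta}$, which decays strictly faster than required. There is no substantive obstacle here — the lemma is essentially a Taylor expansion plus bookkeeping — the only point that needs care is that $r \sim \bar r$ holds uniformly on $\mathcal D_{\tau_f}$, which is precisely built into the smallness regime by \eqref{eq:boot-r}.
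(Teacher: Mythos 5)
Your proof is correct and follows essentially the same route as the paper: the paper also isolates the $-2\tilde\varpi/r$ term and Taylor expands the remaining pieces in $\tilde r$ about $\bar r$, invoking the generic expansion $1/r^\eta = 1/\bar r^\eta - \eta\tilde r/\bar r^{\eta+1} + O_\eta(|\tilde r|^2/\bar r^{\eta+2})$ and then using $e^2 = M^2$ to combine the $1/\bar r^2$ and $1/\bar r^3$ contributions into the factor $(\bar r - M)/\bar r^3$. Your exact common-denominator algebra is a slightly more explicit way of producing the same error decomposition, and your handling of \eqref{eq:mu-2} (absorbing the quadratic remainder using $\ve$ small, noting the $\tilde\varpi$ term decays faster) matches the paper's direct deduction from the bootstrap assumptions.
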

\begin{proof}

Using \eqref{eq:boot-r}, we immediately derive the Taylor expansion 
    \begin{equation}
   \label{eq:r-Taylor}     \frac{1}{r^\eta}=\frac{1}{\bar r^\eta}-\frac{\eta\tilde r}{\bar r^{\eta+1}} +O_\eta\left(\frac{|\tilde r|^2}{\bar r^{\eta+2}}\right)
    \end{equation}
for any $\eta>0$. Using this, we then compute
\begin{align*}
    \widetilde{1-\mu} &= -\frac{2\varpi}{r}+\frac{2M}{\bar r}+\frac{e^2}{r^2}-\frac{e^2}{\bar r^2}= -\frac{2\tilde\varpi}{r}-2M\left(\frac{1}{r}-\frac{1}{\bar r}\right)+e^2\left(\frac{1}{r^2}-\frac{1}{\bar r^2}\right)\\
    &= -\frac{2\tilde\varpi}{r} + \frac{2M\tilde r}{\bar r^2}+O\left(\frac{|\tilde r|^2}{\bar r ^3}\right)- \frac{2e^2\tilde r}{\bar r^3}+ O\left(\frac{|\tilde r|^2}{\bar r^4}\right)= -\frac{2\tilde\varpi}{r} + \frac{2M}{\bar r^3}(\bar r-M)\tilde r+ O\left(\frac{|\tilde r|^2}{\bar r^3}\right),
\end{align*}
which gives \eqref{eq:mu-1}. We then immediately obtain \eqref{eq:mu-2} from the bootstrap assumptions.
\end{proof}

\begin{lem}
   For any $A\ge 1$, $\ve$ sufficiently small, $\tau_f\in\mathfrak B$, and $\alpha\in\mathfrak A_{I(\tau_f)}$, we have the expansions
    \begin{align}
    \label{eq:nu-expansion}    \partial_u\tilde r&= \frac{2M\gamma}{\bar r^3}(\bar r -M)\tilde r+E_u \quad\text{in }\mathcal D_{\tau_f}\cap\{r\ge\Lambda  \},\\
    \label{eq:lambda-expansion}    \partial_v\tilde r&= \frac{2M\kappa}{\bar r^3}(\bar r -M)\tilde r+E_v\quad\text{in }\mathcal D_{\tau_f},
    \end{align}
    where the error terms satisfy the estimates
    \begin{align}
      \label{eq:nu-expansion-error}  |E_u|&\les A\ve^{3/2}\bar r^{-1}\tau^{-3+\delta},\\
     \label{eq:lambda-expansion-error}   |E_v|&\les A\ve^{3/2}\tau^{-3+\delta}.
    \end{align}
\end{lem}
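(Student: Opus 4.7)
The approach is purely algebraic: write $\partial_u\tilde r$ and $\partial_v\tilde r$ using the definitions of $\gamma,\kappa$ and the background gauge conditions $\bar\gamma=-1$, $\bar\kappa=1$, then substitute the Taylor expansion for $\widetilde{1-\mu}$ from \cref{lem:mu} and absorb the remainder using the previously established bounds.

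First, since $\nu = \gamma(1-\mu)$ and $\bar\nu = -(1-\bar\mu)$, a direct computation gives
\begin{equation*}
\partial_u\tilde r = \gamma(1-\mu) - \bar\gamma(1-\bar\mu) = (1-\bar\mu)\tilde\gamma + \gamma\,\widetilde{1-\mu},
\end{equation*}
by writing $1-\mu = (1-\bar\mu)+\widetilde{1-\mu}$ and using $\tilde\gamma=\gamma+1$. The analogous identity for $\partial_v\tilde r$ reads
\begin{equation*}
\partial_v\tilde r = (1-\bar\mu)\tilde\kappa + \kappa\,\widetilde{1-\mu}.
\end{equation*}
Substituting \eqref{eq:mu-1} now yields the desired leading terms, with remainders
\begin{align*}
E_u &= (1-\bar\mu)\tilde\gamma + \gamma\cdot O\!\left(\frac{|\tilde r|^2}{\bar r^3}+\frac{|\tilde\varpi|}{\bar r}\right),\\
E_v &= (1-\bar\mu)\tilde\kappa + \kappa\cdot O\!\left(\frac{|\tilde r|^2}{\bar r^3}+\frac{|\tilde\varpi|}{\bar r}\right).
\end{align*}

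For $E_u$ (in $\{r\ge\Lambda\}$), use $|\gamma|\lesssim 1$, the bound $|\tilde\gamma|\lesssim A\ve^2\bar r^{-1}\tau^{-3+\delta}$ from \eqref{eq:gamma-1} together with $(1-\bar\mu)\le 1$, the crude estimate $|\tilde r|^2/\bar r^3\lesssim \ve^3\bar r^{-3}\tau^{-4+2\delta}\lesssim\ve^{3/2}\bar r^{-1}\tau^{-3+\delta}$ (thanks to $\bar r\gtrsim M$ and $\delta<1$), and the bootstrap bound \eqref{eq:boot-pi} on $\tilde\varpi$, to obtain \eqref{eq:nu-expansion-error}. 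For $E_v$, the only nontrivial point is the term $(1-\bar\mu)\tilde\kappa$: in the far region $\{r\ge\Lambda\}$ we use \eqref{eq:kappa-4} and $1-\bar\mu\lesssim 1$; in the near-horizon region $\{r\le\Lambda\}$ we use the sharp hierarchical estimate \eqref{eq:kappa-3}, which exactly compensates for the weight $1-\bar\mu\sim(\bar r-M)^2/\bar r^2$, to conclude $(1-\bar\mu)|\tilde\kappa|\lesssim A\ve^2\tau^{-3+\delta}$. The remaining contributions in $E_v$ are estimated exactly as for $E_u$ (without the $\bar r^{-1}$ gain, which is not claimed in \eqref{eq:lambda-expansion-error}).

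The only conceptual point is the use of \cref{lem:kappa-bounds} in the form of the full hierarchy: the factor $(1-\bar\mu)$ degenerates quadratically at the horizon, and only the sharp $p=0$ endpoint of the $\tilde\kappa$ hierarchy (\eqref{eq:kappa-3}) provides the integrable decay rate $\tau^{-3+\delta}$ needed to control $E_v$ uniformly in $\mathcal D_{\tau_f}$. In particular, this is precisely the place where the sharp horizon hierarchy up to $p=2$ (on $\tilde\kappa$, coming from the scalar field hierarchy up to $p=3-\delta$) is fed into the geometric estimates. No further obstacle arises; the rest is bookkeeping with the bootstrap assumptions.
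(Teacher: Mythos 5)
Your proof is correct and matches the paper's argument essentially line for line: the same decomposition $\partial_u\tilde r=(\overline{1-\mu})\tilde\gamma+\gamma\,\widetilde{1-\mu}$ and $\partial_v\tilde r=(\overline{1-\mu})\tilde\kappa+\kappa\,\widetilde{1-\mu}$, the same substitution of \cref{lem:mu} for the linear term, and the same estimates for the remainder --- \eqref{eq:gamma-1} for $\tilde\gamma$ in the far region, and \eqref{eq:kappa-3} (near) / \eqref{eq:kappa-4} (far) for $\tilde\kappa$, noting that $\overline{1-\mu}=(\bar r-M)^2/\bar r^2$ so that the quadratic degeneration is exactly compensated by the $p=0$ endpoint of the $\tilde\kappa$ hierarchy. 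No meaningful difference from the paper's proof.
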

\begin{proof} \textsc{Expansion for $\partial_u\tilde r$}: We compute
\begin{equation*}
    \partial_u\tilde r = \tilde\nu = \gamma(1-\mu)-\bar\gamma(\overline{1-\mu})=\tilde\gamma(\overline{1-\mu}) + \gamma(\widetilde{1-\mu})
    =\tilde\gamma\left(1-\frac{M}{\bar r}\right)^2+ \gamma\left[\frac{2M}{\bar r^3}(\bar r-M)\tilde r+O\left(\frac{|\tilde r|^2}{\bar r^3}+\frac{|\tilde\varpi|}{\bar r}\right)\right],
\end{equation*}
which implies \eqref{eq:nu-expansion} with error term estimated by
\begin{equation*}
    |E_u|\les \frac{|\tilde r|^2}{\bar r^3}+\frac{|\tilde\varpi|}{\bar r}+|\tilde\gamma|\les A\ve^{3/2}\bar r^{-1}\tau^{-3+\delta}
\end{equation*}
by the bootstrap assumptions and \eqref{eq:gamma-1}.

  \textsc{Expansion for $\partial_v\tilde r$}: We compute
  \begin{equation}
      \partial_v\tilde r=\tilde\lambda = \kappa(1-\mu)-\bar\kappa(\overline{1-\mu})= \tilde\kappa(\overline{1-\mu})+\kappa(\widetilde{1-\mu})
      = \frac{(\bar r-M)^2}{\bar r^2}\tilde\kappa + \kappa \left[\frac{2M}{\bar r^3}(\bar r-M)\tilde r+O\left(\frac{|\tilde r|^2}{\bar r^3}+\frac{|\tilde\varpi|}{\bar r}\right)\right],\label{eq:lambda-tilde}
  \end{equation}
  which implies \eqref{eq:lambda-expansion} with error term estimated by
  \begin{equation*}
      |E_v|\les \frac{|\tilde r|^2}{\bar r^3}+\frac{|\tilde\varpi|}{\bar r}+\frac{(\bar r-M)^2}{\bar r^2}|\tilde\kappa|\les A\ve^{3/2}\tau^{-3+\delta}
  \end{equation*}
  by the bootstrap assumptions and \eqref{eq:kappa-3}.
\end{proof}

\begin{lem}\label{lem:varkappa}
  For any $A\ge 1$, $\ve$ sufficiently small, $\tau_f\in\mathfrak B$, and $\alpha\in\mathfrak A_{I(\tau_f)}$, it holds that
 \begin{equation}\label{eq:varkappa-decay}
     |\tilde\varkappa|\les A\ve^{3/2}  r^{-2} \tau^{-3+\delta} + A^2\ve^{3/2}  r^{-3} \tau^{-2+\delta}
 \end{equation}
 in $\mathcal D_{\tau_f}$.
 \begin{proof}
     This follows directly from the definition of $\varkappa$, \eqref{eq:varkappa}, and the bootstrap assumptions \eqref{eq:boot-r} and \eqref{eq:boot-pi}.
 \end{proof}
\end{lem}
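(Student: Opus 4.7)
The plan is to compute $\tilde\varkappa$ by direct algebraic manipulation using the defining relation \eqref{eq:varkappa} together with the fact that $M=|e|$, and then to insert the Taylor expansion \eqref{eq:r-Taylor} and the bootstrap assumptions.

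First I would expand
\[
\tilde\varkappa = \frac{\varpi}{r^2}-\frac{e^2}{r^3}-\frac{M}{\bar r^2}+\frac{M^2}{\bar r^3}
= \frac{\tilde\varpi}{r^2}+ M\!\left(\frac{1}{r^2}-\frac{1}{\bar r^2}\right) - e^2\!\left(\frac{1}{r^3}-\frac{1}{\bar r^3}\right)+ \frac{M^2-e^2}{\bar r^3},
\]
where the last term vanishes because $M^2=e^2$. Next I would apply \eqref{eq:r-Taylor} with $\eta=2$ and $\eta=3$ to obtain
\[
\frac{1}{r^2}-\frac{1}{\bar r^2} = -\frac{2\tilde r}{\bar r^{3}}+O\!\left(\frac{|\tilde r|^2}{\bar r^4}\right),\qquad
\frac{1}{r^3}-\frac{1}{\bar r^3} = -\frac{3\tilde r}{\bar r^{4}}+O\!\left(\frac{|\tilde r|^2}{\bar r^5}\right),
\]
and (using $e^2=M^2$) assemble
\[
\tilde\varkappa = \frac{\tilde\varpi}{r^2} - \frac{M(2\bar r-3M)\tilde r}{\bar r^{4}} + O\!\left(\frac{|\tilde r|^2}{\bar r^{4}}\right).
\]

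Now I would bound each piece using the bootstrap assumptions and the fact that $r\sim \bar r$ in $\mathcal D_{\tau_f}$ (a consequence of \eqref{eq:boot-r} and $r\ge r_+=M$). The first term contributes $\les A\ve^{3/2}r^{-2}\tau^{-3+\delta}$ via \eqref{eq:boot-pi}. The linear term in $\tilde r$ is bounded in absolute value by $|\tilde r|/\bar r^3 \les A^2\ve^{3/2}r^{-3}\tau^{-2+\delta}$ via \eqref{eq:boot-r} (note $|2\bar r-3M|\les \bar r$, so one power of $\bar r$ survives in the denominator). The quadratic error is $\les (A^2\ve^{3/2}\tau^{-2+\delta})^2 r^{-4} \les A^2\ve^{3/2}r^{-3}\tau^{-2+\delta}\cdot (A^2\ve^{3/2}r^{-1}\tau^{-2+\delta})$, and for $\ve$ sufficiently small this is absorbed into the second term on the right-hand side of \eqref{eq:varkappa-decay}. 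Summing yields the claimed bound.

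There is no real obstacle here; the computation is entirely algebraic and the decay rates simply track from the bootstrap assumptions on $\tilde\varpi$ and $\tilde r$, which is presumably why the authors state the result as following ``directly'' from the definition. The only mild point to watch is that the linear-in-$\tilde r$ coefficient does \emph{not} vanish at $\bar r=M$, so one loses one power of $\bar r$ relative to naive $\bar r^{-4}$ scaling, and this is what forces the $r^{-3}$ (rather than $r^{-4}$) weight in the second term of \eqref{eq:varkappa-decay}.
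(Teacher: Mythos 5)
Your proof is correct and takes essentially the same approach the paper has in mind: write $\tilde\varkappa$ as a sum of a $\tilde\varpi$-term and $\tilde r$-terms using the definition \eqref{eq:varkappa} with $e^2=M^2$, then bound each piece via the bootstrap assumptions \eqref{eq:boot-pi} and \eqref{eq:boot-r} together with $r\sim\bar r$. Your observation that the linear-in-$\tilde r$ coefficient $M(2\bar r-3M)/\bar r^4$ has no cancellation at $\bar r=M$ (in contrast to the expansion of $\widetilde{1-\mu}$ in \cref{lem:mu}) is a nice aside, though not strictly needed: one may also simply estimate $|r^{-\eta}-\bar r^{-\eta}|\lesssim|\tilde r|/\bar r^{\eta+1}$ for $\eta=2,3$ directly without isolating the linear term.
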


\subsection{Improving the bootstrap assumptions for the geometry}\label{sec:closing-geometry}

\subsubsection{Estimates for \texorpdfstring{$\tilde r$}{r}}

In the following lemma, we estimate $\tilde r$ in three stages: Along $\Gamma$, we estimate $\tilde r$ by using the estimates \eqref{eq:kappa-4} and \eqref{eq:gamma-1} for $\tilde \kappa$ and $\tilde\gamma$, for $r\le \Lambda$ we use the equation \eqref{eq:lambda-expansion}, which is \emph{redshifted} backwards in $v$ (see already \cref{rk:redshift-tilde-r}), and for $r\ge \Lambda$ we use the equation \eqref{eq:nu-expansion}.

\begin{lem}\label{lem:r-est} For any $A\ge 1$, $\ve$ sufficiently small, $\tau_f\in\mathfrak B$, and $\alpha\in\mathfrak A_{I(\tau_f)}$, it holds that
    \begin{align}
\label{eq:tilde-r-aux-3}        |\tilde r|&\les A\ve^{3/2}\tau^{-2+\delta},\\
\label{eq:lambda-difference}
    |\tilde \lambda |&\lesssim A\varepsilon^{3/2} \tau^{-2 + \delta}
    \end{align}
    in $\mathcal D_{\tau_f}$.
\end{lem}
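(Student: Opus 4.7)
The plan is to follow the three-region strategy indicated in the paragraph preceding the lemma: first pin down $\tilde r$ along the anchor curve $\Gamma$, then propagate this bound into the near-horizon and far regions using the Taylor expansions \eqref{eq:nu-expansion}--\eqref{eq:lambda-expansion}. The estimate for $\tilde\lambda$ will then follow from the algebraic identity \eqref{eq:lambda-tilde} combined with the improved $\tilde r$ bound.

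\textbf{Estimate on $\Gamma$.} Since $r\equiv\Lambda$ along $\Gamma$ and the anchoring gives $\bar r(\Gamma(\tau_f))=\Lambda$, we have $\tilde r(\Gamma(\tau))=\Lambda-\bar r(\Gamma(\tau))=\int_\tau^{\tau_f}\tfrac{d\bar r}{d\tau'}|_\Gamma\,d\tau'$. Using $\nu\dot\Gamma^u+\lambda\dot\Gamma^v=0$ (from $r\equiv\Lambda$), the gauge condition $\kappa=1$ on $\Gamma$ (so $\lambda/\nu=1/\gamma$), and the background identity $\bar\nu=-\bar\lambda$, a direct computation gives
\[
\tfrac{d\bar r}{d\tau}\Big|_\Gamma = \dot\Gamma^v\,\bar\lambda\,\tfrac{\tilde\gamma}{\gamma}.
\]
Since $\dot\Gamma^v\sim 1$ by \cref{lem:gamma-slope}, $|\gamma|\sim 1$, $\bar\lambda\sim 1$ on $\Gamma$ (where $\bar r\sim\Lambda$), and $|\tilde\gamma|\lesssim A\ve^2\tau^{-3+\delta}$ by \eqref{eq:gamma-1}, integration from $\tau_f$ yields $|\tilde r(\Gamma(\tau))|\lesssim A\ve^2\tau^{-2+\delta}$, which is absorbed into the target bound $A\ve^{3/2}\tau^{-2+\delta}$ for $\ve$ small.

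\textbf{Propagation into the near and far regions.} For a point $(u,v)$ with $r\le\Lambda$, the outgoing cone $\{u=\mathrm{const}\}$ meets $\Gamma$ at $(u,v^\Lambda(u))$ with $v\le v^\Lambda(u)$. Writing \eqref{eq:lambda-expansion} as $\partial_v\tilde r=a\,\tilde r+E_v$ with $a=\tfrac{2M\kappa}{\bar r^3}(\bar r-M)\ge 0$ and integrating backward in $v$ from $v^\Lambda(u)$, the nonnegativity of $a$ provides a Grönwall factor bounded by $1$ (this is the degenerate-redshift structure), giving $|\tilde r(u,v)|\le|\tilde r(u,v^\Lambda(u))|+\int_v^{v^\Lambda(u)}|E_v|\,dv'$. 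The boundary term is handled by Step 1 and the monotonicity of $\tau^{-2+\delta}$ since $\tau(u,v^\Lambda(u))\ge\tau(u,v)$; the error term is bounded by $A\ve^{3/2}\int\tau^{-3+\delta}dv'\lesssim A\ve^{3/2}\tau^{-2+\delta}$ via \eqref{eq:lambda-expansion-error} and \eqref{eq:Hu-integral}. For $r\ge\Lambda$ we run the analogous argument using \eqref{eq:nu-expansion} integrated in $u$ along the ingoing cone from $(u^\Lambda(v),v)$; here the coefficient is negative but its total absolute integral $\int|a|\,du'\lesssim\int\bar r^{-2}\,d\bar r\lesssim 1$ is uniformly bounded by the asymptotic flatness of $\bar r$, so the Grönwall exponential is again a harmless constant, and the error contribution is controlled by \eqref{eq:nu-expansion-error} together with \eqref{eq:Hbarv-integral}.

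\textbf{Deducing $\tilde\lambda$ and main obstacle.} Once $|\tilde r|\lesssim A\ve^{3/2}\tau^{-2+\delta}$ is established, the formula \eqref{eq:lambda-tilde} combined with the Taylor expansion \eqref{eq:mu-1} gives
\[
\tilde\lambda=\tfrac{(\bar r-M)^2}{\bar r^2}\tilde\kappa+\tfrac{2M\kappa}{\bar r^3}(\bar r-M)\tilde r+O\!\left(\tfrac{|\tilde r|^2}{\bar r^3}+\tfrac{|\tilde\varpi|}{\bar r}\right),
\]
and each term is bounded by $A\ve^{3/2}\tau^{-2+\delta}$ using \eqref{eq:kappa-3}, the new bound on $\tilde r$, and the bootstrap assumption on $\tilde\varpi$. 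The principal subtlety is the ``backward redshift'' structure in the near region: only because the coefficient in \eqref{eq:lambda-expansion} has a favorable sign can we integrate into the degenerate-horizon region without producing an exponentially large Grönwall factor, and only because the error $E_v$ decays at the integrable rate $\tau^{-3+\delta}$ (itself a consequence of the $p$-hierarchy extending up to $p=3-\delta$ and the expansion \eqref{eq:Taylor-1}) does the $v$-integration deliver the sharp $\tau^{-2+\delta}$ rate we need.
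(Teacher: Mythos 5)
Your proof is correct and follows essentially the same three-stage strategy as the paper: pin down $\tilde r$ on $\Gamma$ via the gauge-normalized expression for $d\bar r/d\tau$, then propagate it into the near and far regions through the Taylor-expanded ODEs \eqref{eq:nu-expansion}--\eqref{eq:lambda-expansion} exploiting the favorable sign of the integrating factor, and finally read off $\tilde\lambda$ from \eqref{eq:lambda-expansion} and the newly-established $\tilde r$ bound. The only cosmetic difference is in Step 1: you use the gauge condition $\kappa=1$ on $\Gamma$ to eliminate $\tilde\kappa$ outright and express $d\bar r/d\tau|_\Gamma$ purely in terms of $\tilde\gamma$, whereas the paper writes the coefficient as $(\bar\kappa/\kappa-\bar\gamma/\gamma)$ and invokes both \eqref{eq:kappa-4} and \eqref{eq:gamma-1}; these are algebraically equivalent once the gauge conditions $\kappa=\bar\kappa=1$, $\bar\gamma=-1$ are inserted.
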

\begin{proof}
\textsc{Estimate along $\Gamma$}: Using the relation $\nu\dot\Gamma^u+\lambda\dot\Gamma^v=0$ along $\Gamma$, we compute 
\begin{equation*}
    \frac{d}{d\tau}\tilde r(\Gamma(\tau))= -\bar\nu\dot\Gamma^u-\bar\lambda\dot\Gamma^v = \left(\frac{\bar\lambda}{\lambda}\nu-\bar\nu\right)\dot\Gamma^u= (\overline{1-\mu})\left(\frac{\bar\kappa}{\kappa}-\frac{\bar\gamma}{\gamma}\right)\gamma\dot\Gamma^u.
\end{equation*}
Using \eqref{eq:kappa-4} and \eqref{eq:gamma-1}, we estimate
\begin{equation*}
    \left|\frac{\bar\kappa}{\kappa}-\frac{\bar\gamma}{\gamma}\right|\les A\ve^2\tau^{-3+\delta}
\end{equation*}
along $\Gamma$ and therefore conclude
  \begin{equation}
  \big|\tilde r|_\Gamma\big|\les A\ve^2\tau^{-2+\delta}\label{eq:tilde-r-on-Gamma}
\end{equation}
by the fundamental theorem of calculus and the anchoring condition $\tilde r(\Gamma(\tau_f))=0$.

\textsc{Estimate for $r\le\Lambda $}: We view \eqref{eq:lambda-expansion} as an ODE for $\tilde r$ in $v$. For $(u,v),(u,v_2)\in \mathcal D_{\tau_f}\cap\{r\le\Lambda \}$ we use an integrating factor to write
\begin{align}
   \nonumber  \tilde r(u,v)&=\exp\left(-\int_{v}^{v_2}\frac{2M\kappa}{\bar r^3}(\bar r-M)\,dv'\right)\left[\tilde r(u,v_2)-\int_{v}^{v_2}\exp\left(\int_{v'}^{v_2}\frac{2M\kappa}{\bar r^3}(\bar r-M)\,dv''\right)E_v\,dv'\right]\\
  \label{eq:tilde-r-near-formula}   &= \exp\left(-\int_{v}^{v_2}\frac{2M\kappa}{\bar r^3}(\bar r-M)\,dv'\right)\tilde r(u,v_2)- \int_{v}^{v_2}\exp\left(-\int_{v}^{v'}\frac{2M\kappa}{\bar r^3}(\bar r-M)\,dv''\right)E_v\,dv'.
\end{align}
For $(u,v)\in \mathcal D_{\tau_f}\cap\{r\le\Lambda \}$, let $v_2=v^\Lambda(u)$. Then \eqref{eq:tilde-r-near-formula}, \eqref{eq:tilde-r-on-Gamma}, \eqref{eq:lambda-expansion-error}, and \eqref{eq:Hu-integral} imply
\begin{equation}
    |\tilde r(u,v)|\les A\ve^2\tau^{-2+\delta}(u,v)+ A\ve^{3/2}\tau^{-2+\delta}(u,v)\label{eq:tilde-r-aux-1}
\end{equation}
after observing that the integrating factor is nonnegative and hence the exponentials are bounded by $1$. 

\textsc{Estimate for $r\ge \Lambda $}:  We view \eqref{eq:nu-expansion} as an ODE for $\tilde r$ in $u$. For $(u,v),(u_2,v)\in \mathcal D_{\tau_f}\cap\{r\geq\Lambda \}$ we use an integrating factor to write
\begin{equation}\label{eq:tilde-r-far-formula}
    \tilde r(u,v)= \exp\left(-\int_{u}^{u_2}\frac{2M\gamma}{\bar r^3}(\bar r-M)\,du'\right)\left[\tilde r(u_2,v) -\int_{u}^{u_2}\exp\left(\int_{u'}^{u_2}\frac{2M\gamma}{\bar r^3}(\bar r-M) \,du''\right)E_u\,du'\right].
\end{equation}
For $(u,v)\in \mathcal D_{\tau_f}\cap\{r\geq\Lambda\}$ let $u_2=u^\Lambda(v)$. By \eqref{eq:nu-1}, the integrating factor is bounded:
\begin{equation}\label{eq:integrating-factor-bounded}
    \int_{u}^{u_2}\frac{2M(-\gamma)}{\bar r^3}(\bar r-M)\,du'\les \int_{\underline H{}_v}\bar r^{-2}\,du'\les 1.
\end{equation}
It then follows from \eqref{eq:tilde-r-far-formula}, \eqref{eq:tilde-r-on-Gamma}, \eqref{eq:nu-expansion-error}, and \eqref{eq:Hbarv-integral} that
\begin{equation}
    \label{eq:tilde-r-aux-2} |\tilde r(u,v)|\les A\ve^2\tau^{-2+\delta}(u,v)+ A\ve^{3/2}\tau^{-2+\delta}(u,v).
\end{equation}
The estimates \eqref{eq:tilde-r-on-Gamma}, \eqref{eq:tilde-r-aux-1}, and \eqref{eq:tilde-r-aux-2} give \eqref{eq:tilde-r-aux-3} as desired. 

Finally, \eqref{eq:lambda-difference} now follows from \eqref{eq:lambda-expansion} and \eqref{eq:lambda-expansion-error}.
\end{proof}

\begin{rk}\label{rk:redshift-tilde-r}
    The good sign of the integrating factor in \eqref{eq:tilde-r-near-formula} is due to the \emph{global redshift effect} present on extremal Reissner--Nordstr\"om. To exploit this, it was crucial that we integrated the ODE \eqref{eq:lambda-expansion} \emph{backwards} in $v$. Indeed, integrating \eqref{eq:lambda-expansion} forwards in time results in a bad \emph{global blueshift}, despite the absence of the horizon redshift effect! Compare with \cite[Lemma 8.19]{luk2019strong}.
\end{rk}

\subsubsection{Estimates for \texorpdfstring{$\nu/\bar\nu$}{nu}}

\begin{lem}\label{lem:nu-estimate}
   For any $A\ge 1$, $\ve$ sufficiently small, $\tau_f\in\mathfrak B$, and $\alpha\in\mathfrak A_{I(\tau_f)}$, it holds that
    \begin{equation}
        \label{eq:nu-aux-1} \left|\frac{\nu}{\bar \nu}-1\right|\les A^2\ve^{3/2}\tau^{-1+\delta}
    \end{equation}
    on $\mathcal D_{\tau_f}$.
\end{lem}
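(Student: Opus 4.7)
The plan is to integrate a simple propagation equation for $\log(\nu/\bar\nu)$ backwards in $v$ along $C_u$ from the final ingoing cone $\underline C{}_{\Gamma^v(\tau_f)}$, where the gauge condition $\gamma = -1$ supplies a good boundary value. Combining \eqref{eq:nu-v} for both $\nu$ and $\bar\nu$ (using $\bar\kappa = 1$) I obtain
\begin{equation*}
\partial_v \log(\nu/\bar\nu) \;=\; 2(\kappa\varkappa - \bar\varkappa) \;=\; 2(\tilde\kappa\varkappa + \tilde\varkappa),
\end{equation*}
which is well-defined because the bootstrap assumption \eqref{eq:boot-nu} ensures $\nu/\bar\nu > 0$. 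To evaluate the boundary term, observe that $\gamma = -1$ on $\underline C{}_{\Gamma^v(\tau_f)}$ gives $\nu/\bar\nu = (1-\mu)/(\overline{1-\mu})$ there. Since $\bar\nu < 0$ and $\bar r(\Gamma(\tau_f)) = \Lambda$, we have $\bar r \ge \Lambda$ everywhere on this cone, so $\overline{1-\mu} \gtrsim 1$. Combined with \eqref{eq:mu-2}, this yields
\begin{equation*}
|\nu/\bar\nu - 1|(u, \Gamma^v(\tau_f)) \;\les\; A^2 \ve^{3/2} \tau(u, \Gamma^v(\tau_f))^{-2+\delta} \;\les\; A^2\ve^{3/2}\tau(u,v)^{-1+\delta},
\end{equation*}
using that $\tau$ is monotone non-decreasing in $v$ along $C_u$ and that $\tau \ge 1$.

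Next, I would integrate the propagation equation from $v' = v$ to $v' = \Gamma^v(\tau_f)$ along $C_u$, splitting the integral at $r' = \Lambda$. The term $\int|\tilde\varkappa|\,dv'$ is directly controlled by \eqref{eq:varkappa-decay}: in the near region $r \sim M$ so $|\tilde\varkappa| \les A^2\ve^{3/2}\tau^{-2+\delta}$, which integrates to $A^2\ve^{3/2}\tau(u,v)^{-1+\delta}$ by \eqref{eq:Hu-integral}; in the far region $dv' \sim d r$ (by $\lambda \sim 1$) with integrable $r^{-2}$ weight and constant $\tau$ factor, giving a strictly smaller contribution. For $\int|\tilde\kappa \varkappa|\,dv'$, write $\varkappa = \bar\varkappa + \tilde\varkappa$ with $\bar\varkappa = M(\bar r - M)/\bar r^3$. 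Using the $p = 1$ case of Lemma~\ref{lem:kappa-bounds}, namely $(\bar r - M)|\tilde\kappa| \les A\ve^2\tau^{-2+\delta}$, I obtain $|\bar\varkappa\tilde\kappa| \les A\ve^2 \bar r^{-3}\tau^{-2+\delta}$, which integrates to $A\ve^2 \tau(u,v)^{-1+\delta}$ in the near region via \eqref{eq:Hu-integral} and to a smaller quantity in the far region via \eqref{eq:kappa-4} and the absolute convergence of $\int \bar r^{-4}\,d\bar r$. The cross term $\tilde\varkappa\tilde\kappa$ is of order $A^3\ve^{7/2}$ times $\tau$-decay and is absorbed into the error.

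Summing the boundary contribution and the integral yields $|\log(\nu/\bar\nu)| \les A^2\ve^{3/2}\tau^{-1+\delta}$ on $\mathcal D_{\tau_f}$; for $\ve$ sufficiently small, this is $\ll 1$ and I conclude \eqref{eq:nu-aux-1} by expanding the exponential. The main obstacle is the integration of $\tilde\kappa\,\varkappa$ in the near-horizon region, where $|\tilde\kappa|$ can be large and $\varkappa$ is small; the key point is that $\bar\varkappa$ vanishes linearly at the horizon, exactly compensating the $(\bar r-M)^{-1}$ singularity of the corresponding weighted $\tilde\kappa$-bound. This highlights why the $(\bar r-M)^{2-p}$ hierarchy in Lemma~\ref{lem:kappa-bounds} must be carried out at least up to $p=1$, with the sharp $\tau^{-2+\delta}$ rate, in order to close the geometric estimates in the degenerate-redshift regime.
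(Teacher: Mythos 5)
Your proof is correct, and it is recognizably close to the paper's in spirit (both hinge on the propagation equation $\partial_v\log(\nu/\bar\nu) = 2(\kappa\varkappa - \bar\varkappa)$ and on the $p=1$ weighted bound \eqref{eq:kappa-2} to tame $\tilde\kappa\bar\varkappa$ near the horizon), but the integration strategy is genuinely different. The paper does not integrate the propagation equation through the far region at all: for $r\ge\Lambda$ it estimates $\tilde\nu = \partial_u\tilde r$ \emph{directly} from the Taylor expansion \eqref{eq:nu-expansion}, obtaining the sharper bound $|\tilde\nu| \les A^2\ve^{3/2}\bar r^{-1}\tau^{-2+\delta}$ (i.e.~\eqref{eq:nu-aux-2}, with an extra factor of $\bar r^{-1}$ \emph{and} an extra power of $\tau^{-1}$); it then uses this as the boundary datum on $\Gamma$ and integrates $\partial_v\log(\nu/\bar\nu)$ backwards only over the near region. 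Your approach integrates in one pass from the final ingoing cone, which is conceptually more unified and suffices for \eqref{eq:nu-aux-1}, but it does not reproduce the sharper far-region estimate \eqref{eq:nu-aux-2} that the paper later exploits (for instance in the Morawetz bulk estimate \eqref{eq:mor-aux-12}). Two small points of imprecision: your closing remark that ``$|\tilde\kappa|$ can be large'' is slightly off --- by \eqref{eq:kappa-1}, $|\tilde\kappa| \les A\ve^2\tau^{-1+\delta}$ is uniformly bounded; the genuine obstruction is that this unweighted bound decays too slowly in $\tau$ for the $v'$-integral to close, which is precisely why \eqref{eq:kappa-2} is needed. Also, in the far-region $\tilde\kappa\bar\varkappa$ contribution the relevant convergent integral is essentially $\int\bar r^{-2}\,d\bar r$ rather than $\int\bar r^{-4}\,d\bar r$ (using \eqref{eq:kappa-4} and $\bar\varkappa\sim\bar r^{-2}$); the conclusion is unaffected.
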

\begin{proof}
    \textsc{Estimate for $r\ge\Lambda $}: We obtain directly from \eqref{eq:nu-expansion} and the bootstrap assumptions that
    \begin{equation}
        |\tilde\nu|\les \bar r^{-2}|\tilde r|+ A\ve^{3/2}\bar r^{-1}\tau^{-3+\delta}\les A^2\ve^{3/2}\bar r^{-1} \tau^{-2+\delta},\label{eq:nu-aux-2}
    \end{equation}
    which implies \eqref{eq:nu-aux-1} for $r\ge\Lambda $ after dividing by $|\bar\nu|\sim 1$. 

      \textsc{Estimate for $r\le\Lambda $}: 
      From the wave equation \eqref{eq:nu-v} we obtain
      \begin{equation}
      \label{eq:estimate-partialv-nu}
         \left|  \partial_v{\log\left(\frac{\nu}{\bar\nu}\right)} \right|= 2 |\widetilde{\kappa \varkappa}| = 2 |\kappa \tilde \varkappa| + 2 |\tilde \kappa \bar \varkappa| \lesssim A^2 \varepsilon^{3/2}\tau^{-2+\delta} +(\bar r - M)|\tilde \kappa| \lesssim A^2 \varepsilon^{ 3/2} \tau^{-2+\delta}
      \end{equation}
      by \eqref{eq:varkappa-decay} and \eqref{eq:kappa-2}. By \eqref{eq:nu-aux-2}, we have
      \begin{equation*}
         \left|\left. {\log\left(\frac{\nu}{\bar\nu}\right)}\right|_{\Gamma}\right|\les \big|\tilde\nu|_\Gamma\big|\les A^2\ve^{3/2}\tau^{-2+\delta}
      \end{equation*}
so integrating \eqref{eq:estimate-partialv-nu} backwards from $\Gamma$ shows \eqref{eq:nu-aux-1}. \end{proof}

\subsubsection{Estimates for \texorpdfstring{$\tilde\varpi$}{varpi}}

Recall the dyadic index $I(\tau_f)$ which was defined to be the largest integer such that $L_{I(\tau_f)}=2^{I(\tau_f)}\le\tau_f$. Using the energy decay bootstrap assumptions for $\phi$, we now show that $\varpi$ remains close to its value at $\Gamma(L_{I(\tau_f)})$. Note that the power of $\ve$ in \eqref{eq:varpi-aux-0} below is strictly better than in the bootstrap assumption \eqref{eq:boot-pi}. This is fundamental for our modulation argument, see already \cref{sec:modulation} below.

\begin{lem}\label{lem:varpi-est}
For any $A\ge 1$, $\ve$ sufficiently small, $\tau_f\in\mathfrak B$, and $\alpha\in\mathfrak A_{I(\tau_f)}$, it holds that
   \begin{equation}\label{eq:varpi-aux-0}
       |\tilde\varpi-\Pi_{I(\tau_f)}(\alpha)|\les A\ve^2\tau^{-3+\delta}
   \end{equation}
   on $\mathcal D_{\tau_f}$, where $\Pi_{I(\tau_f)}\doteq \tilde\varpi(\Gamma(L_{I(\tau_f)}))$.
\end{lem}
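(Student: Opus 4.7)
The plan is to compare $\varpi(u,v)$ with $\varpi(\Gamma(L_{I(\tau_f)}))$ by integrating the evolution equations \eqref{eq:varpi-u}--\eqref{eq:varpi-v} along a piecewise characteristic path from $\Gamma(L_{I(\tau_f)})$ to $(u,v)$, so as to trade the weak pointwise bootstrap \eqref{eq:boot-pi} on $\tilde\varpi$ (which carries only $\ve^{3/2}$) for the sharper $\ve^2$ flux bounds \eqref{eq:boot-rp}--\eqref{eq:boot-flux-2}. Set $\tau\doteq\tau(u,v)$ and $(u_0,v_0)\doteq\Gamma(L_{I(\tau_f)})$. First connect $\Gamma(L_{I(\tau_f)})$ to $\Gamma(\tau)$ by two null legs, both lying entirely in the far region $\{r\ge\Lambda\}$: if $\tau\ge L_{I(\tau_f)}$, go outgoing along $C_{u_0}$ to $(u_0,\Gamma^v(\tau))$ and then ingoing along $\underline H_{\Gamma^v(\tau)}$ to $\Gamma(\tau)$; if $\tau<L_{I(\tau_f)}$, go ingoing along $\underline H_{v_0}$ to $(\Gamma^u(\tau),v_0)$ and then outgoing along $C_{\Gamma^u(\tau)}$ to $\Gamma(\tau)$. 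Finally, $\Gamma(\tau)$ is joined to $(u,v)$ by a single leg: along $C_u$ in the far case and along $\underline C{}_v$ in the near case.

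On outgoing legs one uses $\partial_v\varpi=r^2(\partial_v\phi)^2/(2\kappa)$ with $\kappa\sim 1$, so the integral is bounded by $\mathcal E_0(\tau')\les A\ve^2(\tau')^{-3+\delta}$ at the appropriate cone label $\tau'\in\{L_{I(\tau_f)},\tau\}$; on ingoing legs in the far region $|\partial_u\varpi|\les r^2(\partial_u\phi)^2$ is bounded by the truncated flux $\underline{\mathcal F}(v',\tau')$. Using $\tau\le\tau_f<2L_{I(\tau_f)}$, one has $L_{I(\tau_f)}^{-3+\delta}\les \tau^{-3+\delta}$ in every case, so each of these legs contributes $\les A\ve^2\tau^{-3+\delta}$.

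The delicate step will be the final ingoing leg in the near region, where the prefactor $(1-\mu)$ in $\partial_u\varpi=(1-\mu)r^2(\partial_u\phi)^2/(2\nu)$ vanishes quadratically at $\mathcal H^+$. Here the Taylor expansion \eqref{eq:mu-1} is inserted to decompose $1-\mu=(\bar r-M)^2/\bar r^2+\widetilde{1-\mu}$. The principal term pairs with $(\partial_u\phi)^2/(-\bar\nu)$ to match exactly the $p=0$ weight of $\underline{\mathcal E}_0(\tau)\les A\ve^2\tau^{-3+\delta}$, while the correction $\widetilde{1-\mu}$ is split via \eqref{eq:mu-1} into contributions of size $(\bar r-M)|\tilde r|$, $|\tilde r|^2$, and $|\tilde\varpi|$; each of these, combined with the $\underline{\mathcal E}_p$ bootstrap at $p\in\{1,2\}$ together with the pointwise bounds \eqref{eq:boot-r}, \eqref{eq:boot-pi}, contributes a quantity of size $\les A^2\ve^{7/2}\tau^{-4+2\delta}\les A\ve^2\tau^{-3+\delta}$. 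Adding the three leg estimates yields the claim, since $\tilde\varpi(\Gamma(L_{I(\tau_f)}))=\Pi_{I(\tau_f)}(\alpha)$ by definition. The main obstacle is precisely this near-horizon integration: without exploiting the degeneracy $(1-\mu)=O((\bar r-M)^2)$ one would be forced to invoke the $p=2$ horizon flux $\underline{\mathcal E}_2\les A\ve^2\tau^{-1+\delta}$ and lose the $\tau^{-3+\delta}$ decay. It is the exact matching between the degenerate redshift factor of extremal Reissner--Nordstr\"om and the top of the $(\bar r-M)^{2-p}$ horizon hierarchy that will make the argument close.
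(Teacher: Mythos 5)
Your proposal is correct, and it rests on the same essential mechanism as the paper's proof: trade the $\varepsilon^{3/2}$ pointwise bootstrap on $\tilde\varpi$ for the $\varepsilon^2$ flux bootstraps by integrating \eqref{eq:varpi-u}--\eqref{eq:varpi-v} along characteristics, and, on the near-horizon leg, use the Taylor expansion \eqref{eq:mu-1} of $1-\mu$ so that the degenerate factor $(\bar r-M)^2$ lines up exactly with the $p=0$ weight of $\underline{\mathcal E}_0$, with the $(\bar r-M)\tilde r$ and $\tilde\varpi$ corrections absorbed by $\underline{\mathcal E}_1$ and $\underline{\mathcal E}_2$. Where you diverge is in the topology of the piecewise path. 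The paper's path is $\Gamma(L_I)\to(u_0,v)\to(u,v)$, going first along the outgoing cone $u=u_0$ (which dips into the near region when $v<\Gamma^v(L_I)$, invoking the flux $\mathcal F$ there) and then along the ingoing cone through $(u,v)$. Your path is $\Gamma(L_I)\to\Gamma(\tau)\to(u,v)$, routed through an intermediate vertex so that the first two legs remain entirely in $\{r\ge\Lambda\}$ and are controlled by $\mathcal E_0$ and $\underline{\mathcal F}$ alone, while all the near-horizon work is concentrated in the single last leg. This buys a cleaner far/near separation and avoids $\mathcal F$ altogether; the paper's route is marginally shorter when $(u,v)$ lies to the future of $\underline C_{\Gamma^v(L_I)}$ since only two legs are needed. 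Either way the final near-horizon integral and the appeal to \eqref{eq:mu-1} are identical, and the claimed sizes (up to a power of $A$, which you slightly underestimate in the middle term but which is still absorbed by $\varepsilon$ smallness) are what appear in the paper's \eqref{eq:varpi-aux-6}.
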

\begin{proof}
 \textsc{Estimate for} $u=\Gamma^u(L_{I(\tau_f)})$: We write $I=I(\tau_f)$ for short. By \eqref{eq:varpi-v} and \eqref{eq:kappa-bdd}, we have
    \begin{equation}
        |\partial_v(\tilde\varpi-\Pi_{I}(\alpha))|\les r^2(\partial_v\phi)^2.\label{eq:varpi-aux-1}
    \end{equation}
    Integrating this estimate forwards in $v$, we obtain from the bootstrap assumptions
\begin{equation}\label{eq:varpi-aux-2}
    |\tilde\varpi-\Pi_{I}(\alpha)|(\Gamma^u(L_{I}),v)\les \mathcal E_0(L_{I})\les A\ve^2L_{I}^{-3+\delta}\les A\ve^2\tau_f^{-3+\delta}
\end{equation}
for every $v\ge \Gamma^v(L_I)$. On the other hand, integrating \eqref{eq:varpi-aux-1} backwards in $v$, we obtain from the bootstrap assumptions and \eqref{eq:Hu-integral}
\begin{equation}\label{eq:varpi-aux-3}
    |\tilde\varpi-\Pi_{I}(\alpha)|(\Gamma^u(L_{I}),v)\les \mathcal F(\Gamma^u(L_{I}),\tau(\Gamma^u(L_{I},v))\les A\ve^2 \tau(\Gamma^u(L_{I}),v)^{-3+\delta}
\end{equation}
for every $v\le \Gamma^v(L_I)$.

    \textsc{Estimate for} $u<\Gamma^u(L_{I(\tau_f)})$:  By \eqref{eq:varpi-u}, the bootstrap assumptions, and \eqref{eq:mu-1}, we estimate
    \begin{align}
      \nonumber  |\partial_u(\tilde\varpi-\Pi_{I}(\alpha))|&\les r^2(\overline{1-\mu})\left(\frac{\partial_u\phi}{-\bar\nu}\right)^2(-\bar\nu)+ r^2|\widetilde{1-\mu}|\left(\frac{\partial_u\phi}{-\bar\nu}\right)^2(-\bar\nu)\\
      \nonumber &\les r^2(\overline{1-\mu})\left(\frac{\partial_u\phi}{-\bar\nu}\right)^2(-\bar\nu) +A^2\ve^{3/2}\tau^{-2+\delta}\left(\frac{\bar r-M}{\bar r^3}\right) \left(\frac{\partial_u\phi}{-\bar\nu}\right)^2(-\bar\nu) \\&\qquad +A\ve^{3/2}\tau^{-3+\delta}r^2\left(\frac{\partial_u\phi}{-\bar\nu}\right)^2(-\bar\nu). \label{eq:varpi-aux-4}
    \end{align}
    If $v\ge \Gamma^v(L_I)$, then $r(u,v)\ge\Lambda$ and by integrating \eqref{eq:varpi-aux-4} and using \eqref{eq:varpi-aux-2} and the bootstrap assumptions we have
    \begin{equation}
         |\tilde\varpi-\Pi_{I}(\alpha)|(u,v)\les A\ve^2\tau_f^{-3+\delta}+\underline{\mathcal F}(v,\tau(u,v))\les A\ve^2\tau(u,v)^{-3+\delta}.\label{eq:varpi-aux-5}
    \end{equation}

    On the other hand, if $v<\Gamma^v(L_I)$, we consider first the case $r(u,v)\le\Lambda$. Integrating \eqref{eq:varpi-aux-4} backwards from $(\Gamma^u(L_I),v)$ and using \eqref{eq:varpi-aux-2} and the bootstrap assumptions, we have 
      \begin{align}
     \nonumber    |\tilde\varpi-\Pi_{I}(\alpha)|(u,v)&\les A\ve^2\tau(\Gamma^u(L_I),v)^{-3+\delta}+ \underline{\mathcal E}{}_0(\tau(u,v))  \nonumber\\  \nonumber&\qquad+A^2\ve^{3/2}\tau(u,v)^{-2+\delta}\underline{\mathcal E}{}_1(\tau(u,v))+A\ve^{3/2}\tau(u,v)^{-3+\delta}\underline{\mathcal E}{}_2(\tau(u,v))\\
     \nonumber     &\les A\ve^2\tau(\Gamma^u(L_I),v)^{-3+\delta}+ A\ve^2\tau(u,v)^{-3+\delta}+ A^3\ve^{7/2}\tau(u,v)^{-4+2\delta}\\
         &\les A\ve^2\tau(u,v)^{-3+\delta},\label{eq:varpi-aux-6}
    \end{align}
    as desired, where we note that $\tau(\Gamma^u(L_I),v)=\tau(u,v)$. If instead $r(u,v)\ge \Lambda$, then the ingoing null cone emanating from $(u,v)$ will strike $\Gamma$ at the point $(u_\star,v)=(u^\Lambda(v),v)$. We first apply the estimate \eqref{eq:varpi-aux-6} at $(u_\star,v)$ to obtain \[|\tilde\varpi-\Pi_{I}(\alpha)|(u_\star,v)\les A\ve^2\tau(u_\star,v)^{-3+\delta}.\]
    Integrating \eqref{eq:varpi-aux-4} further backwards towards $(u,v)$, we finally obtain
       \begin{align*}
        |\tilde\varpi-\Pi_{I(\tau_f)}(\alpha)|(u,v)&\les A\ve^2\tau(u_\star,v)^{-3+\delta}+ \underline{\mathcal F}(v,\tau(u,v))\\&\les A\ve^2\tau(u_\star,v)^{-3+\delta}+A\ve^2 \tau(u,v)^{-3+\delta}\\&\les A\ve^2 \tau(u,v)^{-3+\delta}, 
    \end{align*}
    as desired. 
    
        \textsc{Estimate for} $u>\Gamma^u(L_{I(\tau_f)})$: This is similar to the case $u<\Gamma^u(L_{I(\tau_f)})$ and is left to the reader.
\end{proof}

\begin{rk}\label{rk:modulation-argument}
    It is clear from the proof that the estimate \eqref{eq:varpi-aux-0} also holds with $I(\tau_f)$ replaced by $I(\tau_f)-1$ if $I(\tau_f)\ge 1$. This observation will be used later in \cref{sec:modulation} when we perform the modulation argument for $\varpi$. 
\end{rk}

\subsubsection{The proof of \texorpdfstring{\cref{prop:geometry-est-improve}}{prop}}\label{sec:geometry-improvement}

We now put all of the estimates proved in this section together and improve the bootstrap assumptions for the geometry, \eqref{eq:boot-nu}--\eqref{eq:boot-pi}.

\begin{proof}[Proof of \cref{prop:geometry-est-improve}]
To summarize: by \cref{lem:r-est,lem:nu-estimate,lem:varpi-est}, we have the estimates
\begin{align*}
    |\tilde r|&\les A\ve^{3/2}\tau^{-2+\delta},\\
    \left|\frac{\nu}{\bar\nu}-1\right|&\les A^2\ve^{3/2}\tau^{-1+\delta},\\
    |\tilde\varpi-\Pi_{I(\tau_f)}(\alpha)|&\les A\ve^2\tau^{-3+\delta}
\end{align*}
in $\mathcal D_{\tau_f}$, where the implicit constant \ul{does not} depend on $A$. Therefore, by choosing $A$ sufficiently large, the first two estimates imply
\begin{align*}
|\tilde r|    &\le \tfrac 12 A^2\ve^{3/2}\tau^{-2+\delta},\\
   \left|\frac{\nu}{\bar\nu}-1\right|   &\le \tfrac 12  A^3\ve^{3/2}\tau^{-1+\delta}
\end{align*}
as desired. In order to improve the bootstrap assumption for $\tilde\varpi$, we use the definition of the modulation sets \eqref{modulation-definition-1} and \eqref{modulation-definition-2}. Namely, since $|\Pi_{I(\tau_f)}|\le \ve^{3/2}L_{I(\tau_f)}^{-3+\delta}$, we may estimate
\begin{equation*}
    |\tilde\varpi|\le |\tilde\varpi-\Pi_{I(\tau_f)}(\alpha)|+|\Pi_{I(\tau_f)}(\alpha)|\les A\ve^2\tau^{-3+\delta}+\ve^{3/2}\tau_f^{-3+\delta}\les (A\ve^{1/2}+1)\ve^{3/2}\tau^{-3+\delta}.
\end{equation*}
Therefore, by choosing $A$ sufficiently large and $\ve$ correspondingly sufficiently small, we have
\begin{equation*}
    |\tilde\varpi|\le \tfrac 12 A\ve^{3/2}\tau^{-3+\delta}
\end{equation*} in $\mathcal D_{\tau_f}$
as desired. 
\end{proof}

\section{Energy estimates for the scalar field}\label{sec:energy}

In this section, we derive the fundamental hierarchies of energy estimates for the scalar field which were used to improve the estimates for the geometry in \cref{sec:geometry}. In \cref{sec:Kodama}, we prove a basic degenerate energy boundedness statement for $\phi$ which is analogous to the usual $T$-energy estimate on extremal Reissner--Nordstr\"om. In \cref{sec:Morawetz}, we prove Morawetz estimates for $\phi$, i.e., weighted spacetime $L^2$ bounds for $\phi$ and its derivatives. In \cref{sec:horizon-hierarchy}, we prove $(\bar r-M)^{2-p}$-weighted energy estimates for $\partial_u\psi$ in the near-horizon region, where $\psi\doteq r\phi$. Finally, in  \cref{sec:rp} we prove $r^p$-weighted energy estimates for $\partial_v\psi$ in the far region.

\begin{center}
    \emph{In this section, we adopt the notational conventions outlined in \cref{sec:notation}. We will also often use the geometric bootstrap assumptions and their consequences, such as \cref{lem:basic-consequences}, without comment.}
\end{center}

The energy estimates in this section (except for \cref{lem:kodama-auxiliary}) will take place on a region $\mathcal R\subset\mathcal D_{\tau_f}$ defined as follows: Let $\tau_f\in\mathfrak B$, $(u_1,v_1),(u_2,v_2)\in\Gamma\cap\mathcal D_{\tau_f}$,  $u_2'>u_2$, and $v_2'>v_2$,  where $(u_2,v_2)$ is to the future of $(u_1,v_1)$ We then define
\begin{equation*}
    \mathcal R\doteq [u_1,u_2]\times[v_1,v_2']\cup [u_1,u_2']\times[v_1,v_2]
\end{equation*}
and 
\begin{equation*}
    \mathcal R_{\le\Lambda}\doteq\mathcal R\cap\{r\le\Lambda\},\quad  \mathcal R_{\ge\Lambda}\doteq\mathcal R\cap\{r\ge\Lambda\}.
\end{equation*}
The null segments constituting $\partial \mathcal R$ are numbered I--VI as depicted in \cref{fig:butterfly} below. We define $\tau_1$ to be the value of $\tau$ on $\mathrm{I}\cup\mathrm{II}$ and $\tau_2$ to be the value of $\tau$ on $\mathrm{IV}\cup\mathrm{V}$.

 \begin{figure}[h]
\centering{
\def\svgwidth{12pc}
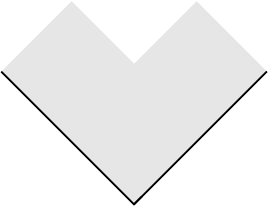}
\caption{A Penrose diagram depicting the region $\mathcal R$ and the hypersurfaces I--VI used in the energy estimates in this section.}
\label{fig:butterfly}
\end{figure}

\subsection{The degenerate energy estimate}\label{sec:Kodama}

We begin with an energy boundedness statement (with a small, decaying nonlinear error term) for an energy which degenerates quadratically in $\partial_u\phi$ at $\bar r=M$. 

\begin{prop}\label{prop:Kodama-estimate} For any $A\ge 1$, $\ve$ sufficiently small, $\tau_f\in\mathfrak B$, $\alpha\in\mathfrak A_{I(\tau_f)}$, $(u_1,v_1),(u_2,v_2)\in\Gamma\cap\mathcal D_{\tau_f}$,  $u_2'>u_2$, and $v_2'>v_2$,  where $(u_2,v_2)$ is to the future of $(u_1,v_1)$, it holds that \begin{multline}\label{eq:Kodama-energy-est}
    \int_\mathrm{III} \left(r^2(\partial_u\phi)^2+\phi^2\right) du+ \int_\mathrm{IV} \left(r^2(\partial_v\phi)^2+\phi^2\right)dv  +\int_\mathrm{V}\left((\bar r-M)^2\frac{(\partial_u\phi)^2}{-\bar\nu}-\bar\nu\phi^2\right)du\\+\int_\mathrm{VI}\left(r^2(\partial_v\phi)^2+\bar\lambda\phi^2\right)dv
    \les \int_\mathrm{I} \left((\bar r-M)^2\frac{(\partial_u\phi)^2}{-\bar\nu}-\bar\nu\phi^2\right)du + \int_\mathrm{II}\left(r^2(\partial_v\phi)^2+\phi^2\right)dv+\ve^3\tau_1^{-4+2\delta},
\end{multline}
where the hypersurfaces $\mathrm{I}$--$\mathrm{VI}$ are as depicted in \cref{fig:butterfly}.
\end{prop}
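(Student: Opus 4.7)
The plan is to apply two multiplier identities to the wave equation satisfied by $\phi$: the top-order identity \eqref{eq:energy-identity} with the \emph{dynamical} Kodama vector field and the lower-order identity \eqref{eq:lower-order-multiplier} with a carefully chosen radial function $h(r)$. The dynamical Kodama vector field is
\begin{equation*}
    X \doteq -\frac{1}{2\gamma}\partial_u + \frac{1}{2\kappa}\partial_v,
\end{equation*}
whose components are the spherically symmetric analogue of \eqref{eq:ERN-Kodama}. Using the relations $\nu/\gamma = \lambda/\kappa = 1-\mu$ one checks immediately that the cross-term
\begin{equation*}
    \nu X^u + \lambda X^v = -\tfrac{1}{2}(1-\mu) + \tfrac{1}{2}(1-\mu) = 0,
\end{equation*}
and using Raychaudhuri's equations \eqref{eq:kappa-u}, \eqref{eq:gamma-v} together with the identity $\gamma\lambda = \nu\kappa = \nu\lambda/(1-\mu)$, the bulk terms $r^2\partial_v X^u (\partial_u\phi)^2$ and $r^2\partial_u X^v(\partial_v\phi)^2$ cancel against each other identically. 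Consequently \eqref{eq:energy-identity} reduces to the \emph{exact} conservation law
\begin{equation*}
\partial_v\!\left(\frac{-r^2(\partial_u\phi)^2}{2\gamma}\right)+\partial_u\!\left(\frac{r^2(\partial_v\phi)^2}{2\kappa}\right) = 0.
\end{equation*}

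Integrating this identity over $\mathcal R$ and applying the divergence theorem produces a balance of fluxes across the six null boundary pieces I--VI of $\mathcal R$; the ingoing pieces contribute $\int \frac{r^2(\partial_u\phi)^2}{-2\gamma}du$ and the outgoing pieces contribute $\int \frac{r^2(\partial_v\phi)^2}{2\kappa}dv$. Using the identity $-1/(2\gamma)=(1-\mu)/(-2\nu)$, the expansion $1-\mu = \overline{1-\mu}+\widetilde{1-\mu}$ from \cref{lem:mu}, the closeness $\nu\sim\bar\nu$ from \eqref{eq:boot-nu}, $r\sim\bar r$ from \eqref{eq:boot-r}, and $\kappa\sim 1$ from \eqref{eq:kappa-1}, \eqref{eq:kappa-4}, the ingoing Kodama integrand in the near region satisfies
\begin{equation*}
\frac{r^2(\partial_u\phi)^2}{-2\gamma} = \frac{(\bar r - M)^2(\partial_u\phi)^2}{-2\bar\nu}\bigl(1+O(\ve^{3/2}\tau^{-2+\delta})\bigr),
\end{equation*}
so that, up to the absolute constant $2$ and an error that can be absorbed, this matches the $(\partial_u\phi)^2$-part of the integrands on pieces I and V in \eqref{eq:Kodama-energy-est}. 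Analogous identifications on the outgoing and far-region ingoing segments produce the $r^2(\partial_v\phi)^2$ and $r^2(\partial_u\phi)^2$ contributions on II--IV and VI.

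To generate the $\phi^2$ boundary terms, apply \eqref{eq:lower-order-multiplier} with $h(r)=\log r$, suitably cut off for large $r$ (so $h$ remains bounded while $h'\sim 1/r$ is preserved). The resulting boundary contributions on outgoing segments are $rh'\lambda\phi^2\sim\lambda\phi^2$ and on ingoing segments are $-rh'\nu\phi^2\sim-\nu\phi^2$; by the geometric estimates these reproduce the $\phi^2$-coefficients $\bar\lambda$, $-\bar\nu$, and $1$ on pieces VI, V/I, and II/III/IV respectively. The bulk contribution from \eqref{eq:lower-order-multiplier} has a manifestly nonnegative coercive piece $-\lambda\nu h''\phi^2/r \ge 0$; the remaining bulk $\phi^2$-terms together with the cross term $-2hr\partial_u\phi\partial_v\phi$ are estimated via Cauchy--Schwarz and absorbed into the fluxes \eqref{eq:boot-flux-1}--\eqref{eq:boot-flux-2}, using boundedness of $h$ in the near region.

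All the error terms collected in the previous steps---both the discrepancies between the dynamical Kodama integrand and the stated integrands and the bulk errors from the lower-order multiplier---contain at least one factor of order $\ve^{3/2}\tau^{-2+\delta}$ or better, supplied by the geometric estimates of \cref{sec:geometry}. Multiplied by the scalar-field bootstrap flux bounds \eqref{eq:boot-horizon}--\eqref{eq:boot-flux-2} (which give scalar-field contributions of size $\les A\ve^2\tau^{-3+\delta}$ on the cones) and integrated over $\tau\ge\tau_1$ via \eqref{eq:Hbarv-integral}--\eqref{eq:Hu-integral}, this produces the asserted nonlinear error $\ve^3\tau_1^{-4+2\delta}$.

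The main technical obstacle is to produce the $\phi^2$ boundary coefficients that degenerate at the horizon at exactly the rates $\bar\lambda\sim(\bar r-M)^2/\bar r^2$ and $-\bar\nu\sim(\bar r-M)^2/\bar r^2$, while simultaneously controlling the cross term $-2hr\partial_u\phi\partial_v\phi$ and avoiding spurious growth of $h$ at infinity. The use of the \emph{dynamical} (rather than background) Kodama vector field is crucial here, as it is the only choice which eliminates all top-order bulk terms---a background multiplier would generate a cross-term $-r(\tilde\nu+\tilde\lambda)\partial_u\phi\partial_v\phi$ that is too large to match the claimed $\ve^3\tau_1^{-4+2\delta}$ error rate within the current framework, since no Morawetz spacetime bound is yet available at this stage of the argument.
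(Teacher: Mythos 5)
Your top-order multiplier is right: the dynamical Kodama vector field $X=-\tfrac{1}{2\gamma}\partial_u+\tfrac{1}{2\kappa}\partial_v$ agrees with the paper's $K$ (in components), and the observation that the cross- and bulk-terms in \eqref{eq:energy-identity} cancel exactly is precisely the content of the paper's identity \eqref{eq:Kodama-no-bulk}. The flux-integrand comparisons via \eqref{eq:mu-1}, \eqref{eq:boot-nu}--\eqref{eq:boot-r}, and the $\kappa$, $\gamma$ estimates also match what the paper does.

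The gap is in how you produce the $\phi^2$ boundary terms. You propose to apply the lower-order multiplier identity \eqref{eq:lower-order-multiplier} with $h=\log r$ (cut off at large $r$). That identity is derived from the wave equation and carries the bulk cross-term $-2hr\,\partial_u\phi\,\partial_v\phi$. Near $\mathcal H^+$ the coefficient $h\approx\log M$ is an $O(1)$ constant, so this cross-term is \emph{not} small. Any Cauchy--Schwarz split of $\iint_{\mathcal R_{\le\Lambda}} hr\,|\partial_u\phi||\partial_v\phi|\,dudv$ produces a spacetime integral of $(\partial_u\phi)^2$ with, at best, a weight $(\bar r-M)^{\sigma}$ for some $\sigma\le 2$ (to keep the companion $(\partial_v\phi)^2$ term under control). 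Feeding this into the bootstrap fluxes gives at best $\int_{\tau_1}^{\tau_2}\underline{\mathcal E}{}_{2-\sigma}(\tau)\,d\tau\les A\ve^2\tau_1^{-1+\sigma+\delta}$ (or worse), which is far larger than the asserted $\ve^3\tau_1^{-4+2\delta}$; moreover for $\sigma<2$ the $(\partial_v\phi)^2$ companion has a singular weight near the horizon that the available fluxes do not control. Even the Morawetz bulk from \cref{prop:Morawetz} (which is not yet available at this stage anyway) degenerates like $(\bar r-M)^2$ in $(\partial_u\phi)^2$, so the cross-term cannot be absorbed there either. This is the same obstruction you correctly identify for the background multiplier $\bar T$ in your closing paragraph---the degenerate extremal horizon gives no non-degenerate $L^2$ control of $\partial_u\phi$---but the $h$-identity reintroduces it.

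The paper's route is different and avoids the problem entirely: rather than using the wave-equation-based identity \eqref{eq:lower-order-multiplier}, it augments the Kodama current with the \emph{trivial} identity $\partial_u\partial_v(\sigma\phi^2)-\partial_v\partial_u(\sigma\phi^2)=0$, with $\sigma=\eta(\bar r - M)$. This is a null Lagrangian: it adds the zeroth-order fluxes $\mp\partial_u(\sigma\phi^2)$, $\pm\partial_v(\sigma\phi^2)$ to the boundary integrands with literally zero bulk cost. Expanding, the good pieces are $-\eta\bar\nu\phi^2$ and $\eta\bar\lambda\phi^2$, and the remaining cross-terms $2\eta(\bar r-M)\phi\,\partial_u\phi$, $2\eta(\bar r-M)\phi\,\partial_v\phi$ live entirely on the boundary and are absorbed pointwise by Young's inequality against the already-present flux integrands for $\eta$ small. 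No spacetime integral of $\partial\phi$ ever appears. If you replace your $h$-multiplier step by this $\sigma$-current, your plan closes; as written, the bulk cross-term cannot be controlled at the claimed rate.
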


We prove this estimate by using the \emph{Kodama vector field} \cite{kodama1980conserved}
\begin{equation}\label{eq:Kodama}
    K\doteq \frac{1-\mu}{2}\left(\frac{1}{\lambda}\partial_v-\frac{1}{\nu}\partial_u\right)
\end{equation} 
to derive an energy identity on $\partial\mathcal R$. On Reissner--Nordstr\"om, $K$ is exactly the time-translation Killing vector field. Of course, $K$ is no longer Killing on a general spherically symmetric solution of the Einstein--Maxwell-scalar field system, but it has the remarkable property that the current $T_{\mu\nu}^\mathrm{SF}K^\nu$ is divergence-free, where $T^\mathrm{SF}$ is the energy-momentum tensor of the scalar field \eqref{eq:T-sf}. This property is expressed more plainly by the identity \eqref{eq:Kodama-no-bulk} below. In \cref{rk:removing-Kodama} below, we sketch a proof that avoids the use of $K$.  

We will augment the energy identity arising from $K$ with a ``null Lagrangian'' current \cite{action-principle} generated by a function $\sigma$ on $\mathcal D_{\tau_f}$ which allows us to simultaneously estimate zeroth order fluxes of $\phi$. This current is analogous to the two-form $\varpi$ used in \cite{DHRT22,Morawetz-note} and can be thought of as encoding a Hardy inequality on each face of $\partial\mathcal R$ \emph{without boundary terms}.

\begin{proof}[Proof of \cref{prop:Kodama-estimate}] We apply the general multiplier identity \eqref{eq:energy-identity} to the Kodama vector field \eqref{eq:Kodama} and add the trivial identity $\partial_u\partial_v(\sigma\phi^2)-\partial_v\partial_u(\sigma\phi^2)=0$, where $\sigma$ is an arbitrary smooth function on $\mathcal D_{\tau_f}$, to obtain 
\begin{equation}
\partial_u\left(\kappa^{-1}r^2(\partial_v\phi)^2+\partial_v(\sigma\phi^2)\right)+\partial_v\left(-\gamma^{-1}r^2(\partial_u\phi)^2-\partial_u(\sigma\phi^2)\right)=0.\label{eq:Kodama-no-bulk}
\end{equation}
Integrating this relation over the region $\mathcal R$ by parts, we obtain the energy identity 
\begin{multline}
 \int_\mathrm{III}  \left((1-\mu)r^2\frac{(\partial_u\phi)^2}{-\nu}-\partial_u(\sigma\phi^2)\right)du+ \int_\mathrm{IV} \left(\frac{r^2}{\kappa}(\partial_v\phi)^2+\partial_v(\sigma\phi^2)\right)dv\\+\int_\mathrm{V} \left((1-\mu)r^2\frac{(\partial_u\phi)^2}{-\nu}-\partial_u(\sigma\phi^2)\right)du+\int_\mathrm{VI}\left(\frac{r^2}{\kappa}(\partial_v\phi)^2+\partial_v(\sigma\phi^2)\right)dv\\
    =\int_\mathrm{I} \left((1-\mu)r^2\frac{(\partial_u\phi)^2}{-\nu}-\partial_u(\sigma\phi^2)\right)du + \int_\mathrm{II}\left(\frac{r^2}{\kappa}(\partial_v\phi)^2+\partial_v(\sigma\phi^2)\right)dv. \label{eq:Kodama-energy-id}
\end{multline}
For a small constant $\eta>0$ to be determined, we set $
    \sigma\doteq\eta(\bar r-M).$

If $I_u$ denotes the $du$ integrand in \eqref{eq:Kodama-energy-id}, we then compute \begin{equation*}
    I_u= r^2(1-\bar\mu)\frac{\bar\nu}{\nu} \frac{(\partial_u\phi)^2}{-\bar\nu} +r^2(\widetilde{1-\mu})\frac{\bar\nu}{\nu}\frac{(\partial_u\phi)^2}{-\bar\nu}-\eta\bar\nu\phi^2-2\eta (\bar r-M)\phi\partial_u\phi.
\end{equation*}
For $\eta$ sufficiently small, using \eqref{eq:mu-1} and Young's inequality, we estimate
\begin{align}
    \label{eq:Kod-aux-1}   I_u&\les r^2(1-\bar\mu)\frac{(\partial_u\phi)^2}{-\bar\nu}+A\ve^{3/2}  \tau^{-2+\delta}(\bar r-M)r\frac{(\partial_u\phi)^2}{-\bar\nu}+A\ve^{3/2} \tau^{-3+\delta}r^2\frac{(\partial_u\phi)^2}{-\bar\nu}-\bar\nu \phi^2,\\
    \label{eq:Kod-aux-2}       I_u&\gtrsim r^2(1-\bar\mu)\frac{(\partial_u\phi)^2}{-\bar\nu}-A\ve^{3/2}  \tau^{-2+\delta}(\bar r-M)r\frac{(\partial_u\phi)^2}{-\bar\nu}-A\ve^{3/2}  \tau^{-3+\delta}r^2\frac{(\partial_u\phi)^2}{-\bar\nu}-\bar\nu \phi^2
\end{align}
on $\mathcal D_{\tau_f}$. For the $dv$ integrand $I_v$, we readily estimate
\begin{equation}
     \label{eq:Kod-aux-3}   I_v\sim r^2(\partial_v\phi)^2+\bar\lambda\phi^2
\end{equation}
on $\mathcal D_{\tau_f}$ for $\eta$ sufficiently small.

Inserting the estimates \eqref{eq:Kod-aux-1}--\eqref{eq:Kod-aux-3} into \eqref{eq:Kodama-energy-id}, we obtain
\begin{multline}\label{eq:Kodama-energy-est-1}
    \int_\mathrm{III} \left(r^2(\partial_u\phi)^2+\phi^2\right) du+ \int_\mathrm{IV} \left(r^2(\partial_v\phi)^2+\phi^2\right)dv  +\int_\mathrm{V}\left((\bar r-M)^2\frac{(\partial_u\phi)^2}{-\bar\nu}-\bar\nu\phi^2\right)du\\+\int_\mathrm{VI}\left(r^2(\partial_v\phi)^2+\bar\lambda\phi^2\right)dv
    \les \int_\mathrm{I} \left((\bar r-M)^2\frac{(\partial_u\phi)^2}{-\bar\nu}-\bar\nu\phi^2\right)du + \int_\mathrm{II}\left(r^2(\partial_v\phi)^2+\phi^2\right)dv+A\ve^{3/2} E,
\end{multline}
where the error term $E$ is given by
\begin{equation*}
   E\doteq \int_\mathrm{I}\tau^{-2+\delta}(\bar r-M)\frac{(\partial_u\phi)^2}{-\bar\nu}\,du+\int_\mathrm{I}\tau^{-3+\delta}\frac{(\partial_u\phi)^2}{-\bar\nu}\,du+\int_\mathrm{V}\tau^{-2+\delta}(\bar r-M)\frac{(\partial_u\phi)^2}{-\bar\nu}\,du+\int_\mathrm{V}\tau^{-3+\delta}\frac{(\partial_u\phi)^2}{-\bar\nu}\,du.
\end{equation*}
Using the bootstrap assumption \eqref{eq:boot-horizon} with $p=1$ and $p=2$, we estimate
\begin{equation*}
    E\les \ve^3\tau_1^{-4+2\delta}+\ve^3\tau_2^{-4+2\delta},
\end{equation*}
which when inserted into \eqref{eq:Kodama-energy-est-1} completes the proof of \eqref{eq:Kodama-energy-est}. \end{proof}

With essentially the same proof, we also have:

\begin{lem}\label{lem:kodama-auxiliary}
    For any $A\ge 1$, $\ve$ sufficiently small, $\tau_f\in\mathfrak B$, $\alpha\in\mathfrak A_{I(\tau_f)}$, $(u_1,v_1),(u_2,v_2)\in\Gamma\cap\mathcal D_{\tau_f}$, where $(u_2,v_2)$ is to the future of $(u_1,v_1)$, it holds that 
    \begin{multline}\label{eq:Kodama-energy-est-square}
    \int_\mathrm{iii} \left(r^2(\partial_u\phi)^2+\phi^2\right) du+ \int_\mathrm{iv}\left(r^2(\partial_v\phi)^2+\bar\lambda\phi^2\right)dv
    \\\les \int_\mathrm{i} \left((\bar r-M)^2\frac{(\partial_u\phi)^2}{-\bar\nu}-\bar\nu\phi^2\right)du + \int_\mathrm{ii}\left(r^2(\partial_v\phi)^2+\phi^2\right)dv+\ve^3\tau_1^{-4+2\delta},
\end{multline}
where the hypersurfaces $\mathrm{i}$--$\mathrm{iv}$ are as depicted in \cref{fig:square}.
\end{lem}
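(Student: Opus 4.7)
The plan is to repeat the proof of \cref{prop:Kodama-estimate} almost verbatim, with the butterfly region $\mathcal R$ replaced by the simpler rectangular region $\mathcal R_\square \doteq [u_1,u_2]\times[v_1,v_2]$ whose past and future corners lie on $\Gamma$. I expect this genuinely to be the same proof modulo a bookkeeping change in the boundary faces, since the identity \eqref{eq:Kodama-no-bulk} is a pointwise divergence identity that can be integrated over any coordinate rectangle.

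First, I would apply the same multiplier identity \eqref{eq:Kodama-no-bulk} generated by the Kodama vector field $K$ together with the same null-Lagrangian correction $\sigma = \eta(\bar r - M)$ for a small $\eta>0$ (fixed exactly as in the proof of \cref{prop:Kodama-estimate}). Integration over $\mathcal R_\square$ produces a boundary identity of the shape \eqref{eq:Kodama-energy-id} but now with only four faces: i (past ingoing, $\{v=v_1\}$) which lies in $\{r\le\Lambda\}$, ii (past outgoing, $\{u=u_1\}$) which lies in $\{r\ge\Lambda\}$, iii (future ingoing, $\{v=v_2\}$) which lies in $\{r\ge\Lambda\}$, and iv (future outgoing, $\{u=u_2\}$) which lies in $\{r\le\Lambda\}$. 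These containment statements follow from the fact that $\nu<0$ and $\lambda>0$ together with $r=\Lambda$ at the two corner points on $\Gamma$.

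Second, I would apply the pointwise bounds \eqref{eq:Kod-aux-1}--\eqref{eq:Kod-aux-3} on each face. On i we use $r^2(1-\bar\mu)\sim (\bar r-M)^2$, valid since $r\le\Lambda$ and $(1-\bar\mu)=(1-M/\bar r)^2$, so $I_u$ is equivalent, modulo the error terms in \eqref{eq:Kod-aux-1}--\eqref{eq:Kod-aux-2}, to the advertised weight $(\bar r-M)^2(\partial_u\phi)^2/(-\bar\nu)-\bar\nu\phi^2$ on the RHS. On ii and iii we have $\bar\lambda\sim 1$ and $(1-\bar\mu)\sim 1$, so $I_u$ and $I_v$ reduce to the non-degenerate weights $r^2(\partial_u\phi)^2+\phi^2$ and $r^2(\partial_v\phi)^2+\phi^2$ respectively. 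On iv we keep the $\bar\lambda$ weight, which is allowed to degenerate at the horizon end of the face. Finally, I would bound the nonlinear errors on the past faces using the bootstrap assumption \eqref{eq:boot-horizon} with $p=1$ and $p=2$, exactly as in \eqref{eq:Kodama-energy-est-1}, yielding the $\ve^3\tau_1^{-4+2\delta}$ contribution (the errors on the future faces can be absorbed into the LHS for $\ve$ small).

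The only subtlety — which turns out to be purely geometric — is identifying in which of $\{r\lessgtr\Lambda\}$ each face of the rectangle lies, so that the pointwise bounds \eqref{eq:Kod-aux-1}--\eqref{eq:Kod-aux-3} can be converted into the particular weights appearing on the RHS of \eqref{eq:Kodama-energy-est-square}. No new analytic ingredient is needed beyond what was already used for the butterfly.
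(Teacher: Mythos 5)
Your proposal is correct and matches the paper's own approach: the paper introduces this lemma with the single remark that it follows ``with essentially the same proof'' as \cref{prop:Kodama-estimate}, and you have accurately filled in the bookkeeping — integrating the same divergence-free current \eqref{eq:Kodama-no-bulk} over the coordinate rectangle $[u_1,u_2]\times[v_1,v_2]$, correctly identifying that i, iv lie in $\{r\le\Lambda\}$ while ii, iii lie in $\{r\ge\Lambda\}$ (a consequence of $\nu<0$, $\lambda>0$ and $r=\Lambda$ at the two corners on $\Gamma$), applying \eqref{eq:Kod-aux-1}--\eqref{eq:Kod-aux-3} face by face, and absorbing or bounding the nonlinear errors as before. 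The one nuance worth noting, which you handled implicitly: on the future ingoing face iii (unlike face V of the butterfly) one sits in $\{r\ge\Lambda\}$, so the $I_u$-errors there are simply absorbed into the main term for $\ve$ small rather than estimated against $\underline{\mathcal E}{}_p$; the $\ve^3\tau_1^{-4+2\delta}$ on the right thus comes entirely from face i.
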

 \begin{figure}[h]
\centering{
\def\svgwidth{7pc}
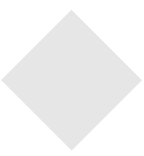}
\caption{A Penrose diagram depicting the hypersurfaces i--iv used in \cref{lem:kodama-auxiliary}.}
\label{fig:square}
\end{figure}

\begin{rk}\label{rk:removing-Kodama}
    The Kodama vector field can be avoided by instead using \begin{equation*}
    \bar T \doteq \frac{1}{2}(\partial_v+\partial_u)=\frac{1-\bar\mu}{2}\left(\frac{1}{\bar\lambda}\partial_v-\frac{1}{\bar\nu}\partial_u\right),
\end{equation*}
the time-translation Killing vector field of the comparison extremal Reissner--Nordstr\"om solution. In this remark, we sketch how to use $\bar T$ to prove the estimate \eqref{eq:Kodama-energy-est} (with $-4+2\delta$ replaced by $-4+3\delta$). 

For $\bar T$, the identity \eqref{eq:energy-identity} yields 
\begin{equation*}
      \partial_u\big(r^2(\partial_{ v}\phi)^2\big)+\partial_{ v}\big(r^2(\partial_u\phi)^2\big) = - 2r(\nu+\lambda)\partial_u\phi\partial_v\phi.
\end{equation*}
We now explain how to estimate the bulk error term
\begin{equation*}
    \iint_{\mathcal R} r|\lambda+\nu||\partial_u\phi||\partial_v\phi|\,dudv.
\end{equation*}

\textsc{Estimate in $\mathcal R_{\le\Lambda}$}:
Using the definitions, we have
\begin{equation*}
    \lambda+\nu =(1-\bar\mu)\left(\kappa-\frac{\nu}{\bar\nu}\right)+(\widetilde{1-\mu})\kappa,
\end{equation*}
which by \eqref{eq:boot-nu}, \eqref{eq:kappa-3}, and \eqref{eq:mu-2} implies 
\begin{equation}\label{eq:rk-aux-1}
    |\lambda+\nu|\les A^3\ve^{3/2}\tau^{-1+\delta}(1-\bar\mu) + A^2\ve^{3/2} \tau^{-2+\delta}.
\end{equation}
 Using this estimate and Young's inequality yields
\begin{multline*}
     \iint_{\mathcal R_{\le\Lambda}} r|\lambda+\nu||\partial_u\phi||\partial_v\phi|\,dudv\les A^3\ve^{3/2}  \iint_{\mathcal R_{\le\Lambda}} \left((\bar r-M)^4\frac{(\partial_u\phi)^2}{-\bar\nu}-\bar\nu(\partial_v\phi)^2\right)\,dudv \\+A^2\ve^{3/2} \iint_{\mathcal R_{\le\Lambda}} \tau^{-4+2\delta}\frac{(\partial_u\phi)^2}{-\bar\nu}\,dudv.
\end{multline*}
The first term can be absorbed by the Morawetz bulk (see already \cref{prop:Morawetz}) and the second term can be estimated using the bootstrap assumption \eqref{eq:boot-horizon},
\begin{equation*}
   A^2 \ve^{3/2} \iint_{\mathcal R_{\le\Lambda}} \tau^{-4+2\delta}\frac{(\partial_u\phi)^2}{-\bar\nu}\,dudv\les A^2\ve^{3/2}\int_{\tau_1}^{\tau_2}\tau^{-4+2\delta}\underline{\mathcal E}{}_2(\tau)\,d\tau \les \ve^3\int_{\tau_1}^{\tau_2}\tau^{-5+3\delta}\,d\tau\les \ve^3\tau_1^{-4+3\delta}.
\end{equation*}

\textsc{Estimate in $\mathcal R_{\ge\Lambda}$}: The estimate \eqref{eq:rk-aux-1} implies $|\lambda+\nu|\les A^3\ve^{3/2}\tau^{-1+\delta}$ in $\mathcal R_{\ge\Lambda}$. Therefore, we may estimate 
\begin{equation*}
      \iint_{\mathcal R_{\ge\Lambda}} r|\lambda+\nu||\partial_u\phi||\partial_v\phi|\,dudv\les  A^3\ve^{3/2} \iint_{\mathcal R_{\le\Lambda}} r^{1-\eta}(\partial_u\phi)^2 \,dudv +  A^3\ve^{3/2} \iint_{\mathcal R_{\le\Lambda}}\tau^{-1+\delta} r^{1+\eta}(\partial_v\phi)^2\,dudv,
\end{equation*}
where $\eta>0$ is the parameter in the Morawetz estimate in \cref{prop:Morawetz}. The first term can again be absorbed into the Morawetz bulk and the second term can be estimated by using the bootstrap assumption \eqref{eq:boot-rp},
\begin{equation*}
A^3\ve^{3/2}\iint_{\mathcal R_{\le\Lambda}} r^{1+\eta}(\partial_v\phi)^2\,dudv \les A^3\ve^{3/2}\int_{\tau_1}^{\tau_2} \tau^{-2+2\delta}\mathcal E_0(\tau)\,d\tau \les \ve^3\tau_1^{-4+3\delta},
\end{equation*}
which completes the proof.
\end{rk}

\subsection{Integrated local energy decay}\label{sec:Morawetz}

In this section, we prove the following \emph{Morawetz estimate} for the scalar field:

\begin{prop}\label{prop:Morawetz}
     Fix $\eta\in(0,1)$. Then for any $A\ge 1$, $\ve$ sufficiently small, $\tau_f\in\mathfrak B$, $\alpha\in\mathfrak A_{I(\tau_f)}$, $(u_1,v_1),(u_2,v_2)\in\Gamma\cap\mathcal D_{\tau_f}$,  $u_2'>u_2$, and $v_2'>v_2$, where $(u_2,v_2)$ is to the future of $(u_1,v_1)$, it holds that
     \begin{multline}\label{eq:Mor-main}
         \iint_{\mathcal R}\left(\left(1-\frac{M}{\bar r}\right)^4r^{1-\eta}\left(\frac{\partial_u\phi}{-\bar\nu}\right)^2+r^{1-\eta}(\partial_v\phi)^2+\left(1-\frac{M}{\bar r}\right)^2\frac{\phi^2}{r^{1+\eta}}\right)(-\bar\nu)dudv \\ \les \int_\mathrm{I} \left((\bar r-M)^2\frac{(\partial_u\phi)^2}{-\bar\nu}-\bar\nu\phi^2\right)du + \int_\mathrm{II}\left(r^2(\partial_v\phi)^2+\phi^2\right)dv+ \ve^3\tau_1^{-4+2\delta}.
     \end{multline}
\end{prop}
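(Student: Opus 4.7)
The strategy is to apply the general multiplier identity \eqref{eq:energy-identity} with a radial vector field of the form
\begin{equation*}
  X = -\frac{g(\bar r)}{\bar\nu}\,\partial_u + f(\bar r)\,\partial_v,
\end{equation*}
supplemented by a lower order null Lagrangian current generated via \eqref{eq:lower-order-multiplier} with some $h(\bar r)$. Crucially, because we are in spherical symmetry there is no angular Laplacian and hence no trapping at the photon sphere $\bar r=2M$, so no degeneracy in that region is needed. The weights $f$ and $g$ are chosen so that on the exact $\tau_f$-anchored background solution, the bulk density $r^2\partial_v X^u(\partial_u\phi)^2 + r^2\partial_u X^v(\partial_v\phi)^2$ reproduces the desired positive integrand on the LHS of \eqref{eq:Mor-main}. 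Concretely I would take $f(\bar r)\sim \bar r^{1-\eta}$ (mimicking the $r^p$ multiplier of Dafermos--Rodnianski at $p=1-\eta$) and $g(\bar r)\sim(\bar r-M)^{2-\eta}\bar r^{-1+\eta}$ (mimicking the horizon $(\bar r-M)^{2-p}$ hierarchy of \cite{angelopoulos2020late} at $p=1-\eta$, recall \cref{sec:uncoupled}); differentiating and using $\partial_v\bar r = \bar\lambda = 1-\bar\mu \sim (1-M/\bar r)^2$ and $\partial_u\bar r = \bar\nu = -(1-\bar\mu)$ then yields precisely the $(1-M/\bar r)^4 r^{1-\eta}(\partial_u\phi/\bar\nu)^2(-\bar\nu)$ and $r^{1-\eta}(\partial_v\phi)^2(-\bar\nu)$ densities.

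The zeroth order bulk term $(1-M/\bar r)^2 \phi^2 r^{-1-\eta}(-\bar\nu)$ is then produced from the lower order current \eqref{eq:lower-order-multiplier} by choosing $h$ so that the coefficient $r\bar\lambda\bar\nu h'' + 2r\bar\kappa\bar\nu\bar\varkappa h' - 2\bar\kappa\bar\nu\bar\varkappa h$ has the required sign and weight; the natural choice is a Hardy-type profile $h\sim \bar r^{-\eta}(\bar r-M)^{2}/\bar r$ (or a related expression), and the remaining $-2h r\partial_u\phi\partial_v\phi$ cross term is absorbed into the two first order bulks by Cauchy--Schwarz after the background coefficients match up, just as in standard Morawetz estimates on extremal Reissner--Nordstr\"om.

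With the positive bulk on ERN established, the next step is to replace $\bar r$ by $r$ and background Christoffel symbols by dynamical ones throughout the identity, generating error bulks which I would organize into two types: (a) terms proportional to $|\widetilde{1-\mu}|$, $|\tilde\kappa|$, $|\tilde\gamma|$, $|\tilde r|$ multiplied by the main first-order quadratic quantities, and (b) the mixed term $-2r(\nu X^u+\lambda X^v)\partial_u\phi\partial_v\phi$ from \eqref{eq:energy-identity}. Using the Taylor expansions for $\widetilde{1-\mu}$ (\cref{lem:mu}) and for $\tilde\nu,\tilde\lambda$ (\cref{lem:nu-estimate} and \eqref{eq:lambda-difference}), together with the $\tilde\kappa$ estimates \eqref{eq:kappa-1}--\eqref{eq:kappa-4}, each such error gains a factor of $A^k\ve^{3/2}\tau^{-j+\delta}$ for some $j\ge 1$; for $\ve$ small enough these (a)-errors are absorbed into the main positive bulk on the LHS, while the residual integrand is estimated using the bootstrap assumptions \eqref{eq:boot-rp} and \eqref{eq:boot-horizon} at appropriate $p$, integrated in $\tau$ via \eqref{eq:Hbarv-integral}--\eqref{eq:Hu-integral} to produce the $\ve^3\tau_1^{-4+2\delta}$ error.

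The boundary terms on $\partial\mathcal R$ produced by the chosen $X$ and $h$ are proportional, on each of the six faces I--VI, to precisely the quantities appearing in the Kodama energy identity \eqref{eq:Kodama-energy-est}: on I and V they are dominated by the horizon-degenerate flux $\int (\bar r - M)^2(\partial_u\phi)^2/(-\bar\nu)\,du - \int\bar\nu\phi^2\,du$, and on II and VI by $\int r^2(\partial_v\phi)^2\,dv+\int\phi^2\,dv$. Applying \cref{prop:Kodama-estimate} to absorb the future boundary terms (on III, IV, V, VI) into the past boundary data (on I, II) plus the $\ve^3\tau_1^{-4+2\delta}$ error then completes the proof. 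The main obstacle will be the precise algebraic choice of the triple $(f,g,h)$ so that the principal bulk is manifestly positive \emph{with the stated weights simultaneously at the horizon and at infinity}, since the horizon-adapted and infinity-adapted multipliers must be reconciled through a single radial profile without producing unweighted terms that cannot be absorbed; this is the spherically symmetric analogue of the balanced ``horizon-infinity'' multiplier construction of \cite{angelopoulos2020late}, and the proof essentially reduces to verifying this positivity and then systematically bookkeeping the nonlinear error terms above.
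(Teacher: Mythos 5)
Your proposal has a genuine gap: the single ``balanced'' radial multiplier you propose cannot produce a nonnegative first-order bulk near $\mathcal H^+$. With $X^u = -g(\bar r)/\bar\nu$ and $\bar\nu = -(1-M/\bar r)^2$, one computes (on the ERN background, using $\partial_v\bar\nu = 2\bar\nu\bar\varkappa$)
\begin{equation*}
  \partial_v X^u = g'(\bar r) + \frac{2\bar\varkappa}{\bar\nu}\,g(\bar r) = g'(\bar r) - \frac{2M}{\bar r(\bar r - M)}\,g(\bar r),
\end{equation*}
so if $g \sim (\bar r - M)^a$ near the horizon then $r^2\partial_v X^u \sim (a-2)(\bar r - M)^{a-1}$. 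Your choice $a = 2-\eta$ makes the coefficient $(a-2) = -\eta$ \emph{negative}, so the $(\partial_u\phi)^2$ bulk density is manifestly wrong-signed in the whole region $\bar r < (1+\eta)M$. The zeroth-order current $h$ cannot repair this, since it contributes only $\phi^2$ and $\partial_u\phi\,\partial_v\phi$ bulks. More fundamentally, a radial multiplier acting on $\phi$ in this gauge can only produce bulks degenerating \emph{faster} than $(\bar r - M)$ near $\mathcal H^+$ (the constraint $a>2$); the $(\bar r - M)^{2-p}$ hierarchy at $p\sim 1$ that you are trying to mimic is obtained in the paper by a fundamentally different mechanism---the commuted estimate for $\partial_u\psi$ in \cref{prop:H-h-1}---and is not reproducible from a multiplier applied directly to $\phi$.

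The paper's route is structured accordingly. It does not try to pack a near-horizon improvement into the Morawetz estimate: the bulk in \eqref{eq:Mor-main} is deliberately degenerate, $\sim (\bar r - M)^2(\partial_u\phi)^2$ near $\mathcal H^+$, which is exactly what a Kodama-type degenerate energy can support on the boundary. This is obtained in two steps. First, the vector field $-f(r)\,(\gamma^{-1}\partial_u + \kappa^{-1}\partial_v)$ with $f(r) = -r^{-3}$ (a function of \emph{dynamical} $r$, not $\bar r$) is used in the identity \eqref{eq:mora}; this particular $f$ is chosen because the quartic error $\frac{2fr^3}{\kappa(-\nu)}(\partial_u\phi)^2(\partial_v\phi)^2$ then has a \emph{favorable} sign and the cross term $4rf(1-\mu)\partial_u\phi\partial_v\phi$ absorbs directly by Young. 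Second, the identity \eqref{eq:mor_back2}, built from the ``background'' vector field $-f(r)(\partial_v-\partial_u)$ and the zeroth-order current with $h=-2f$, is applied with $f = -1 + R^{-1}\chi(r)\,r^{-\eta}$ where $\chi$ cuts off at large $r$---so the improved $r^{1-\eta}$ weight is active only near $\mathcal I^+$, and the multiplier is literally constant near the horizon. The near-horizon $\phi^2$ bulk then comes not from a multiplier at all, but from the elementary Hardy inequality \cref{lem:horizon-hardy-1}. Your observation that there is no photon-sphere degeneracy for spherically symmetric waves is correct and consistent with the paper (the sign of $f'$ never changes), but the decomposition you would need to make your strategy work is exactly the one the paper uses, so you should abandon the search for a single balanced profile.
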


The proof of this estimate will be given in \cref{sec:Mor-proof} below after a series of lemmas. 

\subsubsection{Near-horizon Hardy inequalities I}\label{sec:Hardy-1}

In this section, we show how to control the zeroth order term on the left-hand side of \eqref{eq:Mor-main} by other quantities which will naturally appear later in \cref{sec:Mor-proof}. We begin with the following identity, which is proved by directly expanding the term in square brackets and integrating by parts (see also Lemma 8.30 in \cite{luk2019strong}).

\begin{lem} For any $[u_1,u_2]\times\{v\}\subset\mathcal D_{\tau_f}$, $f:\mathcal D_{\tau_f}\to\Bbb R$, and $\alpha\in\Bbb R$, it holds that 
\begin{multline}\label{eq:hardy_h}
    \frac{(\alpha +1 )^2}{4} \int_{u_1}^{u_2} ( \bar{r} - M )^{\alpha}f^2\,(-\bar\nu) du +  \int_{u_1}^{u_2} \frac{(\bar{r} - M )^{\alpha}}{-\bar{\nu}} \left[ (\bar{r} - M ) \partial_u f - \frac{\alpha+1}{2} (-\bar{\nu}) f \right]^2 \, du\\ =  \int_{u_1}^{u_2} (\bar{r} - M )^{\alpha+2} \left( \frac{\partial_u f}{-\bar{\nu}} \right)^2 (-\bar{\nu})  du+\frac{\alpha +1}{2} [ (\bar{r} - M )^{\alpha+1} f^2 ] (u_1,v) - \frac{\alpha+1}{2} [ (\bar{r} - M ) ^{\alpha+1} f^2 ] ( u_2 , v ).
\end{multline}
\end{lem}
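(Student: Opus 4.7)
The identity is purely algebraic in nature, so the plan is to directly expand the perfect square on the left-hand side and integrate by parts the resulting cross term, with no analytic input required. No bootstrap assumptions, no geometric estimates, and no information about the sign of $\bar r-M$ enter; the formula is an identity valid for any smooth $f$ and any real $\alpha$, on any interval $[u_1,u_2]$ on which $\bar r(\cdot,v)$ and $\bar\nu(\cdot,v)$ are defined with $\bar\nu<0$.

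Concretely, I expand
\begin{equation*}
\left[(\bar r-M)\partial_u f-\tfrac{\alpha+1}{2}(-\bar\nu)f\right]^2 = (\bar r-M)^2(\partial_u f)^2 - (\alpha+1)(\bar r-M)(-\bar\nu)\,f\,\partial_u f + \tfrac{(\alpha+1)^2}{4}\bar\nu^2 f^2.
\end{equation*}
Dividing by $-\bar\nu$ and multiplying by $(\bar r-M)^\alpha$, the second integral on the left-hand side becomes the sum of three pieces: the bulk term $\int (\bar r-M)^{\alpha+2}(\partial_u f/(-\bar\nu))^2(-\bar\nu)\,du$ that already appears on the right-hand side, a cross term $-(\alpha+1)\int (\bar r-M)^{\alpha+1}f\,\partial_u f\,du$, and a zeroth order term $\tfrac{(\alpha+1)^2}{4}\int (\bar r-M)^\alpha(-\bar\nu)f^2\,du$.

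The only nontrivial step is the cross term, which I handle by writing $f\,\partial_u f=\tfrac12\partial_u(f^2)$ and integrating by parts using
\begin{equation*}
\partial_u(\bar r-M)^{\alpha+1}=(\alpha+1)(\bar r-M)^\alpha\,\bar\nu.
\end{equation*}
This produces the two boundary terms $\tfrac{\alpha+1}{2}[(\bar r-M)^{\alpha+1}f^2](u_1,v)-\tfrac{\alpha+1}{2}[(\bar r-M)^{\alpha+1}f^2](u_2,v)$ appearing on the right-hand side, together with the bulk contribution $-\tfrac{(\alpha+1)^2}{2}\int(\bar r-M)^\alpha(-\bar\nu)f^2\,du$.

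Assembling the pieces, the total coefficient of $\int(\bar r-M)^\alpha(-\bar\nu)f^2\,du$ coming from the three zeroth order contributions (the first integral on the left-hand side, the $\bar\nu^2 f^2$ piece from the expanded square, and the integration by parts bulk) is $\tfrac{(\alpha+1)^2}{4}+\tfrac{(\alpha+1)^2}{4}-\tfrac{(\alpha+1)^2}{2}=0$, and the identity follows. There is no genuine obstacle in the argument; the only care required is bookkeeping the sign conventions so that the square of the bracket and the integration by parts combine exactly, which is why the lemma is stated with the specific coefficient $\tfrac{\alpha+1}{2}$ inside the bracket.
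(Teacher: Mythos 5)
Your proposal is correct and follows precisely the strategy the paper indicates: expand the perfect square, integrate the cross term by parts using $\partial_u(\bar r-M)^{\alpha+1}=(\alpha+1)(\bar r-M)^\alpha\bar\nu$, and observe the cancellation of the zeroth order bulks. The coefficient bookkeeping checks out in every step.
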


We will use this identity several times later in \cref{sec:Hardy-2}, but for now use it to control the zeroth order bulk in \eqref{eq:Mor-main}:

\begin{lem}\label{lem:horizon-hardy-1} With hypotheses as in \cref{prop:Morawetz}, it holds that
   \begin{equation}
        \iint_{\mathcal R_{\le\Lambda}}(\bar r-M)^2\phi^2\,(-\bar\nu)dudv\les \iint_{\mathcal R_{\le\Lambda}} (\bar r-M)^4 \left(\frac{\partial_u\phi}{-\bar\nu}\right)^2(-\bar\nu)dudv+\int_\Gamma \phi^2\,ds,\label{eq:Hardy-bulk-1}
   \end{equation}
   where $ds$ is the arc length measure of $\Gamma$ with respect to the Euclidean metric $du^2+dv^2$.
\end{lem}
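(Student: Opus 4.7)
The plan is to apply the identity \eqref{eq:hardy_h} directly with $\alpha=2$ and $f=\phi$ on each constant-$v$ slice of $\mathcal R_{\le\Lambda}$, and then integrate the resulting inequality in $v$. Since $\alpha+1=3>0$, the left-hand side of \eqref{eq:hardy_h} already contains the coercive bulk $\tfrac{9}{4}\int (\bar r-M)^2\phi^2(-\bar\nu)\,du$ we want to estimate, with no further work needed on the algebra.

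The geometry of $\mathcal R_{\le\Lambda}$ makes this particularly clean. For each fixed $v\in[v_1,v_2]$, the set $\{v\}\times \Bbb R\cap \mathcal R_{\le\Lambda}$ is the $u$-interval $[u^\Lambda(v),u_\star(v)]$, where the left endpoint $u^\Lambda(v)$ lies on $\Gamma$ (so $\bar r=\Lambda$ there) and the right endpoint $u_\star(v)$ lies on the future ingoing boundary of $\mathcal R$. For $v\in(v_2,v_2']$ the slice is empty by monotonicity of $u^\Lambda(v)$ along $\Gamma$, so no further pieces contribute. Applying \eqref{eq:hardy_h} with $\alpha=2$ on this $u$-slice, dropping the nonnegative squared quantity on the LHS, and also discarding the favorably signed boundary term $-\tfrac{3}{2}[(\bar r-M)^3\phi^2](u_\star(v),v)\le 0$, we get the pointwise-in-$v$ inequality
\[
\int_{u^\Lambda(v)}^{u_\star(v)} (\bar r-M)^2 \phi^2 (-\bar\nu)\,du \;\lesssim\; \int_{u^\Lambda(v)}^{u_\star(v)} (\bar r-M)^4 \left(\frac{\partial_u\phi}{-\bar\nu}\right)^2 (-\bar\nu)\,du + (\Lambda-M)^3 \phi^2\big|_{\Gamma(v)}.
\]

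Integrating this inequality in $v$ over $[v_1,v_2]$ gives the bulk estimate on the left-hand side of \eqref{eq:Hardy-bulk-1}. For the boundary term, we use \eqref{eq:Gamma-slope} to see that $\dot\Gamma^v\sim 1$ along $\Gamma$, so with $ds=\sqrt{(\dot\Gamma^u)^2+(\dot\Gamma^v)^2}\,d\tau\sim d\tau\sim dv|_\Gamma$, one has
\[
\int_{v_1}^{v_2} \phi^2\big|_{\Gamma(v)}\,dv \;\sim\; \int_\Gamma \phi^2\,ds,
\]
completing the proof. There is no substantial obstacle to this argument; the only point of care is recognizing that the right $u$-endpoint of each slice lies strictly inside $\mathcal R_{\le\Lambda}$'s boundary (where $\bar r\ge M$), so that its Hardy boundary contribution, entering with the favorable sign in \eqref{eq:hardy_h}, may simply be dropped.
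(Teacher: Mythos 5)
Your proof is correct and follows essentially the same route as the paper: apply the Hardy identity \eqref{eq:hardy_h} with $\alpha=2$, $f=\phi$ on each constant-$v$ slice of $\mathcal R_{\le\Lambda}$, drop the nonnegative square and the favorably signed boundary term at the far endpoint, integrate in $v$, and use \eqref{eq:Gamma-slope} to convert $dv|_\Gamma$ to $ds$. The only cosmetic difference is that the paper's slice $\underline S_v$ has its right endpoint at the constant value $u_2'$ rather than your $u_\star(v)$, which amounts to the same set.
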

\begin{proof} Let $v\in [v_1,v_2]$. We apply the identity \eqref{eq:hardy_h} with $f=\phi$ and $\alpha=2$ along the segment \begin{equation}\label{eq:underline-S-v}\underline S{}_v\doteq[u^\Lambda(v),u_2']\times\{v\}\subset\mathcal R_{\le\Lambda}\end{equation}
to obtain the inequality 
\begin{equation}
    \int_{\underline S{}_v} (\bar r -M)^2\phi^2\,(-\bar\nu)du \les \int_{\underline S{}_v} (\bar r-M)^4 \left(\frac{\partial_u\phi}{-\bar\nu}\right)^2(-\bar\nu)du + \phi^2(u^\Lambda(v),v),
\end{equation} 
where we dropped the boundary term at $(u_2',v)$ due to its favorable sign. Integrating this inequality now over $v\in[v_1,v_2]$ and using the fact that $dv$ along $\Gamma$ is proportional to $ds$ by \eqref{eq:Gamma-slope}, we obtain \eqref{eq:Hardy-bulk-1}.
\end{proof}

\subsubsection{Proof of the Morawetz estimate}\label{sec:Mor-proof}

The proof of \cref{prop:Morawetz} involves exploiting the following identities and using our energy boundedness statement \cref{prop:Kodama-estimate} to control flux terms with unfavorable signs. 

\begin{lem}[Morawetz identities]
    Let $f:(0,\infty)\to\Bbb R$ be a $C^2$ function and let $\phi$ be a solution of \eqref{eq:phi-wave-1}. Then the following identities hold,     where $f=f(r)$ and $'=\frac{d}{dr}$:
\begin{multline}\label{eq:mora}
        r^2f'\left(\frac{\lambda}{-\gamma}(\partial_u\phi)^2-\frac{\nu}{\kappa}(\partial_v\phi)^2\right)+4rf(1-\mu)\partial_u\phi\partial_v\phi  +\frac{2fr^3}{\kappa\nu}  (\partial_u\phi)^2(\partial_v\phi)^2\\
        = \partial_v\left(\frac{r^2f}{-\gamma}(\partial_u\phi)^2\right) -\partial_u\left(\frac{r^2f}{\kappa}(\partial_v\phi)^2\right),
    \end{multline}
\begin{multline}\label{eq:mor_back2}
     - r^2f'\big(\lambda(\partial_u\phi)^2-\nu(\partial_v\phi)^2\big)-2fr(\lambda-\nu-2)\partial_u\phi\partial_v\phi+2\big(rf''\lambda+2(rf'-f)\kappa\varkappa\big)(-\nu)\phi^2\\=\partial_u\big(r^2f(\partial_v\phi)^2-2rf'\lambda\phi^2+\partial_v(rf\phi^2)\big)-\partial_v\big(r^2f(\partial_u\phi)^2+2rf'\nu\phi^2-\partial_u(rf\phi^2)\big).
\end{multline}
\end{lem}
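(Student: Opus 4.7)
Both identities are algebraic consequences of the system \eqref{eq:r-wave}--\eqref{eq:phi-wave-1} and follow by combining the general multiplier identity \eqref{eq:energy-identity} with the lower order identity \eqref{eq:lower-order-multiplier}, after rewriting Raychaudhuri's equations \eqref{eq:Ray-u}, \eqref{eq:Ray-v} in the equivalent form \eqref{eq:kappa-u}, \eqref{eq:gamma-v}. No use will be made of the Einstein equations beyond these. The plan is therefore to select the multiplier vector field $X = X^u\partial_u + X^v\partial_v$ and the lower order weight $h(r)$ that match the divergences on the right-hand sides of \eqref{eq:mora} and \eqref{eq:mor_back2}, and then to check that the resulting bulk matches the left-hand sides term-by-term.

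For \eqref{eq:mora}, the shape of the right-hand side $\partial_v\bigl(\tfrac{r^2f}{-\gamma}(\partial_u\phi)^2\bigr)-\partial_u\bigl(\tfrac{r^2f}{\kappa}(\partial_v\phi)^2\bigr)$ forces the choice
\[
X^u \doteq \frac{f}{-\gamma}, \qquad X^v \doteq -\frac{f}{\kappa},
\]
in \eqref{eq:energy-identity}. Using the definitions $\kappa = \lambda/(1-\mu)$ and $\gamma = \nu/(1-\mu)$, the cross term collapses cleanly,
\[
-2r(\nu X^u + \lambda X^v)\partial_u\phi\partial_v\phi = 4rf(1-\mu)\partial_u\phi\partial_v\phi,
\]
which is the second term on the left. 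For the terms $r^2\partial_v X^u(\partial_u\phi)^2$ and $r^2\partial_u X^v(\partial_v\phi)^2$, one separates the contributions from differentiating $f(r)$ (yielding the factors $f'\lambda$ and $-f'\nu$, hence the first term on the left) from those coming from differentiating $\kappa,\gamma$ through \eqref{eq:kappa-u}--\eqref{eq:gamma-v}, which produce the cubic $(\partial_u\phi)^2(\partial_v\phi)^2$ remainder and account for its coefficient $\pm 2r^3f/(\kappa(-\nu))$ after using $\gamma\lambda = \kappa\nu = \lambda\nu/(1-\mu)$.

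For \eqref{eq:mor_back2}, the plan is to apply \eqref{eq:energy-identity} to the simpler multiplier $X^u \doteq -f$, $X^v \doteq f$ and then to add twice the lower order identity \eqref{eq:lower-order-multiplier} with $h = f$ in order to absorb the $4rf\,\partial_u\phi\partial_v\phi$ left over: the multiplier piece alone produces $-r^2f'(\lambda(\partial_u\phi)^2 - \nu(\partial_v\phi)^2) - 2rf(\lambda-\nu)\partial_u\phi\partial_v\phi$, and adding twice \eqref{eq:lower-order-multiplier} contributes exactly $+4rf\,\partial_u\phi\partial_v\phi$ (merging with the cross term to give the coefficient $\lambda-\nu-2$), together with the zeroth order term $2(-r\lambda\nu f'' + 2(f-rf')\kappa\nu\varkappa)\phi^2 = 2(rf''\lambda + 2(rf'-f)\kappa\varkappa)(-\nu)\phi^2$, matching the left-hand side. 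On the right, the two $\partial_u\partial_v(rf\phi^2)$ contributions from \eqref{eq:lower-order-multiplier} combine with the divergences $\partial_u(r^2f(\partial_v\phi)^2)$, $\partial_v(r^2f(\partial_u\phi)^2)$, $\partial_u(rf'\lambda\phi^2)$, $\partial_v(rf'\nu\phi^2)$ into the stated nested divergence structure.

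The whole argument is mechanical, so the only real difficulty is the bookkeeping: keeping track of the signs (recall $\nu<0$, $\gamma<0$), of the multiple cancellations between $\lambda\gamma$ and $\kappa\nu$, and of the factor-of-two arising from symmetrizing the mixed derivatives $\partial_u\partial_v = \partial_v\partial_u$ when rewriting the right-hand side of \eqref{eq:mor_back2} in its manifestly antisymmetric form. No analysis is needed beyond verifying these algebraic identities pointwise.
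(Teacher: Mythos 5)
Your proposal coincides with the paper's proof: \eqref{eq:mora} comes from applying the multiplier identity \eqref{eq:energy-identity} to the vector field $-f(\gamma^{-1}\partial_u+\kappa^{-1}\partial_v)$, and \eqref{eq:mor_back2} from applying \eqref{eq:energy-identity} to $\pm f(\partial_u-\partial_v)$ and then adding the zeroth order current \eqref{eq:lower-order-multiplier} with $h=\mp 2f$, your overall sign flip relative to the paper being immaterial. One small note: the $\pm$ you leave in the cubic coefficient of \eqref{eq:mora} is actually fixed by $\gamma\lambda=\kappa\nu$ to be $2fr^3/(\kappa\nu)$, which is the opposite sign from the displayed $2fr^3/(\kappa(-\nu))$ --- this appears to be a typo in the lemma statement, since the paper's own subsequent specialization to $f=-r^{-3}$ (the quantity $B$ in the Morawetz estimate, whose cubic term $2(\partial_u\phi)^2(\partial_v\phi)^2/(\kappa\nu^2)$ agrees with $2fr^3/(\kappa\nu)$ after dividing by $-\nu$) matches your computation, not the stated sign.
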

\begin{proof} To prove \eqref{eq:mora}, we apply the general multiplier identity \eqref{eq:energy-identity} to the vector field
\begin{equation}\label{eq:mor-vf-1}
 -f(r)\left(\frac{1}{\gamma}\partial_u+\frac{1}{\kappa}\partial_v\right).
\end{equation} 

To prove \eqref{eq:mor_back2}, we first apply  \eqref{eq:energy-identity} to the vector field
\begin{equation}\label{eq:mor-vf-2}
 -f(r)\left(\partial_v-\partial_u\right)
\end{equation} 
to derive the identity 
     \begin{equation} \label{eq:mor_back1}
       r^2f'\big(\lambda(\partial_u\phi)^2-\nu(\partial_v\phi)^2\big)+2rf(\lambda-\nu)\partial_u\phi\partial_v\phi=\partial_v\big(r^2f(\partial_u\phi)^2\big)-\partial_u\big(r^2f(\partial_v\phi)^2\big).
    \end{equation}
 To obtain \eqref{eq:mor_back2} we now add the zeroth order current identity \eqref{eq:lower-order-multiplier} with $h= - 2 f$ to  \eqref{eq:mor_back1}.
\end{proof}

\begin{rk} 
    The vector field \eqref{eq:mor-vf-1} is the same as used in \cite[Lemma 8.24]{luk2019strong} and the ``modified'' identity \eqref{eq:mor_back2} is inspired by the identity (8.134) in \cite{luk2019strong}. 
\end{rk}

\begin{rk}
    The vector fields used in these identities are variants of the usual Morawetz vector field $f(r)\partial_{r_\star}$ used in \cite{dafermos2009red}, for instance. We use \eqref{eq:mor-vf-1} because for the choice $f=-r^{-3}$, the cross term in \eqref{eq:mora} (the middle term on the left-hand side) can be immediately absorbed into the good bulk using Young's inequality (see also \cite[Proposition 9.3.1]{Aretakis-instability-1}). One might worry about the quartic nonlinear error term, but it turns out to have a favorable sign with this choice of $f$! The point of the ``modified'' Morawetz identity \eqref{eq:mor_back2} is twofold: it introduces a (potentially) good zeroth order term and (potentially) improves the $r$-weight on the cross term. We will show that both of these good properties are true for a well-chosen $f$ below. However, this choice of $f$ would produce a bad quartic error term in \eqref{eq:mora}, so we use the ``background'' vector field \eqref{eq:mor-vf-2} to derive the modified identity. Note that we could also use $f=-r^{-3}$ in \eqref{eq:mor_back1}, but dealing with the cross term would take a bit more work. 
\end{rk}

The proof of \cref{prop:Morawetz} now proceeds in four steps:
\begin{enumerate}
    \item Use \eqref{eq:mor_back1} with $f=-r^{-3}$ in $\mathcal R$ to control $\partial\phi$ in spacetime $L^2$, but with a suboptimal $r$-weight. 
    \item Use \eqref{eq:mor_back1} with $f=-r^{-3}$ in $\mathcal R_{\ge \Lambda}$ to control $\partial\phi$ in $L^2$ along the timelike curve $\Gamma$. 
    \item Use \eqref{eq:mor_back2} with $f=-1+R^{-1}\chi r^{-\eta}$, where $R\ge\Lambda$ is a large constant and $\chi$ is an appropriately chosen cutoff, to improve the $r$-weights for $\partial\phi$ and to estimate $\phi$ in spacetime $L^2$ for $r\ge\Lambda$. 
    \item Combine these estimates with the Hardy inequality \eqref{eq:Hardy-bulk-1} to conclude \eqref{eq:Mor-main}.
\end{enumerate}

\begin{lem}\label{lem:mor-proof-1} With hypotheses as in \cref{prop:Morawetz}, it holds that \begin{multline}
     \iint_{\mathcal R}\frac{1}{r^2} \left(\left(1-\frac{M}{\bar r}\right)^4\left(\frac{\partial_u\phi}{-\bar\nu}\right)^2+(\partial_v\phi)^2\right)(-\bar\nu)dudv\\\les  \int_\mathrm{I} \left((\bar r-M)^2\left(\frac{\partial_u\phi}{-\bar\nu}\right)^2+\phi^2\right)(-\bar\nu)du + \int_\mathrm{II}\left(r^2(\partial_v\phi)^2+\phi^2\right)dv+ \ve^3\tau_1^{-4+2\delta}.\label{eq:Mor-compact}
 \end{multline}
\end{lem}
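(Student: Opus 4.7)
The plan is to apply the Morawetz identity \eqref{eq:mora} with the choice $f(r) = -r^{-3}$ and integrate over $\mathcal R$, treating the cross term via Young's inequality, the quartic term by its favorable sign, and the future boundary terms via the Kodama-based energy estimate \cref{prop:Kodama-estimate}. With $f = -r^{-3}$ one has $r^2 f' = 3r^{-2}$, so \eqref{eq:mora} becomes
\begin{equation*}
3r^{-2}\!\left(\tfrac{\lambda(1-\mu)}{-\nu}(\partial_u\phi)^2 + \tfrac{(-\nu)(1-\mu)}{\lambda}(\partial_v\phi)^2\right) -4r^{-2}(1-\mu)\partial_u\phi\partial_v\phi - \tfrac{2(\partial_u\phi)^2(\partial_v\phi)^2}{\kappa(-\nu)} = \partial_v\!\left(\tfrac{r^{-1}}{\gamma}(\partial_u\phi)^2\right) + \partial_u\!\left(\tfrac{r^{-1}}{\kappa}(\partial_v\phi)^2\right),
\end{equation*}
where we used $\lambda/(-\gamma) = \lambda(1-\mu)/(-\nu)$ and $\nu/\kappa = \nu(1-\mu)/\lambda$. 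AM-GM bounds the cross term by $2r^{-2}\left[\tfrac{\lambda(1-\mu)}{-\nu}(\partial_u\phi)^2 + \tfrac{(-\nu)(1-\mu)}{\lambda}(\partial_v\phi)^2\right]$, so it is absorbed into the principal bulk leaving a margin of $1/3$, while the quartic term is nonpositive (as $\kappa, -\nu > 0$) and may be dropped. Using the bootstrap closeness $\lambda/(-\nu) \sim 1$, $\gamma \sim -1$, $\kappa \sim 1$, together with the expansion $(1-\mu) = (1-\bar\mu) + O(\ve^{3/2}\tau^{-2+\delta}/\bar r)$ from \eqref{eq:mu-1} and $(-\bar\nu) = (1-M/\bar r)^2$, the remaining principal bulk is pointwise equivalent to the desired left-hand side of \eqref{eq:Mor-compact}, with geometric-error contributions of size $A\ve^{3/2}\tau^{-1+\delta}$ times weighted integrals already controlled by the bootstrap $\mathcal E_p, \underline{\mathcal E}_p$ at $p\in\{0,1,2\}$.

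For the boundary terms, Stokes' theorem produces fluxes of the schematic form $\int r^{-1}\kappa^{-1}(\partial_v\phi)^2\,dv$ and $\int r^{-1}(-\gamma)^{-1}(\partial_u\phi)^2\,du$. On the past boundary: on $\mathrm I$ (where $r\in [M,\Lambda]$) we compute $(\partial_u\phi)^2 = (-\bar\nu)^2(\partial_u\phi/(-\bar\nu))^2 = (1-M/\bar r)^4(\partial_u\phi/(-\bar\nu))^2$, so $r^{-1}(\partial_u\phi)^2 \lesssim (\bar r - M)^2 (\partial_u\phi/(-\bar\nu))^2(-\bar\nu)$ since $r^{-1}\bar r^{-4} \lesssim \bar r^{-2}$ uniformly; on $\mathrm{II}$ (where $r \ge \Lambda$), $r^{-1}(\partial_v\phi)^2 \lesssim r^2(\partial_v\phi)^2$ since $\Lambda$ is a fixed constant. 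This matches precisely the first two terms on the right-hand side of \eqref{eq:Mor-compact}. On the future boundary $\mathrm{III}\cup\cdots\cup\mathrm{VI}$, the flux signs may be unfavorable, but by the same pointwise comparisons each of these Morawetz fluxes is dominated by the corresponding integrand on the left-hand side of \eqref{eq:Kodama-energy-est}; applying \cref{prop:Kodama-estimate} then bounds them by $\int_\mathrm{I}(\cdots)\,du + \int_\mathrm{II}(\cdots)\,dv + \ve^3\tau_1^{-4+2\delta}$, completing the estimate.

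The genuinely delicate point---and the place I would expect to exercise the most care---is the cross term absorption: the principal bulk has constant $3$ and the cross term has constant $4$, which is comfortably below the AM-GM threshold $2\sqrt{3\cdot 3}=6$, but the $(1-\mu)$ factors on both sides must match \emph{pointwise} before any geometric error can be extracted, so it is essential to keep the dynamical coefficients $\lambda/(-\nu)$ in both the principal bulk and the cross bound (rather than immediately replacing them by background values) until after the absorption. Everything else---comparing $r$ versus $\bar r$, $(1-\mu)$ versus $(1-\bar\mu)$, and tracking the error contributions back into the $\ve^3\tau_1^{-4+2\delta}$ budget via \eqref{eq:boot-horizon}, \eqref{eq:boot-rp} and \cref{lem:tau-properties}---is routine book-keeping given the geometric estimates of \cref{sec:geometry}.
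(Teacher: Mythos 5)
Your overall strategy mirrors the paper's: apply \eqref{eq:mora} with $f(r)=-r^{-3}$, absorb the cross term by AM--GM while keeping the dynamical $(1-\mu)$ and $\lambda/(-\nu)$ factors pointwise (which you correctly identify as the delicate point), compare the principal bulk to the background quantities via the Taylor expansions of \cref{sec:geometry}, and control the unfavorable future fluxes by dominating them by the Kodama integrands from \cref{prop:Kodama-estimate}. The one place where the paper differs stylistically is that it routes \emph{all} six Morawetz flux terms (past and future) through the Kodama estimate, whereas you estimate the past fluxes directly; either route works.

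However, there is a genuine logical error in your handling of the quartic term. You simplify it to $-\tfrac{2(\partial_u\phi)^2(\partial_v\phi)^2}{\kappa(-\nu)}$, declare it ``nonpositive (as $\kappa,-\nu>0$),'' and then assert it ``may be dropped.'' These two statements are incompatible: the lemma is proved by establishing a lower bound on the bulk of \eqref{eq:mora} in terms of the principal term, and a nonpositive contribution cannot simply be discarded from the minorant side of an inequality — it would have to be estimated, and since it is quartic in first derivatives of $\phi$ with no smallness factor, there is no obvious way to do so; the whole Morawetz scheme would collapse. What saves the argument is that the quartic coefficient actually has the opposite sign. Rederiving \eqref{eq:mora} from the multiplier identity \eqref{eq:energy-identity} with $X=-f(\gamma^{-1}\partial_u+\kappa^{-1}\partial_v)$, using $\partial_v\gamma = \tfrac{r\gamma}{\lambda}(\partial_v\phi)^2$, $\partial_u\kappa=\tfrac{r\kappa}{\nu}(\partial_u\phi)^2$, and the identity $\gamma\lambda = \kappa\nu$ (both equal $\nu\lambda/(1-\mu)$), yields the quartic coefficient $\tfrac{2fr^3}{\kappa\nu}$. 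For $f<0$ and $\kappa\nu<0$ this is \emph{positive}, so the quartic is nonnegative and dropping it is indeed legitimate. This is consistent with the formula for $B$ appearing later in the paper's proof, whose quartic reads $\tfrac{2}{\kappa\nu^2}(\partial_u\phi)^2(\partial_v\phi)^2\ge 0$; it appears \eqref{eq:mora} as displayed carries a sign typo in the denominator ($\kappa(-\nu)$ should be $\kappa\nu$). You inherited that typo, but you should have caught it yourself precisely because ``nonpositive and drop it'' was a red flag. Everything else in your outline is sound, though the error budget you describe for $\lambda(1-\mu)-(1-\bar\mu)^2$ is somewhat vaguer than the degenerate-weighted estimate the paper actually uses, which keeps the factor $(1-\bar\mu)$ on the leading error so that the resulting bulk integral really does close against $\underline{\mathcal E}_0,\underline{\mathcal E}_1,\underline{\mathcal E}_2$ with the advertised $\tau_1^{-4+2\delta}$.
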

\begin{proof}
We integrate \eqref{eq:mora} over $\mathcal R$ with the choice $f(r)=-r^{-3}$.

\textsc{Estimating the fluxes:} For this choice of $f$, the integrands in the flux terms on the right-hand side of \eqref{eq:mora} are bounded by the integrands in the fluxes of the Kodama energy identity \eqref{eq:Kodama-energy-id}. Therefore, by the same reasoning as in the proof of \cref{prop:Kodama-estimate}, we may estimate  \begin{equation}\label{eq:Mor-aux-2}
   \iint_{\mathcal R} B\,(-\bar\nu)dudv \les  \int_\mathrm{I} \left((\bar r-M)^2\frac{(\partial_u\phi)^2}{-\bar\nu}-\bar\nu\phi^2\right)du + \int_\mathrm{II}\left(r^2(\partial_v\phi)^2+\phi^2\right)dv+ \ve^3\tau_1^{-4+2\delta}
\end{equation}
where
\begin{equation*}
  B\doteq 3r^{-2}\left(\lambda(1-\mu)\left(\frac{\partial_u\phi}{-\nu}\right)^2+\kappa^{-1}(\partial_v\phi)^2\right) 
      -4r^{-2}(1-\mu)\left(\frac{\partial_u\phi}{-\nu}\right)\partial_v\phi  +\frac{2}{\kappa \nu^2}  (\partial_u\phi)^2(\partial_v\phi)^2.
\end{equation*}

\textsc{Estimating the bulk:} By Young's inequality, the sign of the final term in $B$, \eqref{eq:nu-ratio}, and \eqref{eq:kappa-bdd}, we observe that 
\begin{equation*}
    B\ge  r^{-2}\lambda(1-\mu)\left(\frac{\partial_u\phi}{-\nu}\right)^2+r^{-2}\kappa^{-1}(\partial_v\phi)^2\gtrsim r^{-2}\lambda(1-\mu)\left(\frac{\partial_u\phi}{-\bar\nu}\right)^2+r^{-2}(\partial_v\phi)^2.
\end{equation*}
Using the expansions \eqref{eq:mu-1} and \eqref{eq:lambda-expansion}, we derive
\begin{equation*}
    \left|\lambda(1-\mu)-(1-\bar\mu)^2\right|\les A^2\ve^{3/2}\left((1-\bar\mu)\tau^{-2+\delta}+(1-\bar\mu)^{1/2}\tau^{-5+2\delta}+\tau^{-6+2\delta} \right)
\end{equation*}
on $\mathcal D_{\tau_f}$, and, therefore,  
\begin{equation}\label{eq:Mor-aux-1}
    \iint_{\mathcal R} \left[(\partial_v\phi)^2+(1-\bar\mu)^2\left(\frac{\partial_u\phi}{-\bar\nu}\right)^2\right] r^{-2} (-\bar\nu)dudv\les  \iint_{\mathcal R} B (-\bar\nu)dudv+A^2\ve^{3/2}E,
\end{equation}
with error term
\begin{equation*}
     E\doteq  \iint_{\mathcal R}r^{-2}\left((1-\bar\mu)\tau^{-2+\delta}+(1-\bar\mu)^{1/2}\tau^{-5+2\delta}+\tau^{-6+2\delta} \right)\left(\frac{\partial_u\phi}{-\bar\nu}\right)^2(-\bar\nu)dudv\doteq E_{\ge\Lambda}+E_{\le\Lambda},
\end{equation*}
where the notation $E_{\ge\Lambda}$ means the integral is restricted to $\mathcal R_{\ge\Lambda}$ and similarly for $E_{\le\Lambda}$. By \eqref{eq:mu-lower-bound} and \eqref{eq:lambda-lower-bound}, $E_{\ge\Lambda}$ can be absorbed into the integral of $B$ in \eqref{eq:Mor-aux-1}. To estimate $E_{\le\Lambda}$, we use the bootstap assumptions for $\phi$:
\begin{align}
 \label{eq:Mor-aux-3}   \iint_{\mathcal R\cap\{r\le\Lambda\}}(1-\bar\mu)\tau^{-2+\delta}\left(\frac{\partial_u\phi}{-\bar\nu}\right)^2(-\bar\nu)dudv&\les \int_{\tau_1}^{\tau_2}\tau^{-2+\delta}\underline{\mathcal E}{}_0(\tau)\,d\tau  \les  A\ve^2\tau_1^{-4+2\delta},\\
    \iint_{\mathcal R\cap\{r\le\Lambda\}}(1-\bar\mu)^{1/2}\tau^{-5+\delta}\left(\frac{\partial_u\phi}{-\bar\nu}\right)^2(-\bar\nu)dudv &\les\int_{\tau_1}^{\tau_2}\tau^{-5+\delta}\underline{\mathcal E}{}_1(\tau)\,d\tau \les A \ve^2 \tau_1^{-6+2\delta},\\
    \iint_{\mathcal R\cap\{r\le\Lambda\}}\tau^{-6+\delta}\left(\frac{\partial_u\phi}{-\bar\nu}\right)^2(-\bar\nu)dudv &\les\int_{\tau_1}^{\tau_2}\tau^{-6+\delta}\underline{\mathcal E}{}_2(\tau)\,d\tau \les A \ve^2 \tau_1^{-6+2\delta}. \label{eq:Mor-aux-4}
\end{align}
Combining \eqref{eq:Mor-aux-2}--\eqref{eq:Mor-aux-4} yields \eqref{eq:Mor-compact}, as desired.
\end{proof}

Next, we state a simple consequence of Stokes' theorem in the plane. 

\begin{lem}\label{lem:mor-proof-2}
    Let $j^u,j^v$, and $Q$ be smooth functions on $\mathcal R_{\ge \Lambda}$ satisfying $\partial_uj^u+\partial_vj^v=Q$. Then the divergence identity 
    \begin{equation*}
        \iint_{\mathcal R_{\ge \Lambda}} Q\,dudv= \int_\mathrm{IV}j^u\,dv-\int_\mathrm{II}j^u\,dv+\int_\mathrm{III}j^v\,du+\int_\Gamma (n_\Gamma^uj^u+n_\Gamma^vj^v)ds
    \end{equation*} holds,
    where 
    \begin{equation*}
         n_\Gamma \doteq \big((\dot\Gamma^u)^2+(\dot\Gamma^v)^2\big)^{-1/2}(\dot\Gamma^v\partial_u-\dot\Gamma^u\partial_v) 
    \end{equation*}
    is the outward-pointing normal to $\mathcal R_{\ge\Lambda}$ along $\Gamma$ with respect to the flat Euclidean metric $du^2+dv^2$ on $\mathcal R_{\ge \Lambda}$, $\dot{}$ is used to denote $\frac{d}{d\tau}$, and $ds$ is the Euclidean arc length measure of $\Gamma$.
\end{lem}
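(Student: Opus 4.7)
The statement is nothing more than Green's theorem (equivalently, the divergence theorem for the flat Euclidean metric $du^2+dv^2$) applied to the region $\mathcal R_{\ge\Lambda}\subset\Bbb R^2_{u,v}$ and the vector field $(j^u,j^v)$, and the entire proof is a bookkeeping exercise in identifying the boundary pieces and the corresponding outward normals. The plan is therefore to (i) describe the boundary of $\mathcal R_{\ge\Lambda}$, (ii) compute the flux contribution on each null piece, and (iii) verify the sign and normalization of the $\Gamma$-contribution.

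For step (i), I would use that $\nu=\partial_u r<0$ on $\mathcal D_{\tau_f}$ (\cref{lem:basic-consequences}), so $\{r\ge\Lambda\}$ corresponds precisely to the side of $\Gamma$ where $u\le u^{\Lambda}(v)$. Combining this with the L-shaped structure of $\mathcal R$ (with inner corner at $(u_2,v_2)\in\Gamma$), the boundary of $\mathcal R_{\ge\Lambda}$ decomposes into the three null segments $\mathrm{II}=\{u_1\}\times[v_1,v_2']$, $\mathrm{III}=[u_1,u_2]\times\{v_2'\}$, $\mathrm{IV}=\{u_2\}\times[v_2,v_2']$, together with the arc of $\Gamma$ running from $(u_1,v_1)$ to $(u_2,v_2)$.

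For step (ii), on $\mathrm{II}$ the region lies at $u>u_1$, so the Euclidean outward unit normal is $-\partial_u$ and the Euclidean arclength element is $dv$, contributing $-\int_\mathrm{II}j^u\,dv$; on $\mathrm{IV}$ the region lies at $u<u_2$, giving outward normal $+\partial_u$ and contribution $+\int_\mathrm{IV}j^u\,dv$; on $\mathrm{III}$ the region lies at $v<v_2'$, giving outward normal $+\partial_v$ and contribution $+\int_\mathrm{III}j^v\,du$. These are precisely the three null terms appearing in the statement. For step (iii), parametrize $\Gamma$ by proper time $\tau$; the tangent vector in the $(u,v)$ plane is $\dot\Gamma^u\partial_u+\dot\Gamma^v\partial_v$, which by \cref{lem:gamma-slope} has both components positive. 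Rotating by $90^\circ$ in the Euclidean plane produces the two candidate unit normals $\pm\bigl((\dot\Gamma^u)^2+(\dot\Gamma^v)^2\bigr)^{-1/2}(\dot\Gamma^v\partial_u-\dot\Gamma^u\partial_v)$, and the outward choice from $\mathcal R_{\ge\Lambda}$ is fixed by the requirement that it point in the direction of decreasing $r$, i.e.\ toward larger $u$ and smaller $v$; this selects the $+$ sign, yielding exactly the stated $n_\Gamma$. The Euclidean arclength along $\Gamma$ is $ds=\sqrt{(\dot\Gamma^u)^2+(\dot\Gamma^v)^2}\,d\tau$, and this factor cancels the normalization in $n_\Gamma$ so that the $\Gamma$-integral appears in the form written.

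There is essentially no analytic obstacle here; the only point requiring a moment of thought is the orientation argument on $\Gamma$, which reduces to the observation $\nu<0$, $\lambda>0$ already established in \cref{lem:basic-consequences}. No bootstrap assumptions beyond these sign conventions and \cref{lem:gamma-slope} (for positivity of $\dot\Gamma^u,\dot\Gamma^v$) enter the argument.
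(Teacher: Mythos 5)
Your proof is correct and is precisely what the paper intends: the paper states this as "a simple consequence of Stokes' theorem in the plane" without supplying details, and you have supplied them—the identification of the boundary pieces, the outward normals on each null segment, and the orientation check on $\Gamma$ via $\nu<0$, $\lambda>0$. All signs agree with the stated lemma (and, as a sanity check, with the opposite sign on the $\Gamma$-term in the companion \cref{lem:hh-proof-1} for $\mathcal R_{\le\Lambda}$, since the stated $n_\Gamma$ is inward for that region).
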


\begin{lem}\label{lem:mor-proof-3}
     With hypotheses as in \cref{prop:Morawetz}, it holds that \begin{equation}
    \int_\Gamma\big((\partial_u\phi)^2+(\partial_v\phi)^2\big)ds \les  \int_\mathrm{I} \left((\bar r-M)^2\frac{(\partial_u\phi)^2}{-\bar\nu}-\bar\nu\phi^2\right)du + \int_\mathrm{II}\left(r^2(\partial_v\phi)^2+\phi^2\right)dv+ \ve^3\tau_1^{-4+2\delta}.\label{eq:Mor-Gamma-1}
 \end{equation}
\end{lem}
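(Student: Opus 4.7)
My plan is to apply the Morawetz identity \eqref{eq:mora} with the multiplier $f(r)=-r^{-3}$ and to invoke Lemma~\ref{lem:mor-proof-2} on the sub-region $\mathcal R_{\ge\Lambda}$ in order to extract the boundary contribution along $\Gamma$. With this choice of $f$, the currents appearing in Lemma~\ref{lem:mor-proof-2} are $j^u=(r\kappa)^{-1}(\partial_v\phi)^2$ and $j^v=(r\gamma)^{-1}(\partial_u\phi)^2$, and the boundary of $\mathcal R_{\ge\Lambda}$ consists of $\Gamma$ together with the three null segments $\mathrm{II}$ (past), $\mathrm{III}$, and $\mathrm{IV}$ (future).

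First I would extract the $\Gamma$-contribution. The outward Euclidean normal to $\mathcal R_{\ge\Lambda}$ along $\Gamma$ is proportional to $\dot\Gamma^v\partial_u-\dot\Gamma^u\partial_v$ and points into $\{r<\Lambda\}$ because $\nu<0<\lambda$. Using $\dot\Gamma^u,\dot\Gamma^v\sim 1$ from Lemma~\ref{lem:gamma-slope}, $\kappa\sim 1$, $-\gamma\sim 1$, $r=\Lambda$ on $\Gamma$, and the observation that $\gamma<0$ makes the product $n_\Gamma^v j^v$ positive, one obtains
\begin{equation*}
n_\Gamma^u j^u+n_\Gamma^v j^v \gtrsim (\partial_u\phi)^2+(\partial_v\phi)^2,
\end{equation*}
so the $\Gamma$-flux dominates the left-hand side of \eqref{eq:Mor-Gamma-1}.

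Second, I would control the bulk $\iint_{\mathcal R_{\ge\Lambda}} Q\,du\,dv$, where $Q$ denotes the left-hand side of \eqref{eq:mora}. Up to bounded factors governed by the bootstrap ratio $\nu/\bar\nu$, this bulk coincides with $\iint_{\mathcal R_{\ge\Lambda}} B(-\bar\nu)\,du\,dv$ for the same $B$ defined in the proof of Lemma~\ref{lem:mor-proof-1}. Repeating the argument there---Young's inequality on the indefinite cross term, the favorable sign of the quartic term, and absorption of the deviation of $\lambda(1-\mu)$ from $(1-\bar\mu)^2$ via the bootstrap assumptions as in \eqref{eq:Mor-aux-3}--\eqref{eq:Mor-aux-4}---gives $\iint_{\mathcal R_{\ge\Lambda}}|Q|\,du\,dv$ bounded by the right-hand side of \eqref{eq:Mor-Gamma-1}, with the $\ve^3\tau_1^{-4+2\delta}$ contribution arising from the bootstrap errors.

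Finally, I would handle the three null fluxes. Since $r\ge\Lambda\sim 1$ along each of $\mathrm{II}$, $\mathrm{III}$, $\mathrm{IV}$, one has the pointwise comparisons $(r\kappa)^{-1}(\partial_v\phi)^2\les\Lambda^{-3}r^2\kappa^{-1}(\partial_v\phi)^2$ and $(r|\gamma|)^{-1}(\partial_u\phi)^2\les\Lambda^{-3}r^2|\gamma|^{-1}(\partial_u\phi)^2$, so each flux is controlled pointwise by the corresponding Kodama integrand of \eqref{eq:Kodama-energy-id}. The past flux $\int_\mathrm{II} j^u\,dv$ is therefore dominated by the initial-data term $\int_\mathrm{II} r^2(\partial_v\phi)^2\,dv$ already on the right-hand side of \eqref{eq:Mor-Gamma-1}; the term $-\int_\mathrm{III} j^v\,du=\int_\mathrm{III}(r|\gamma|)^{-1}(\partial_u\phi)^2\,du$, which is positive because $\gamma<0$, is controlled by the Kodama future flux on $\mathrm{III}$, which Proposition~\ref{prop:Kodama-estimate} bounds by the same initial data plus $\ve^3\tau_1^{-4+2\delta}$; and $-\int_\mathrm{IV} j^u\,dv$ has the favorable sign. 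Combining these bounds with the $\Gamma$-lower bound and the bulk estimate yields \eqref{eq:Mor-Gamma-1}. The main subtlety is tracking signs and, in particular, verifying that the future fluxes on $\mathrm{III}$ and $\mathrm{IV}$---which do not appear on the right-hand side of \eqref{eq:Mor-Gamma-1}---are nevertheless absorbed via Proposition~\ref{prop:Kodama-estimate}.
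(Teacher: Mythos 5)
Your proof is correct and takes essentially the same route as the paper: integrate the Morawetz identity \eqref{eq:mora} with $f=-r^{-3}$ over $\mathcal R_{\ge\Lambda}$ via the divergence identity of Lemma~\ref{lem:mor-proof-2}, observe that the $\Gamma$-boundary flux is comparable to $\int_\Gamma\big((\partial_u\phi)^2+(\partial_v\phi)^2\big)ds$ using $\dot\Gamma^u\sim\dot\Gamma^v\sim 1$, $\kappa\sim 1$, $\gamma\sim -1$, control the bulk by the argument of Lemma~\ref{lem:mor-proof-1}, and control the null fluxes via Proposition~\ref{prop:Kodama-estimate}. The paper dispatches the sign bookkeeping on $\mathrm{II}$, $\mathrm{III}$, $\mathrm{IV}$ in a single sentence ("as in the proof of Lemma~\ref{lem:mor-proof-1}, the fluxes $\ldots$ can be estimated using Proposition~\ref{prop:Kodama-estimate}") while you spell out which of the future fluxes has the favorable sign and which needs Proposition~\ref{prop:Kodama-estimate}; this is a difference of exposition, not substance.
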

\begin{proof}
We integrate \eqref{eq:mora} over $\mathcal R_{\ge \Lambda}$ with the choice $f=-r^{-3}$. As in the proof of \cref{lem:mor-proof-1}, the fluxes along the null hypersurfaces II, III, and IV can be estimated using \cref{prop:Kodama-estimate}. The bulk terms can be estimated using \eqref{eq:Mor-compact}. Observe that the contribution along $\Gamma$ is given by 
\begin{equation}
  I_{\Gamma}\doteq  \int_\Gamma \left(\frac{1}{\kappa r}(\partial_v\phi)^2n_\Gamma^u+\frac{1}{\gamma r}(\partial_u\phi^2)n^v_\Gamma\right) ds.
\end{equation}
Since $\Gamma$ is uniformly timelike (recall the estimate \eqref{eq:Gamma-slope}), we have $    n_\Gamma^u\sim - n_\Gamma^v\sim 1$ and therefore (using also that $\kappa\sim 1$ and $\gamma\sim -1$ on $\Gamma$)
\begin{equation}
    I_{\Gamma} \sim   \int_\Gamma\big((\partial_u\phi)^2+(\partial_v\phi)^2\big)ds.
\end{equation}
This completes the proof of \eqref{eq:Mor-Gamma-1}.
\end{proof}

\begin{lem}\label{lem:mor-proof-4}
     With hypotheses as in \cref{prop:Morawetz}, it holds that   \begin{multline}
         \iint_{\mathcal R_{\ge \Lambda}}\left(r^{1-\eta}(\partial_u\phi)^2+r^{1-\eta}(\partial_v\phi)^2+\frac{\phi^2}{r^{1+\eta}}\right)(-\bar\nu)dudv +\int_\Gamma  \phi^2\,ds \\ \les_\eta \int_\mathrm{I} \left((\bar r-M)^2\frac{(\partial_u\phi)^2}{-\bar\nu}-\bar\nu\phi^2\right)du + \int_\mathrm{II}\left(r^2(\partial_v\phi)^2+\phi^2\right)dv+ \ve^3\tau_1^{-4+2\delta}.\label{eq:mor-aux-11}
     \end{multline}
\end{lem}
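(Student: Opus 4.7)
The plan is to apply the modified Morawetz identity \eqref{eq:mor_back2} on $\mathcal R_{\ge\Lambda}$ via \cref{lem:mor-proof-2} with the multiplier function
\[
    f(r) \doteq -1 + R^{-1}\chi(r) r^{-\eta},
\]
where $\chi \in C^\infty([0,\infty);[0,1])$ is a smooth cutoff supported in $\{r \ge r_*/2\}$ and equal to $1$ for $r \ge r_*$, and $R \ge 2$ and $r_* \ge 2\Lambda$ are constants to be chosen depending on $\eta$ so as to ensure both the positivity of the zeroth-order bulk below and the Young-absorption of the cross term.

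For the bulk, the $-r^2 f'$ factor yields the positive derivative bulk $\eta R^{-1}\chi r^{1-\eta}(\lambda(\partial_u\phi)^2 + (-\nu)(\partial_v\phi)^2)$ away from the cutoff transition, while the zeroth-order coefficient $2(rf''\lambda + 2(rf'-f)\kappa\varkappa)(-\nu)\phi^2$ is dominated, for $R$ suitably large relative to $\Lambda^{-\eta}$, by its main piece $4\kappa\varkappa(-\nu)\phi^2 \sim M r^{-2}\phi^2$ arising from the $-1$-part of $f$. The cross term $-2fr(\lambda-\nu-2)\partial_u\phi\partial_v\phi$, expanded using $\lambda - \nu - 2 = -2\bar\mu + \tilde\lambda - \tilde\nu$ together with \cref{lem:mu,lem:r-est,lem:nu-estimate}, has a leading piece of order $M\partial_u\phi\partial_v\phi$ (without $\tau$-decay) plus $\ve^{3/2}$-small errors. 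In $\{r \ge r_*\}$, Young's inequality absorbs the leading cross-term piece into the positive derivative bulk. In the bounded-$r$ region $\{r \in [\Lambda, r_*]\}$ and on the cutoff transition, the cross-term and cutoff-error contributions are bounded pointwise by $C(\eta)(|\partial_u\phi|^2 + |\partial_v\phi|^2)$, whose spacetime integrals are controlled by $C(\eta, r_*)$ times the right-hand side of \cref{lem:mor-proof-1}. The $\ve^{3/2}$-small errors from $\tilde\lambda, \tilde\nu$ are absorbed using the bootstrap fluxes \eqref{eq:boot-rp}--\eqref{eq:boot-flux-2} together with the time-integration bounds \eqref{eq:Hu-integral}--\eqref{eq:Hbarv-integral}, contributing $\ve^3\tau_1^{-4+2\delta}$ to the right-hand side.

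For the boundary contributions arising from Stokes, on the null segments of $\partial\mathcal R_{\ge\Lambda}$ (subsets of II, III, IV) the integrands are of the form $r^2 f(\partial_\bullet\phi)^2$ plus lower-order $\phi^2$ terms; since $|f|\lesssim 1$, they are pointwise dominated by the Kodama flux integrands and hence controlled via \cref{prop:Kodama-estimate} and \cref{lem:mor-proof-1}. The contribution along $\Gamma$, obtained by combining the $\partial_v(rf\phi^2)$ piece of $A$ with the $-\partial_u(rf\phi^2)$ piece of $B$ and using the identity $\nu\dot\Gamma^u + \lambda\dot\Gamma^v = 0$ (encoding $r$-constancy along $\Gamma$) together with $\dot\Gamma^v \sim 1$ from \cref{lem:gamma-slope}, yields a positive multiple of $\int_\Gamma\phi^2\,ds$ on the left-hand side of the inequality; the $\partial\phi$-flux contributions on $\Gamma$ are absorbed via \cref{lem:mor-proof-3}.

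Finally, the bulk zeroth-order control obtained from the modified Morawetz identity is of the form $r^{-2}\phi^2$, which does not dominate the desired $r^{-1-\eta}\phi^2$. To upgrade, I would apply the Hardy identity \eqref{eq:hardy_h} with $\alpha = -1-\eta$ on each constant-$v$ slice of $\mathcal R_{\ge\Lambda}$ (using $(\bar r - M) \sim r$ here), trading $r^{-1-\eta}\phi^2$ bulk against the already-controlled $r^{1-\eta}|\partial_u\phi|^2$ bulk, plus a $\Gamma$-boundary term $\phi^2|_\Gamma$ (which, after integrating in $v$, equals $\int_\Gamma\phi^2\,ds$ and is already controlled on the left-hand side by the argument above) and an outer-boundary $\phi^2$-term bounded by the Kodama flux on the outer null segment. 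The main obstacle in this plan is the cross-term absorption, which requires a careful interplay between the parameters $R$, $r_*$, and $\eta$ to simultaneously maintain positivity of the zeroth-order bulk and successfully carry out the Young absorption at the threshold $r = r_*$, while keeping the $\ve$-small error terms integrable in $\tau$ at the sharp rate $\tau_1^{-4+2\delta}$.
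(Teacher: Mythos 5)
Your overall strategy is essentially the paper's: apply the modified Morawetz identity \eqref{eq:mor_back2} on $\mathcal R_{\ge\Lambda}$ with $f = -1 + R^{-1}\chi r^{-\eta}$, estimate bulks, handle null fluxes via the Kodama estimate, and extract $\int_\Gamma\phi^2\,ds$ from the $\Gamma$-flux. The cross-term absorption you flag as the main obstacle is handled in the paper with a single parameter $R$ by splitting at $r = R^2$: for $r \le R^2$ the cross term is controlled by $R^4$ times the Lemma 6.8 bulk, and for $r \ge R^2$ it is absorbed into $F_1$; the $\ve^{3/2}$-error from $\tilde\lambda, \tilde\nu$ is treated exactly as you describe and contributes $\ve^3\tau_1^{-4+2\delta}$. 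So that step is fine.

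Where you genuinely diverge is in the treatment of the zeroth-order bulk, and your analysis there contains a miscalculation. You assert that the coefficient $2(rf''\lambda + 2(rf'-f)\kappa\varkappa)(-\nu)$ is dominated by its $-1$-part piece $4\kappa\varkappa(-\nu)\sim Mr^{-2}$. In fact, for $r\ge 2R$ the $rf''\lambda$ term contributes $2R^{-1}\eta(\eta+1)r^{-1-\eta}\lambda\,(-\nu)$, which for large $r$ is strictly larger than $r^{-2}$ since $\eta<1$. This term is not an error but precisely the mechanism by which the perturbation $R^{-1}\chi r^{-\eta}$ was chosen: the paper shows directly that $Z\gtrsim R^{-1}r^{-1-\eta}$ on all of $\mathcal R_{\ge\Lambda}$ (using $\varkappa\sim r^{-2}$ for $r\le R$, the $f''$ term for $r\ge 2R$, and a balance in the transition region), so the $r^{-1-\eta}\phi^2$ bulk appears immediately with no further work. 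Having missed this, you propose an extra step: upgrading the $r^{-2}\phi^2$ bulk to $r^{-1-\eta}\phi^2$ via the Hardy identity \eqref{eq:hardy_h} with $\alpha=-1-\eta$ on constant-$v$ slices. This workaround does close: with $\alpha+1=-\eta<0$, the outer boundary term has the favorable sign, the $\Gamma$-boundary term is $\sim\phi^2|_\Gamma$ and is already on the left-hand side after the $\Gamma$-flux argument, and the $r^{1-\eta}(\partial_u\phi)^2$ bulk on the other side of the Hardy is controlled by $F_1$ (for $r\ge 2R$) together with a large multiple of Lemma 6.8 (for $\Lambda\le r\le 2R$). So the proof you outline would reach the conclusion — but at the cost of an additional Hardy trade and an unnecessary second threshold $r_*$; the paper's observation that $Z$ already carries the $r^{-1-\eta}$ weight eliminates both.
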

\begin{proof}
We integrate \eqref{eq:mor_back2} over $\mathcal R\cap \{r\ge\Lambda\}$ with the choice $f(r)=-1+R^{-1}\chi(r) r^{-\eta}$, where $R\ge\Lambda$ is a large constant to be determined and $\chi$ is a cutoff function such that $\chi(r)=0$ for $r\le R$, $\chi(r)=1$ for $r\ge 2R$, and such that $|\chi'|\les R^{-1}$ and $|\chi''|\les R^{-2}$.

By \cref{lem:mor-proof-2}, we have
\begin{equation}\label{eq:mor-aux-5}
        \iint_{\mathcal R_{\ge \Lambda}} (F_1+F_2+Z\phi^2)\,(-\nu)dudv -\int_\Gamma (n_\Gamma^uj^u+n_\Gamma^vj^v)ds= \int_\mathrm{IV}j^u\,dv-\int_\mathrm{II}j^u\,dv+\int_\mathrm{III}j^v\,du,
    \end{equation} 
    where the bulk terms are given by 
    \begin{align*}
    F_1&\doteq R^{-1}r^{1-\eta}(\eta\chi -\chi' r)\left((\partial_v\phi)^2+ \lambda\frac{(\partial_u\phi)^2}{-\nu}\right),\\
          F_2&\doteq 2(1-R^{-1}\chi r^{-\eta})r(\lambda-\nu-2)\left(\frac{\partial_u\phi}{-\nu}\right)\partial_v\phi,\\
            Z&\doteq  R^{-1}\lambda\left(\chi'' r^{1-\eta}-2\eta \chi'r^{-\eta}+\eta(\eta+1)\chi r^{-1-\eta}\right)+2\left(1+R^{-1}\chi' r^{1-\eta}-(\eta+1)R^{-1}\chi r^{-\eta}\right)2\kappa\varkappa,
    \end{align*}
    and the flux terms are given by 
    \begin{equation*}
        j^u\doteq r^2f(\partial_v\phi)^2+rf'\lambda\phi^2+\lambda f\phi^2+2rf\phi\partial_v\phi,\quad  j^v\doteq -r^2f(\partial_u\phi)^2-rf'\nu\phi^2+\nu f\phi^2+2rf\phi\partial_u\phi.
    \end{equation*}

\textsc{Estimate for the zeroth order bulk}: We claim that 
\begin{equation}\label{eq:Mor-aux-6}
 Z=  2\kappa\varkappa+\eta(\eta+1)R^{-1}r^{-1-\eta}\chi -2R^{-1}(\eta+1)\kappa\varkappa r^{-\eta}\chi + 2R^{-1}(\kappa\varkappa r^{1-\eta}-\eta r^{-\eta})\chi'+R^{-1}r^{1-\eta}\chi'' \gtrsim R^{-1}r^{-1-\eta}
\end{equation}
on $\mathcal R_{\ge \Lambda}$ for $R$ sufficiently large. For $r\le R$, only the first term is nonzero and since $\varkappa\sim r^{-2}$ for $r\ge\Lambda$, we have $Z\gtrsim R^{-1}r^{-1}\ge R^{-1}r^{-1-\eta}$. For $R\le r\le 2R$, we estimate
\begin{equation*}
    Z\gtrsim r^{-2}+R^{-1}  r^{-1-\eta} \chi - R^{-2-\eta} \gtrsim r^{-2} \gtrsim R^{-1}r^{-1-\eta}
\end{equation*}
for $R$ sufficiently large. For $r\ge 2R$, only the first two terms are nonzero and \eqref{eq:Mor-aux-6} is evidently true. We therefore have
\begin{equation}
    \iint_{\mathcal R_{\ge \Lambda}} Z\phi^2\,(-\nu) dudv\gtrsim R^{-1}\iint_{\mathcal R_{\ge \Lambda}}r^{-1-\eta}\phi^2\,(-\bar\nu)dudv.\label{eq:Mor-aux-7}
\end{equation}

\textsc{Estimates for the first order bulks}: Let $
    F_0\doteq r^{-2}\left((\partial_u\phi)^2+(\partial_v\phi)^2\right). $
This quantity has the significance that its integral over $\mathcal R\cap \{r\ge\Lambda\}$ can be estimated by \cref{lem:mor-proof-1}. It is clear that $F_1\gtrsim R^{-1}r^{1-\eta}\left((\partial_u\phi)^2+(\partial_v\phi)^2\right)$ for $r\ge 2R$. For $F_2$, we estimate
\begin{equation*}
    |F_2|\les r|\lambda-\nu-2||\partial_u\phi||\partial_v\phi|\le r\left(|\bar\lambda-\bar\nu-2|+|\tilde\lambda|+|\tilde\nu|\right)|\partial_u\phi||\partial_v\phi|\les \left(1+A^2\ve^{3/2} r\tau^{-2+\delta}\right)|\partial_u\phi||\partial_v\phi|
\end{equation*}
using  \eqref{eq:lambda-difference} and \eqref{eq:nu-aux-2}. For $r\le R^2$, we have $ |\partial_u\phi||\partial_v\phi|\les R^4F_0$ and for $r\ge R^2$, we have 
\begin{equation*}
    |\partial_u\phi||\partial_v\phi|\le R^{-2+2\eta}r^{1-\eta}|\partial_u\phi||\partial_v\phi| \les R^{-1+\eta}F_1.
\end{equation*}
Therefore, for any $b>0$ we may choose $R$ sufficiently large that
\begin{equation*}
    |F_2|\les R^4 F_0 + b \textbf 1_{\{r\ge 2R\}} F_1 + A^2\ve^{3/2} r\tau^{-2+\delta}|\partial_u\phi||\partial_v\phi|.
\end{equation*}
Choosing $b$ sufficiently small to absorb the middle term, we therefore have that
\begin{multline}\label{eq:mor-aux-12}
    \iint_{\mathcal R_{\ge \Lambda}}(F_1+F_2)\,(-\nu)dudv\gtrsim R^{-3}\iint_{\mathcal R_{\ge \Lambda}}r^{1-\eta}\left((\partial_u\phi)^2+(\partial_v\phi)^2\right)dudv \\ -R^4\cdot(\text{LHS of \eqref{eq:Mor-compact}})-A^2\ve^{3/2}\iint_{\mathcal R_{\ge \Lambda}} r\tau^{-2+\delta}|\partial_u\phi||\partial_v\phi|\,dudv
\end{multline} by \cref{lem:mor-proof-1}
for $R$ sufficiently large. At this point, we fix $R$ and treat it as an implicit constant. For the final term on the right-hand side of this inequality, we use Young's inequality with a parameter $c>0$ and the bootstrap assumptions for $\phi$ to estimate
\begin{align*}
     \iint_{\mathcal R_{\ge \Lambda}} r\tau^{-2+\delta}|\partial_u\phi||\partial_v\phi|\,dudv &\les c\iint_{\mathcal R_{\ge \Lambda}} (\partial_u\phi)^2\,dudv + c^{-1}\iint_{\mathcal R_{\ge \Lambda}}\tau^{-4+2\delta}r^2(\partial_v\phi)^2\,dudv \\ & \les c\iint_{\mathcal R_{\ge \Lambda}} (\partial_u\phi)^2\,dudv +c^{-1}\int_{\tau_1}^{\tau_2} \tau^{-4+2\delta}\mathcal E_0(\tau)\,d\tau \\
     & \les c\iint_{\mathcal R_{\ge \Lambda}} (\partial_u\phi)^2\,dudv + c^{-1}A\ve^2 \tau_1^{-6+3\delta}.
\end{align*}
Choosing $c$ sufficiently small, we can absorb the first term on the right-hand side into the good bulk on the right-hand side of \eqref{eq:mor-aux-12} and conclude
   \begin{equation}\label{eq:mor-aux-13}
           \iint_{\mathcal R_{\ge \Lambda}}(F_1+F_2)\,(-\nu)dudv\gtrsim \iint_{\mathcal R_{\ge \Lambda}}r^{1-\eta}\left((\partial_u\phi)^2+(\partial_v\phi)^2\right)dudv -(\text{LHS of \eqref{eq:Mor-compact}})-\ve^3 \tau_1^{-6+3\delta}.
   \end{equation}

\textsc{Estimates for the null fluxes}: As in the proof of \cref{lem:mor-proof-1}, the terms on the right-hand side of \eqref{eq:mor-aux-5} are all estimated using \cref{prop:Kodama-estimate} by 
\begin{equation}\label{eq:Mor-aux-8}
   \text{RHS of \eqref{eq:mor-aux-5}}\les \int_\mathrm{I} \left((\bar r-M)^2\frac{(\partial_u\phi)^2}{-\bar\nu}-\bar\nu\phi^2\right)du + \int_\mathrm{II}\left(r^2(\partial_v\phi)^2+\phi^2\right)dv+ \ve^3\tau(u_1,v_1)^{-3+\delta}.
\end{equation}

\textsc{Estimate for the flux along $\Gamma$}: On $\Gamma$, we have by definition
\begin{equation*}
     j^v|_\Gamma= r^2(\partial_u\phi)^2-\nu \phi^2-2r\phi\partial_u\phi,\quad j^u|_\Gamma=-r^2(\partial_v\phi)^2-\lambda \phi^2-2r\phi\partial_v\phi.
\end{equation*}
Therefore, using again that $n_\Gamma^u\sim -n_\Gamma^v\sim 1$, we have 
\begin{equation}\label{eq:mor-aux-9}
    -\int_\Gamma (n_\Gamma^uj^u+n_\Gamma^vj^v)ds\gtrsim \int_\Gamma \phi^2\,ds - \int_\Gamma\big((\partial_u\phi)^2+(\partial_v\phi)^2\big)ds.
\end{equation}

Combining \eqref{eq:mor-aux-5}, \eqref{eq:Mor-aux-7}, \eqref{eq:mor-aux-13}, \eqref{eq:Mor-aux-8}, \eqref{eq:mor-aux-9}, and \eqref{eq:Mor-Gamma-1}, we conclude \eqref{eq:mor-aux-11}.
\end{proof}

\begin{proof}[Proof of \cref{prop:Morawetz}]
    Combine \eqref{eq:Hardy-bulk-1}, \eqref{eq:Mor-compact}, and \eqref{eq:mor-aux-11}.
\end{proof}

\subsection{The \texorpdfstring{$\mathcal H^+$}{H+}-localized hierarchy}\label{sec:horizon-hierarchy}

In this section, we prove the $(\bar r-M)^{2-p}$-hierarchy of weighted energy estimates in the near-horizon region. For the linear wave equation on extremal Reissner--Nordstr\"om, this hierarchy was introduced by Aretakis for $p=0,1$, and $2$ in \cite{Aretakis-instability-1} and for $p\in[0,3)$ (for the $\ell=0$ mode) by the first-named author, Aretakis, and Gajic in \cite{angelopoulos2020late}. For the following statement, recall the notations $p_\star$ and $c_{p_\star}$  from \cref{def-of-energies}.

\begin{prop}\label{prop:H-h-1}
    For any $p\in[\delta,3-\delta]$, $A\ge 1$, $\ve$ sufficiently small, $\tau_f\in\mathfrak B$, $\alpha\in\mathfrak A_{I(\tau_f)}$, $(u_1,v_1),(u_2,v_2)\in\Gamma\cap\mathcal D_{\tau_f}$,  $u_2'>u_2$, and $v_2'>v_2$,  where $(u_2,v_2)$ is to the future of $(u_1,v_1)$, it holds that 
    \begin{multline}\label{eq:horizon-1}
        \int_\mathrm{V}(\bar r-M)^{2-p} \left(\frac{(\partial_u\psi)^2}{-\bar\nu}+\frac{(\partial_u\phi)^2}{-\bar\nu}\right)du +\int_\mathrm{V}c_{p_\star}(\bar r-M)^{-p_\star}\phi^2\,(-\bar\nu)du\\  +\iint_{\mathcal R_{\le\Lambda}} (\bar r-M)^{3-p} \left(\frac{(\partial_u\psi)^2}{-\bar\nu}+\frac{(\partial_u\phi)^2}{-\bar\nu}\right)dudv+\iint_{\mathcal R_{\le\Lambda}}c_{(p-1)_\star}(\bar r-M)^{-(p-1)_\star}\phi^2\,(-\bar\nu)dudv\\
        \les  \int_\mathrm{I}(\bar r-M)^{2-p} \left(\frac{(\partial_u\psi)^2}{-\bar\nu}+\frac{(\partial_u\phi)^2}{-\bar\nu}\right)du +\int_\mathrm{I}\phi^2\,(-\bar\nu)du+\int_\mathrm{II}\left(r^2(\partial_v\phi)^2+\phi^2\right)dv+\ve^3\tau_1^{-4+2\delta+p}.
    \end{multline}
\end{prop}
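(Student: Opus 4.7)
The plan is to run a weighted multiplier argument on the radiation field $\psi = r\phi$, which satisfies $\partial_u\partial_v\psi = 2\kappa\nu\varkappa\phi$ (cf.~\eqref{eq:wave-equation-psi}). The fundamental geometric observation making the hierarchy work at the extremal horizon is that $\bar\lambda = (\bar r - M)^2/\bar r^2$ on the background, so the weight $(\bar r-M)^{2-p}/(-\bar\nu)$ — which equals $\bar r^2(\bar r-M)^{-p}$ because $-\bar\nu=(\bar r-M)^2/\bar r^2$ — has a $\partial_v$-derivative with sign generating a favorable bulk contribution near the horizon. Concretely, I would multiply the wave equation for $\psi$ by $\frac{(\bar r-M)^{2-p}}{-\bar\nu}\partial_u\psi$ to obtain the identity
\begin{equation*}
\partial_v\!\left[\frac{(\bar r-M)^{2-p}}{-\bar\nu}(\partial_u\psi)^2\right] - \partial_v\!\left[\frac{(\bar r-M)^{2-p}}{-\bar\nu}\right](\partial_u\psi)^2 = 2\,\frac{(\bar r-M)^{2-p}}{-\bar\nu}\,\partial_u\psi\cdot 2\kappa\nu\varkappa\phi,
\end{equation*}
and integrate over the near-horizon region $\mathcal R_{\le\Lambda}$.

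Integration by parts produces: (i) on the top and bottom segments $\mathrm V$ and $\mathrm I$, the main $(\bar r-M)^{2-p}(\partial_u\psi)^2/(-\bar\nu)$ fluxes that appear in \eqref{eq:horizon-1}; (ii) on the portion of $\Gamma$ bordering $\mathcal R_{\le\Lambda}$, a flux term with $(\partial_u\psi)^2$ which, because $\bar r-M\sim 1$ there, is controlled by $\int_\mathrm{IV}(r^2(\partial_v\phi)^2+\phi^2)dv$ coming from the Kodama estimate \cref{prop:Kodama-estimate} applied with $(u_1,v_1),(u_2,v_2)$ shifted accordingly (and reduces further to the $\mathrm{I},\mathrm{II}$ fluxes on the RHS of \eqref{eq:horizon-1} plus the error $\ve^3\tau_1^{-4+2\delta}$); (iii) the key bulk $-\partial_v\bigl[\bar r^2(\bar r-M)^{-p}\bigr](\partial_u\psi)^2$ which, after a short computation using $\bar\lambda=(\bar r-M)^2/\bar r^2$, yields
\begin{equation*}
\iint_{\mathcal R_{\le\Lambda}} \frac{p\bar r - (2-p)(\bar r-M)}{\bar r\,(\bar r-M)^{p-1}}(\partial_u\psi)^2\,dudv \gtrsim \iint_{\mathcal R_{\le\Lambda}}(\bar r-M)^{3-p}\frac{(\partial_u\psi)^2}{-\bar\nu}\,dudv
\end{equation*}
for all $p\in(0,3)$ (the coefficient near $\bar r=M$ is $\sim p/M>0$), which is exactly the main bulk on the left of \eqref{eq:horizon-1}.

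The forcing term on the right of the multiplier identity is estimated using $\bar\varkappa = M(\bar r-M)/\bar r^3$ and the bootstrap bound $|\tilde\varkappa|\lesssim A\ve^{3/2}\bar r^{-2}\tau^{-3+\delta}+\cdots$ (\cref{lem:varkappa}) together with $\kappa\sim 1$, $\nu/\bar\nu\sim 1$, so that schematically $|2\kappa\nu\varkappa|\lesssim (\bar r-M)^3+\ve^{3/2}\tau^{-3+\delta}(\bar r-M)^2$; by Young's inequality, a small fraction of the good bulk absorbs the $(\partial_u\psi)^2$ contribution, leaving an integral of $\phi^2$ with the correct weight to be handled by the Hardy identity \eqref{eq:hardy_h} and the bootstrap assumptions on $\underline{\mathcal E}{}_p$, producing the $\ve^3\tau_1^{-4+2\delta+p}$ error. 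The nonlinear errors coming from using $-\bar\nu$ instead of $-\nu$ in the weight are absorbed similarly using the bootstrap bound $|\nu/\bar\nu-1|\lesssim A^3\ve^{3/2}\tau^{-1+\delta}$.

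The passage from $\partial_u\psi$ to $\partial_u\phi$ control uses $\partial_u\psi=\nu\phi+r\partial_u\phi$, giving $(\partial_u\phi)^2\lesssim r^{-2}(\partial_u\psi)^2+r^{-2}\bar\nu^2\phi^2$; since $\bar\nu^2/(-\bar\nu)=-\bar\nu\sim(\bar r-M)^2$ near horizon, this costs only a $\phi^2$ term with weight matching the Hardy-compatible zeroth-order terms in $\underline{\mathcal E}{}_p^{\tau_f}$. The zeroth-order bulk $(\bar r-M)^{-(p-1)_\star}\phi^2(-\bar\nu)$ on the LHS and the $p$-dependent flux weight $(\bar r-M)^{-p_\star}\phi^2(-\bar\nu)$ on $\mathrm V$ are obtained by applying the Hardy identity \eqref{eq:hardy_h} with $f=\phi$ and $\alpha$ adjusted to the regime ($\alpha=-p$ for $p\in(0,1)$, respectively $\alpha=2-p$ for $p\in[1,3)$), which explains the dichotomy in the definition of the flux and the need for the $\star$-convention. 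The main obstacle will be the case where $p$ is close to $3$: here the bulk weight $(\bar r-M)^{3-p}$ degenerates to $1$, the estimate becomes a genuine non-degenerate integrated energy decay statement at the horizon, and both the Hardy coefficients and the absorption of cross terms from the forcing $\phi\partial_u\psi$ sit at their critical exponents, so the constants in the Young's-inequality splitting must be chosen with some care to avoid log-loss.
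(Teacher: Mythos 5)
Your approach is the same one the paper uses (integrating $\partial_v\bigl[(\bar r-M)^{2-p}(\partial_u\psi)^2/(-\bar\nu)\bigr]$ over $\mathcal R_{\le\Lambda}$, then using Hardy inequalities to pick up the zeroth-order terms), but the central step as you've stated it contains a gap that the paper plugs with the Morawetz estimate.

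The key issue is the claim that the bulk term
\begin{equation*}
-\partial_v\!\left[\frac{(\bar r-M)^{2-p}}{-\bar\nu}\right](\partial_u\psi)^2
\end{equation*}
is ``$\gtrsim (\bar r-M)^{3-p}(\partial_u\psi)^2/(-\bar\nu)$ for all $p\in(0,3)$.'' The correct coefficient is
\begin{equation*}
-\partial_v\!\left[\bar r^2(\bar r-M)^{-p}\right]=\frac{2M+(p-2)\bar r}{\bar r}\,(\bar r-M)^{1-p}
\end{equation*}
(your numerator $p\bar r-(2-p)(\bar r-M)$ is off by $p(\bar r-M)$, though both equal $pM$ at $\bar r=M$, so this algebra slip is benign). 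The genuine problem is that $2M+(p-2)\bar r$ is \emph{only} positive for $\bar r$ sufficiently close to $M$; e.g.\ at $p=1$, $\bar r=\Lambda=100M_0$ it equals $-98M_0$. So the sign of the bulk is favorable near $\mathcal H^+$ but unfavorable over a fixed portion of $\mathcal R_{\le\Lambda}$, and the estimate cannot close as written. The paper fixes this by adding a large multiple of the Morawetz estimate \cref{prop:Morawetz}, whose bulk $(\bar r-M)^4(\partial_u\phi)^2/(-\bar\nu)$ plus $(\bar r-M)^2\phi^2(-\bar\nu)$ is positive and nondegenerate on the set where $2M+(p-2)\bar r$ goes negative (there $\bar r-M$ is bounded above and below, and $(\partial_u\psi)^2\lesssim(\partial_u\phi)^2+\phi^2$). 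This invocation of \cref{prop:Morawetz} is indispensable and is missing from your outline.

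Two smaller remarks. First, the flux on $\Gamma$ in your step (ii) is estimated via the Morawetz estimate, specifically \eqref{eq:Mor-Gamma-1} and \eqref{eq:mor-aux-11}, not via the Kodama boundedness estimate \cref{prop:Kodama-estimate}, which only controls null-cone fluxes. Second, in estimating the forcing term you should split $\varkappa=\bar\varkappa+\tilde\varkappa$ before applying Young's inequality: the background part $\bar\varkappa\sim(\bar r-M)$ contributes an extra power of $(\bar r-M)$ and is absorbed by the bulk plus a Hardy inequality, while $\tilde\varkappa$ carries the small $\ve^{3/2}\tau^{-2+\delta}$ prefactor and is handled via the bootstrap bound on $\underline{\mathcal E}{}_p$ — treating $\varkappa$ uniformly loses the degeneracy structure that makes the weights match up.
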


The proof of this estimate will be given in \cref{sec:hh-proof} below after a series of lemmas. 

\subsubsection{Near-horizon Hardy inequalities II}\label{sec:Hardy-2}

We will again require several inequalities to handle lower order terms. 

\begin{lem}\label{lem:horizon-Hardy-1} Let $\alpha>-1$.  With hypotheses as in \cref{prop:H-h-1}, it holds that
\begin{equation}
       (\alpha+1)^2 \iint_{\mathcal R_{\le\Lambda}} (\bar r-M)^\alpha\psi^2\,(-\bar\nu)dudv\les \iint_{\mathcal R_{\le\Lambda}} (\bar r-M)^{\alpha+2}\frac{(\partial_u\psi)^2}{-\bar\nu}\,dudv+\int_\Gamma \phi^2\,ds\label{eq:Hardy-h-3},
\end{equation}
  \begin{equation}(\alpha+1)^2 \int_{\underline S{}_v}(\bar r-M)^\alpha\psi^2\,(-\bar\nu)du\les_\alpha \int_{\underline S{}_v}\left((\bar r-M)^{\alpha+2}\frac{(\partial_u\psi)^2}{-\bar\nu}+(\bar r-M)^2\frac{(\partial_u\phi)^2}{-\bar\nu}-\bar\nu\phi^2\right)du, \label{eq:Hardy-h-5}
   \end{equation}
   where $\underline S{}_v$ is defined in \eqref{eq:underline-S-v}.
\end{lem}
\begin{proof}
    To prove \eqref{eq:Hardy-h-3}, apply \eqref{eq:hardy_h} with $f=\psi$ and argue as in the proof of \cref{lem:horizon-hardy-1}. To prove \eqref{eq:Hardy-h-5}, let $\chi=\chi(r)$ be a cutoff function such that $\chi(r)=1$ for $r\le \frac 13\Lambda$ and $\chi(r)=0$ for $r\ge \frac 23\Lambda$ and apply \eqref{eq:hardy_h} to $f=\chi\psi$. The cutoff kills the boundary term along $\Gamma$ (which has an unfavorable sign) and by adding $       \int_{\underline S{}_v}\phi^2\,(-\bar\nu)du$    to both sides of the resulting inequality, we conclude \eqref{eq:Hardy-h-5}.
\end{proof}

\begin{lem} With hypotheses as in \cref{prop:H-h-1}, it holds that
    \begin{equation}
         \int_{\underline{S}{}_v} (\bar r-M)^{2-p}\frac{(\partial_u\phi)^2}{-\bar\nu}\,du\les    \int_{\underline{S}{}_v} (\bar r-M)^{2-p}\left(\frac{(\partial_u\phi)^2}{-\bar\nu}-\bar\nu\psi^2\right)du ,
         \label{eq:Hardy-h-4}
    \end{equation}
    \begin{equation}
         \iint_{\mathcal R_{\le\Lambda}} (\bar r-M)^{2-p}\frac{(\partial_u\phi)^2}{-\bar\nu}\,dudv\les    \iint_{\mathcal R_{\le\Lambda}} (\bar r-M)^{2-p}\left(\frac{(\partial_u\phi)^2}{-\bar\nu}-\bar\nu\psi^2\right)dudv.
         \label{eq:Hardy-h-6}
    \end{equation}
\end{lem}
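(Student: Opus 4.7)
\textbf{Proof plan for \cref{eq:Hardy-h-4} and \cref{eq:Hardy-h-6}.} The key is to pass between $\partial_u\phi$ and $\partial_u\psi$ using the defining relation $\psi = r\phi$, which gives $\partial_u\psi = \nu\phi + r\partial_u\phi$, and equivalently
\begin{equation*}
\partial_u\phi = \frac{1}{r}\partial_u\psi - \frac{\nu}{r^2}\psi.
\end{equation*}
Squaring and using the elementary inequality $(a+b)^2 \le 2(a^2+b^2)$ yields the pointwise bound
\begin{equation*}
(\partial_u\phi)^2 \lesssim \frac{1}{r^2}(\partial_u\psi)^2 + \frac{\nu^2}{r^4}\psi^2.
\end{equation*}
The content of the two inequalities under consideration is that on the near-horizon region $\mathcal R_{\le\Lambda}$, the two weighted flux densities $(\bar r-M)^{2-p}(\partial_u\phi)^2/(-\bar\nu)$ and $(\bar r-M)^{2-p}(\partial_u\psi)^2/(-\bar\nu)$ are equivalent modulo the zeroth order term $(\bar r-M)^{2-p}(-\bar\nu)\psi^2$; we then just multiply by $(\bar r-M)^{2-p}$ and integrate either over $\underline S_v$ or over $\mathcal R_{\le\Lambda}$.

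The execution is short. On $\mathcal R_{\le\Lambda}$, the area-radius satisfies $r\sim 1$ (since $M \le r \le \Lambda \le 100M_0$), so all powers of $r$ are bounded above and below by constants depending only on $M_0$. The geometric bootstrap assumption \cref{eq:boot-nu}, already used throughout \cref{sec:initial-estimates}, gives $\nu \sim \bar\nu$ on $\mathcal D_{\tau_f}$, and hence
\begin{equation*}
\frac{\nu^2}{-\bar\nu} \sim -\bar\nu.
\end{equation*}
Dividing the pointwise bound on $(\partial_u\phi)^2$ by $-\bar\nu > 0$ therefore yields
\begin{equation*}
\frac{(\partial_u\phi)^2}{-\bar\nu} \lesssim \frac{(\partial_u\psi)^2}{-\bar\nu} + (-\bar\nu)\psi^2,
\end{equation*}
and the reverse pointwise bound holds by applying the same reasoning to $\partial_u\psi = \nu\phi + r\partial_u\phi$ together with $\phi^2 \sim \psi^2$ on $\mathcal R_{\le\Lambda}$. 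Multiplying by $(\bar r-M)^{2-p}$ (which is pointwise nonnegative) and integrating over $\underline S_v$ gives \cref{eq:Hardy-h-4}, while integrating over $\mathcal R_{\le\Lambda}$ gives \cref{eq:Hardy-h-6}.

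\textbf{Expected difficulties.} There is essentially no obstacle: the proof is purely algebraic, and the only inputs are the definition of the radiation field and the already-established pointwise bounds $r\sim 1$ and $\nu \sim \bar\nu$ on the near-horizon region. Unlike the earlier Hardy-type identity \cref{eq:hardy_h}, no integration by parts is needed and no boundary contributions at $\Gamma$ arise, because the zeroth order term $(-\bar\nu)\psi^2$ with its favorable weight $(\bar r-M)^{2-p}$ is kept on the right-hand side rather than absorbed; this is why these bounds are clean and do not cost a $\Gamma$-flux, in contrast with \cref{eq:Hardy-bulk-1} and \cref{eq:Hardy-h-3}.
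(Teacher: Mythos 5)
Your proof is correct and takes the same approach as the paper, which simply states the pointwise estimate $(\partial_u\phi)^2\les(\partial_u\psi)^2+\bar\nu^2\psi^2$ on $\mathcal R_{\le\Lambda}$ as ``trivial'' and cites $\psi=r\phi$, $r\sim 1$, and $\nu\sim\bar\nu$ implicitly; you have just unpacked that one line. You also correctly read through the two evident typos in the lemma as printed (the right-hand sides should carry $(\partial_u\psi)^2$, not $(\partial_u\phi)^2$, and the exponent in \eqref{eq:Hardy-h-6} should be $2-p$, not $2-2$, as the printed version would be vacuous); with those corrections your argument matches the paper's intent exactly.
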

\begin{proof} This follows immediately from the trivial estimate
$(\partial_u\phi)^2\les (\partial_u\psi)^2+\bar\nu^2\psi^2$
    in $\mathcal R_{\le\Lambda}$.
\end{proof}

\subsubsection{Proof of the \texorpdfstring{$(\bar r-M)^{2-p}$}{(bar r-M) 2-p} estimates}\label{sec:hh-proof}

We begin with a simple consequence of Stokes' theorem in the plane.

\begin{lem}\label{lem:hh-proof-1}
    Let $j^u,j^v$, and $Q$ be smooth functions on $\mathcal R_{\le \Lambda}$ satisfying $\partial_uj^u+\partial_vj^v=Q$. Then the divergence identity 
    \begin{equation*}
        \iint_{\mathcal R_{\le \Lambda}} Q\,dudv= \int_\mathrm{V}j^v\,du-\int_\mathrm{I}j^v\,du+\int_\mathrm{VI}j^u\,dv-\int_\Gamma (n_\Gamma^uj^u+n_\Gamma^vj^v)ds
    \end{equation*} holds,
    where 
    \begin{equation*}
         n_\Gamma \doteq \big((\dot\Gamma^u)^2+(\dot\Gamma^v)^2\big)^{-1/2}(\dot\Gamma^v\partial_u-\dot\Gamma^u\partial_v) 
    \end{equation*}
    is as in \cref{lem:mor-proof-2}.
\end{lem}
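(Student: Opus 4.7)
The proof is a direct application of Green's theorem in the $(u, v)$-plane to the 1-form
$$\omega \doteq j^u\,dv - j^v\,du,$$
whose exterior derivative is $d\omega = (\partial_u j^u + \partial_v j^v)\,du \wedge dv = Q\,du \wedge dv$. Stokes' theorem then gives
$$\iint_{\mathcal R_{\le\Lambda}} Q\,du\,dv = \oint_{\partial \mathcal R_{\le\Lambda}} (j^u\,dv - j^v\,du),$$
with the boundary traversed counterclockwise in the $(u, v)$-plane. The entire task reduces to identifying the pieces of $\partial \mathcal R_{\le\Lambda}$ and bookkeeping orientations.

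First I would identify the outer boundary of $\mathcal R_{\le\Lambda}$. Since $\partial_u r = \nu < 0$ and $\partial_v r = \lambda > 0$ (by \cref{lem:basic-consequences}), and since $r = \Lambda$ along $\Gamma$, the region $\{r \le \Lambda\} \cap \mathcal R$ lies on the side of $\Gamma$ with larger $u$ and smaller $v$. Combined with the staircase shape of $\mathcal R = [u_1, u_2]\times [v_1, v_2'] \cup [u_1, u_2']\times [v_1, v_2]$ and the fact that $\Gamma$ is uniformly close to the diagonal by \cref{lem:gamma-slope}, the outer boundary of $\mathcal R_{\le\Lambda}$ consists of exactly three null segments: $\mathrm{I} \subset \{v = v_1\}$, $\mathrm{VI} \subset \{u = u_2'\}$, and $\mathrm{V} \subset \{v = v_2\}$, together with the timelike arc $\Gamma$ running from $(u_1,v_1)$ to $(u_2, v_2)$.

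Traversing $\partial \mathcal R_{\le\Lambda}$ counterclockwise starting at $(u_1, v_1)$, one goes along $\mathrm{I}$ with $u$ increasing at $v = v_1$ (so $dv = 0$, contributing $-\int_{\mathrm{I}} j^v\,du$), then along $\mathrm{VI}$ with $v$ increasing at $u = u_2'$ (so $du = 0$, contributing $+\int_{\mathrm{VI}} j^u\,dv$), then along $\mathrm{V}$ with $u$ decreasing from $u_2'$ to $u_2$ at $v = v_2$ (contributing $-\int_{u_2'}^{u_2} j^v\,du = +\int_{\mathrm{V}} j^v\,du$), and finally along $\Gamma$ from $(u_2, v_2)$ back to $(u_1, v_1)$.

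For the $\Gamma$-contribution, I would use the general identity $j^u\,dv - j^v\,du = (j^u\,\tilde n^u + j^v\,\tilde n^v)\,ds$, where $\tilde n$ is the Euclidean unit vector obtained by rotating the counterclockwise unit tangent by $90^\circ$ clockwise, and $ds$ is the Euclidean arc length. Along $\Gamma$ the counterclockwise traversal (for $\mathcal R_{\le\Lambda}$) runs from $(u_2,v_2)$ to $(u_1,v_1)$, i.e., opposite to $\dot\Gamma$, so the resulting $\tilde n$ is the outward normal to $\mathcal R_{\le\Lambda}$, which points into $\mathcal R_{\ge\Lambda}$ and is thus $-n_\Gamma$ with $n_\Gamma$ as defined in the statement. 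This sign flip produces the $-\int_\Gamma(n_\Gamma^u j^u + n_\Gamma^v j^v)\,ds$ term and completes the identity. There is no real obstacle here—the only care needed is tracking the orientation of $n_\Gamma$ (matching the convention used for $\mathcal R_{\ge\Lambda}$ in \cref{lem:mor-proof-2}) relative to $\mathcal R_{\le\Lambda}$.
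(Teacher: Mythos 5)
Your proof is correct and is precisely the Stokes'/Green's theorem argument the paper has in mind (the paper states \cref{lem:hh-proof-1} without proof as a ``simple consequence of Stokes' theorem in the plane,'' mirroring \cref{lem:mor-proof-2}). You identify the boundary pieces correctly, the counterclockwise traversal of $\mathcal R_{\le\Lambda}$ does run I $\to$ VI $\to$ V $\to$ $\Gamma$ in the $(u,v)$-Cartesian plane, and the sign flip $\tilde n = -n_\Gamma$ along $\Gamma$ is exactly right, yielding the $-\int_\Gamma(n_\Gamma^u j^u + n_\Gamma^v j^v)\,ds$ term.
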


We now prove the main estimate of \cref{prop:H-h-1}.

\begin{lem}\label{lem:horizon-main}
    With hypotheses as in \cref{prop:H-h-1}, it holds that
    \begin{multline}\label{eq:horizon-aux-5}
   \int_\mathrm{V}(\bar r-M)^{2-p}\frac{(\partial_u\psi)^2}{-\bar\nu}\,du+ \iint_{\mathcal R_{\le\Lambda}} (\bar r-M)^{3-p}\frac{(\partial_u\psi)^2}{-\bar\nu}\,dudv \les  \int_\mathrm{I}(\bar r-M)^{2-p}\frac{(\partial_u\psi)^2}{-\bar\nu}\,du \\
        + \int_\mathrm{I} \left((\bar r-M)^2\frac{(\partial_u\phi)^2}{-\bar\nu}-\bar\nu\phi^2\right)du + \int_\mathrm{II}\left(r^2(\partial_v\phi)^2+\phi^2\right)dv+ \ve^3\tau_1^{-4+2\delta+p}
    \end{multline}
      
\end{lem}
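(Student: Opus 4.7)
The proof is a multiplier argument applied to the $\psi$-wave equation $\partial_u\partial_v\psi = 2\kappa\nu\varkappa\phi$, with weight
$$f \doteq \frac{(\bar r - M)^{2-p}}{-\bar\nu} = \frac{\bar r^2}{(\bar r - M)^p}$$
chosen so that $f(\partial_u\psi)^2$ matches the boundary integrand appearing on $\mathrm V$ and $\mathrm I$. Contracting the wave equation with $f\partial_u\psi$ and expanding gives
$$\partial_v\bigl(\tfrac12 f(\partial_u\psi)^2\bigr) = 2f\kappa\nu\varkappa\phi\,\partial_u\psi + \tfrac12(\partial_v f)(\partial_u\psi)^2.$$
Applying \cref{lem:hh-proof-1} with $j^v = \tfrac12 f(\partial_u\psi)^2$ and $j^u = 0$ --- so that the $\mathrm{VI}$ face contributes nothing, and the $\Gamma$-contribution equals $\tfrac12\int_\Gamma f\dot\Gamma^u(\partial_u\psi)^2\,d\tau \ge 0$, which we drop --- produces the master inequality
$$\tfrac12\!\int_\mathrm V\! f(\partial_u\psi)^2\,du + \tfrac12\!\iint_{\mathcal R_{\le\Lambda}}(-\partial_v f)(\partial_u\psi)^2\,du\,dv \;\le\; \tfrac12\!\int_\mathrm I\! f(\partial_u\psi)^2\,du + \iint_{\mathcal R_{\le\Lambda}} 2f\kappa\nu\varkappa\phi\,\partial_u\psi\,du\,dv.$$

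\textbf{Sign of the bulk.} Using $\partial_v\bar r = (\bar r-M)^2/\bar r^2$, one computes
$$-\partial_v f = \frac{(p-2)\bar r + 2M}{\bar r\,(\bar r-M)^{p-1}},$$
which for $p\in[2,3)$ is nonnegative throughout $\mathcal R_{\le\Lambda}$, and for $p\in(0,2)$ is positive in the region $\{\bar r < 2M/(2-p)\}$ with leading asymptotic $p(\bar r-M)^{1-p}$ as $\bar r\to M$, but may be negative and uniformly bounded in the complementary intermediate region. Since the target spacetime bulk $(\bar r-M)^{3-p}/(-\bar\nu) = \bar r^2(\bar r-M)^{1-p}$ has the same near-horizon behavior, $-\partial_v f$ recovers it up to a multiplicative constant near $\mathcal H^+$. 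In the intermediate region both $-\partial_v f$ and the target are uniformly bounded, and the offending contribution is absorbed by adding a sufficiently large multiple of the Morawetz estimate \cref{prop:Morawetz}: writing $(\partial_u\psi)^2 \lesssim r^2(\partial_u\phi)^2 + \bar\nu^2\phi^2$ and using that $(1-M/\bar r)$ is bounded below in that region, the Morawetz bulk dominates. This coupling to \cref{prop:Morawetz} imports precisely the initial-data fluxes $\int_\mathrm I((\bar r-M)^2(\partial_u\phi)^2/(-\bar\nu)-\bar\nu\phi^2)du$ and $\int_\mathrm{II}(r^2(\partial_v\phi)^2+\phi^2)dv$ onto the RHS, together with the Morawetz error $\ve^3\tau_1^{-4+2\delta}\lesssim \ve^3\tau_1^{-4+2\delta+p}$ for $p\ge 0$.

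\textbf{Source term and nonlinear errors.} On the extremal background $\bar\kappa\bar\nu\bar\varkappa = -M(\bar r-M)^3/\bar r^5$, so $|2f\bar\kappa\bar\nu\bar\varkappa|\lesssim (\bar r-M)^{3-p}/\bar r^3$. We handle $\iint 2f\bar\kappa\bar\nu\bar\varkappa\phi\partial_u\psi$ by an integration by parts in $u$ via $\phi\partial_u\psi = \nu\phi^2 + (r/2)\partial_u(\phi^2)$: the derivative shifts off $\psi$, producing a new $\phi^2$-bulk with coefficient $\partial_u(fr\bar\kappa\bar\nu\bar\varkappa)\sim (\bar r-M)^{4-p}/\bar r^4$ --- gaining one power of $\bar r-M$ over the direct Young's estimate --- plus boundary terms on I and $\Gamma$. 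This $\phi^2$-bulk is dominated by the Morawetz $\phi^2$-bulk $\sim (\bar r-M)^4\phi^2$ for $p\le 1$; for $p\in(1,3)$ we invoke the Hardy inequality \eqref{eq:Hardy-h-5} to trade $(\bar r-M)^{4-p}\phi^2$ for $(\bar r-M)^{6-p}(\partial_u\psi)^2/(-\bar\nu)$, which is subleading to the good bulk since $6-p\ge 3$, plus an initial-data term. The perturbation $\kappa\nu\varkappa-\bar\kappa\bar\nu\bar\varkappa$ is decomposed via \cref{lem:kappa-bounds,lem:nu-estimate,lem:varkappa}; each tilde quantity carries an extra $\ve^{3/2}\tau^{-q}$ factor, which when combined with $\underline{\mathcal E}{}_p(\tau)\lesssim A\ve^2\tau^{-3+\delta+p}$ and integrated using \eqref{eq:Hbarv-integral}--\eqref{eq:Hu-integral} contributes cumulatively $\lesssim \ve^3\tau_1^{-4+2\delta+p}$.

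\textbf{Main obstacle.} The principal delicacy lies in the inhomogeneous term, where the full range $p\in(0,3)$ is pushed to the edge: the three-fold vanishing of $\bar\nu\bar\varkappa \sim (\bar r-M)^3/\bar r^5$ at the extremal horizon is precisely what allows integration by parts to produce a $\phi^2$-bulk weight one power of $\bar r - M$ better than the naive Young's estimate, thereby aligning with the Hardy-controlled quantities for $p\in(1,3)$. Any weaker vanishing of $\bar\nu\bar\varkappa$ --- as in the subextremal case, where it vanishes only linearly via $\bar\varkappa|_{\mathcal H^+}\ne 0$ --- would break this matching and force one back onto a genuine redshift estimate in place of the $(\bar r-M)^{2-p}$ hierarchy. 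A secondary bookkeeping issue is that the $\tau^p$ gain in the nonlinear error, coming from the coupling to $\underline{\mathcal E}{}_p$ through the full range $p\in[0,3-\delta]$, lands exactly at the required $\tau_1^{-4+2\delta+p}$.
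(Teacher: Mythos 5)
Your first half is essentially the paper's argument verbatim: the $\partial_v$-identity on $\mathcal R_{\le\Lambda}$ with $j^v=f(\partial_u\psi)^2$ via \cref{lem:hh-proof-1}, the sign analysis of $-\partial_vf=\bigl((p-2)\bar r+2M\bigr)/\bigl(\bar r(\bar r-M)^{p-1}\bigr)$, the Morawetz absorption for the sign-indefinite intermediate region, and the $\Gamma$-flux (which, as you note, has favorable sign; the paper instead routes it through \eqref{eq:Mor-Gamma-1}, which is equally valid). The genuine departure is your treatment of the source. The paper does \emph{not} integrate by parts in $u$: it applies Young's inequality with weight $-\bar\nu$ to $\iint(\bar r-M)^{2-p}\varkappa|\psi||\partial_u\psi|$ directly, using $0<\bar\varkappa\lesssim\bar r-M$, producing a $\psi^2$-bulk $b^{-1}\iint(\bar r-M)^{3-p}\psi^2(-\bar\nu)$ controlled by \eqref{eq:Hardy-h-3} with $\alpha=3-p>-1$, which works uniformly for all $p\in(0,3)$. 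Your IBP via $\phi\partial_u\psi=\nu\phi^2+\tfrac r2\partial_u(\phi^2)$ is a workable alternative, but it does not gain over this: the $-\bar\nu$-weighted Young's already yields a $\psi^2$-bulk $\sim(\bar r-M)^{5-p}\phi^2$ with respect to $du\,dv$, one power \emph{stronger} than your post-IBP $(\bar r-M)^{4-p}\phi^2$, and it avoids the extra boundary bookkeeping.

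Within the IBP route there is one step that, as written, genuinely fails. Your claim that Morawetz controls the post-IBP $\phi^2$-bulk for $p\le1$ is incorrect: the Morawetz $\phi^2$-bulk relative to $du\,dv$ carries the weight $(\bar r-M)^4/\bar r^{5+\eta}$ (the $(1-M/\bar r)^2$ factor contributes $(\bar r-M)^2$ and the $(-\bar\nu)$ volume factor another $(\bar r-M)^2$), and this dominates your $(\bar r-M)^{4-p}/\bar r^4$ near $\mathcal H^+$ only for $p\le0$; for $p\in(0,1]$ the absorption breaks down. The fix is simply to run Hardy throughout $p\in(0,3)$: writing $(\bar r-M)^{4-p}\phi^2\,du\,dv\sim(\bar r-M)^{2-p}\psi^2(-\bar\nu)\,du\,dv$ and applying \eqref{eq:Hardy-h-3} with $\alpha=2-p>-1$ gives a bulk $\iint(\bar r-M)^{4-p}(\partial_u\psi)^2/(-\bar\nu)$ (note: $(\bar r-M)^{4-p}$, not the $(\bar r-M)^{6-p}$ you wrote), which is one power of $\bar r-M$ subleading to the good bulk $(\bar r-M)^{3-p}(\partial_u\psi)^2/(-\bar\nu)$, so the absorption closes after splitting near/far from $\mathcal H^+$ as in your intermediate-region argument. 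One further minor correction: IBP in $u$ over $\mathcal R_{\le\Lambda}$ produces boundary terms on $\Gamma$ and on $\mathrm{VI}$ (the future face $u=u_2'$), not on $\mathrm I$ (which sits at constant $v=v_1$); since $f\,r\,\bar\kappa\bar\nu\bar\varkappa<0$, the $\mathrm{VI}$ contribution is $\le0$ and may be dropped, while the $\Gamma$ contribution is controlled by \eqref{eq:Mor-Gamma-1}.
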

\begin{proof}
    We evaluate the expression
    \begin{equation*}
        \iint_{\mathcal R_{\le\Lambda}} \partial_v\left((\bar r-M)^{2-p}\frac{(\partial_u\psi)^2}{-\bar\nu}\right)dudv
    \end{equation*}
first by integrating by parts (using \cref{lem:hh-proof-1}) and then by using the wave equation \eqref{eq:wave-equation-psi}. This leads to the identity
\begin{multline*}
    \int_\mathrm{V}(\bar r-M)^{2-p}\frac{(\partial_u\psi)^2}{-\bar\nu}\,du-\int_\mathrm{I}(\bar r-M)^{2-p}\frac{(\partial_u\psi)^2}{-\bar\nu}\,du+\int_\Gamma n^v_\Gamma(\bar r-M)^{2-p}\frac{(\partial_u\psi)^2}{-\bar\nu}\,ds\\ 
   = -\iint_{\mathcal R_{\le\Lambda}} \left(\frac{2M+(p-2)\bar r}{\bar r^3}\right)(\bar r-M)^{3-p}\frac{(\partial_u\psi)^2}{-\bar\nu}\,dudv- \iint_{\mathcal R_{\le\Lambda}}\frac{4\kappa\nu}{r\bar\nu}(\bar r-M)^{2-p}\varkappa \psi\partial_u\psi \,dudv
\end{multline*}
Since $p\in(0,3)$, $2M+(p-2)\bar r>0$ for $\bar r$ close to $M$, so we may add a large multiple of the Morawetz estimate \eqref{eq:Mor-main} and use additionally \eqref{eq:Mor-Gamma-1} and \eqref{eq:mor-aux-11} to estimate
\begin{multline}
    \int_\mathrm{V}(\bar r-M)^{2-p}\frac{(\partial_u\psi)^2}{-\bar\nu}\,du+ \iint_{\mathcal R_{\le\Lambda}} (\bar r-M)^{3-p}\frac{(\partial_u\psi)^2}{-\bar\nu}\,dudv \\ \les \int_\mathrm{I}(\bar r-M)^{2-p}\frac{(\partial_u\psi)^2}{-\bar\nu}\,du  + \iint_{\mathcal R_{\le\Lambda}}(\bar r-M)^{2-p}|\varkappa| |\psi||\partial_u\psi| \,dudv + \text{RHS of \eqref{eq:Mor-main}}\label{eq:horizon-aux-1}
\end{multline}
Using Young's inequality with a parameter $b>0$, the geometric estimate \eqref{eq:varkappa-decay}, and the observation that $0<\bar\varkappa\les \bar r-M$ in $\mathcal R_{\le\Lambda}$, we estimate the error term on the right-hand side by
\begin{multline}
   \iint_{\mathcal R_{\le\Lambda}}(\bar r-M)^{2-p}|\varkappa| |\psi||\partial_u\psi| \,dudv \les b\iint_{\mathcal R_{\le\Lambda}} (\bar r-M)^{3-p}\frac{(\partial_u\psi)^2}{-\bar\nu}\,dudv +b^{-1}\iint_{\mathcal R_{\le\Lambda}} (\bar r-M)^{3-p}\psi^2\,(-\bar\nu)dudv\\+A^2\ve^{3/2}\iint_{\mathcal R_{\le\Lambda}}\tau^{-2+\delta}(\bar r-M)^{2-p} \left(b\frac{(\partial_u\psi)^2}{-\bar\nu}+b^{-1}(-\bar\nu)\psi^2\right)dudv. \label{eq:horizon-aux-2}
\end{multline}
The first term can be absorbed into the bulk on the right-hand side of \eqref{eq:horizon-aux-1} if $b$ is chosen sufficiently small. The second term may be estimated by the Hardy inequality \eqref{eq:Hardy-h-3},
\begin{equation*}
    \iint_{\mathcal R_{\le\Lambda}} (\bar r-M)^{3-p}\psi^2\,(-\bar\nu)dudv\les \iint_{\mathcal R_{\le\Lambda}} (\bar r-M)^{5-p}\frac{(\partial_u\psi)^2}{-\bar\nu}\,dudv+\int_\Gamma\phi^2\,ds,
\end{equation*}
and since $5-p>3-p$, we may use \eqref{eq:Mor-Gamma-1} and add a large multiple of the Morawetz estimate \eqref{eq:Mor-main} to handle the error. The third term in \eqref{eq:horizon-aux-2} is estimated using the bootstrap assumption \eqref{eq:boot-horizon},
\begin{equation}
    A^2\ve^{3/2}\iint_{\mathcal R_{\le\Lambda}}\tau^{-2+\delta}(\bar r-M)^{2-p}\frac{(\partial_u\psi)^2}{-\bar\nu}\,dudv\les A^2\ve^{3/2}\int_{\tau_1}^{\tau_2}\tau^{-2+\delta}\underline{\mathcal E}{}_p(\tau)\,d\tau \les \ve^3\tau_1^{-4+2\delta+p}.
\end{equation}
The fourth term in \eqref{eq:horizon-aux-2} is estimated by first applying the Hardy inequality \eqref{eq:Hardy-h-5} and then the bootstrap assumption \eqref{eq:boot-horizon},
\begin{multline}
       A^2\ve^{3/2}\iint_{\mathcal R_{\le\Lambda}}\tau^{-2+\delta}(\bar r-M)^{2-p}\psi^2\,(-\bar\nu)dudv\\\les A^2\ve^{3/2} \int_{v_1}^{v_2}\tau^{-2+\delta}\int_{\underline S{}_v}\left((\bar r-M)^{4-p}\frac{(\partial_u\psi)^2}{-\bar\nu}+(\bar r-M)^2\frac{(\partial_u\phi)^2}{-\bar\nu}-\bar\nu\phi^2\right)dudv\\
       \les A^2\ve^{3/2} \int_{\tau_1}^{\tau_2}\tau^{-2+\delta}\underline{\mathcal E}{}_{\max\{p-2,0\}}(\tau)\,d\tau \les \ve^3\tau^{-4+2\delta+\max\{p-2,0\}}_1\les\ve^3\tau_1^{-4+2\delta+p}.\label{eq:horizon-aux-4}
\end{multline}
Combining \eqref{eq:horizon-aux-1}--\eqref{eq:horizon-aux-4} proves \eqref{eq:horizon-aux-5}.
\end{proof}

\begin{proof}[Proof of \cref{prop:H-h-1}]
    Combine \eqref{eq:horizon-aux-5} and \eqref{eq:Hardy-h-3}--\eqref{eq:Hardy-h-6}.
\end{proof}

\subsection{The \texorpdfstring{$\mathcal I^+$}{I+}-localized hierarchy}\label{sec:rp}

In this section, we prove Dafermos and Rodnianski's hierarchy of $r^p$-weighed energy estimates in the far region. Recall the notations $p_\star$ and $c_{p_\star}$ from \cref{def-of-energies}.

\begin{prop}\label{prop:I-h-1}
    For any $p\in[\delta,3-\delta]$, $A\ge 1$, $\ve$ sufficiently small, $\tau_f\in\mathfrak B$, $\alpha\in\mathfrak A_{I(\tau_f)}$, $(u_1,v_1),(u_2,v_2)\in\Gamma\cap\mathcal D_{\tau_f}$,  $u_2'>u_2$, and $v_2'>v_2$,  where $(u_2,v_2)$ is to the future of $(u_1,v_1)$, it holds that 
    \begin{multline}\label{eq:rp}
        \int_\mathrm{IV}\big(r^p(\partial_v\psi)^2+c_{p_\star}r^{p_\star+2}(\partial_v\phi)^2+c_{p_\star}r^{p_\star}\phi^2\big)\,dv \\+\iint_{\mathcal R_{\ge\Lambda}}\big(r^{p-1}(\partial_v\psi)^2+c_{(p-1)_\star}r^{(p-1)_\star+2}(\partial_v\phi)^2+c_{(p-1)_\star}r^{(p-1)_\star}\phi^2\big)\,dudv \\
        \les \int_\mathrm{II}\big(r^p(\partial_v\psi)^2+r^{2}(\partial_v\phi)^2+\phi^2\big)\,dv+\int_\mathrm{I}\left((\bar r-M)^2\frac{(\partial_u\phi)^2}{-\bar\nu}-\bar\nu\phi^2\right)du+ \ve^3\tau_1^{-4+2\delta}.
    \end{multline}
\end{prop}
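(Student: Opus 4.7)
The plan is to prove \eqref{eq:rp} by the $r^p$-weighted multiplier method of Dafermos and Rodnianski \cite{dafermos2010new}, adapted to the coupled dynamical setting. The fundamental identity is obtained by multiplying the wave equation \eqref{eq:wave-equation-psi} for the radiation field $\psi=r\phi$ by $2r^p\partial_v\psi$ and rearranging, which yields
\begin{equation*}
    \partial_u\bigl(r^p(\partial_v\psi)^2\bigr)=pr^{p-1}\nu\,(\partial_v\psi)^2+4r^{p-1}\kappa\nu\varkappa\,\psi\partial_v\psi.
\end{equation*}
I would integrate this identity over $\mathcal R_{\ge\Lambda}$ and apply the divergence theorem (the analogue of \cref{lem:hh-proof-1} for $\mathcal R_{\ge\Lambda}$, with $n_\Gamma$ pointing outward into $\mathcal R_{\le\Lambda}$) to produce the past and future outgoing fluxes on $\mathrm{II}$ and $\mathrm{IV}$, the coercive bulk $p\iint_{\mathcal R_{\ge\Lambda}}r^{p-1}(-\nu)(\partial_v\psi)^2\,dudv$ (positive since $\nu<0$ and $p\in(0,3)$), a nonnegative $\Gamma$ contribution that may be dropped, and the cross-term from the $\varkappa\phi$ source, which must be absorbed. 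Crucially, the primary multiplier has no $du$-integrand, so there is no flux contribution on the ingoing $\mathrm{III}$ cone at $v=v_2'$, matching the statement of \eqref{eq:rp}.

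Next, I would upgrade to include the $r^{p_\star+2}(\partial_v\phi)^2$ and $r^{p_\star}\phi^2$ fluxes on the right-hand side of \eqref{eq:rp}, together with the corresponding $(p-1)_\star$-weighted bulks, by (i) applying, in parallel, the multiplier $r^{p+2}\partial_v$ directly to the wave equation \eqref{eq:phi-wave-1} for $\phi$, which generates the $r^{p+2}(\partial_v\phi)^2$ flux and $r^{p+1}(\partial_v\phi)^2$ bulk at the cost of a cross term $\iint r^{p+1}\lambda\,\partial_u\phi\,\partial_v\phi$ that is absorbed via Young's inequality together with the Morawetz bulk \eqref{eq:Mor-main}; and (ii) adding the zeroth-order current \eqref{eq:lower-order-multiplier} with $h$ an appropriately truncated power of $r$, chosen so that the additional $\mathrm{III}$ contribution has the correct sign (or, equivalently, so that $h$ is cut off to vanish near $v=v_2'$) and so that the $\Gamma$ boundary terms it generates reduce to $\int_\Gamma((\partial_u\phi)^2+(\partial_v\phi)^2+\phi^2)\,ds$, already controlled by \cref{lem:mor-proof-3,lem:mor-proof-4}. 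The truncation of the weight at $r^1$ for the $\phi^2$ fluxes encoded by $p_\star$ reflects the fact that $\phi$ only decays like $r^{-1}$ at infinity. The $\varkappa$ cross-term is treated by Young's inequality: a fraction is absorbed into the good bulk, while the residual $\sim r^{p-3}\phi^2$ (using $\bar\varkappa\sim \bar r^{-2}$ and the sharp bound \eqref{eq:varkappa-decay} for $\tilde\varkappa$) is bounded by a combination of the Morawetz bulk and a one-dimensional Hardy inequality in $v$ that trades $r^{p-3}\phi^2$ against $r^{p-1}(\partial_v\phi)^2$. All nonlinear geometric errors arising from $\tilde\kappa,\tilde\gamma,\tilde\nu,\tilde\varkappa$ are controlled using the bootstrap assumption \eqref{eq:boot-rp} and the estimates of \cref{sec:geometry}, producing the nonlinear error $\ve^3\tau_1^{-4+2\delta}$ on the right-hand side exactly in the spirit of the proof of \cref{lem:horizon-main}.

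The main obstacle I anticipate is pushing $p$ up to the sharp threshold $3-\delta$. As $p\nearrow 3$ the Hardy inequality controlling the cross-term residual $r^{p-3}\phi^2$ degenerates, and the Morawetz bulk weight $r^{-1-\eta}\phi^2$ is no longer sufficient on its own. I would address this by a cutoff argument in $r$ analogous to the one employed in the proof of \cref{lem:mor-proof-4}: truncate the integration at some large radius $R$, extract the estimate with constants independent of $R$, and then pass to the limit $R\to\infty$. An additional delicate point is the sign of the $\mathrm{III}$ contribution generated by the added lower-order current, which requires either a truncation of $h$ near $v=v_2'$ or a careful choice of $h'$ so that the contribution is nonnegative and can be dropped. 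Once these bookkeeping issues are dispatched, the absorption structure is standard and \eqref{eq:rp} follows.
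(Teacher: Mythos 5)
Your step 1 --- integrating $\partial_u(r^p(\partial_v\psi)^2)$ over $\mathcal R_{\ge\Lambda}$, reading off the coercive bulk and the $\varkappa$-sourced cross term --- matches the paper's proof of the core estimate \eqref{eq:rp-aux-3} exactly. The gaps are in step 2.

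First, the paper does not apply a second multiplier to $\phi$. The lower-order fluxes $r^{p_\star+2}(\partial_v\phi)^2$ and $r^{p_\star}\phi^2$, and the corresponding bulks, are recovered \emph{algebraically} from the $\psi$-estimate: one writes $(\partial_v\phi)^2 \les r^{-2}(\partial_v\psi)^2 + r^{-4}\psi^2$ (this is \eqref{eq:Hardy-I-3}--\eqref{eq:Hardy-I-4}) and then controls the residual $\psi^2$ terms via an $r$-weighted Hardy inequality in $v$ applied to $\psi$ (\eqref{eq:Hardy-I-1}--\eqref{eq:Hardy-I-2}), whose boundary terms at $\Gamma$ are controlled by \cref{lem:mor-proof-4} and whose $\phi^2$ flux on IV is then controlled by \cref{prop:Kodama-estimate}. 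This avoids new cross terms entirely. Your proposed second multiplier $r^{p+2}\partial_v$ applied to $\phi$ does generate a cross term $\iint r^{p+1}\lambda\,\partial_u\phi\,\partial_v\phi$, and you claim Young plus Morawetz absorbs it; but for $p$ near $3$ this is $\sim r^4\partial_u\phi\partial_v\phi$, and any Young split produces $r^{\text{something}\ge 2}(\partial_u\phi)^2$, which the Morawetz bulk --- weighted by $r^{1-\eta}(\partial_u\phi)^2$ --- simply cannot absorb. Even the more modest $r^2\partial_v$ multiplier (which is all you need for $p\ge 1$, since $p_\star=0$ there) has a vanishing bulk and yields a cross term $r\lambda\partial_u\phi\partial_v\phi$ whose $(\partial_v\phi)^2$ half after Young carries a weight $r^{1+\sigma}$ that dominates both the Morawetz bulk $r^{1-\eta}(\partial_v\phi)^2$ and the multiplier bulk. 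So this part of the plan would not close.

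Second, your treatment of the $\varkappa$ cross-term residual $r^{p-3}\phi^2$ via ``a one-dimensional Hardy inequality in $v$ that trades $r^{p-3}\phi^2$ against $r^{p-1}(\partial_v\phi)^2$'' fails for $p\ge 2$. A Hardy inequality with $r$-weight for $\phi$ in $v$ requires the exponent $p-3$ to satisfy $(p-3)+1<0$, i.e.\ $p<2$. The paper instead rewrites $r^{p-3}\phi^2 = r^{p-5}\psi^2$ and applies the Hardy inequality \eqref{eq:Hardy-I-1} at the level of $\psi$ with $\alpha=p-5$, which requires only $p-5<-1$, i.e.\ $p<4$, always satisfied. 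The gain $r^{-2}$ buys the two extra powers you lose by not shifting to $\psi$. The resulting $\iint r^{p-3}(\partial_v\psi)^2$ is then absorbed into the main bulk and the Morawetz bulk via the splitting \eqref{eq:rp-aux-2}, because $p-3<p-1$. This is the content of the final absorption step that makes the hierarchy reach $p=3-\delta$; without the shift to $\psi$, the hierarchy stops at $p=2$.
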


The proof of this estimate will be given in \cref{sec:rp-proof} below after a series of lemmas.

\subsubsection{\texorpdfstring{$r^p$}{rp}-weighted Hardy inequalities}

We will again require several inequalities to handle lower order terms. The following identity is proved by expanding the term in square brackets and integrating by parts (see also \cite[Lemma 8.30]{luk2019strong}).

\begin{lem}  For any $\{u\}\times[v_1,v_2]\subset\mathcal D_{\tau_f}$, $f:\mathcal D_{\tau_f}\to\Bbb R$, and $\alpha\in\Bbb R$, it holds that
\begin{multline}\label{eq:hardy_v}
 \frac{(\alpha +1 )^2}{4} \int_{v_1}^{v_2}  r^{\alpha}f^2 \, \lambda dv +\int_{v_1}^{v_2} \frac{r^{\alpha}}{\lambda} \left[ r \partial_v f + \frac{\alpha+1}{2} \lambda f \right]^2 \, dv  \\ = \int_{v_1}^{v_2} \frac{r^{\alpha+2}}{\lambda} ( \partial_v f )^2 \, dv + \frac{\alpha +1}{2} ( r^{\alpha+1} f^2 ) (u,v_2) - \frac{\alpha+1}{2} ( r^{\alpha+1} f^2 ) ( u , v_1 ).
\end{multline}
\end{lem}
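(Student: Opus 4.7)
The plan is to verify this identity by direct computation: expand the perfect square appearing on the left-hand side, then perform a single integration by parts in $v$, exploiting the relation $\lambda = \partial_v r$ to convert the $r$-weight into a $v$-derivative. The statement is purely algebraic once the integration by parts is executed, so I expect no analytic obstacles.

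First I would expand
\begin{equation*}
\left[ r \partial_v f + \tfrac{\alpha+1}{2} \lambda f \right]^2 = r^2 (\partial_v f)^2 + (\alpha+1)\, r \lambda f \, \partial_v f + \tfrac{(\alpha+1)^2}{4} \lambda^2 f^2.
\end{equation*}
Multiplying by $r^\alpha/\lambda$ and integrating over $[v_1,v_2]$, the first summand yields exactly the bulk term $\int r^{\alpha+2} \lambda^{-1} (\partial_v f)^2 \, dv$ appearing on the right-hand side, while the third summand contributes $\tfrac{(\alpha+1)^2}{4} \int r^\alpha \lambda f^2 \, dv$, which combines with the coercive term already present on the left-hand side to give a total of $\tfrac{(\alpha+1)^2}{2} \int r^\alpha \lambda f^2 \, dv$.

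It then remains to handle the cross term $(\alpha+1) \int_{v_1}^{v_2} r^{\alpha+1} f \, \partial_v f \, dv$. Writing $2 f \partial_v f = \partial_v(f^2)$ and integrating by parts in $v$, using $\partial_v r = \lambda$ on the boundary factor, this expression becomes
\begin{equation*}
\tfrac{\alpha+1}{2} \bigl[ r^{\alpha+1} f^2 \bigr]_{v_1}^{v_2} - \tfrac{(\alpha+1)^2}{2} \int_{v_1}^{v_2} r^\alpha \lambda f^2 \, dv.
\end{equation*}
The bulk contribution here cancels precisely the $\tfrac{(\alpha+1)^2}{2} \int r^\alpha \lambda f^2 \, dv$ assembled in the previous step, and the boundary contribution reproduces the remaining terms on the right-hand side of \eqref{eq:hardy_v}. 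The only ``hard'' part is bookkeeping for the coefficients, and the identity is established without any smallness, sign, or regularity assumption on $f$ beyond the $C^1$ regularity needed to justify the integration by parts.
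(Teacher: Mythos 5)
Your computation is correct and matches the paper's approach exactly: the paper states the identity is "proved by expanding the term in square brackets and integrating by parts," which is precisely the expansion-plus-IBP argument you carry out, using $\partial_v r^{\alpha+1}=(\alpha+1)r^\alpha\lambda$ to produce the cancelling bulk term.
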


\begin{lem}\label{lem:null-infinity-Hardy-1} Let $\alpha<-1$.  With hypotheses as in \cref{prop:I-h-1}, it holds that
\begin{equation}
      (\alpha+1)^2  \iint_{\mathcal R_{\ge\Lambda}} r^\alpha\psi^2\,dudv\les\iint_{\mathcal R_{\ge\Lambda}} r^{\alpha+2}(\partial_v\psi)^2\,dudv+\int_\Gamma \phi^2\,ds\label{eq:Hardy-I-1},
\end{equation}
  \begin{equation} (\alpha+1)^2\int_{S_u}r^\alpha\psi^2\,dv\les \int_{S_u}
  \big(r^{\alpha+2}(\partial_v\psi)^2+\phi^2\big)\,dv, \label{eq:Hardy-I-2}
   \end{equation}
   where
   \begin{equation*}
      S_u\doteq  \{u\}\times[v^\Lambda(u),v_2'].
   \end{equation*}
\end{lem}
\begin{proof}
    To prove \eqref{eq:Hardy-I-1}, apply \eqref{eq:hardy_v} with $f=\psi$ to derive
    \begin{equation*}
        (\alpha+1)^2\int_{S_u}r^\alpha\psi^2\,dv\les \int_{S_u} r^{\alpha+2}(\partial_v\psi)^2\,dv + (r^{\alpha+1}\psi^2)(u,v^\Lambda(u)),
    \end{equation*}
    where we have used that the boundary term at $v_2'$ has a favorable sign. Now \eqref{eq:Hardy-I-1} is obtained by integrating this inequality in $u$. To prove \eqref{eq:Hardy-I-2}, $\chi=\chi(r)$ be a cutoff function such that $\chi(r)=1$ for $r\ge 3\Lambda$ and $\chi(r)=0$ for $r\le 2\Lambda$ and apply \eqref{eq:hardy_v} to $f=\chi\psi$. The cutoff kills the boundary term along $\Gamma$ and by adding $     \int_{S_u} \phi^2\, dv$ to both sides of the resulting inequality, we conclude \eqref{eq:Hardy-I-2}.
\end{proof}

\begin{lem} Let $\alpha<-1$. With hypotheses as in \cref{prop:I-h-1}, it holds that
    \begin{equation}
         \int_{S_u} r^{\alpha+2}(\partial_v\phi)^2\,dv\les    \int_{S_u} \big(r^\alpha (\partial_v\psi)^2+r^{\alpha-2}\psi^2\big)\,dv,
         \label{eq:Hardy-I-3}
    \end{equation}
    \begin{equation}
         \iint_{\mathcal R_{\ge\Lambda}}r^{\alpha+2}(\partial_v\phi)^2\,dudv \les    \iint_{\mathcal R_{\ge\Lambda}} \big(r^\alpha (\partial_v\psi)^2+r^{\alpha-2}\psi^2\big)\,dudv.
         \label{eq:Hardy-I-4}
    \end{equation}
\end{lem}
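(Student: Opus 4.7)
The plan is to observe that these two Hardy inequalities are purely algebraic consequences of the relation $\psi = r\phi$ and the boundedness of $\lambda$ in $\mathcal D_{\tau_f}\cap\{r\ge\Lambda\}$, without needing any integration by parts of the kind used to derive \eqref{eq:hardy_v}. Indeed, differentiating $\psi = r\phi$ in $v$ yields
\begin{equation*}
r\,\partial_v\phi \;=\; \partial_v\psi \,-\, \lambda\,\phi,
\end{equation*}
so squaring and using the elementary inequality $(a-b)^2 \le 2a^2 + 2b^2$ gives
\begin{equation*}
r^2(\partial_v\phi)^2 \;\le\; 2(\partial_v\psi)^2 + 2\lambda^2\phi^2.
\end{equation*}

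The next step is to multiply this pointwise bound by $r^\alpha$, obtaining
\begin{equation*}
r^{\alpha+2}(\partial_v\phi)^2 \;\le\; 2 r^{\alpha}(\partial_v\psi)^2 + 2\lambda^2 r^{\alpha-2}\psi^2,
\end{equation*}
where on the last term I rewrote $\phi^2 = r^{-2}\psi^2$. By \eqref{eq:lambda-bdd}, we have $\lambda \le 2$ throughout $\mathcal D_{\tau_f}$ (in particular on $\mathcal R_{\ge\Lambda}$), so $\lambda^2 \lesssim 1$, and we conclude the pointwise inequality
\begin{equation*}
r^{\alpha+2}(\partial_v\phi)^2 \;\lesssim\; r^{\alpha}(\partial_v\psi)^2 + r^{\alpha-2}\psi^2.
\end{equation*}

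Finally, integrating this pointwise bound along the outgoing segment $S_u = \{u\}\times[v^\Lambda(u), v_2']$ yields \eqref{eq:Hardy-I-3}, and integrating it over $\mathcal R_{\ge\Lambda}$ yields \eqref{eq:Hardy-I-4}. There is no real obstacle here: the role of the restriction $\alpha<-1$ is not needed for these two inequalities themselves (it was required for the Hardy inequalities \eqref{eq:Hardy-I-1} and \eqref{eq:Hardy-I-2}, to ensure the good sign of the boundary term at infinity after applying \eqref{eq:hardy_v}), but stating them in the same range is natural since these inequalities will be invoked in tandem with the previous ones in the proof of \cref{prop:I-h-1}.
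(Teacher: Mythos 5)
Your proof is correct and matches the paper's argument: the paper likewise just notes the pointwise bound $(\partial_v\phi)^2\lesssim r^{-2}(\partial_v\psi)^2+r^{-4}\psi^2$ (which is exactly your $r\partial_v\phi=\partial_v\psi-\lambda\phi$ squared, using $\lambda\lesssim1$) and integrates over $S_u$ or $\mathcal R_{\ge\Lambda}$. Your observation that $\alpha<-1$ plays no role here and is carried over only for uniformity with \eqref{eq:Hardy-I-1}--\eqref{eq:Hardy-I-2} is also accurate.
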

\begin{proof} This follows immediately from the trivial estimate
$(\partial_v\phi)^2\les r^{-2}(\partial_v\psi)^2+r^{-4}\psi^2$
    in $\mathcal R_{\le\Lambda}$.
\end{proof}

\subsubsection{Proof of the \texorpdfstring{$r^p$}{rp} estimates}\label{sec:rp-proof}

We now prove the main estimate of \cref{prop:I-h-1}.

\begin{lem}
    With hypotheses as in \cref{prop:I-h-1}, it holds that
    \begin{multline}\label{eq:rp-aux-3}
\int_\mathrm{IV}r^p(\partial_v\psi)^2\,dv+\iint_{\mathcal R_{\ge\Lambda}} r^{p-1}(\partial_v\psi)^2\,dudv  \les \int_\mathrm{II}r^p(\partial_v\psi)^2\,dv \\+ \int_\mathrm{I} \left((\bar r-M)^2\frac{(\partial_u\phi)^2}{-\bar\nu}-\bar\nu\phi^2\right)du + \int_\mathrm{II}\left(r^2(\partial_v\phi)^2+\phi^2\right)dv+ \ve^3\tau(u_1,v_1)^{-4+2\delta}.
    \end{multline}
\end{lem}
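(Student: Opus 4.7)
The proof mirrors that of \cref{lem:horizon-main}, with the roles of $u\leftrightarrow v$ and $(\bar r-M)\leftrightarrow r$ exchanged, adapted to the far region $\mathcal R_{\ge\Lambda}$. I would begin by computing, using the wave equation \eqref{eq:wave-equation-psi} and Raychaudhuri's equation \eqref{eq:kappa-u}, the identity
\begin{equation*}
\partial_u\bigl(r^p\kappa^{-1}(\partial_v\psi)^2\bigr)=pr^{p-1}\nu\kappa^{-1}(\partial_v\psi)^2-\frac{r^{p+1}(\partial_u\phi)^2}{\kappa\nu}(\partial_v\psi)^2+4r^p\nu\varkappa\phi\,\partial_v\psi.
\end{equation*}
Integrating over $\mathcal R_{\ge\Lambda}$ via the divergence theorem \cref{lem:mor-proof-2} with $j^u=r^p\kappa^{-1}(\partial_v\psi)^2$ and $j^v=0$ produces boundary fluxes along II, IV, and $\Gamma$ on the right and a bulk on the left. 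Since $\kappa\sim 1$ and $-\nu\sim 1$ on $\mathcal R_{\ge\Lambda}$ by \cref{lem:basic-consequences}, the first term of the right-hand side gives the desired good bulk $\iint r^{p-1}(\partial_v\psi)^2$, while the quartic Raychaudhuri term has a favorable sign and may be discarded.

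For the contribution along $\Gamma$, using $r=\Lambda$ there, the integrand is bounded by $\Lambda^p(\partial_v\psi)^2|_\Gamma\lesssim\phi^2+(\partial_v\phi)^2$ on $\Gamma$, which is controlled via \cref{lem:mor-proof-3,lem:mor-proof-4} after adding a sufficiently large multiple of the Morawetz estimate \cref{prop:Morawetz}; this absorption also handles the $(\partial_v\phi)^2$ spacetime bulks appearing on the right-hand sides of those lemmas.

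The principal obstacle is absorbing the cross term $\iint 4r^p\nu\varkappa\phi\,\partial_v\psi\,dudv$ induced by the source in the wave equation for $\psi$. Splitting $\varkappa=\bar\varkappa+\tilde\varkappa$ and applying Young's inequality with a small parameter $b$, and using $|\bar\varkappa|\lesssim r^{-2}$ together with $\psi=r\phi$, the background contribution is bounded by $b\iint r^{p-1}(\partial_v\psi)^2+Cb^{-1}\iint r^{p-5}\psi^2$. The Hardy inequality \eqref{eq:Hardy-I-1}, valid since $p-5<-1$, converts the second term into $\iint r^{p-3}(\partial_v\psi)^2+\int_\Gamma\phi^2\,ds$. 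Since $r^{p-3}\le\Lambda^{-2}r^{p-1}$ on $\mathcal R_{\ge\Lambda}$ with $\Lambda=100M_0$ fixed, choosing $b$ appropriately (depending on the implicit constants and $\Lambda$) allows absorption into the main bulk, while the $\int_\Gamma\phi^2$ term is handled as above. The nonlinear contribution of $\tilde\varkappa$, estimated via \eqref{eq:varkappa-decay} together with the bootstrap assumption \eqref{eq:boot-rp}, produces an error that, after Young and integration in $\tau$, carries at least an extra power of $\ve^{1/2}$ compared to the target and is therefore dominated by $\ve^3\tau_1^{-4+2\delta}$ for $\ve$ sufficiently small. The remaining $\ve^3\tau_1^{-4+2\delta}$ error on the right-hand side of \eqref{eq:rp-aux-3} arises from the use of \cref{prop:Morawetz}.
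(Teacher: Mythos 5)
Your overall scheme matches the paper's---integrate $\partial_u$ of a current carrying $r^p(\partial_v\psi)^2$ over $\mathcal R_{\ge\Lambda}$, discard the good boundary fluxes, estimate the cross term via Young's inequality and the Hardy inequality \eqref{eq:Hardy-I-1}, and close with Morawetz---but the particular current you chose, $j^u=r^p\kappa^{-1}(\partial_v\psi)^2$, is not the one the paper uses, and it introduces a genuine gap. The paper uses the unweighted $j^u=r^p(\partial_v\psi)^2$, so that $\partial_uj^u=pr^{p-1}\nu(\partial_v\psi)^2+4r^{p-1}\kappa\nu\varkappa\psi\partial_v\psi$ has no Raychaudhuri quartic at all.

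The problem is your claim that the quartic ``has a favorable sign and may be discarded.'' With $\nu<0$, the Raychaudhuri contribution in $\partial_uj^u$ is $-\frac{r^{p+1}}{\kappa\nu}(\partial_u\phi)^2(\partial_v\psi)^2\ge0$. When the divergence identity is rearranged so that the future flux $\int_\mathrm{IV}j^u\,dv$ and the coercive bulk $\iint pr^{p-1}|\nu|\kappa^{-1}(\partial_v\psi)^2$ both sit on the left with positive signs (which is the inequality you want), the quartic lands on the \emph{right} as a nonnegative source:
\begin{equation*}
\int_\mathrm{IV}j^u\,dv+\int_\Gamma n^u_\Gamma j^u\,ds+\iint pr^{p-1}|\nu|\kappa^{-1}(\partial_v\psi)^2\,dudv=\int_\mathrm{II}j^u\,dv+\iint\frac{r^{p+1}}{\kappa|\nu|}(\partial_u\phi)^2(\partial_v\psi)^2\,dudv-\iint 4r^p\nu\varkappa\phi\partial_v\psi\,dudv.
\end{equation*}
It cannot be dropped; it has to be bounded. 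The natural way to absorb it is via the pointwise estimate $|r\partial_u\phi/\nu|\les\ve$, i.e.\ \eqref{est:uphi_bound}, which would give $\iint\frac{r^{p+1}}{\kappa|\nu|}(\partial_u\phi)^2(\partial_v\psi)^2\les\ve^2\iint r^{p-1}|\nu|\kappa^{-1}(\partial_v\psi)^2$ and hence absorption for small $\ve$. But \eqref{est:uphi_bound} is derived in \cref{sec:pointwise} \emph{after} the energy hierarchy is established and is itself fed by the energy decay, so invoking it here is a structural issue that would at least need to be flagged and justified. One could instead try to integrate by parts in $u$ using $\partial_u(\kappa^{-1})=\partial_u(1+(\kappa^{-1}-1))$ and the smallness $|\kappa^{-1}-1|\les A\ve^2\tau^{-3+\delta}$ from \eqref{eq:kappa-4}, but this produces an error $\sim A^2\ve^4\tau_1^{-6+2\delta+p}$, which for $p$ close to $3-\delta$ is \emph{not} dominated by $\ve^3\tau_1^{-4+2\delta}$. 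The cleanest fix is simply to drop the $\kappa^{-1}$ factor from your current, as the paper does.

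Two further remarks on efficiency rather than correctness: (i) the $\Gamma$ boundary contribution, $\int_\Gamma n^u_\Gamma j^u\,ds$ with $n^u_\Gamma>0$ and $j^u\ge0$, lies on the favorable side of the identity and can be discarded outright, so your detour through \cref{lem:mor-proof-3,lem:mor-proof-4} for it is unnecessary (those are still needed, but only to control the $r^{p-3}(\partial_v\psi)^2$ bulk and the $\int_\Gamma\phi^2$ term coming out of Hardy, as in \eqref{eq:rp-aux-2}); (ii) there is no need to split $\varkappa=\bar\varkappa+\tilde\varkappa$ in the far region---the crude bound $|\varkappa|\les r^{-2}$ from boundedness of $\varpi$ already gives the full $r^{-2}$ weight, and the entire $\ve^3\tau_1^{-4+2\delta}$ error then comes from the Morawetz inputs alone, which is the paper's route.
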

\begin{proof}
    We evaluate the expression
    \begin{equation*}
        \iint_{\mathcal R_{\ge\Lambda}} \partial_u\left(r^p(\partial_v\psi)^2\right)dudv
    \end{equation*}
   first by integrating by parts and then by using the wave equation \eqref{eq:wave-equation-psi}. This leads to the identity
    \begin{multline}
        \int_\mathrm{IV}r^p(\partial_v\psi)^2\,dv-\int_\mathrm{II}r^p(\partial_v\psi)^2\,dv+\int_\Gamma n^u_\Gamma r^p(\partial_v\psi)^2\,ds \\= -\iint_{\mathcal R_{\ge\Lambda}} pr^{p-1}(\partial_v\psi)^2\, (-\nu)dudv+\iint_{\mathcal R_{\ge\Lambda}} 4\kappa\nu\varkappa r^{p-1}\psi\partial_v\psi\,dudv.\label{eq:rp-aux-1}
    \end{multline}
    We estimate the second term by
    \begin{equation*}
      \left|\iint_{\mathcal R_{\ge\Lambda}} 4\kappa\nu\varkappa r^{p-1}\psi\partial_v\psi\,dudv\right|  \les\iint_{\mathcal R_{\ge\Lambda}} r^{p-3}\psi\partial_v\psi\les b\iint_{\mathcal R_{\ge\Lambda}}r^{p-1}(\partial_v\psi)^2\,dudv+b^{-1}\iint_{\mathcal R_{\ge\Lambda}} r^{p-5}\psi^2\,dudv 
    \end{equation*}
    where $b>0$ is an arbitrary constant. For $b$ sufficiently small, this can be absorbed by the good bulk on the right-hand side of \eqref{eq:rp-aux-1}. To estimate the lower order term, we apply Hardy's inequality \eqref{eq:Hardy-I-1} with $\alpha=p-5<-1$ to obtain 
    \begin{equation*}
        \iint_{\mathcal R_{\ge\Lambda}} r^{p-5}\psi^2\,dudv \les \iint_{\mathcal R_{\ge\Lambda}} r^{p-3}(\partial_v\psi)^2\,dv + \int_\Gamma \phi^2\,ds.
    \end{equation*}
    Since $p-3<p-1$, for any sufficiently small $c>0$ we have the estimate
    \begin{equation}\label{eq:rp-aux-2}
         \iint_{\mathcal R_{\ge\Lambda}} r^{p-3}(\partial_v\psi)^2\,dv\les c \iint_{\mathcal R_{\ge\Lambda}} r^{p-1}(\partial_v\psi)^2\,dv+c^{-1}\iint_{\mathcal R_{\ge\Lambda}}\left( r^{1-\eta}(\partial_v\phi)^2+r^{-1-\eta}\phi^2\right)dudv.
    \end{equation}
    Combining \eqref{eq:rp-aux-1} and \eqref{eq:rp-aux-2} with $c$ sufficiently small and applying the Morawetz estimates \eqref{eq:Mor-Gamma-1} and \eqref{eq:mor-aux-11} yields \eqref{eq:rp-aux-3} as desired.
\end{proof}

\begin{proof}[Proof of \cref{prop:I-h-1}]
    Combine \eqref{eq:rp-aux-3} and \eqref{eq:Hardy-I-1}--\eqref{eq:Hardy-I-4}.
\end{proof}

\section{Boundedness and decay of the scalar field}\label{sec:decay-energy}

In this section, we prove boundedness and decay estimates for the scalar field $\phi$. In \cref{sec:improving-energy}, we use a standard pigeonhole principle argument to prove decay of the energies $\mathcal E_p$, $\underline{\mathcal E}{}_p$, $\mathcal F$, and $\underline{\mathcal F}$, which proves \cref{prop:energy-est-improve}. In \cref{sec:pointwise}, we then use these energy estimates to prove pointwise decay and boundedness for $\phi$ and its derivatives. 

\begin{center}
    \emph{In this section, we adopt the notational conventions outlined in \cref{sec:notation}. We will also often use the geometric bootstrap assumptions and their consequences, such as \cref{lem:basic-consequences}, without comment.}
\end{center}

\subsection{Improving the bootstrap assumptions for the scalar field: the proof of \texorpdfstring{\cref{prop:energy-est-improve}}{fix me}}\label{sec:improving-energy}

Recall the definition of the energies $\mathcal E_p$, $\underline{\mathcal E}{}_p$, $\mathcal F$, and $\underline{\mathcal F}$ from \cref{sec:anchor}. With this compact notation, we can summarize the content of \cref{prop:Kodama-estimate,prop:H-h-1,prop:I-h-1} as:
\begin{prop}
     For any $A\ge 1$, $\ve$ sufficiently small, $\tau_f\in\mathfrak B$, $\alpha\in\mathfrak A_{I(\tau_f)}$, and $1\le\tau_1\le \tau_2\le\tau_f$, it holds that 
     \begin{equation}\label{eq:master-boundedness}
        \sup_{\tau\in[\tau_1,\tau_2]}\big(\mathcal E_p(\tau)+\underline{\mathcal E}{}_p(\tau)\big)\les \mathcal E_p(\tau_1)+\underline{\mathcal E}{}_p(\tau_1)+\ve^3\tau_1^{-4+2\delta+p}
     \end{equation}
     for every $p\in [0,3-\delta]$ and
     \begin{equation}\label{eq:master-integral}
         \int_{\tau_1}^{\tau_2}\big(\mathcal E_{p-1}(\tau)+\underline{\mathcal E}{}_{p-1}(\tau)\big)d\tau \les \mathcal E_p(\tau_1)+\underline{\mathcal E}{}_p(\tau_1)+\ve^3\tau_1^{-4+2\delta+p}
     \end{equation}
     for every $p\in[1,3-\delta]$.
\end{prop}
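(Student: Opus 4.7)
The plan is to combine the three butterfly energy estimates of Propositions \ref{prop:Kodama-estimate}, \ref{prop:H-h-1}, and \ref{prop:I-h-1} applied with a distinguished choice of region tied to the times $\tau_1$ and $\tau_2$. For any $\tau' \in [\tau_1, \tau_2]$, I would take $(u_1, v_1) = \Gamma(\tau_1)$, $(u_2, v_2) = \Gamma(\tau')$, $u_2' = \Gamma^{u_{\tau_f}}(\tau_f)$, and $v_2' = \Gamma^v(\tau_f)$, so that the initial hypersurfaces $\mathrm{I}, \mathrm{II}$ of the butterfly match the portions of the ingoing/outgoing null cones through $\Gamma(\tau_1)$ that remain inside $\mathcal D_{\tau_f}$, while the future hypersurfaces $\mathrm{V}, \mathrm{IV}$ (together with the outer ones $\mathrm{III}, \mathrm{VI}$) match the corresponding cones through $\Gamma(\tau')$. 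Under this identification, the flux integrands on $\mathrm{I}, \mathrm{II}$ reproduce $\underline{\mathcal E}_p(\tau_1) + \mathcal E_p(\tau_1)$, while those on $\mathrm{V}, \mathrm{IV}$ reproduce $\underline{\mathcal E}_p(\tau') + \mathcal E_p(\tau')$.

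For the boundedness statement \eqref{eq:master-boundedness} with $p \in (0, 3-\delta]$, I would simply sum the inequalities from Propositions \ref{prop:H-h-1} and \ref{prop:I-h-1}, whose LHS flux structure (the $(\bar r - M)^{2-p}$-weighted $(\partial_u\psi)^2$, $(\partial_u\phi)^2$ fluxes with the $(\bar r - M)^{-p_\star}\phi^2$ piece on $\mathrm V$, and the analogous $r^p$-weighted fluxes on $\mathrm{IV}$) exactly matches the definitions of $\underline{\mathcal E}_p, \mathcal E_p$ from Section \ref{sec:anchor}. The endpoint $p = 0$, which falls outside the stated range of Propositions \ref{prop:H-h-1} and \ref{prop:I-h-1}, is recovered from Proposition \ref{prop:Kodama-estimate} combined with the elementary pointwise inequalities $(\partial_u\psi)^2 \lesssim r^2(\partial_u\phi)^2 + \nu^2\phi^2$ and $(\partial_v\psi)^2 \lesssim r^2(\partial_v\phi)^2 + \lambda^2\phi^2$, which convert the $(\partial_u\phi)^2, (\partial_v\phi)^2, \phi^2$ fluxes controlled by Kodama into the $(\partial_u\psi)^2, (\partial_v\psi)^2$ fluxes appearing in $\underline{\mathcal E}_0, \mathcal E_0$. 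Taking the supremum over $\tau' \in [\tau_1, \tau_2]$ then yields \eqref{eq:master-boundedness}.

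For the integrated estimate \eqref{eq:master-integral}, I would set $\tau' = \tau_2$ and retain the spacetime bulk terms from Propositions \ref{prop:H-h-1} and \ref{prop:I-h-1}. The crucial observation is that these bulks already have exactly the form
$$\iint_{\mathcal R_{\le \Lambda}} \Bigl((\bar r - M)^{3-p}\Bigl[\tfrac{(\partial_u\psi)^2}{-\bar\nu} + \tfrac{(\partial_u\phi)^2}{-\bar\nu}\Bigr] + (\bar r - M)^{-(p-1)_\star}\phi^2(-\bar\nu)\Bigr)du\,dv$$
and its $r^{p-1}$-weighted analogue in $\mathcal R_{\ge \Lambda}$, which are precisely the pointwise densities of $\underline{\mathcal E}_{p-1}(\tau)$ and $\mathcal E_{p-1}(\tau)$. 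Foliating $\mathcal R_{\le \Lambda}$ by the ingoing cones $\underline C^{\tau_f}(\tau)$ (parametrized by $v = \Gamma^v(\tau)$) and $\mathcal R_{\ge \Lambda}$ by the outgoing cones $C^{\tau_f}(\tau)$ (parametrized by $u = \Gamma^{u_{\tau_f}}(\tau)$) for $\tau \in [\tau_1, \tau_2]$, the coarea formula combined with the slope bounds $\dot\Gamma^u, \dot\Gamma^v \sim 1$ from Lemma \ref{lem:gamma-slope} converts these bulks into $\int_{\tau_1}^{\tau_2}\bigl(\underline{\mathcal E}_{p-1}(\tau) + \mathcal E_{p-1}(\tau)\bigr) d\tau$.

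The only real technical point is the term-by-term identification of the flux integrands on the LHS of Propositions \ref{prop:H-h-1} and \ref{prop:I-h-1} with the full energies $\underline{\mathcal E}_p$ and $\mathcal E_p$, including the piecewise weights arising from the $p_\star$ and $(p-1)_\star$ notation at the thresholds $p = 1, 2$; this is a direct comparison and presents no genuine analytic difficulty, while the error term $\ve^3 \tau_1^{-4+2\delta+p}$ is already in place on the right-hand sides of both propositions and requires no further work.
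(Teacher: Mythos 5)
Your proof is correct and follows the same route as the paper: the paper in fact gives no explicit proof here, stating only that the estimates ``summarize the content'' of Propositions \ref{prop:Kodama-estimate}, \ref{prop:H-h-1}, and \ref{prop:I-h-1}, and your account is precisely the intended bookkeeping (choice of butterfly region anchored at $\Gamma(\tau_1)$ and $\Gamma(\tau')$, summation of the $\mathcal H^+$- and $\mathcal I^+$-localized hierarchies for $p>0$, the Kodama estimate together with $\partial_u\psi = r\partial_u\phi + \nu\phi$ for $p=0$, and the coarea identification of the spacetime bulks with $\int_{\tau_1}^{\tau_2}\big(\mathcal E_{p-1}+\underline{\mathcal E}{}_{p-1}\big)\,d\tau$ via $\dot\Gamma^u,\dot\Gamma^v\sim 1$).
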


We shall require the following interpolation lemma:
\begin{lem}\label{lem:interpolation}
    Let $0\le p_1<p<p_2\le 3-\delta$ and $\tau\in[1,\tau_f]$. It then holds that \begin{align*}
      \mathcal E_p(\tau)&\les \big(\mathcal E_{p_1}(\tau))^\frac{p_2-p}{p_2-p_1}\big(\mathcal E_{p_2}(\tau))^\frac{p-p_1}{p_2-p_1},\\
     \underline{\mathcal E}{}_p(\tau) &\les \big(\underline{\mathcal E}{}_{p_1}(\tau))^\frac{p_2-p}{p_2-p_1}\big(\underline{\mathcal E}{}_{p_2}(\tau))^\frac{p-p_1}{p_2-p_1}.
  \end{align*}
\end{lem}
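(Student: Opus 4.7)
The plan is to extract from each of $\mathcal{E}_p$ and $\underline{\mathcal{E}}{}_p$ a ``principal part'' that depends on $p$ through a single monomial weight, plus a ``zeroth order remainder'' that is completely independent of $p$. Once this decomposition is in place, Hölder's inequality in the weight handles the principal parts directly, and the $p$-independent remainders can be ``interpolated'' trivially by writing $x = x^a x^b$.

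Concretely, I would introduce the principal parts
\begin{equation*}
A_p(\tau) \doteq \int_{C^{\tau_f}(\tau)} r^p (\partial_v\psi)^2\, dv, \qquad B_p(\tau) \doteq \int_{\underline C{}^{\tau_f}(\tau)} (\bar r - M)^{2-p}\,\frac{(\partial_u\psi)^2 + (\partial_u\phi)^2}{-\bar\nu}\, du
\end{equation*}
and the $p$-independent remainders
\begin{equation*}
D(\tau) \doteq \int_{C^{\tau_f}(\tau)} \phi^2\, dv, \qquad D_H(\tau) \doteq \int_{\underline C{}^{\tau_f}(\tau)} (-\bar\nu)\phi^2\, du.
\end{equation*}
The first step is to establish the comparisons $\mathcal{E}_p \sim A_p + D$ and $\underline{\mathcal E}{}_p \sim B_p + D_H$ for every $p \in [0, 3-\delta]$. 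The direction $\gtrsim$ is immediate from the pointwise bounds $1 \le r^p$ (since $r \ge \Lambda \gtrsim 1$) and $(\bar r - M)^p \le (\Lambda - M)^p$, together with inspection of the definitions. For the direction $\lesssim$, I would eliminate $(\partial_v\phi)^2$ contributions using the algebraic identity $r\partial_v\phi = \partial_v\psi - \lambda\phi$, which converts any $r^{q+2}(\partial_v\phi)^2$ into $r^q(\partial_v\psi)^2 + r^q\phi^2$ up to a bounded factor; then the leftover zeroth order contribution $\int r^p \phi^2\, dv$ present in $\mathcal{E}_p$ for $p \in [0,1)$ is controlled via the Hardy inequality \eqref{eq:Hardy-I-2} with $\alpha = p - 2 < -1$. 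An entirely analogous argument for $\underline{\mathcal E}{}_p$ uses \eqref{eq:Hardy-h-5} with $\alpha = -p > -1$ to dominate the $\int(\bar r - M)^{-p}(-\bar\nu)\phi^2\, du$ term appearing in $\underline{\mathcal E}{}_p$ for $p \in [0,1)$, together with the pointwise bound $(\bar r - M)^2 \lesssim (\bar r - M)^{2-p}$ that controls the $(\partial_u\phi)^2$ contribution in the Hardy output.

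With the decompositions in hand, set $a \doteq \frac{p_2 - p}{p_2 - p_1}$ and $b \doteq \frac{p - p_1}{p_2 - p_1}$. Hölder's inequality applied to the nonnegative integrands, using the factorizations
\begin{equation*}
r^p = (r^{p_1})^a (r^{p_2})^b, \qquad (\bar r - M)^{2-p} = \bigl((\bar r - M)^{2-p_1}\bigr)^a \bigl((\bar r - M)^{2-p_2}\bigr)^b,
\end{equation*}
gives $A_p \le A_{p_1}^a A_{p_2}^b$ and $B_p \le B_{p_1}^a B_{p_2}^b$. Since the remainders are $p$-independent and satisfy $D \lesssim \mathcal{E}_{p_i}$ and $D_H \lesssim \underline{\mathcal E}{}_{p_i}$ for $i = 1, 2$, we may write $D = D^a D^b \lesssim \mathcal{E}_{p_1}^a \mathcal{E}_{p_2}^b$ and similarly $D_H \lesssim \underline{\mathcal E}{}_{p_1}^a \underline{\mathcal E}{}_{p_2}^b$. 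Summing these bounds and reapplying the comparisons yields the claimed interpolation estimates. The main technical subtlety to watch for is the transition case $p_1 < 1 \le p_2$, where the literal formulas defining $\mathcal{E}_{p_1}$ and $\mathcal{E}_{p_2}$ use different weight structures; this non-uniformity is precisely resolved by the preliminary decomposition, after which Hölder applies uniformly across all three cases.
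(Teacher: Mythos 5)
Your proof is correct, and the core device—Hölder's inequality applied to a monomial weight—is the same one the paper cites. The paper's proof, however, is a single line invoking the abstract inequality $\int_X w^p\,d\mu \le (\int_X w^{p_1}\,d\mu)^a(\int_X w^{p_2}\,d\mu)^b$, and this does not apply to $\mathcal E_p$ and $\underline{\mathcal E}{}_p$ termwise, because their definitions switch form at $p=1$: in a transition case such as $p_1<1\le p$, the lower-order terms of $\mathcal E_{p_1}$ carry weights $r^{p_1+2}$ and $r^{p_1}$ that do not factor against the $r^2$- and $r^0$-weighted terms appearing in $\mathcal E_p$ and $\mathcal E_{p_2}$. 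Your preliminary reduction $\mathcal E_p\sim A_p+D$ and $\underline{\mathcal E}{}_p\sim B_p+D_H$, established uniformly over $p\in[0,3-\delta]$ using the identity $r\partial_v\phi=\partial_v\psi-\lambda\phi$ together with the Hardy inequalities \eqref{eq:Hardy-I-2} and \eqref{eq:Hardy-h-5}, is exactly what repairs this: it isolates a single-weight principal part that interpolates cleanly by Hölder, plus a $p$-independent remainder that interpolates trivially because $a+b=1$. As a consistency check, the admissible ranges in the Hardy inequalities ($\alpha=p-2<-1$ in \eqref{eq:Hardy-I-2} and $\alpha=-p>-1$ in \eqref{eq:Hardy-h-5}) coincide precisely with the regime $p<1$ where the alternative form of the energies is in force. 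So while the approach matches the paper in spirit, you have supplied the nontrivial intermediate step that the paper's terse one-line proof leaves implicit.
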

\begin{proof}
    This is a consequence of the following general inequality for a nonnegative function $w$ on a measure space $(X,\mu)$, which is  immediately obtained from H\"older's inequality:
    \begin{equation*}
        \int_X w^p\,d\mu\le\left(\int_X w^{p_1}\,d\mu\right)^\frac{p_2-p}{p_2-p_1}\left(\int_X w^{p_2}\,d\mu\right)^\frac{p-p_1}{p_2-p_1}.\qedhere
    \end{equation*}
\end{proof}

The last preparatory result we shall require is an estimate for the initial energy in terms of our seed data norm $\mathfrak D$.

\begin{lem}\label{lem:initial-energy-est} For any $A\ge 1$, $\ve$ sufficiently small, $\tau_f\in\mathfrak B$, and $\alpha\in\mathfrak A_{I(\tau_f)}$, it holds that 
    \begin{equation}
        \mathcal E_p(1)+\underline{\mathcal E}{}_p(1)\les \big( \mathfrak D[\mathcal S_0]\big)^2.\label{eq:data-estimate}
    \end{equation}
\end{lem}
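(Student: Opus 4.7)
The plan is to compute the energies at $\tau=1$ in the initial data gauge $(\hat u,\hat v)$ rather than in the teleological gauge, where the seed data structure and the normalizations $\hat\nu=-1$ on $\underline C{}_\ing$, $\hat\lambda=1$ on $C_\out$ make direct bounds transparent. The key mechanism is that the non-degenerate one-forms
\begin{equation*}
(\partial_u\psi)^2(-\nu)^{-1}du,\quad (\partial_u\phi)^2(-\nu)^{-1}du,\quad -\nu\phi^2\,du
\end{equation*}
on a line of constant $v$ are reparametrization-invariant under $u\mapsto \hat u$, with analogous statements on constant-$u$ lines for the $v$-counterparts. By the bootstrap bound \eqref{eq:boot-nu} and \cref{lem:basic-consequences}, $\nu/\bar\nu\sim 1$ throughout $\mathcal D_{\tau_f}$ and $\lambda\sim 1$ on $\mathcal D_{\tau_f}\cap\{r\ge\Lambda\}$, so we may replace $\bar\nu$ by $\nu$ and $\bar\lambda$ by $\lambda$ in the integrands of \eqref{eq:data-estimate} with a bounded multiplicative error. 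After this replacement, the change of parametrization to $(\hat u,\hat v)$ on the initial cones produces the clean expressions $(\partial_u\psi)^2(-\nu)^{-1}du|_{\underline C_\ing}=(\partial_{\hat u}\hat\psi)^2\,d\hat u$ and $(\partial_v\psi)^2\lambda^{-1}dv|_{C_\out}=(\partial_{\hat v}\hat\psi)^2\,d\hat v$.

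For $\underline{\mathcal E}{}_p(1)$, I would use the seed-data identities $\hat\psi|_{\underline C_\ing}=(\Lambda-\hat u)\mathring\phi_\ing$, whence $\partial_{\hat u}\hat\psi=-\mathring\phi_\ing+(\Lambda-\hat u)\partial_{\hat u}\mathring\phi_\ing$, bounded pointwise by $\mathfrak D$ from the definition of the seed-data norm, and analogously $|\partial_{\hat u}\phi|\le\mathfrak D$, $|\phi|\le\mathfrak D$ on $\underline C_\ing$. Combined with the bootstrap bound $|\tilde r|\lesssim\varepsilon^{3/2}$, which yields $\bar r-M\sim\Lambda-\hat u-M$ and $d\bar r\sim -d\hat u$ on $\underline C(1)$, the contribution of the $(\partial_u\psi)^2$ term becomes
\begin{equation*}
\lesssim \mathfrak D^2\int_0^{\Gamma^{\hat u}(\tau_f)}(\bar r-M)^{2-p}\,d\hat u\lesssim \mathfrak D^2\int_{\bar r_\mathrm{min}}^{\Lambda+O(\varepsilon^{3/2})}(\bar r-M)^{2-p}\,d\bar r\lesssim \frac{(\Lambda-M)^{3-p}}{3-p}\mathfrak D^2,
\end{equation*}
uniformly in $\tau_f$ for every $p\in[0,3-\delta]$. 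The $(\partial_u\phi)^2$ and $\phi^2$ pieces are bounded by the same argument (for $p\in[0,1)$ the weight $(\bar r-M)^{-p}$ is still integrable in $\bar r$ on $[M,\Lambda]$).

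For $\mathcal E_p(1)$, I would use $\hat\psi|_{C_\out}=(\Lambda+\hat v)\mathring\phi_\out$, giving $\partial_{\hat v}\hat\psi_\out=\Lambda\partial_{\hat v}\mathring\phi_\out+\partial_{\hat v}(\hat v\mathring\phi_\out)$, which by the seed-data norm is bounded pointwise by $\mathfrak D(1+\hat v)^{-2}$. Using $r(0,\hat v)=\Lambda+\hat v\sim 1+\hat v$, $\lambda\sim 1$ on $C(1)$, and \cref{lem:v-est}, one obtains
\begin{equation*}
\int_{C(1)}r^p(\partial_v\psi)^2\,dv\lesssim \mathfrak D^2\int_0^\infty (1+\hat v)^{p-4}\,d\hat v\lesssim \mathfrak D^2,
\end{equation*}
convergent for $p<3$. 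The $r^{p_\star+2}(\partial_v\phi)^2$ and $r^{p_\star}\phi^2$ pieces follow from the seed-data bounds $|\partial_{\hat v}\mathring\phi_\out|\lesssim \mathfrak D(1+\hat v)^{-2}$ and $|\mathring\phi_\out|\lesssim\mathfrak D(1+\hat v)^{-1}$; the choice of $p_\star$ (which equals $p$ for $p\in[0,1)$ and $0$ for $p\ge 1$) keeps the $\hat v$-exponent strictly less than $-1$ in every case.

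The main potential obstacle is the uniformity of these bounds as $\tau_f\to\infty$: the integration region on $\underline C(1)$ extends towards $\hat u=\hat u_{\mathcal H^+}$ where $\bar r-M\to 0$, and the weight $(\bar r-M)^{2-p}$ diverges for $p>2$. A direct analysis of the Jacobian $\mathfrak u_{\tau_f}'(\hat u)=-\hat\gamma(\hat u,\Gamma^{\hat v}(\tau_f))$ shows it blows up like $(\bar r-M)^{-2}$ near the horizon; this blow-up is exactly compensated by the factor $(-\bar\nu)^{-1}\sim \bar r^2(\bar r-M)^{-2}$ in the energy, and it is precisely the reparametrization-invariance of $(\partial_u\psi)^2(-\nu)^{-1}du$ that allows us to bypass tracking the Jacobian explicitly. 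Once this is recognized, the integrability of $(\bar r-M)^{2-p}$ on $[M,\Lambda]$ for $p<3$ gives the uniform bound, consistent with the upper endpoint $p=3-\delta$ of the hierarchy.
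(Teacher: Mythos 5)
Your proof is correct and follows essentially the same route as the paper's: translate the seed-data bounds from the $(\hat u,\hat v)$ gauge using $\nu\sim\bar\nu$ on $\underline C_0$ and $\lambda\sim 1$ on $C_0$ (i.e.\ reparametrization invariance of the degenerate-weight one-forms), then integrate against the weights $(\bar r-M)^{2-p}$ and $r^{p-4}$, both of which remain integrable for $p\in[0,3-\delta]$ uniformly in $\tau_f$. You spell out explicitly what the paper leaves implicit (the Leibniz computation for $\partial_{\hat u}\hat\psi$, $\partial_{\hat v}\hat\psi$ from the seed-data formulas and the Jacobian cancellation near the horizon), but the underlying mechanism is identical.
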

\begin{proof} 
   Because $\lambda\sim 1$ on $C_0$  and $\nu\sim \bar\nu$ on $\underline C{}_0$, it follows from the definition of $\mathfrak D[\mathcal S_0]$ that
    \begin{equation}\label{eq:data-aux-1}
|r\phi|+r^2|\partial_v\phi|+r^2|\partial_v\psi|\les\mathfrak D[\mathcal S_0]
    \end{equation}
    on $C_0$ and
    \begin{equation}\label{eq:data-aux-2}
        |\phi|+\frac{|\partial_u\phi|}{-\bar\nu}+\frac{|\partial_u\psi|}{-\bar\nu}\les \mathfrak D[\mathcal S_0]
    \end{equation}
on $\underline C{}_0$. The bound \eqref{eq:data-estimate} for $\mathcal E_p(1)$ is now immediate. The bound for $\underline{\mathcal E}{}_p(1)$ follows once we observe that
    \begin{equation*}
        \int_{\underline C{}_0}(\bar r-M)^{2-p}(-\bar\nu)du \le \int_M^\Lambda (\bar r-M)^{2-p}\,d\bar r\les 1
    \end{equation*}
    for $p\in[0,3-\delta]$.
\end{proof}

\begin{proof}[Proof of \cref{prop:energy-est-improve}] First, we apply \eqref{eq:master-boundedness} to $[1,\tau_f]$ and use \eqref{eq:data-estimate} to infer
\begin{equation}\label{eq:decay-aux-1}
   \sup_{\tau\in[1,\tau_f]}\big(\mathcal E_{p}(\tau)+\underline{\mathcal E}{}_{p}(\tau)\big)\les \big(\mathfrak D[\mathcal S_0]\big)^2+\ve^3\les \ve^2 
\end{equation} for every $p\in[0,3-\delta]$ 
by definition of $\ve$. We assume $I(\tau_f)\ge 1$, else \eqref{eq:decay-aux-1} already suffices. For each $i\in \{0,\dotsc,I(\tau_f)-1\}$, we deduce from \eqref{eq:decay-aux-1} for $p=3-\delta$ and \eqref{eq:master-integral} for $p=3-\delta$ on the interval $[L_i,L_{i+1}]$, together with the pigeonhole principle, the existence of a $\tau'_i\in[L_i,L_{i+1}]$ such that 
\begin{equation}
    \mathcal E_{2-\delta}(\tau_i')+\underline{\mathcal E}{}_{2-\delta}(\tau_i')\les (L_{i+1}-L_i)^{-1}\left(\ve^2+\ve^3 L_i^{-1+\delta}\right)\les \ve^2 L_i^{-1}.\label{eq:decay-aux-2}
\end{equation}
Using now \eqref{eq:master-boundedness} for $p=2-\delta$ and the dyadicity of the sequence $L_i$, we upgrade \eqref{eq:decay-aux-2} to the statement that
\begin{equation}
     \mathcal E_{2-\delta}(\tau)+\underline{\mathcal E}{}_{2-\delta}(\tau)\les \ve^2 \tau^{-1}\label{eq:decay-aux-3}
\end{equation}
for every $\tau\in[1,\tau_f]$. Using \cref{lem:interpolation} to interpolate between \eqref{eq:decay-aux-1} for $p=3-\delta$ and \eqref{eq:decay-aux-3}, we obtain
\begin{equation}
     \mathcal E_{2}(\tau)+\underline{\mathcal E}{}_{2}(\tau) \les \ve^2\tau^{-1+\delta}\label{eq:decay-aux-4}
\end{equation}
for every $\tau\in[1,\tau_f]$. We now apply \eqref{eq:decay-aux-4} and \eqref{eq:master-integral} with $p=2$ to the intervals $[L_i,L_{i+1}]$ with $i\in\{0,\dotsc,I(\tau_f)-1\}$ to find $\tau_i''\in[L_i,L_{i+1}]$ such that
\begin{equation*}
     \mathcal E_{1}(\tau_i'')+\underline{\mathcal E}{}_{1}(\tau_i'')\les (L_{i+1}-L_i)^{-1}\big( \mathcal E_{2}(L_i)+\underline{\mathcal E}{}_{2}(L_i)+\ve^3L_i^{-2+2\delta}\big)\les \ve^2 L_i^{-2+\delta}. 
\end{equation*}
Using \eqref{eq:decay-aux-1}, this is again immediately upgraded to 
\begin{equation*}
      \mathcal E_{1}(\tau)+\underline{\mathcal E}{}_{1}(\tau)\les \ve^2\tau^{-2+\delta}
\end{equation*}
for every $\tau\in[1,\tau_f]$. Repeating this argument once more, we conclude 
\begin{equation*}
      \mathcal E_{0}(\tau)+\underline{\mathcal E}{}_{0}(\tau)\les \ve^2\tau^{-3+\delta}
\end{equation*}
for every $\tau\in[1,\tau_f]$. Interpolating between this estimate and \eqref{eq:decay-aux-1} for $p=3-\delta$, we finally have
\begin{equation}
      \mathcal E_{p}(\tau)+\underline{\mathcal E}{}_{p}(\tau)\les \ve^2\tau^{-3+\delta+p}\label{eq:decay-aux-5}
\end{equation}
for every $p\in[0,3-\delta]$ and $\tau\in[1,\tau_f]$.

Applying now \cref{lem:kodama-auxiliary}, we estimate
\begin{equation}\label{eq:decay-aux-7}
    \mathcal F(u,\tau)\les \ve^2\tau^{-3+\delta},\quad \underline{\mathcal F}(v,\tau)\les \ve^2\tau^{-3+\delta}
\end{equation}
for any $(u,v)\in\mathcal D_{\tau_f}$ and $\tau\in[1,\tau_f]$

For $A$ sufficiently large, \eqref{eq:decay-aux-5} implies \eqref{ep_decay} and \eqref{uep_decay} and \eqref{eq:decay-aux-7} implies \eqref{f_decay} and \eqref{uf_decay}.
\end{proof}

\subsection{Pointwise estimates}\label{sec:pointwise}

In this section we will prove several pointwise estimates for the scalar field $\phi$ and its derivatives, with the energy decay estimates of the previous section as a starting point. We shall also use the method of characteristics to bound derivatives of $\phi$. At this point, we fix $A=A_0$ sufficiently large that both \cref{prop:geometry-est-improve,prop:energy-est-improve} hold.

\begin{prop}
For any $\ve$ sufficiently small, $\tau_f\in\mathfrak B$, $\alpha\in\mathfrak A_{I(\tau_f)}$, and $1\le\tau_1\le \tau_2\le\tau_f$, we have the pointwise decay estimates
\begin{align}
 \label{est:phi_bound}    |(\bar r-M)^{1/2}\phi|&\les \ve \tau^{-3/2+\delta/2},\\
\label{est:psi_bound}    |\psi|&\les \ve\tau^{-1+\delta/2}
\end{align}
and the pointwise boundedness estimates
\begin{align}
  \label{est:vpsi_bound}   |r^2\partial_v\psi|&\les\ve,\\
 \label{est:vphi_bound}   |r^2\partial_v\phi|&\les\ve,\\
\label{est:upsi_bound}    \left|\frac{\partial_u\psi}{-\nu}\right|&\les\ve,\\
  \label{est:uphi_bound}  \left|r\frac{\partial_u\phi}{-\nu}\right|&\les\ve
\end{align}
on $\mathcal D_{\tau_f}$.
\end{prop}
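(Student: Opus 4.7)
The plan is to prove the estimates in two stages: first the pointwise decay of $\phi$ and $\psi$ (estimates \eqref{est:phi_bound} and \eqref{est:psi_bound}), and then to use the resulting pointwise boundedness of $\phi$ and $\psi$ to control the first derivatives via integration of the wave equation \eqref{eq:wave-equation-psi} along characteristics from initial data.

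For the decay estimates, the key preliminary is a pointwise bound along $\Gamma$: since $\mathcal E_0(\tau)$ controls $\int_{C^{\tau_f}(\tau)}(\phi^2 + r^2(\partial_v\phi)^2)\,dv$, a trace-type inequality obtained by integrating $\partial_v(\phi^2)$ over a unit $v$-interval anchored at $\Gamma$ gives $|\phi|_\Gamma(\tau)|^2 \les \mathcal E_0(\tau) \les \ve^2\tau^{-3+\delta}$. In the far region $r \geq \Lambda$, I would then integrate $\partial_v\psi$ along $C_u$ from $\Gamma$ outward, applying Cauchy--Schwarz with the uniformly integrable weight $r^{-1-\delta/2}$ against $\mathcal E_{1+\delta/2}(\tau) \les \ve^2\tau^{-2+3\delta/2}$ to bound $|\psi(u,v) - \psi|_\Gamma|$ and conclude \eqref{est:psi_bound}. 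In the near-horizon region $r \leq \Lambda$, I would integrate $\partial_u\psi$ along $\underline C{}_v$ from $\Gamma$ inward, using Cauchy--Schwarz against $\underline{\mathcal E}{}_1(\tau) \les \ve^2\tau^{-2+\delta}$ (the weight $\int (\bar r-M)^{-1}(-\bar\nu)\,du$ being uniformly bounded by the change of variables $d\bar r = \bar\nu\,du$) to obtain \eqref{est:psi_bound} there. For \eqref{est:phi_bound} in the near-horizon region, the geometric fact that $\lambda \sim (\bar r-M)^2$ near $\mathcal H^+$ implies that the $v$-length of $C_u \cap \{r \leq \Lambda\}$ scales as $(\bar r-M)^{-1}$; a direct Cauchy--Schwarz against the near-horizon flux $\mathcal F(u,\tau) \les \ve^2\tau^{-3+\delta}$ then yields $(\bar r-M)|\phi - \phi|_\Gamma|^2 \les \ve^2\tau^{-3+\delta}$, giving \eqref{est:phi_bound}.

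For the derivative bounds, I would integrate the wave equation $\partial_u\partial_v\psi = 2\kappa\nu\varkappa\phi$ along characteristics from the initial data. For \eqref{est:vpsi_bound}, integrate in $u$ from $C_0$ to obtain $\partial_v\psi(u,v) = \partial_v\psi(0,v) + \int_0^u 2\kappa\nu\varkappa\phi\,du'$; multiplying by $r^2(u,v) \leq r^2(0,v)$ (since $\nu<0$), the initial data term is bounded by $\ve$, while for the source, using $|\psi| \les \ve$ from the first stage together with $|\varkappa| \les r^{-2}$, the change of variables $dr' = \nu\,du'$ reduces the integral to $r^2(u,v)\int_{r(u,v)}^{r(0,v)} \ve/r'^3\,dr' \les \ve$. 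The bound \eqref{est:vphi_bound} then follows from the algebraic identity $r^2\partial_v\phi = r\partial_v\psi - \lambda\psi$ combined with boundedness of $|\psi|$. For \eqref{est:upsi_bound}, one checks using \eqref{eq:nu-v} that $\partial_u\psi/(-\nu)$ satisfies the linear transport equation $\partial_v[\partial_u\psi/(-\nu)] = -2\kappa\varkappa(\phi + \partial_u\psi/(-\nu))$ with integrating factor $|\nu(u,0)|/|\nu(u,v)|$ (since $\partial_v\log|\nu| = 2\kappa\varkappa \geq 0$). Integrating from the initial data on $\underline C{}_0$, the monotonicity of $|\nu|$ in $v$ and the uniform boundedness of $\log(|\nu(u,v)|/|\nu(u,0)|)$ (controlled by \cref{lem:nu-estimate} together with the background ERN geometry) deliver the bound. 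Finally, \eqref{est:uphi_bound} follows from the identity $r\partial_u\phi/(-\nu) = \partial_u\psi/(-\nu) + \psi/r$.

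The main technical obstacle is ensuring that the integrating factor in the proof of \eqref{est:upsi_bound} remains controlled uniformly as one approaches the degenerate event horizon, where $\nu$ vanishes to second order. This is handled by reverting the computation to the initial-data coordinate system $(\hat u, \hat v)$, in which $\hat\nu = -1$ on $\underline C{}_\ing$ by construction and $|\hat\nu|$ remains bounded above and below throughout $\hat{\mathcal D}_{\tau_f}$ as a consequence of Raychaudhuri monotonicity and the geometric bootstrap estimates, so that the characteristic ODE argument yields a uniform constant independent of $\tau_f$.
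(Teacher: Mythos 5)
Your argument for the near-horizon bound on $\psi$ (estimate \eqref{est:psi_bound} for $r\le\Lambda$) has a genuine gap. You propose to write $\psi(u,v)-\psi|_\Gamma=\int_{u^\Lambda(v)}^u\partial_u\psi\,du'$ and apply Cauchy--Schwarz against $\underline{\mathcal E}{}_1$, claiming the companion weight $\int(\bar r-M)^{-1}(-\bar\nu)\,du'$ is uniformly bounded. It is not: the change of variables gives $\int_{\bar r(u,v)}^\Lambda(\bar r'-M)^{-1}\,d\bar r'=\log\big(\tfrac{\Lambda-M}{\bar r(u,v)-M}\big)$, which diverges logarithmically as $\bar r\to M$. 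No reweighting $\underline{\mathcal E}_p$ repairs this: $p>1$ still leaves a divergent weight, while $p\ge 2$ gives a convergent weight but only $\tau^{-1/2+\delta/2}$ decay. A smaller but similar issue afflicts your far-region argument, where pairing $r^{-1-\delta/2}$ against $\mathcal E_{1+\delta/2}$ yields $\ve\tau^{-1+3\delta/4}$ rather than the asserted $\ve\tau^{-1+\delta/2}$.

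The paper's route avoids both problems by differentiating the \emph{square} of the field rather than the field itself. It applies the fundamental theorem of calculus to $(\bar r-M)^{1-\beta}\phi^2$ (near-horizon) and $r^{-\beta}\psi^2$ (far) with $\beta\in\{0,1\}$, inserting a cutoff vanishing on $\Gamma$ so that all boundary terms drop. The cross term $\phi\,\partial_u\phi$ (resp.\ $\psi\,\partial_v\psi$) is then estimated via Cauchy--Schwarz with matched weights that pair $\underline{\mathcal E}_0$ against $\underline{\mathcal E}_{2\beta}$ (resp.\ $\mathcal E_0$ against $\mathcal E_{2-2\beta}$), producing $(\bar r-M)^{1-\beta}\phi^2\les(\underline{\mathcal E}_0\underline{\mathcal E}_{2\beta})^{1/2}$ with no logarithmic loss and the exact claimed rates. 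Your treatment of the derivative bounds \eqref{est:vpsi_bound}--\eqref{est:uphi_bound} does track the paper's argument (characteristic integration of the wave equation, the integrating factor $r^2(u,v)/r^2(0,v)$ for $\partial_v\psi$, the gauge-invariant transport equation for $\partial_u\psi/(-\nu)$ with the sign of $\kappa\bar\varkappa$ controlling the integrating factor); note however that the boundedness of that integrating factor is established in the paper not by passing to $(\hat u,\hat v)$ but by splitting $\varkappa=\bar\varkappa+\tilde\varkappa$ and using $\kappa\bar\varkappa\ge0$ together with the integrability of $\tilde\varkappa$ from \eqref{eq:varkappa-decay}.
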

\begin{proof} \textsc{Proof of \eqref{est:phi_bound} and \eqref{est:psi_bound} for $r\le\Lambda$}: Let $\chi_1=\chi_1(r)$ be a smooth cutoff satisfying $\chi_1(r)=1$ for $r\le \Lambda-M$ and $\chi_1(r)=0$ for $r\ge\Lambda$. By the same argument as in \cref{lem:gamma-slope}, the set $\{r=\Lambda-M\}$ is a timelike curve in $\mathcal D_{\tau_f}$. For $(u,v)\in\mathcal D_{\tau_f}$ with $r(u,v)\le\Lambda-M$, the segment $[u^\Lambda(v),u]\times\{v\}$ is entirely contained in $\mathcal D_{\tau_f}$. Therefore, for $\beta\in\{0,1\}$, we may write
\begin{equation*}
     (\bar r - M)^{1-\beta} \phi^2 (u,v)= \int_{u^\Lambda(v)}^u\big(\chi_1'\nu(\bar r - M)^{1-\beta} \phi^2+(1-\beta)\chi_1\nu\phi^2+2\chi_1(\bar r - M)^{1-\beta} \phi \partial_u\phi \big)du'.
\end{equation*}
The first and second terms can both be estimated by $\underline{\mathcal E}{}_0(\tau(u,v))$ and for the third we use Cauchy--Schwarz, 
\begin{align*}
    \int_{u^\Lambda(v)}^u\chi_1(\bar r - M)^{1-\beta} |\phi| |\partial_u\phi| \,du'&\les \left(\int_{u^\Lambda(v)}^u\phi^2\,(-\bar\nu)du'\right)^{1/2}\left(\int_{u^\Lambda(v)}^u(\bar r-M)^{2-2\beta}\frac{(\partial_u\phi)^2}{-\bar\nu}\,du'\right)^{1/2}\\
    &\les \big(\underline{\mathcal E}{}_0(\tau(u,v))\underline{\mathcal E}{}_{2\beta}(\tau(u,v))\big)^{1/2}\les \ve^2 \tau^{-3+\beta+\delta}(u,v).
\end{align*}
For $(u,v)\in \{ \Lambda-M\leq r \leq \Lambda\} \cap  \mathcal D_{\tau_f}$, define
\begin{equation*}
    v^*(u)\doteq\begin{cases}
        v^{\Lambda-M}(u)& \text{if }r(u,0)\ge\Lambda-M\\
        0 & \text{otherwise}
    \end{cases}.
\end{equation*}
Note that $\{u\}\times [v^\ast(u),v] \subset \mathcal D_{\tau_f}$, $0\le v-v^\ast(u)\lesssim 1$ and $\tau(u,v^\ast(u)) \sim \tau(u,v)$. Therefore,
\begin{equation*}
    |\phi(u,v)| \leq  |\phi(u,v^\ast(u))|+ \int_{v^\ast(u)}^v |\partial_v \phi| \,dv'\lesssim \ve \tau^{-3/2 + \delta/2} +\big(\mathcal F(u, \tau(u,v^\ast(u))) \big)^{1/2}   \lesssim \ve \tau^{-3/2 + \delta/2}(u,v) ,
\end{equation*}
where we used that $(u,v^\ast(u)) \in \{ r \leq \Lambda-M\}\cup\underline C{}_0$ and  the estimate \eqref{eq:decay-aux-7}. This proves \eqref{est:phi_bound} and  \eqref{est:psi_bound} for all $r\le\Lambda$.
 
\textsc{Proof of \eqref{est:phi_bound} and \eqref{est:psi_bound} for $r\ge\Lambda$}:  Let $\chi_2=\chi_2(r)$ be a smooth cutoff satisfying $\chi_2(r)=1$ for $r\ge \Lambda+M$ and $\chi_2(r)=0$ for $r\le\Lambda$. By the same argument as in \cref{lem:gamma-slope}, the set $\{r=\Lambda+M\}$ is a timelike curve in $\mathcal D_{\tau_f}$. For $(u,v)\in\mathcal D_{\tau_f}$ with $r(u,v)\ge\Lambda+M$, the segment $\{u\}\times[v^\Lambda(u),v]$ is entirely contained in $\mathcal D_{\tau_f}$. Therefore, for $\beta\in\{0,1\}$, we have
\begin{align*}
   \left| r^{-\beta} \psi^2 (u,v)\right|&=\left| \int_{v^\Lambda(u)}^v\big(\chi_2'\lambda r^{-\beta}\psi^2-\beta \chi_2r^{-\beta-1}\lambda\psi^2+2\chi_2r^{-\beta}\psi\partial_v\psi\big)\,dv'\right|\\&\les \mathcal E_0(\tau(u,v)) + \big( \mathcal E_0(\tau(u,v)) \mathcal E_{2-2\beta}(\tau(u,v))\big)^{1/2}\les\ve^2\tau^{-2-\beta+\delta}(u,v),
\end{align*}
which proves \eqref{est:phi_bound} and \eqref{est:psi_bound} for $r\geq \Lambda+M$. The region $\Lambda\le r\le\Lambda+M$ is handled completely analogously to $\Lambda-M\le r\le\Lambda$ and is omitted. This completes the proofs of \eqref{est:phi_bound} and \eqref{est:psi_bound}.

\textsc{Proof of \eqref{est:vpsi_bound}}: Using \eqref{eq:wave-equation-psi}, we compute
\begin{equation*}
    \partial_u(r^2\partial_v\psi) = \frac{2\nu}{r}(r^2\partial_v\psi) + 2r\kappa\nu\varkappa\psi
\end{equation*}
which can be solved for
\begin{equation}\label{eq:bounded-aux-1}
   ( r^2\partial_v\psi)(u,v) = \frac{r^2(u,v)}{r^2(0,v)}  ( r^2\partial_v\psi)(0,v)+r^2(u,v)\int_0^u \left(\frac{2\kappa\nu\varkappa}{r}\psi\right)(u',v) \,du'
\end{equation} using an integrating factor (note that $\exp(\int_{u_1}^{u_2}\frac{2\nu}{r}(u',v)\,du')=r^2(u_2,v)/r^2(u_1,v)$). Since $r(u_2,v)\le r(u_1,v)$ for $u_1\le u_2$ and 
\begin{equation*}
    r^2(u,v)\int_0^u r^{-3}(u',v)\,(-\nu)du'\les 1,
\end{equation*}
we readily obtain \eqref{est:vpsi_bound} from \eqref{eq:data-aux-1}, \eqref{est:psi_bound},  \eqref{eq:bounded-aux-1}, and the geometric estimates. 

\textsc{Proof of \eqref{est:vphi_bound}}: This follows immediately from the identity
\begin{equation*}
    r^2\partial_v\phi=r\partial_v\psi-\lambda\phi
\end{equation*}
and the previously proved estimates. 

\textsc{Proof of \eqref{est:upsi_bound}}: Using  \eqref{eq:nu-v} and \eqref{eq:wave-equation-psi}, we compute
\begin{equation}\label{eq:Ypsi-negative-1}
    \partial_v\left(\frac{\partial_u\psi}{-\nu}\right)= -2\kappa\varkappa \left(\frac{\partial_u\psi}{-\nu}\right) - 2\kappa\varkappa\phi,
\end{equation}
which can be solved for
\begin{equation}\label{eq:Ypsi-negative-2}
    \left(\frac{\partial_u\psi}{-\nu}\right)(u,v)=\exp\left(-\int_0^v2\kappa\varkappa\,dv'\right)\left(\frac{\partial_u\psi}{-\nu}\right)(u,0)-\int_0^v\exp\left(-\int_{v'}^v 2\kappa\varkappa\,dv''\right) 2\kappa\varkappa\phi\,dv'
\end{equation}
using an integrating factor, where the integrands are evaluated at constant $u$. Using the simple observation that $\kappa\bar\varkappa\ge 0$ on $\mathcal D_{\tau_f}$ and the decay estimate \eqref{eq:varkappa-decay}, we have
\begin{equation*}
    \exp\left(-\int_{v_1}^{v_2}2\kappa\varkappa\,dv'\right)\le \exp\left(-\int_{v_1}^{v_2}2\kappa\tilde\varkappa\,dv'\right)\les 1+\int_{v_1}^{v_2}|\tilde\varkappa|\,dv'\les 1+\ve^{3/2}v_1^{-1+\delta}\les 1
\end{equation*}
for $\ve$ sufficiently small. Next, we estimate 
\begin{align*}
    \int_0^v |\varkappa|| \phi|\,dv'& \les \int_0^vr^{-3}\big((\bar r-M)|\phi|+|\tilde\varkappa||\phi|\big)\,dv'\\
    &\les \ve\int_0^v \big(r^{-3} \tau^{-3/2+\delta/2}+r^{-2}\tau^{-3+\delta}+r^{-3}\tau^{-2+\delta}\big)\,dv'\les \ve
\end{align*}
using \eqref{est:phi_bound} and \eqref{eq:varkappa-decay} again. Therefore, \eqref{eq:Ypsi-negative-2} and \eqref{eq:data-aux-2} yield \eqref{est:upsi_bound} as desired. 

\textsc{Proof of \eqref{est:uphi_bound}}: This follows immediately from the identity 
\begin{equation*}
    r\frac{\partial_u\phi}{-\nu}= \frac{\partial_u\psi}{-\nu}+\phi
\end{equation*} and the previously proved estimates. \end{proof}

\section{The proof of nonlinear stability, \texorpdfstring{\cref{thm:stability}}{Theorem 2}}\label{sec:proof-main}

In this section, we complete the proof of \cref{thm:stability}. As the proof will involve repeatedly updating the gauge as $\tau_f\to\infty$, we will now reintroduce the $\tau_f$ sub- and superscripts on various relevant quantities. Also, as in \cref{sec:pointwise}, we fix $A=A_0$ so that \cref{prop:geometry-est-improve,prop:energy-est-improve} hold. 

We now briefly recall the notation for our gauges and refer the reader back to \cref{sec:setup,sec:anchor} for the precise definitions. The coordinates $(\hat u,\hat v)$ refer to the initial data normalized coordinates on the maximal development $\hat{\mathcal Q}_\mathrm{max}$, which are transformed into the teleologically normalized coordinates $(u_{\tau_f},v)$ by the diffeomoprhism $\Phi_{\tau_f}=(\mathfrak u_{\tau_f},\mathfrak v):\hat{\mathcal D}_{\tau_f}\to\mathcal D_{\tau_f}$. The inverse of $\Phi_{\tau_f}$ is denoted by $\hat\Phi_{\tau_f}$. Note that the background extremal Reissner--Nordstr\"om solutions $\bar r_{\tau_f}$ are defined on $\mathcal D_{\tau_f}$ in the coordinates $(u_{\tau_f},v)$. Therefore, part of the proof of \cref{thm:stability} will be to show that the pullbacks $\bar r_{\tau_f}\circ\Phi_{\tau_f}$ converge on (an appropriate subset of) $\hat{\mathcal Q}_\mathrm{max}$. 

In \cref{sec:open-and-closed}, we carry out the main continuity argument of the paper by showing that the bootstrap set $\mathfrak B$ is open and closed. In \cref{sec:gauges}, we prove estimates comparing the gauges $\Phi_{\tau_f}$ and background solutions $\bar r_{\tau_f}$ for different values of $\tau_f$. In \cref{sec:limiting-argument}, we extract the limiting comparison solution $\bar r_\infty$ as $\tau_f\to\infty$ and show that the energy hierarchies extend to the limit. Finally, we complete the proof of \cref{thm:stability} in \cref{sec:putting-together}. 

\subsection{The continuity argument}\label{sec:open-and-closed}

Recall the definition of the bootstrap set $\mathfrak B(\mathcal S_0,\ve,A_0)$ from \cref{sec:bootstrap-definitions}. We now have the following fundamental statement: 

\begin{prop}\label{prop:continuity} Let $M_0$ and $\delta$ be as in the statement of \cref{thm:stability}. There exists an $\ve_\stab(M_0,\delta)>0$ such that if $\ve\le\ve_\stab$ and $\mathcal S_0\in\mathfrak M_0$ with $\mathfrak D[\mathcal S_0]\le \ve $, then $\mathfrak B(\mathcal S_0,\ve,A_0)=[1,\infty)$, where $A_0$ is the constant for which \cref{prop:geometry-est-improve,prop:energy-est-improve} hold.
\end{prop}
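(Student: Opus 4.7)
The proof proceeds by showing the bootstrap set $\mathfrak B\doteq\mathfrak B(\mathcal S_0,\ve,A_0)$ is open, closed, and nonempty in $[1,\infty)$; connectedness of $[1,\infty)$ then forces $\mathfrak B=[1,\infty)$. Nonemptiness is \cref{prop:B-nonempty}; the constant $A_0$ is fixed as in \cref{prop:geometry-est-improve,prop:energy-est-improve}, and $\ve_\stab$ will be chosen small in terms of $M_0$ and $\delta$ along the way.

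For openness at $\tau_f^0\in\mathfrak B$, \cref{prop:geometry-est-improve,prop:energy-est-improve} strictly upgrade the bootstrap estimates \eqref{eq:boot-nu}--\eqref{eq:boot-flux-2} with a factor of $\tfrac 12$, uniformly in $\alpha\in\mathfrak A_{I(\tau_f^0)}$. Combined with local existence (\cref{prop:slab-existence}) and Cauchy stability of the ODE defining $\Gamma$ in condition~2 of \cref{def:B}, and of the teleological gauge maps \eqref{eq:teleology-1}--\eqref{eq:teleology-2}, this extends the solution, $\Gamma$, and the double null gauge slightly beyond $\hat{\mathcal D}_{\tau_f^0}$; continuous dependence then restores the unimproved bootstrap bounds on a slightly larger interval of bootstrap times. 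For $\tau_f$ in a neighborhood of $\tau_f^0$ that does not cross any dyadic time, $I(\tau_f)$ is constant and conditions~1 and~3 of \cref{def:B} are preserved verbatim.

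The main obstacle is openness across a dyadic time $L_{i+1}$, where a new modulation interval $\mathfrak A_{i+1}\subset\mathfrak A_i$ must be produced. The map $\Pi_{i+1}:\mathfrak A_i\to\Bbb R$ of \eqref{eq:Pi-defn} is continuous in $\alpha$ by continuous dependence on initial data (and the solution exists past $L_{i+1}$ for every $\alpha\in\mathfrak A_i$ by the a priori bounds and local existence). Applying \cref{lem:varpi-est} in the form of \cref{rk:modulation-argument}---i.e.~with $I(\tau_f)$ replaced by $i$---at the point $\Gamma(L_{i+1})$ gives
\begin{equation*}
|\Pi_{i+1}(\alpha)-\Pi_i(\alpha)|\les A_0\ve^2 L_i^{-3+\delta}
\end{equation*}
for every $\alpha\in\mathfrak A_i$. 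Since $\Pi_i(\alpha_i^\pm)=\pm\ve^{3/2}L_i^{-3+\delta}$ and $\ve^{3/2}L_{i+1}^{-3+\delta}=2^{-3+\delta}\ve^{3/2}L_i^{-3+\delta}$ with $2^{-3+\delta}<1$, the algebraic gap $(1-2^{-3+\delta})\ve^{3/2}L_i^{-3+\delta}$ exceeds the right-hand side above for $\ve$ sufficiently small, forcing $\pm\Pi_{i+1}(\alpha_i^\pm)>\ve^{3/2}L_{i+1}^{-3+\delta}$. The intermediate value theorem then produces $\alpha_{i+1}^\pm$ in the interior of $\mathfrak A_i$ with $\Pi_{i+1}(\alpha_{i+1}^\pm)=\pm\ve^{3/2}L_{i+1}^{-3+\delta}$; setting $\mathfrak A_{i+1}\doteq[\alpha_{i+1}^-,\alpha_{i+1}^+]$ verifies conditions~1 and~3 of \cref{def:B}. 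This is the $\ve^{3/2}$-versus-$\ve^2$ gain anticipated in \cref{sec:intro-modulation} and works precisely because the energy hierarchy closes all the way up to $p=3-\delta$.

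Closedness is routine: if $\tau_f^n\in\mathfrak B$ with $\tau_f^n\nearrow\tau_f^\infty\in[1,\infty)$, the bootstrap estimates \eqref{eq:boot-nu}--\eqref{eq:boot-flux-2} and the pointwise bounds of \cref{sec:pointwise} furnish uniform $C^1$ control on $(\log r,\log\hat\Omega^2,\phi)$ on the union of the $\hat{\mathcal D}_{\tau_f^n}$; the generalized extension principle for this model (cf.~\cite{Kommemi13} and the proof of \cref{prop:slab-existence}) then allows one to extend the solution, $\Gamma$, and the teleological gauge to $\hat{\mathcal D}_{\tau_f^\infty}$, with conditions~2 and~4--6 at $\tau_f^\infty$ following by continuity; conditions~1 and~3 are inherited from the $\tau_f^n$, or, if $\tau_f^\infty=L_{i+1}$ is a new dyadic time, are established by the modulation argument above. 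This completes the continuity argument and the proof.
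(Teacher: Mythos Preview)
Your argument is essentially correct and follows the paper's approach, but you have the location of the modulation step backwards. Since $I(\tau)=\lfloor\log_2\tau\rfloor$ is \emph{right}-continuous, for any $\tau_f^0\in\mathfrak B$ one has $I(\tau_f^0+\eta)=I(\tau_f^0)$ for all sufficiently small $\eta>0$; hence openness never requires producing a new $\mathfrak A_{i+1}$ (you can always shrink the neighborhood to avoid the next dyadic time). The genuine obstacle is \emph{closedness at a dyadic time}: when $\tau_f^n\nearrow L_{i+1}$, each $\tau_f^n$ has $I(\tau_f^n)=i$ but $I(L_{i+1})=i+1$, so $\mathfrak A_{i+1}$ must be constructed there. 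You do mention this in your last sentence and refer back to the modulation argument, so the logic is complete---but the paper places the entire intermediate-value-theorem construction in the closedness step (its \cref{lem:inductive-mod}), and openness is the ``easy'' direction.

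One further point the paper makes explicit and you omit: for openness you need to know there is room to extend along $\underline C{}_{\ing}$, i.e.\ that $\Gamma^{\hat u}(\tau_f)$ stays strictly below $U_*$. This is \cref{lem:enough-data} in the paper and is a necessary ingredient before invoking \cref{prop:slab-existence}.
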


\begin{proof}
    We infer this statement by proving that $\mathfrak B(\mathcal S_0,\ve,A_0)\subset[1,\infty)$ is nonempty, open, and closed for $\ve$ sufficiently small. Nonemptiness was proved in \cref{prop:B-nonempty} above. Openness will be proved in \cref{sec:open} below. Closedness will be proved in \cref{sec:modulation} below.
\end{proof}

\begin{rk}
    The number $\ve_\stab$ will be restricted one final time in \cref{prop:final-decay-geometry} below.
\end{rk}

\subsubsection{The proof of openness}  \label{sec:open}

We begin by showing that the bootstrap region always stops short of the future boundary of the initial data hypersurface $\hat{\mathcal C}$ which was defined in \cref{sec:seed-data}. See \cref{fig:open}.

\begin{lem}\label{lem:enough-data}  There exists a constant $\theta\in(0,1)$ such that for $\ve$ sufficiently small and $\tau_f\in \mathfrak B(\mathcal S_0,\ve,A_0)$, it holds that  $\Gamma^{\hat u}(\tau_f)\le \theta U_*$. 
\end{lem}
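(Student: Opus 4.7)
The essential content of the lemma is that $\Gamma^{\hat u}(\tau_f)$ should be bounded above by a quantity close to $\Lambda - M$, which is the $\hat u$-coordinate of the event horizon in the reference extremal Reissner--Nordström solution expressed in the initial data gauge. Since $|\Lambda - 100M_0| \le \ve$ and $|M - M_0| = ||e| - M_0| \le |e-e_0| + \ve^{3/2} \le 2\ve$ (the latter from $\alpha \in \mathfrak A_0 \subset \mathfrak A_{I(\tau_f)}$), we have $\Lambda - M \le 99 M_0 + 3\ve$, which is strictly less than $\theta U_* = 99.3 M_0$ for $\theta = 993/995$ and $\ve$ sufficiently small.

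To implement this rigorously, the plan is to construct a reference background $\bar r^{\,\ID}$ of extremal Reissner--Nordström type, with parameters $(M,e)$, expressed directly in the initial data gauge on $\hat{\mathcal D}_{\tau_f}$: that is, satisfying $\partial_{\hat u}\bar r^{\,\ID} = -1$ on $\underline C_\ing$, $\partial_{\hat v}\bar r^{\,\ID} = 1$ on $C_\out$, and $\bar r^{\,\ID}(0,0) = \Lambda$. Such a background exists and is unique by essentially the same argument as \cref{lem:existence-anchoring}, and in this gauge its event horizon coincides with the null hypersurface $\{\hat u = \Lambda - M\}$, while the timelike curve $\{\bar r^{\,\ID} = \Lambda\}$ emanates from $(0,0)$ and asymptotes to $\hat u = \Lambda - M$ as $\hat v \to \infty$, staying strictly to its left.

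The next step is to establish the pointwise comparison $|r - \bar r^{\,\ID}| \lesssim \sqrt{\ve}$ on $\hat{\mathcal D}_{\tau_f}$. The bootstrap estimate \eqref{eq:boot-r} gives $|r - \bar r_{\tau_f}| \lesssim \ve^{3/2}$ in the teleological gauge $(u_{\tau_f},v)$, so it remains to compare $\bar r_{\tau_f} \circ \Phi_{\tau_f}$ (the anchored background pulled back to the initial data gauge) with $\bar r^{\,\ID}$. This comparison is controlled by the discrepancy between the gauge transformation $\Phi_{\tau_f}$ and its analogue in exact ERN; by \eqref{eq:teleology-1} and \eqref{eq:teleology-2}, the discrepancy is governed by the integrals of $\tilde\gamma$ along the final ingoing cone and of $\tilde\kappa$ along $\Gamma$, both of which are $o(1)$ in $\ve$ by \eqref{eq:gamma-1} and \eqref{eq:kappa-4}.

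Once $|r - \bar r^{\,\ID}| \lesssim \sqrt\ve$ is proved, the curve $\Gamma = \{r = \Lambda\}$ is contained in the set $\{\bar r^{\,\ID} \ge \Lambda - C\sqrt\ve\}$, which by the explicit ERN geometry in the initial data gauge is contained in $\{\hat u \le \Lambda - M + C'\sqrt\ve\}$. For $\ve$ small, this gives $\Gamma^{\hat u}(\tau_f) \le 99M_0 + O(\sqrt\ve) \le \theta U_*$ with $\theta \doteq 993/995 < 1$, completing the proof.

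The main obstacle is the transfer of the teleological estimates into the initial data gauge, since the background $\bar r_{\tau_f}$ is itself defined in the teleological coordinates. Nonetheless, the ingredients required are precisely the quantitative bounds on $\tilde\kappa$ and $\tilde\gamma$ established in \cref{sec:initial-estimates}; the remainder of the argument is soft, relying only on monotonicity of $r$ in $\hat u$ and on the explicit structure of the reference ERN solution in the initial data gauge.
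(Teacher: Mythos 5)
Your plan shares the essential first observation with the paper's proof — the gauge condition $\partial_{\hat u}r=-1$ on $\underline C{}_\ing$ makes $r(\hat u,0)=\Lambda-\hat u$ explicit, so controlling $\Gamma^{\hat u}(\tau_f)$ reduces to a lower bound for $r(\Gamma^{\hat u}(\tau_f),0)$ — but then takes a detour that the paper avoids. The paper simply evaluates $r$ at the single point $(\Gamma^{\hat u}(\tau_f),0)$, writes $r=\bar r_{\tau_f}\circ\Phi_{\tau_f}+\tilde r_{\tau_f}\circ\Phi_{\tau_f}$ there, and uses two ingredients: the bootstrap bound \eqref{eq:boot-r} gives $|\tilde r_{\tau_f}|\lesssim\ve^{3/2}$, which is a pointwise inequality on the scalar $r-\bar r_{\tau_f}$ and hence holds at that point regardless of which coordinates one uses to label it; and the anchored ERN background satisfies $\bar r_{\tau_f}>M$ on the bounded rectangle $\mathcal D_{\tau_f}$. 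This yields $r(\Gamma^{\hat u}(\tau_f),0)\ge M-O(\ve^{3/2})$ and thus $\Gamma^{\hat u}(\tau_f)\le\Lambda-M+O(\ve^{3/2})=99M_0+O(\ve)$, which is quantitatively below $U_*=\frac{995}{10}M_0$. That's the whole proof.

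By contrast, your plan constructs an auxiliary ERN background $\bar r^\ID$ in the initial data gauge and seeks a \emph{global} comparison $|r-\bar r^\ID|\lesssim\sqrt\ve$. Several things go wrong or remain unaddressed here. First, the ``main obstacle'' you identify — transferring teleological estimates into the initial data gauge — is not actually an obstacle: $\tilde r$ is a scalar function and its pointwise bound is coordinate-independent, so nothing needs to be transferred; you only need to evaluate it at one point. Second, your sketch of the comparison between $\bar r_{\tau_f}\circ\Phi_{\tau_f}$ and $\bar r^\ID$ understates the work: these are two ERN solutions in genuinely different gauges with \emph{different anchoring} (at $\Gamma(\tau_f)$ vs.\ at $(0,0)$), and the discrepancy is not governed solely by the integrals of $\tilde\gamma$ and $\tilde\kappa$ as you claim; the gauge map $\Phi_{\tau_f}$ is built from the full dynamical $\hat\gamma$ and $\hat\kappa$, not just their tilde parts, so the background part of $\hat\gamma$ and the anchoring shift both enter. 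Third, $\bar r^\ID$ as the exterior ERN area-radius in the initial data gauge is only clean for $\hat u<\Lambda-M$; you would need to address whether and how it is well-defined in $\hat u\ge\Lambda-M$ (the very region you are trying to exclude $\Gamma$ from) before the sandwich $\Gamma\subset\{\bar r^\ID\ge\Lambda-C\sqrt\ve\}\subset\{\hat u\le\Lambda-M+C'\sqrt\ve\}$ makes sense. Finally, the $\sqrt\ve$ you are willing to accept is an avoidable loss relative to the $\ve^{3/2}$ that the bootstrap assumption delivers for free. None of these gaps is fatal — the plan could likely be made to work with substantial additional effort — but the paper's single-point argument sidesteps all of them.
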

\begin{proof}
    Since $\partial_{\hat u}r=-1$ on $\underline C{}_\ing$, we see that $r(\Gamma^{\hat u}(\tau_f),0)=\Lambda-\Gamma^{\hat u}(\tau_f)$. Using \eqref{eq:boot-r}, we then estimate
    \begin{equation*}
        \Gamma^{\hat u}(\tau_f)=\Lambda-r(\Gamma^{\hat u}(\tau_f),0)=\Lambda-\bar r_{\tau_f}\circ\Phi_{\tau_f}(\Gamma^{\hat u}(\tau_f),0)-\tilde r\circ\Phi_{\tau_f}(\Gamma^{\hat u}(\tau_f),0)=99M_0+O(\ve),
    \end{equation*}
    which is quantitatively strictly smaller than $U_*=\frac{995}{10}M_0$ for $\ve$ sufficiently small.
\end{proof}

\begin{proof}[Proof that $\mathfrak B$ is open]
Let $\tau_f\in \mathfrak B(\mathcal S_0,\ve,A_0)$. We show that $\tau_f+\eta\in \mathfrak B(\mathcal S_0,\ve,A_0)$ for $\eta>0$ sufficiently small. 

Recall that at bootstrap time $\tau_f$, the solution $(r,\hat\Omega^2,\phi,e)$ is assumed to exist on the rectangle 
\begin{equation*}
    \hat{\mathcal D}_{\tau_f}\doteq[0,\Gamma^{\hat u}(\tau_f)]\times[0,\Gamma^{\hat v}(\tau_f)]
\end{equation*}
in the maximal development $\hat{\mathcal Q}_\mathrm{max}$ in the ``initial data coordinates'' $(\hat u,\hat v)$. By \cref{prop:slab-existence} and \cref{lem:enough-data}, there exists a small number $\sigma>0$ such that 
\begin{equation*}
     \hat{\mathcal D}_{\tau_f}^\mathrm{ext}\doteq[0,\Gamma^{\hat u}(\tau_f)+\sigma]\times[0,\Gamma^{\hat v}(\tau_f)+\sigma]\subset\hat{\mathcal Q}_\mathrm{max}.
\end{equation*} 
Since $\Gamma$ being a timelike curve is an open condition, $\Gamma|_{[0,\tau_f]}$ extends to an inextendible timelike curve in $\hat{\mathcal D}^\mathrm{ext}_{\tau_f}$ for $\sigma$ sufficiently small (refer also to the quantitative estimates \eqref{eq:Gamma-slope}). Again by continuity, \eqref{eq:mu-lower-bound}, \eqref{eq:nu-1}, and \eqref{eq:kappa-bdd} imply that $\hat\kappa>0$ and $\hat\gamma<0$ in $\hat{\mathcal D}^\mathrm{ext}_{\tau_f}\cap\{r\ge \Lambda\}$ for $\sigma$ sufficiently small. Therefore, $\hat{\mathcal D}_{\tau_f+\eta}\subset\hat{\mathcal Q}_\mathrm{max}$ for $\eta$ sufficiently small and the map $\Phi_{\tau_f+\eta}$ exists, which allows us to equip $\hat{\mathcal D}_{\tau_f+\eta}$ with teleologically normalized coordinates $(u_{\tau_f+\eta},v)=\Phi_{\tau_f+\eta}(\hat u,\hat v)$. By direct inspection of the definition \eqref{eq:teleology-1}, we see that $ \mathfrak  u_{\tau_f+\eta} \to \mathfrak u_{\tau_f}$ smoothly on $[0,\Gamma^{\hat u}(\tau_f)]$ as $\eta\to 0$.  Moreover, letting $\bar r_{\tau_f+\eta}$ and $\bar r_{\tau_f}$ denote the respective anchored extremal Reissner--Nordstr\"om backgrounds, we also have that $\bar r_{\tau_f+\eta}\circ\Phi_{\tau_f+\eta}\to\bar r_{\tau_f}\circ\Phi_{\tau_f} $ smoothly on $\hat{\mathcal D}_{\tau_f}$ as $\eta\to 0$. 

The step function $I(\tau)=\lfloor \log_2\tau\rfloor$ has the property that $I(\tau_f+\eta)=I(\tau_f)$ for $\eta$ sufficiently small. It trivially follows that $\Pi_i$ is defined on $\mathfrak A_i$ for every $i\in\{0,\dotsc,I(\tau_f+\eta)\}$. These soft arguments show that points 1.--4.~of the definition of $\mathfrak B(\mathcal S_0,\ve,A_0)$ are satisfied for $\tau_f+\eta$ if $\eta$ is chosen sufficiently small. 

We now invoke \cref{prop:geometry-est-improve,prop:energy-est-improve}: The bootstrap assumptions \eqref{eq:boot-nu}--\eqref{eq:boot-flux-2} hold on $\mathcal D_{\tau_f}$ with quantitatively strictly better constants ($1\mapsto \frac12$). Using now the smoothness of the limits $\Phi_{\tau_f+\eta}\to \Phi_{\tau_f}$ and $\bar r_{\tau_f+\eta}\circ\Phi_{\tau_f+\eta}\to \bar r_{\tau_f}\circ\Phi_{\tau_f}$, it is now immediate that the bootstrap assumptions \eqref{eq:boot-nu}--\eqref{eq:boot-flux-2} hold on $\mathcal D_{\tau_f}$ for $\eta$ sufficiently small. Therefore, $\tau_f+\eta\in\mathfrak B(\mathcal S_0,\ve,A_0)$.
\end{proof}

\begin{figure}
\centering{
\def\svgwidth{12pc}
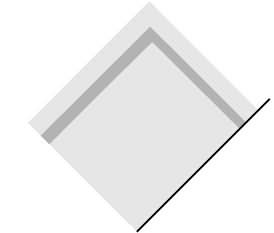}
\caption{A Penrose diagram of the openness argument in the proof of \cref{prop:continuity}. The extension $\hat{\mathcal D}_{\tau_f}^\mathrm{ext}$ avoids the solid black point by \cref{lem:enough-data}.}
\label{fig:open}
\end{figure}

\subsubsection{The proof of closedness: modulation of \texorpdfstring{$\varpi$}{varpi}}\label{sec:modulation}


\begin{proof}[Proof that $\mathfrak B$ is closed]
     Let $\tau_f^n\in \mathfrak B(\mathcal S_0,\ve,A_0)$ be a strictly increasing sequence of times with finite limit $\tau_f^\infty$ as $n\to\infty$. We aim to show that $\tau_f^\infty\in\mathfrak B(\mathcal S_0,\ve,A_0)$. 

\textsc{The case when $\tau_f^\infty$ is not dyadic}: We first argue that $\Gamma(\tau_f^n)$ has a limit point in the $(\hat u,\hat v)$-plane as $n\to\infty$. Since $\Gamma$ is timelike, the sequences $\Gamma^{\hat u}(\tau_f^n)$ and $\Gamma^{\hat v}(\tau_f^n)$ are strictly monotone increasing. By \cref{lem:enough-data}, $\Gamma^{\hat u}(\tau_f^n)$ converges to a number $\hat u_\star<U_*$. By \cref{lem:v-est} and \eqref{eq:tau-difference-v}, $\Gamma^{\hat v}(\tau_f^n)$ converges to a finite number $\hat v_\star$. Therefore, the set $\hat{\mathcal D}_\star\doteq [0,\hat u_\star)\times[0,\hat v_\star)$ is contained in $\hat{\mathcal Q}_\mathrm{max}$.

We now show that the closure of $\hat{\mathcal D}_\star$ is contained in $\hat{\mathcal Q}_\mathrm{max}$, i.e., that the solution $(r,\hat\Omega^2,\phi,e)$ extends to the closure of  $\hat{\mathcal D}_\star$ in $C^\infty$. By the bootstrap assumptions, \cref{lem:v-est}, \eqref{eq:lambda-lower-bound}, \eqref{eq:lambda-bdd}, \eqref{eq:varkappa-decay}, \eqref{est:psi_bound}, and \eqref{est:vphi_bound}, we immediately obtain the estimates $1\les r\les 1+ \tau_f^\infty$, $0<\hat\lambda\les 1$, $\hat\kappa\sim 1$, $|\phi|\les 1$, and $|\partial_{\hat v}\phi|\les 1$ on $\hat{\mathcal D}_\star$. Using these estimates and \eqref{eq:nu-v}, we have that $|{\log(-\hat\nu)}|\les 1+\tau_f^\infty$ on $\hat{\mathcal D}_\star$. We now use the gauge-invariant estimate \eqref{est:uphi_bound} to estimate $|\partial_{\hat u}\phi|\les 1+\tau_f^\infty$ on $\hat{\mathcal D}_\star$. Using the identity $\hat\Omega^2=-4\hat\kappa\hat\nu$, we also have $1\les\hat\Omega^2\les 1+\tau_f^\infty$ on $\hat{\mathcal D}_\star$. Now by integrating \eqref{eq:Omega-wave} in $\hat u$ and $\hat v$, we have that $|\partial_{\hat u}{\log\hat\Omega^2}|+|\partial_{\hat v}{\log\hat\Omega^2}|\les 1+(\tau_f^\infty)^2$ on $\hat{\mathcal D}_\star$. 

These arguments imply that $(r,\hat\Omega^2,\phi,e)$ extends to the closure of $\hat{\mathcal D}_\star$ in $C^1$. Standard propagation of regularity results now imply that this extension is actually $C^\infty$. Standard continuity arguments, such as those used in the proof of openness, imply that parts 2.~and 4.--6.~of \cref{def:B} hold on $\overline{\hat{\mathcal D}_\star}= \hat{\mathcal D}_{\tau_f^\infty}$. Since $\tau_f^\infty$ is not dyadic, i.e., not a power of 2, parts 1.~and 3.~of \cref{def:B} are automatically inherited from $\tau_f^n$ for $n$ sufficiently large.

\textsc{The case when $\tau_f^\infty$ is dyadic}: By the same arguments as in the previous case, the solution extends to $\hat{\mathcal D}_{\tau_f^\infty}$ and parts 2.~and 4.--6.~of \cref{def:B} are satisfied. Since $I\doteq I(\tau_f^\infty)>I(\tau_f^n)$ for any finite $n$, we must now construct the set $\mathfrak A_{I}$ and ensure that parts 1.~and 3. of \cref{def:B} hold. Note that $I\ge 1$. 

By assumption, there exist numbers $\alpha_i^-<\alpha_i^+$ for every $i\in\{0,\dotsc, I-1\}$ such that $\mathfrak A_i\doteq [\alpha_i^-,\alpha_i^+]$ are nested, i.e., $\mathfrak A_{I-1}\subset \mathfrak A_{I-2}\subset\cdots\subset\mathfrak A_0$, \begin{equation*}
    \Pi_i:\mathfrak A_i\to [-\ve^{3/2}L_i^{-3+\delta},\ve^{3/2}L_i^{-3+\delta}]
\end{equation*}
is surjective, and $\Pi_i(\alpha_i^\pm)=\pm \ve^{3/2}L_i^{-3+\delta}$. Clearly, the map $\Pi_I(\alpha)\doteq\tilde\varpi(\Gamma(L_I))$ is defined on $\mathfrak A_{I-1}$ since the solution exists on $\mathcal D_{\tau_f^\infty}$ for every $\alpha\in \mathfrak A_{I-1}$. We apply \cref{lem:inductive-mod} below with $f_1=\Pi_{I-1}$, $[x_1^-,x_2^+]=\mathfrak A_{I-1}$, $c_1=\ve^{3/2}L_{I-1}^{-3+\delta}$, $f_2=\Pi_I$, and $c_2=\ve^{3/2}L_I^{-3+\delta}$. In order to verify the assumption \eqref{eq:modulation-estimate}, we estimate, using \eqref{eq:varpi-aux-0} and \cref{rk:modulation-argument},
\begin{equation*}
    |\Pi_I(\alpha)-\Pi_{I-1}(\alpha)|\le C\ve^2 L_I^{-3+\delta}
\end{equation*}
for every $\alpha\in\mathfrak A_{I-1}$, where $C$ is a constant that does not depend on $\tau_f^\infty$. We now observe that
\begin{equation}\label{eq:algebra}
    \ve^{3/2} L_{I-1}^{-3+\delta}-\ve^{3/2}L_I^{-3+\delta} = (2^{3+\delta}-1)\ve^{3/2}L_I^{-3+\delta}>C\ve^2 L_I^{-3+\delta}
\end{equation}
for $\ve$ sufficiently small, which verifies \eqref{eq:modulation-estimate}. We may now take $\alpha_I^\pm = x_2^\pm$ and \cref{lem:inductive-mod} implies that
\begin{equation*}
   \Pi_I:\mathfrak A_I\to[-\ve^{3/2}L_I^{-3+\delta},\ve^{3/2}L_I^{-3+\delta}]
\end{equation*} is surjective,
where $\mathfrak A_I:=[\alpha_I^-,\alpha_I^+]$, as desired. \end{proof}

\begin{lem}\label{lem:inductive-mod} Let $0<c_2<c_1<0$ and let $f_1:[x_1^-,x_1^+]\to [-c_1,c_1]$ be a continuous surjective function satisfying $f_1(x_1^\pm)=\pm c_1$. Let $f_2:[x_1^-,x_1^+]\to \Bbb R$ be a continuous function satisfying the estimate
\begin{equation}
    \sup_{[x_1^-,x_1^+]}|f_2-f_1|<c_1-c_2.\label{eq:modulation-estimate}
\end{equation}
Then there exists an interval $[x_2^-,x_2^+]\subset (x_1^-,x_1^+)$ such that $f_2:[x_2^-,x_2^+]\to [-c_2,c_2]$ is surjective with $f_2(x_2^\pm)=\pm c_2$.
\end{lem}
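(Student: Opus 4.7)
The statement is a quantitative intermediate value theorem, so my approach is to extract the endpoints $x_2^\pm$ using \emph{sup}/\emph{inf} constructions and verify the required properties by continuity. First I would observe that the hypothesis $\sup_{[x_1^-,x_1^+]}|f_2-f_1|<c_1-c_2$ together with $f_1(x_1^\pm)=\pm c_1$ immediately yields the strict inequalities
\begin{equation*}
    f_2(x_1^-) < -c_1 + (c_1-c_2) = -c_2, \qquad f_2(x_1^+) > c_1 - (c_1-c_2) = c_2.
\end{equation*}
In particular, $f_2(x_1^-)< -c_2 < c_2 < f_2(x_1^+)$, so both of the following sets are nonempty:
\begin{equation*}
    \mathcal X^+ \doteq \{x\in [x_1^-,x_1^+] : f_2(x)\geq c_2\}, \qquad \mathcal X^- \doteq \{x\in [x_1^-,x_1^+] : f_2(x)\leq -c_2\}.
\end{equation*}

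Next I would define $x_2^+ \doteq \inf \mathcal X^+$ and $x_2^- \doteq \sup\{x\in[x_1^-,x_2^+] : f_2(x)\leq -c_2\}$. By continuity of $f_2$, the set $\mathcal X^+$ is closed, so $x_2^+\in\mathcal X^+$ and by definition of the infimum $f_2(x_2^+)=c_2$. Since $f_2(x_1^-)<-c_2<c_2$, the infimum is strictly greater than $x_1^-$, so $x_2^+\in (x_1^-,x_1^+)$; and since $f_2(x_1^+)>c_2$ the inequality $x_2^+<x_1^+$ is automatic (by continuity, $f_2>c_2$ in a neighborhood of $x_1^+$). By the same reasoning applied to $x_2^-$, we get $f_2(x_2^-)=-c_2$ and $x_2^-\in (x_1^-,x_2^+)$; the strict inequality $x_2^-<x_2^+$ is ensured by $f_2(x_2^+)=c_2>-c_2=f_2(x_2^-)$.

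Finally, to conclude the surjectivity $f_2([x_2^-,x_2^+])=[-c_2,c_2]$, I would verify both containments. The inclusion $[-c_2,c_2]\subseteq f_2([x_2^-,x_2^+])$ is the standard intermediate value theorem applied between the values $f_2(x_2^-)=-c_2$ and $f_2(x_2^+)=c_2$. For the reverse inclusion $f_2([x_2^-,x_2^+])\subseteq [-c_2,c_2]$, I would argue by contradiction: if $f_2(x_\ast)>c_2$ for some $x_\ast\in(x_2^-,x_2^+)$, then $x_\ast\in\mathcal X^+$ with $x_\ast<x_2^+=\inf\mathcal X^+$, a contradiction; and if $f_2(x_\ast)<-c_2$ for some $x_\ast\in(x_2^-,x_2^+)$, the definition of $x_2^-$ as the supremum of $\{x\le x_2^+ : f_2(x)\le -c_2\}$ is violated.

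There is no serious obstacle here --- the lemma is a direct application of the intermediate value theorem, and the only subtlety is choosing the $\sup$/$\inf$ constructions in the correct order (first pin down $x_2^+$, then take the supremum of the $-c_2$-sublevel set \emph{to the left of $x_2^+$}) so that $f_2$ actually takes values in $[-c_2,c_2]$ on the resulting interval, rather than merely attaining the endpoints $\pm c_2$.
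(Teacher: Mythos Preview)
Your proof is correct. It differs from the paper's in the choice of endpoints: the paper simply sets $x_2^-\doteq\min f_2^{-1}(-c_2)$ and $x_2^+\doteq\max f_2^{-1}(c_2)$ and then invokes the intermediate value theorem for surjectivity onto $[-c_2,c_2]$, without attempting to show that $f_2$ stays inside $[-c_2,c_2]$ on $[x_2^-,x_2^+]$. Your nested construction (first $x_2^+=\inf\{f_2\ge c_2\}$, then $x_2^-=\sup\{x\le x_2^+:f_2\le -c_2\}$) produces a smaller interval on which you additionally obtain $f_2([x_2^-,x_2^+])\subset[-c_2,c_2]$. This extra containment is not actually used anywhere in the paper---only surjectivity onto $[-c_2,c_2]$ and the endpoint values are needed for the modulation argument---so the paper's quicker construction suffices, but your version does match the literal reading of the notation ``$f_2:[x_2^-,x_2^+]\to[-c_2,c_2]$'' more closely.
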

\begin{proof} We observe that \begin{equation*}
    f_{2}(x_{1}^+)\ge f_1(x_1^+)-|f_{2}(x_1^+)-f_1(x_1^+)|> c_1-(c_1-c_2)=c_2
    \end{equation*}
    and similarly, $f_2(x_1^-) < -c_2$.
It follows from the intermediate value theorem that 
\begin{equation*}
    x_{2}^-\doteq \min f^{-1}_{2}(-c_{2})\quad\text{and}\quad x_{2}^+\doteq \max f^{-1}_{2}(c_{2})
\end{equation*}
exist and satisfy $x_1^-< x_2^- < x_2^+ < x_1^+ $. Surjectivity of $f_{2}$ on $[x_2^-,x_2^+]$ follows again from the intermediate value theorem.
\end{proof}

\subsubsection{The definition of the stable manifold \texorpdfstring{$\mathfrak M_\stab$}{M stab}}\label{sec:M-defn}


\begin{defn}\label{defn:stable-manifold}
    Let $\ve\le\ve_\stab$. For $\mathcal S_0\in \mathfrak M_0$ with $\mathfrak D[\mathcal S_0]\le \ve$, we have $\mathfrak B(\mathcal S_0,\ve,A_0)=[1,\infty)$ by \cref{prop:continuity}. Therefore, there exists a sequence of nested, compact, nonempty intervals $\{\mathfrak A_i\}_{i\ge 0}$ as in \cref{def:B}. Note that there may be multiple such $\{\mathfrak A_i\}_{i\ge 0}$ for which \cref{def:B} holds; we shall say that such an $\{\mathfrak A_i\}_{i\ge 0}$ is \emph{consistent} with this definition. See already \cref{rk:modulation}. We define
    \begin{equation}\label{def:M-stab-1}
         \mathfrak M_\stab(\mathcal S_0,\ve)\doteq \bigcup_{\{\mathfrak A_i\}_{i\ge 0}\text{ consistent}}\left\{\mathcal S_0(\alpha_\star)\in \mathcal L(\mathcal S_0,\ve):\alpha_\star\in\bigcap_{i\ge 0}\mathfrak A_i\right\}
    \end{equation}
    and 
    \begin{equation*}
    \mathfrak M_\stab\doteq \bigcup_{\ve\in[0,\ve_\stab]}\left(\bigcup_{\mathcal S_{0}\in\mathfrak M_0:\mathfrak D[\mathcal S_0]\le\ve} \mathfrak M_\stab(\mathcal S_0,\ve)\right).
\end{equation*}
\end{defn}

This set $\mathfrak M_\stab$ is the stable ``submanifold'' of seed data referred to in the statement of \cref{thm:stability}. We immediately infer the codimension-one property of $\mathfrak M_\stab$, recall \eqref{eq:intersection}. 

\begin{prop}\label{prop:codim-one} Let $\ve\le\ve_\stab$. For $\mathcal S_0\in \mathfrak M_0$ with $\mathfrak D[\mathcal S_0]\le \ve$, it holds that
    \begin{equation}\label{eq:stab}
    \mathfrak M_\stab\cap \mathcal L(\mathcal S_0,\ve)=  \mathfrak M_\stab(\mathcal S_0,\ve)\ne\emptyset.
\end{equation}
\end{prop}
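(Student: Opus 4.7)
The plan is to prove the two assertions in the statement separately: first the nonemptiness of $\mathfrak M_\stab(\mathcal S_0,\ve)$, and then the equality with $\mathfrak M_\stab\cap\mathcal L(\mathcal S_0,\ve)$. The nonemptiness is the substantive part, and it reduces to a soft application of Cantor's nested intersection theorem once the continuity argument has been carried out.

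For nonemptiness, I would appeal to \cref{prop:continuity}, which asserts that $\mathfrak B(\mathcal S_0,\ve,A_0)=[1,\infty)$. By part~1 of \cref{def:B}, this produces real numbers $\alpha_i^\pm$ with $\alpha_i^-<\alpha_i^+$ for every $i\ge 0$, and since these numbers are stipulated to be independent of $\tau_f$ and $A$, assembling them yields a well-defined \emph{infinite} nested sequence of nonempty compact intervals $\mathfrak A_0\supset \mathfrak A_1\supset\cdots$ consistent with \cref{defn:stable-manifold}. By Cantor's intersection theorem for nested compact sets in $\mathbb R$, $\bigcap_{i\ge 0}\mathfrak A_i\ne\emptyset$, and any $\alpha_\star$ in the intersection produces $\mathcal S_0(\alpha_\star)\in \mathfrak M_\stab(\mathcal S_0,\ve)$ by \eqref{def:M-stab-1}.

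For the equality, the inclusion $\mathfrak M_\stab(\mathcal S_0,\ve)\subset \mathfrak M_\stab\cap\mathcal L(\mathcal S_0,\ve)$ is essentially definitional: membership in $\mathfrak M_\stab$ is immediate from the very definition of the latter (since $\mathfrak D[\mathcal S_0]\le\ve\le\ve_\stab$), and $\mathfrak A_0\subset[-2\ve,2\ve]$ follows at once from the elementary estimate $||e|-|e_0||\le|e-e_0|\le\mathfrak D[\mathcal S_0]\le\ve$ inserted into \eqref{modulation-definition-1}, which gives $\mathcal S_0(\alpha_\star)\in\mathcal L(\mathcal S_0,\ve)$. For the reverse inclusion, suppose $\mathcal S\in \mathfrak M_\stab\cap\mathcal L(\mathcal S_0,\ve)$. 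Then $\mathcal S=\mathcal S_0(\alpha_\star)$ for some $\alpha_\star\in[-2\ve,2\ve]$, and also $\mathcal S=\tilde{\mathcal S}_0(\tilde\alpha_\star)$ for some $\tilde{\mathcal S}_0$ with $\mathfrak D[\tilde{\mathcal S}_0]\le\tilde\ve\le\ve_\stab$ and some $\tilde\alpha_\star\in\bigcap_i\tilde{\mathfrak A}_i$ consistent with $(\tilde{\mathcal S}_0,\tilde\ve)$. Uniqueness of the seed data representation $(\mathring\phi,\Lambda,\varpi_0,e)$ forces $\tilde{\mathcal S}_0=\mathcal S_0$ and $\tilde\alpha_\star=\alpha_\star$.

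The last step — which I expect to be the only mildly delicate point — is to convert the consistency of $\{\tilde{\mathfrak A}_i\}$ with smallness parameter $\tilde\ve\le\ve$ into a consistent sequence for $(\mathcal S_0,\ve)$ containing $\alpha_\star$. The idea is to re-run the inductive modulation construction from \cref{sec:modulation} with smallness parameter $\ve$ starting from $\mathfrak A_0$ defined by \eqref{modulation-definition-1}, but now choosing $\alpha_i^\pm$ so that $\alpha_\star\in\Int(\mathfrak A_i)$ at every step. Since the hypothesis $\alpha_\star\in\bigcap_i\tilde{\mathfrak A}_i$ guarantees $|\Pi_i(\alpha_\star)|\le\tilde\ve^{3/2}L_i^{-3+\delta}\ll\ve^{3/2}L_i^{-3+\delta}$, the intermediate value argument in \cref{lem:inductive-mod} can be applied inside each of the two components of $\mathfrak A_i\setminus\{\alpha_\star\}$ to pick $\alpha_{i+1}^-<\alpha_\star<\alpha_{i+1}^+$ satisfying $\Pi_{i+1}(\alpha_{i+1}^\pm)=\pm\ve^{3/2}L_{i+1}^{-3+\delta}$; this yields a consistent sequence for $(\mathcal S_0,\ve)$ with $\alpha_\star$ in the intersection, completing the proof.
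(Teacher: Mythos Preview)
Your treatment of nonemptiness is correct and essentially coincides with the paper's (much terser) argument: the continuity result yields an infinite nested chain $\{\mathfrak A_i\}$, and Cantor's intersection theorem finishes. The forward inclusion $\mathfrak M_\stab(\mathcal S_0,\ve)\subset\mathfrak M_\stab\cap\mathcal L(\mathcal S_0,\ve)$ is also fine.

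The paper simply asserts that the equality ``follows immediately from the definitions,'' whereas you rightly notice that the reverse inclusion is not purely definitional: an element of $\mathfrak M_\stab\cap\mathcal L(\mathcal S_0,\ve)$ a priori comes from $\mathfrak M_\stab(\mathcal S_0,\tilde\ve)$ for some possibly different $\tilde\ve$, and the modulation sets $\mathfrak A_i$ depend on $\ve$ through both $\mathfrak A_0$ and the target intervals $[-\ve^{3/2}L_i^{-3+\delta},\ve^{3/2}L_i^{-3+\delta}]$ in part~3 of \cref{def:B}. So you are doing more than the paper here, and you have correctly isolated the real content.

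However, your resolution has a gap. In the key step you write
\[
|\Pi_i(\alpha_\star)|\le\tilde\ve^{3/2}L_i^{-3+\delta}\ll\ve^{3/2}L_i^{-3+\delta},
\]
but nothing rules out $\tilde\ve>\ve$: the constraints $\mathfrak D[\mathcal S_0]\le\ve$ and $\mathfrak D[\mathcal S_0]\le\tilde\ve$ impose no relation between the two parameters. If $\tilde\ve>\ve$, the displayed inequality is false and your inductive IVT argument cannot even start, since there is no reason for $\alpha_\star$ to lie in the $\ve$-version of $\mathfrak A_0=[|e|-|e_0|-\ve^{3/2},|e|-|e_0|+\ve^{3/2}]$. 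The sharper estimate $\alpha_\star=|e|-|e_0|+O(\tilde\ve^2)$ from \eqref{eq:alpha-star-estimate} does not rescue this, because $C\tilde\ve^2\le\ve^{3/2}$ is again not guaranteed. Your argument is valid when $\tilde\ve\le\ve$ (and then the re-run construction with $\alpha_\star$ kept in the interior at each step does work), but not in general. Note that the only part of the proposition actually used later is the nonemptiness of the intersection, which you have established.
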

\begin{proof}
The equality in \eqref{eq:stab} follows immediately from the definitions of $\mathfrak M_\stab$, $\mathcal L(\mathcal S_0,\ve)$, and $\mathfrak M_\stab(\mathcal S_0,\ve)$. The nonemptiness of $\mathfrak M_\stab(\mathcal S_0,\ve)$ follows from the existence of at least one valid sequence of modulation sets $\{\mathfrak A_i\}_{i\ge0}$ per \cref{def:B} and \cref{prop:continuity}.
\end{proof}

\begin{rk}\label{rk:modulation}
  We only prove that the intersection $\bigcap_{i\ge 0}\mathfrak A_i$ is nonempty, not that it only contains one element. Moreover, as was already mentioned, there might be multiple $\{\mathfrak A_i\}_{i\ge 0}$ consistent with \cref{def:B}. (The construction of \cref{lem:inductive-mod} gives one such choice, which is algorithmic but not necessarily unique in general.) In \eqref{def:M-stab-1}, we consider all possible descending chains of modulation sets which are consistent with \cref{def:B}. A priori, this could also result in the set $\mathfrak M_\stab(\mathcal S_0,\ve)$ containing more than one element.  One consequence of the conjectures in \cref{sec:separating} would be that $\mathfrak M_\stab(\mathcal S_0,\ve)$ contains a single seed data set and hence $\mathfrak M_\stab$ is \emph{exactly} codimension one.
\end{rk}

\subsection{Estimates for the gauge and background changes}\label{sec:gauges}

\subsubsection{Extension of the solution to \texorpdfstring{$\hat{\mathcal D}_\infty$}{D} and the eschatological gauge \texorpdfstring{$\Phi_\infty$}{Phi infinity}}\label{sec:final-gauge}

We now introduce the following convention:
\begin{center}
    \emph{For the remainder of this paper, we consider without further comment solutions arising from seed data in $\mathfrak M_\stab(\mathcal S_0,\ve)$. For such a  solution, all of the estimates proved in \cref{sec:geometry,sec:energy,sec:decay-energy} hold for every $\tau_f\ge 1$.}
\end{center}

Since $\mathfrak B(\mathcal S_0,\ve,A_0)=[1,\infty)$ by \cref{prop:continuity}, we have
    \begin{equation*}
      \hat{\mathcal D}_\infty\doteq  \bigcup_{\tau_f\ge 1}\hat{\mathcal D}_{\tau_f}\subset \hat{\mathcal Q}_\mathrm{max}.
    \end{equation*}
    In this section, we show that the \emph{eschatological gauge} $\Phi_\infty$
is well-defined and $C^2$ on $\hat{\mathcal D}_\infty$, which is a part of Part 2.~of the statement of \cref{thm:stability}.
\begin{prop}\label{prop:final-gauge}
        For $(\hat u,\hat v)\in\hat{\mathcal D}_\infty$, the limit
    \begin{equation*}
\Phi_\infty(\hat u,\hat v)\doteq \lim_{\tau_f\to\infty}\Phi_{\tau_f}(\hat u,\hat v)
    \end{equation*} exists and defines a $C^2$ diffeomorphism $\Phi_\infty:\hat{\mathcal D}_\infty\to [0,\infty)\times[0,\infty)$.
\end{prop}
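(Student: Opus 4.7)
The plan is to treat the coordinates $\mathfrak u_{\tau_f}$ and $\mathfrak v$ separately. By \eqref{eq:teleology-2}, the second coordinate is in fact independent of $\tau_f$, depending only on the restriction of $\hat\kappa$ to $\Gamma$. Using $\dot\Gamma^{\hat v}\sim 1$ (\cref{lem:gamma-slope}) and $\hat\kappa\sim 1$ (\cref{lem:v-est}), $\mathfrak v$ extends to a $C^1$ diffeomorphism $[0,\infty)\to[0,\infty)$ with $\mathfrak v'\sim 1$; in particular $\hat v_\star:=\lim_{\tau\to\infty}\Gamma^{\hat v}(\tau)=\infty$, so that $\hat{\mathcal D}_\infty=[0,\hat u_{\mathcal H^+})\times[0,\infty)$ and the second component of $\Phi_\infty$ is already in place.

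For $\mathfrak u_{\tau_f}$, the main task is to show that the integrand in \eqref{eq:teleology-1}, namely $\hat\gamma(\hat u,\Gamma^{\hat v}(\tau_f))$, is Cauchy in $\tau_f$ uniformly on compact subsets of $[0,\hat u_{\mathcal H^+})$. From the coordinate identity $\hat\gamma=\mathfrak u_{\tau_f}'\cdot(\gamma_{\tau_f}\circ\Phi_{\tau_f})$ combined with the fact that the definition \eqref{eq:teleology-1} is precisely engineered so that $\gamma_{\tau_f}\equiv -1$ on $\{v=\Gamma^v(\tau_f)\}$, one obtains for any $\hat v\le\Gamma^{\hat v}(\tau_f)$ the identity
\begin{equation*}
  \hat\gamma(\hat u,\hat v)-\hat\gamma(\hat u,\Gamma^{\hat v}(\tau_f))=\mathfrak u_{\tau_f}'(\hat u)\,\tilde\gamma_{\tau_f}\circ\Phi_{\tau_f}(\hat u,\hat v).
\end{equation*}
For $\hat u$ ranging over a compact $K\subset[0,\hat u_{\mathcal H^+})$ and $\hat v$ large, $\mathfrak u_{\tau_f}(\hat u)$ is uniformly bounded on $K$ and the image point $\Phi_{\tau_f}(\hat u,\hat v)$ lies in the far region $\{r\ge\Lambda\}$ with $r\sim\hat v$ and $\tau=O_K(1)$, so the bootstrap estimate \eqref{eq:gamma-2} produces $|\tilde\gamma_{\tau_f}|\les \ve^2\hat v^{-3/2}$ uniformly on $K$. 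Combined with a uniform upper bound on $|\mathfrak u_{\tau_f}'|$ on $K$ (obtained by integrating Raychaudhuri \eqref{eq:gamma-v} in the hat gauge, which gives $|\hat\gamma|\le|\hat\gamma|_{\hat v=0}\exp(\int_0^\infty \hat r(\partial_{\hat v}\phi)^2/\hat\lambda\,d\hat v)$ with the latter integral bounded by $\ve^2$ via $|r^2\partial_v\phi|\les\ve$), the triangle inequality yields the uniform Cauchy bound $|\hat\gamma(\hat u,\Gamma^{\hat v}(\tau_f^1))-\hat\gamma(\hat u,\Gamma^{\hat v}(\tau_f^2))|\les \ve^2(\hat v_1\wedge \hat v_2)^{-3/2}$. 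Passing to the limit defines a continuous $\hat\gamma_\infty$ on $[0,\hat u_{\mathcal H^+})$; setting $\mathfrak u_\infty(\hat u):=-\int_0^{\hat u}\hat\gamma_\infty\,d\hat u'$ then gives $\mathfrak u_{\tau_f}\to\mathfrak u_\infty$ in $C^1_{\mathrm{loc}}$, and hence $\Phi_\infty\in C^1(\hat{\mathcal D}_\infty;\mathbb R^2)$.

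Since $-\hat\gamma_\infty>0$ and $\mathfrak v'>0$ strictly, the Jacobian $d\Phi_\infty$ is everywhere nondegenerate and each component is strictly monotone, so $\Phi_\infty$ is injective with a local $C^1$ inverse. It remains to prove surjectivity of $\mathfrak u_\infty$ onto $[0,\infty)$. The Raychaudhuri equation in the hat gauge $\partial_{\hat v}\log|\hat\gamma|=\hat r(\partial_{\hat v}\phi)^2/\hat\lambda\ge 0$ makes $-\hat\gamma(\hat u,\cdot)$ monotone nondecreasing in $\hat v$, and hence $\tau_f\mapsto\mathfrak u_{\tau_f}'(\hat u)$ monotone increasing; applying the monotone convergence theorem to the integrands $-\hat\gamma(\,\cdot\,,\Gamma^{\hat v}(\tau_f))\mathbf{1}_{[0,\Gamma^{\hat u}(\tau_f))}$ then gives
\[
  \mathfrak u_\infty(\hat u_{\mathcal H^+}^-)=\lim_{\tau_f\to\infty}\mathfrak u_{\tau_f}(\Gamma^{\hat u}(\tau_f))=\lim_{\tau_f\to\infty}\Gamma^{u_{\tau_f}}(\tau_f)=\infty,
\]
with the last equality from $\dot\Gamma^{u_{\tau_f}}\sim 1$ (\cref{lem:gamma-slope}). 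The main obstacle in the plan is the uniform-on-compacts Cauchy estimate, which requires carefully translating the teleological bound \eqref{eq:gamma-2} into the fixed initial-data gauge and tracking how compact $\hat u$-sets are sent by $\Phi_{\tau_f}$ into regions where $r$ is large and $\tau$ is bounded; once this is in hand, existence of the limit, $C^1$-regularity, injectivity, and surjectivity all follow from soft monotonicity and continuity arguments.
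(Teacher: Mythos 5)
Your proposal is correct and, at the level of logical content, essentially the same as the paper's (which is split between \cref{lem:coord-change-1} and the short proof of the proposition itself): separate $\mathfrak v$ and $\mathfrak u_{\tau_f}$, prove a uniform-on-compacts Cauchy estimate for $\mathfrak u_{\tau_f}'$, pass to the $C^1_\mathrm{loc}$ limit, and conclude the limit is a diffeomorphism via monotonicity and the fact that $\Gamma^{u_{\tau_f}}(\tau_f)\sim\tau_f$.

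The one genuine (if cosmetic) difference is how the Cauchy estimate for $\hat u\mapsto-\hat\gamma(\hat u,\Gamma^{\hat v}(\tau_f))$ is obtained. The paper's \cref{lem:coord-change-1} integrates Raychaudhuri \eqref{eq:gamma-v} directly in the hat gauge between the two late ingoing cones, writing the difference $\hat\gamma(\hat u,\Gamma^{\hat v}(\bar\tau_f))-\hat\gamma(\hat u,\Gamma^{\hat v}(\tau_f))$ as a product of $\hat\gamma|_{\hat v=0}\exp(\cdots)$ (bounded by $C(\hat u_0)$, softly, by compactness) with a small $1-\exp(\cdots)$ factor controlled via \eqref{est:vphi_bound}. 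Your route instead uses the coordinate-change identity $\hat\gamma(\hat u,\hat v)-\hat\gamma(\hat u,\Gamma^{\hat v}(\tau_f))=\mathfrak u_{\tau_f}'(\hat u)\,\tilde\gamma_{\tau_f}\circ\Phi_{\tau_f}(\hat u,\hat v)$ and then cites the teleological estimate \eqref{eq:gamma-2} for $\tilde\gamma_{\tau_f}$. Since \eqref{eq:gamma-2} is itself proved by integrating Raychaudhuri (\cref{lem:gamma}), this is the same estimate wrapped differently; your version reuses an already-established bound, while the paper's is self-contained to the hat gauge. Both give a strictly better than $O(1)$ Cauchy rate (you get $\ve^2\hat v_1^{-3/2}$; the paper records $\ve^2\bar\tau_f^{-1}$), and either suffices.

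Two small points of precision worth flagging. First, your parenthetical bound ``the latter integral bounded by $\ve^2$'' should really read $\lesssim_K\ve^2$: because $\hat\lambda\sim(1-\mu)$ is bounded below only in terms of $\mathrm{dist}(K,\hat u_{\mathcal H^+})$, and because $\partial_{\hat v}r=\hat\lambda$ controls the $r$-growth, the constant degenerates as $K$ exhausts $[0,\hat u_{\mathcal H^+})$; you are clearly aware of this (``uniformly on $K$''), but the bare $\ve^2$ is a touch misleading — and unavoidably so, since uniformity up to $\mathcal H^+$ would contradict the fact that $\mathfrak u_\infty\to\infty$ there. Second, the application of \eqref{eq:gamma-2} at $\Phi_{\tau_f^2}(\hat u,\Gamma^{\hat v}(\tau_f^1))$ requires that this point lie in $\{r\ge\Lambda\}$, which holds only once $\tau_f^1\gtrsim_K 1$; again you note ``$\hat v$ large,'' so this is fine for a Cauchy argument, but the restriction should be stated explicitly. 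Neither issue affects correctness.
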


First, we observe the following immediate consequence of \cref{lem:enough-data}:
\begin{lem}\label{lem:u-hat-H+}
  The limit $\hat u_{\mathcal H^+}\doteq\lim_{\tau\to\infty}\Gamma^{\hat u}(\tau)$ exists and satisfies $\hat u_{\mathcal H^+}\le \theta U_*$, where $\theta\in(0,1)$ is the constant from \cref{lem:enough-data}. Furthermore, $\hat{\mathcal D}_\infty=[0,\hat u_{\mathcal H^+})\times[0,\infty)$.
\end{lem}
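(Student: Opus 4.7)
The plan is to prove the lemma in three logical steps: first establish the limit $\hat u_{\mathcal H^+}$ by monotone convergence, then show $\Gamma^{\hat v}(\tau)\to\infty$ as $\tau\to\infty$ by chaining the uniform estimates for $\dot\Gamma^v$ and for the gauge map $\mathfrak v$, and finally read off the set-theoretic equality $\hat{\mathcal D}_\infty=[0,\hat u_{\mathcal H^+})\times[0,\infty)$ from these two facts.

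The first step is essentially immediate. Since $\Gamma$ is a timelike curve, $\tau\mapsto \Gamma^{\hat u}(\tau)$ is strictly increasing, and since $\mathfrak B(\mathcal S_0,\ve,A_0)=[1,\infty)$ by \cref{prop:continuity}, the hypothesis of \cref{lem:enough-data} holds for every $\tau_f\ge 1$, yielding the uniform bound $\Gamma^{\hat u}(\tau)\le\theta U_*$. Monotone convergence produces the limit $\hat u_{\mathcal H^+}$ with $\hat u_{\mathcal H^+}\le\theta U_*$, and the strict inequality $\Gamma^{\hat u}(\tau)<\hat u_{\mathcal H^+}$ for every finite $\tau$ follows from strict monotonicity.

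For the second step, I plan to work for each fixed $\tau_f$ in the teleological gauge $(u_{\tau_f},v)$ and combine two ingredients. First, \cref{lem:gamma-slope} gives $\dot\Gamma^v(\tau)\sim 1$ on $[1,\tau_f]$, so by the fundamental theorem of calculus $\Gamma^v(\tau_f)\sim \tau_f-1$. Second, \cref{lem:v-est} applied with $\hat v_1=0$, together with the definition \eqref{eq:teleology-2} (which gives $\mathfrak v(0)=0$), yields $\mathfrak v(\Gamma^{\hat v}(\tau_f))\sim \Gamma^{\hat v}(\tau_f)$. Since by construction $\Gamma^v(\tau_f)=\mathfrak v(\Gamma^{\hat v}(\tau_f))$, chaining these two comparisons gives $\Gamma^{\hat v}(\tau_f)\sim \tau_f-1$ and hence $\Gamma^{\hat v}(\tau)\to\infty$ as $\tau\to\infty$.

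The main (very mild) obstacle here, and the point I would emphasize most carefully, is that the implicit constants in both \cref{lem:gamma-slope} and \cref{lem:v-est} depend only on $M_0$ and $\delta$ through the geometric bootstrap assumptions, hence are independent of $\tau_f$; this is exactly what allows the $\tau_f\to\infty$ conclusion. With both limits in hand the set equality follows from the definition $\hat{\mathcal D}_\infty=\bigcup_{\tau_f\ge 1}\hat{\mathcal D}_{\tau_f}$: the inclusion $\subseteq$ uses $\Gamma^{\hat u}(\tau_f)<\hat u_{\mathcal H^+}$ and $\Gamma^{\hat v}(\tau_f)<\infty$, while the reverse inclusion uses that for any $(\hat u,\hat v)\in[0,\hat u_{\mathcal H^+})\times[0,\infty)$ the monotone divergence of both $\Gamma^{\hat u}(\tau_f)$ to $\hat u_{\mathcal H^+}$ and $\Gamma^{\hat v}(\tau_f)$ to $\infty$ allows us to pick $\tau_f$ large enough that $(\hat u,\hat v)\in\hat{\mathcal D}_{\tau_f}$.
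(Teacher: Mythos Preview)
Your proof is correct and follows exactly the reasoning the paper has in mind. The paper states the lemma as an ``immediate consequence of \cref{lem:enough-data}'' and gives no explicit proof; your write-up simply fills in the natural details (monotone convergence for $\hat u_{\mathcal H^+}$, the divergence of $\Gamma^{\hat v}$ via \cref{lem:gamma-slope} and \cref{lem:v-est}, and the resulting set equality), which is precisely what the paper regards as obvious.
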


Next, we show that the maps $\Phi_{\tau_f}$ are Cauchy in $C^2$.

\begin{lem}\label{lem:coord-change-1} For any $\hat u_0\in[0,\hat u_{\mathcal H^+})$, $\bar\tau_f$ sufficiently large, and $\tau_f\ge\bar\tau_f$, it holds that 
\begin{equation}
    \|\Phi_{\bar\tau_f}-\Phi_{\tau_f}\|_{C^2([0,\hat u_0]\times[0,\infty))}\les_{\hat u_0} \ve^2\bar\tau_f^{-1}.\label{eq:u-Cauchy}
\end{equation}
\end{lem}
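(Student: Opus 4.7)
The plan is to exploit two simplifications in the definitions \eqref{eq:teleology-1}--\eqref{eq:teleology-2}: the function $\mathfrak v$ is determined purely by the solution along $\Gamma$ and is independent of $\tau_f$, and $\mathfrak u_{\tau_f}$ depends only on $\hat u$. Consequently, $\Phi_{\bar\tau_f}(\hat u,\hat v) - \Phi_{\tau_f}(\hat u,\hat v) = (\mathfrak u_{\bar\tau_f}(\hat u) - \mathfrak u_{\tau_f}(\hat u), 0)$, and since $\partial_{\hat u}\mathfrak u_{\tau_f}(\hat u) = -\hat\gamma(\hat u,\Gamma^{\hat v}(\tau_f))$, the $C^1$ norm in \eqref{eq:u-Cauchy} reduces, by trivial integration in $\hat u$, to the pointwise bound
\[
\sup_{\hat u\in[0,\hat u_0]}\bigl|\hat\gamma(\hat u,\Gamma^{\hat v}(\bar\tau_f)) - \hat\gamma(\hat u,\Gamma^{\hat v}(\tau_f))\bigr| \les_{\hat u_0} \ve^2 \bar\tau_f^{-1}.
\]

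To prove this, I would integrate the Raychaudhuri equation \eqref{eq:gamma-v}, which in the hat coordinates reads $\partial_{\hat v}\hat\gamma = r\hat\gamma(\partial_{\hat v}\phi)^2/\hat\lambda$, in the $\hat v$-direction at fixed $\hat u$. Using the uniform bounds $|\hat\gamma| \les_{\hat u_0} 1$ and $\hat\lambda \sim_{\hat u_0} 1$ on $[0,\hat u_0]\times[0,\infty)$---which follow by integrating \eqref{eq:nu-v} along constant-$\hat u$ rays and using $1-\mu \gtrsim_{\hat u_0} 1$, a consequence of the uniform lower bound $r \ge \Lambda - \hat u_0 > M$ (valid because $\hat u_0 < \hat u_{\mathcal H^+}$, cf.~\cref{lem:u-hat-H+})---together with the fact that $r\,(\partial_v\phi)^2\,dv$ is a gauge-invariant one-form, the problem reduces to bounding $\int_{v_1}^{v_2} r\,(\partial_v\phi)^2\,dv$, where $v_1 = \Gamma^v(\bar\tau_f)$ and $v_2 = \Gamma^v(\tau_f)$ in the $\bar\tau_f$-teleological gauge, along the outgoing cone $\{u_{\bar\tau_f} = \mathfrak u_{\bar\tau_f}(\hat u)\}$. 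Trading one power of $r$ yields $\int_{v_1}^{v_2} r\,(\partial_v\phi)^2\,dv \le R_{\min}^{-1} \int_{v_1}^{v_2} r^2 (\partial_v\phi)^2\,dv \les R_{\min}^{-1}\mathcal E_0^{\bar\tau_f}(\tau_1)$ with $R_{\min} = r(\mathfrak u_{\bar\tau_f}(\hat u),v_1)$ and $\tau_1 = (\Gamma^{\hat u})^{-1}(\hat u)$.

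Two bounds then close the argument. First, $\tau_1 \le (\Gamma^{\hat u})^{-1}(\hat u_0)$ is bounded independently of $\bar\tau_f$, so the $\tau_f$-uniform energy estimate \eqref{eq:decay-aux-1} at $p=0$ yields $\mathcal E_0^{\bar\tau_f}(\tau_1) \les_{\hat u_0} \ve^2$. Second, $R_{\min} \gtrsim \bar\tau_f$: the $\bar\tau_f$-anchored background $\bar r_{\bar\tau_f}$ satisfies $\bar r_{\bar\tau_f}(\Gamma(\bar\tau_f)) = \Lambda$ and $\partial_u \bar r_{\bar\tau_f} = -(1-\bar\mu_{\bar\tau_f}) \sim -1$ for $\bar r \ge \Lambda$, while $\Gamma^{u_{\bar\tau_f}}(\bar\tau_f) - \mathfrak u_{\bar\tau_f}(\hat u_0) \sim \bar\tau_f$ by \cref{lem:gamma-slope}, and the bootstrap bound $|r-\bar r| \les \ve^{3/2}\tau^{-2+\delta} \les_{\hat u_0} \ve^{3/2}$ transfers this lower bound to $r$. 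The main obstacle is the careful bookkeeping to pass between the hat gauge and the $\bar\tau_f$-teleological gauge and to verify that the bootstrap estimates from \cref{sec:geometry,sec:energy,sec:decay-energy} remain uniformly available on $[0,\hat u_0]\times[0,\infty)$ with constants depending only on $\hat u_0$; the sharp decay rate $\bar\tau_f^{-1}$ ultimately comes entirely from $R_{\min}^{-1}$.
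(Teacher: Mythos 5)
Your reduction is the same as the paper's: $\mathfrak v$ is independent of $\tau_f$, so the $C^1$ norm boils down to bounding $|\hat\gamma(\hat u,\Gamma^{\hat v}(\tau_f))-\hat\gamma(\hat u,\Gamma^{\hat v}(\bar\tau_f))|$ by integrating the Raychaudhuri equation $\partial_{\hat v}\log(-\hat\gamma)=\frac{r}{\hat\lambda}(\partial_{\hat v}\phi)^2$ and showing $|\hat\gamma|\lesssim_{\hat u_0} 1$. The point where you diverge from the paper is the estimate for $\int_{\Gamma^{\hat v}(\bar\tau_f)}^{\Gamma^{\hat v}(\tau_f)}\frac{r}{\hat\lambda}(\partial_{\hat v}\phi)^2\,d\hat v$: the paper uses the pointwise bound \eqref{est:vphi_bound}, $r(\partial_{\hat v}\phi)^2\lesssim\ve^2 r^{-3}$, and integrates to get $\ve^2\bar\tau_f^{-2}$; you instead trade one power of $r$ using the lower bound $r\gtrsim_{\hat u_0}\bar\tau_f$ on that cone segment and then appeal to the $r^2(\partial_v\phi)^2$ energy flux, obtaining $\ve^2\bar\tau_f^{-1}$. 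Both close the lemma (the statement only asks for $\bar\tau_f^{-1}$); the paper's pointwise route is sharper, yours is slightly more robust in that it only needs energy-level information plus the geometric lower bound on $r$, which is indeed a clean consequence of $\partial_{\hat v}r=\hat\lambda\gtrsim_{\hat u_0}1$.

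One small correction: the flux you invoke should be $\mathcal E_0^{\tau_f}(\tau_1)$ rather than $\mathcal E_0^{\bar\tau_f}(\tau_1)$. The cone segment over which you integrate runs from $v_1=\Gamma^v(\bar\tau_f)$ out to $v_2=\Gamma^v(\tau_f)$, which lies \emph{outside} $\mathcal D_{\bar\tau_f}$ (whose outgoing cones terminate at $v=\Gamma^v(\bar\tau_f)$). The quantity $\mathcal E_0^{\tau_f}(\tau_1)$ does cover this range, is gauge-insensitive in the relevant sense (it involves only $r$, $\phi$ and $dv$), and satisfies the same uniform bound $\lesssim\ve^2$ by \eqref{eq:decay-aux-1}, so the substitution is cosmetic and the argument goes through.
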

\begin{proof} By definition and \eqref{eq:gamma-v} (applied twice), we have that
\begin{align}
      \nonumber  \mathfrak u_{\bar \tau_f}'(\hat u)-\mathfrak u_{\tau_f}'(\hat u) &= \hat\gamma(\hat u,\Gamma^{\hat v}(\tau_f))-\hat\gamma(\hat u,\Gamma^{\hat v}(\bar\tau_f))\\ \nonumber
    &=\hat\gamma(\hat u,\Gamma^{\hat v}(\bar\tau_f))\left[1-\exp\left(\int_{\Gamma^{\hat v}(\bar\tau_f)}^{\Gamma^{\hat v}(\tau_f)}\frac{r}{\hat\lambda}(\partial_{\hat v}\phi)^2\,d\hat v\right)\right]\\ \label{eq:diff-help-1}
        &=\frac{1}{1-\mu(\hat u,0)}\exp\left(\int_0^{\Gamma^{\hat v}(\bar \tau_f)}\frac{r}{\hat\lambda}(\partial_{\hat v}\phi)^2\,d\hat v\right) \left[1-\exp\left(\int_{\Gamma^{\hat v}(\bar\tau_f)}^{\Gamma^{\hat v}(\tau_f)}\frac{r}{\hat\lambda}(\partial_{\hat v}\phi)^2\,d\hat v\right)\right]
\end{align}
for $\hat u\in[0,\hat u_0]$. As the solution exists and is smooth on $\hat{\mathcal D}_{\bar\tau_f}$, we have that 
\begin{equation*}
    \left|\frac{1}{1-\mu(\hat u,0)}\exp\left(\int_0^{\Gamma^{\hat v}(\bar \tau_f)}\frac{r}{\hat\lambda}(\partial_{\hat v}\phi)^2\,d\hat v\right)\right|\le C(\hat u_0).
\end{equation*}
To see this, note that $1-\mu(\hat u,0)$ is nonvanishing for $\hat u\in[0,\hat u_{\mathcal H^+})$ by \eqref{eq:lambda-positive}. The integral over $r\le \Lambda$ is estimated softly by the compactness of the region $\{0\le \hat u\le\hat u_0\}\cap\{r\le\Lambda\}$ and for $r\ge \Lambda$ by \eqref{eq:boot-rp}. Using the pointwise estimate \eqref{est:vphi_bound}, we now estimate
\begin{equation*}
    \left|1-\exp\left(\int_{\Gamma^{\hat v}(\bar\tau_f)}^{\Gamma^{\hat v}(\tau_f)}\frac{r}{\hat\lambda}(\partial_{\hat v}\phi)^2\,d\hat v\right)\right|\les \int_{\Gamma^{\hat v}(\bar\tau_f)}^{\Gamma^{\hat v}(\tau_f)}\ve^2 r(\hat u,\hat v)^{-3}\,d\hat v \les \ve^2\bar 
    \tau_f^{-2}
\end{equation*} for $\bar \tau_f$ sufficiently large. 
Putting these estimates together yields $| \mathfrak u_{\bar \tau_f}'-\mathfrak u_{\tau_f}'|\les_{\hat u_0}\ve^2\bar\tau_f^{-2}$ and integrating once in $\hat u$ yields again $|\mathfrak u_{\bar\tau_f}-\mathfrak u_{\tau_f}|\les_{\hat u_0}\ve^2\bar \tau_f^{-2}$.

To estimate $\mathfrak u_{\bar \tau_f}''-\mathfrak u_{\tau_f}''$, we differentiate \eqref{eq:diff-help-1} to obtain \begin{multline}\label{eq:diff-helper-3}
     \mathfrak u_{\bar \tau_f}''(\hat u)-\mathfrak u_{\tau_f}''(\hat u) = \frac{\partial_{\hat u}\mu(\hat u,0)}{1-\mu(\hat u,0)}\big(\mathfrak u_{\bar \tau_f}'(\hat u)-\mathfrak u_{\tau_f}'(\hat u)\big) + \big(\mathfrak u_{\bar \tau_f}'(\hat u)-\mathfrak u_{\tau_f}'(\hat u)\big) \int_0^{\Gamma^{\hat v}(\bar \tau_f)}\partial_{\hat u}\left(\frac{r}{\hat\lambda}(\partial_{\hat v}\phi)^2\right)dv' \\ -\frac{1}{1-\mu(\hat u,0)}\exp\left(\int_0^{\Gamma^{\hat v}( \tau_f)}\frac{r}{\hat\lambda}(\partial_{\hat v}\phi)^2\,d\hat v\right) \int_{\Gamma^{\hat v}(\bar\tau_f)}^{\Gamma^{\hat v}(\tau_f)} \partial_{\hat u}\left(\frac{r}{\hat\lambda}(\partial_{\hat v}\phi)^2\right)dv'.
\end{multline}
Using the wave equations for $r$ and $\phi$, we compute
\begin{equation}\label{eq:diff-helper-2}
    \partial_{\hat u}\left(\frac{r}{\hat\lambda}(\partial_{\hat v}\phi)^2\right)= -\left(\frac{\hat\nu}{\hat\lambda}+ 2\frac{\hat\nu^2}{\hat \lambda^2}\hat\kappa\varkappa\right) (\partial_{\hat v}\phi)^2 - 2\partial_{\hat u}\phi \partial_{\hat v}\phi.
\end{equation}
Since $|\partial_{\hat v}{\log(-\hat\nu)}|\les r^{-2}$, it holds that $|\hat\nu|\les_{\hat u_0}1$ for $\hat u\le \hat u_0$. Therefore, using the pointwise estimates \eqref{est:vphi_bound} and \eqref{est:uphi_bound}, we find that the quantity in \eqref{eq:diff-helper-2} is, in magnitude, $\les_{\hat u_0} \ve^2 r^{-3}$. Therefore, applying the arguments used for $\mathfrak u_{\bar \tau_f}'-\mathfrak u_{\tau_f}'$ to \eqref{eq:diff-helper-3} proves that $| \mathfrak u_{\bar \tau_f}''-\mathfrak u_{\tau_f}''|\les_{\hat u_0}\ve^2\bar\tau_f^{-2}$ for $\bar\tau_f$ sufficiently large. This proves \eqref{eq:u-Cauchy}. 
\end{proof}

\begin{proof}[Proof of \cref{prop:final-gauge}]
    The existence and regularity of the map $\Phi_\infty=(\mathfrak u_\infty,\mathfrak v)$ is immediate from the estimate \eqref{eq:u-Cauchy}. We now show that $\mathfrak u_\infty:[0,\hat u_{\mathcal H^+})\to [0,\infty)$ is a diffeomorphism. First, $\mathfrak u_\infty$ is strictly increasing because
    \begin{equation*}
        \frac{d\mathfrak u_\infty}{d\hat u}=-\lim_{\hat v\to\infty}\hat\gamma(\hat u,\hat v)= \frac{1}{1-\mu(\hat u,0)}\exp\left(\int_0^\infty \frac{r}{\hat\lambda}(\partial_{\hat v}\phi)^2\,d\hat v\right)>0
    \end{equation*} and $\mathfrak u_\infty$ is surjective onto $[0,\infty)$ because $\mathfrak u_\infty(0)=0$ and $\mathfrak u_{\tau_f}(\Gamma^{\hat u}(\tau_f))\sim \tau_f$ as $\tau\to \infty$ by \cref{lem:tau-properties}. Finally, by \cref{lem:v-est} and its proof, $\mathfrak v:[0,\infty)\to[0,\infty)$ is a diffeomorphism. 
\end{proof}

We apply the coordinate transformation $(u_\infty,v)=\Phi_\infty(\hat u,\hat v)$ to $\hat{\mathcal D}_\infty$, after which we denote it by $\mathcal D_\infty$. However, since $\Phi_\infty$ is only $C^2$ in $\hat u$, the solution $(r,\hat\Omega^2,\phi,e)$ will only be $C^2$ in $u_\infty$ when expressed in the coordinates $(u_\infty,v)$. Moreover, the solution expressed in $(u_\infty,v)$ coordinates will satisfy the estimate
\begin{equation}\label{eq:gamma-final}
    |\gamma_\infty+1|\les \ve^2r^{-3/2}\tau^{-2+\delta}
\end{equation}
in $\mathcal D_\infty\cap\{r\ge\Lambda\}$ by simply passing to the limit in \eqref{eq:gamma-2}. 

\subsubsection{Uniform \texorpdfstring{$C^2$}{C2} convergence of \texorpdfstring{$\Phi_{\bar\tau_f}\circ\hat\Phi_{\tau_f}$}{Phi bar tau f circ hat Phi tau f} to the identity as \texorpdfstring{$\bar\tau_f\to\infty$}{infinity}}\label{sec:convergence-to-id}

As we saw in the proof of \cref{lem:coord-change-1}, the difference of the coordinate changes $\hat u\mapsto u_{\bar \tau_f}$ and $\hat u\mapsto  u_{\tau_f}$ does not satisfy a uniform estimate up to the horizon as $\bar\tau_f\to\infty$. In the following proposition, we prove that $u_{\bar\tau_f}\mapsto u_{\tau_f}$, however, is well-behaved in the entire domain of outer communication as $\bar\tau_f\to\infty$.  

\begin{prop} Let $1\le \bar\tau_f<\tau_f\le\infty$ and set $\Psi_{\bar\tau_f,\tau_f}\doteq \Phi_{\bar\tau_f}\circ\hat\Phi_{\tau_f}$. Then the following estimates hold:
\begin{align}
  \label{eq:Psi-1}    \sup_{J^-(\Gamma(\tau_f))\cap \mathcal D_{\tau_f}}  |\Psi_{\bar\tau_f,\tau_f}-\id|&\les \ve^2\bar\tau_f^{-1/2},\\
  \label{eq:Psi-2}          \sup_{J^-(\Gamma(\tau_f))\cap \mathcal D_{\tau_f}}  |d(\Psi_{\bar\tau_f,\tau_f}-\id)|&\les \ve^2\bar\tau_f^{-3/2}.
\end{align}
\end{prop}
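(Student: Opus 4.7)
The plan rests on two starting observations. First, the coordinate $\mathfrak v$ in \eqref{eq:teleology-2} is defined by integrating $\hat\kappa$ along the $\tau_f$-independent curve $\Gamma$, so $\Psi_{\bar\tau_f,\tau_f}$ preserves the $v$-coordinate and the problem reduces to estimating the one-variable map $F \doteq \mathfrak u_{\bar\tau_f}\circ\mathfrak u_{\tau_f}^{-1}$ on $[0,\Gamma^{u_{\tau_f}}(\tau_f)]$. Second, differentiating \eqref{eq:teleology-1} and applying \eqref{eq:gamma-v} on the two slices $\hat v = \Gamma^{\hat v}(\bar\tau_f)$ and $\hat v = \Gamma^{\hat v}(\tau_f)$ expresses the Jacobian as a ratio of $\hat\gamma$'s, giving
\begin{equation*}
F'(u_{\tau_f}) - 1 = e^{-\Delta(\hat u)} - 1, \qquad \Delta(\hat u) \doteq \int_{\Gamma^{\hat v}(\bar\tau_f)}^{\Gamma^{\hat v}(\tau_f)}\frac{r}{\hat\lambda}(\partial_{\hat v}\phi)^2\,d\hat v = \int_{\Gamma^v(\bar\tau_f)}^{\Gamma^v(\tau_f)}\frac{r}{\lambda_{\tau_f}}(\partial_v\phi)^2\,dv
\end{equation*}
at fixed $u_{\tau_f}$, where $\hat u = \mathfrak u_{\tau_f}^{-1}(u_{\tau_f})$; the last equality expresses the manifest gauge-invariance of the integrand together with the $\tau_f$-independence of $\mathfrak v$. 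Consequently, \eqref{eq:Psi-2} reduces to a pointwise bound of the form $\Delta(\hat u)\les \ve^2\bar\tau_f^{-3/2}$, and \eqref{eq:Psi-1} follows by integrating the resulting estimate over $u_{\tau_f}$.

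The heart of the argument is then estimating $\Delta(\hat u)$ uniformly. I would split the outgoing segment at fixed $u_{\tau_f}$ at its intersection with $\Gamma = \{r=\Lambda\}$, setting $\tau_\star\doteq\tau(u_{\tau_f})$ (i.e.\ the value with $\Gamma^{u_{\tau_f}}(\tau_\star)=u_{\tau_f}$). Any portion lying in the near-horizon region $\{r\le\Lambda\}$ is controlled by $\Lambda\cdot\mathcal F^{\tau_f}(u_{\tau_f},\bar\tau_f)\les \ve^2\bar\tau_f^{-3+\delta}$ via \eqref{eq:boot-flux-1}, since in this region the time function is determined by $v$ and the lower endpoint $v=\Gamma^v(\bar\tau_f)$ corresponds to $\tau=\bar\tau_f$. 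Any portion lying in the far region $\{r\ge\Lambda\}$ carries the constant time $\tau_\star$, and the elementary inequality $\frac{r}{\lambda_{\tau_f}}(\partial_v\phi)^2 \le 2r^{-1}\cdot r^2(\partial_v\phi)^2$ combined with $\mathcal E_0^{\tau_f}(\tau_\star)\les \ve^2\tau_\star^{-3+\delta}$ from \eqref{eq:boot-rp} yields decay $\les \ve^2\tau_\star^{-3+\delta}$. Whenever $\tau_\star\gtrsim\bar\tau_f$ this already gives $\Delta(\hat u)\les \ve^2\bar\tau_f^{-3+\delta}$.

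The remaining regime $\tau_\star\ll\bar\tau_f$, where the entire segment lies in the far region and the $r^p$-hierarchy alone fails to produce $\bar\tau_f$-decay, is handled by the geometry of the background. Integrating $\lambda_{\tau_f}\ge\tfrac12$ (from \eqref{eq:lambda-lower-bound}) between the slices $v=\Gamma^v(\tau_\star)$ (where $r=\Lambda$) and $v=\Gamma^v(\bar\tau_f)$, using $\dot\Gamma^v\sim 1$ from \eqref{eq:Gamma-slope}, one obtains $r(u_{\tau_f},\Gamma^v(\bar\tau_f))\gtrsim \bar\tau_f-\tau_\star$; the pointwise estimate $|r^2\partial_v\phi|\les\ve$ from \eqref{est:vphi_bound} then gives $\Delta(\hat u)\les \ve^2\int_{\Gamma^v(\bar\tau_f)}^\infty r^{-3}\,dv\les \ve^2(\bar\tau_f-\tau_\star)^{-2}$. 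Combining both families yields $\Delta(\hat u)\les \ve^2\bar\tau_f^{-2}$ uniformly, which is stronger than \eqref{eq:Psi-2}; integrating the piecewise bounds in $u_{\tau_f}$ (the Case~2b contribution $\int_0^{\bar\tau_f/2}\bar\tau_f^{-2}\,du'\les\bar\tau_f^{-1}$ dominates) produces \eqref{eq:Psi-1}. The principal obstacle is precisely this interpolation at the threshold $\tau_\star\sim\bar\tau_f$: neither the energy-based nor the pointwise bound alone yields decay in $\bar\tau_f$ for the intermediate far-region regime, and one must pair the two depending on the location of $u_{\tau_f}$.
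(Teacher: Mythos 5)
Your reduction is the same as the paper's. You rewrite $\Psi_{\bar\tau_f,\tau_f}$ as a one-variable map times the identity (this is what the paper's $\mathfrak g_{\bar\tau_f,\tau_f}\times\id$ is), and compute $F'(u_{\tau_f})=e^{-\Delta(\hat u)}=-\gamma_{\tau_f}(u_{\tau_f},\Gamma^v(\bar\tau_f))$, so that $|F'-1|=|\tilde\gamma_{\tau_f}|$ on the segment $\{v=\Gamma^v(\bar\tau_f)\}$---precisely the quantity the paper estimates via \eqref{eq:g-defn} and \eqref{eq:g-est-1}. Your split at $\tau_\star\sim\bar\tau_f$ is also the same as the paper's split at $\tau\sim\tfrac12\bar\tau_f$, and your use of $r\gtrsim\bar\tau_f$ in the small-$\tau$ regime, together with $\tau_\star\sim u_{\tau_f}$, is the paper's ``it is easy to see that $r\gtrsim\bar\tau_f$'' observation. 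The one mild difference is that where the paper applies the $p=1/2$ energy-based estimate \eqref{eq:gamma-2} in both regimes (yielding $\ve^2\bar\tau_f^{-3/2}$), you use the pointwise bound \eqref{est:vphi_bound} and the monotonicity $r(u_{\tau_f},\Gamma^v(\bar\tau_f))\gtrsim\bar\tau_f-\tau_\star$ in the far/small-$\tau$ regime, together with $\mathcal E_0$ near $\Gamma$; both are valid, and yours happens to give the slightly stronger rate $\bar\tau_f^{-2}$.

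One parenthetical step in your write-up is not right as stated: your bound for the near-horizon portion of $\Delta$ (relevant only when $\tau_\star>\bar\tau_f$, i.e.\ $u_{\tau_f}>\Gamma^{u_{\tau_f}}(\bar\tau_f)$) by $\Lambda\cdot\mathcal F^{\tau_f}(u_{\tau_f},\bar\tau_f)$ silently drops the degenerate factor $\lambda_{\tau_f}^{-1}\sim(\bar r_{\tau_f}-M)^{-2}$ in the integrand $\frac{r}{\lambda_{\tau_f}}(\partial_v\phi)^2$, which blows up toward $\mathcal H^+$; $\mathcal F$ controls $\int(\partial_v\phi)^2\,dv$ but not $\int\lambda^{-1}(\partial_v\phi)^2\,dv$, and no member of the $\underline{\mathcal E}{}_p$ hierarchy rescues a $(\bar r-M)^{-2}(\partial_v\phi)^2$ weight either. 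This does not affect the argument in substance, however: the paper's own proof only estimates $\mathfrak g'$ on $[0,\Gamma^{u_{\tau_f}}(\bar\tau_f)]\times\{\Gamma^v(\bar\tau_f)\}$ (the range where $\tau_\star\le\bar\tau_f$ and the entire segment stays in $\{r\ge\Lambda\}$), and $\Psi_{\bar\tau_f,\tau_f}$ is applied only on that subregion, so the near-horizon case you tried to handle simply does not arise. Dropping that remark and restricting to $\tau_\star\le\bar\tau_f$ leaves your argument correct and in line with the paper.
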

\begin{proof} From the definitions, we have $\Psi_{\bar\tau_f,\tau_f}=\mathfrak g_{\bar\tau_f,\tau_f}\times\id$,   where \begin{equation}\label{eq:g-defn}
    \mathfrak g_{\bar\tau_f,\tau_f}(u_{\tau_f})\doteq -\int_0^{u_{\tau_f}} \gamma_{\tau_f}(u'_{\tau_f},\Gamma^v(\bar\tau_f))\,du'_{\tau_f}= u_{\tau_f} - \int_0^{u_{\tau_f}} \big(\gamma_{\tau_f}(u'_{\tau_f},\Gamma^v(\bar\tau_f))+1\big)\,du_{\tau_f}'.
\end{equation}
By applying the estimate \eqref{eq:gamma-2}, we have that
\begin{equation}
    | \mathfrak g_{\bar\tau_f,\tau_f}'-1|\le |\tilde\gamma_{\tau_f}| \les \ve^2 r^{-3/2}\bar\tau^{-2+\delta}_f,\label{eq:g-est-1}
\end{equation}
where the right-hand side is evaluated on $[0,\Gamma^{u_{\tau_f}}(\bar\tau_f)]\times\{\Gamma^v(\bar\tau_f)\}$. On $\{v=\Gamma^v(\bar\tau_f)\}\cap \{\tau\le \tfrac 12\bar\tau_f\}$, it is easy to see that $r\gtrsim \bar\tau_f$. By considering separately the two regions $\tau\ge \frac 12 \bar\tau_f$ and $\tau\le \frac 12 \bar\tau_f$ on $[0,\Gamma^{u_{\tau_f}}(\bar\tau_f)]\times\{\Gamma^v(\bar\tau_f)\}$, we use \eqref{eq:g-est-1} to estimate
\begin{equation}
     | \mathfrak g_{\bar\tau_f,\tau_f}'-1|\les \ve^2\bar\tau_f^{-3/2}.\label{eq:g-est-2}
\end{equation} 
This now readily implies \eqref{eq:Psi-1} and \eqref{eq:Psi-2}. \end{proof}

\subsubsection{Differences of the background solutions at different bootstrap times}

Next, we compare $\bar r_{\bar\tau_f}$ and $\bar r_{\tau_f}$ for two late bootstrap times. Since these are defined on different bootstrap domains with different coordinate systems, we need to pull them back onto the same domain. If we pull both back to $\hat{\mathcal D}_\infty$, we would have to contend with the poor estimate \eqref{eq:u-Cauchy}. Instead, we pull $\bar r_{\bar\tau_f}$ back to the ``later'' coordinate system $(u_{\tau_f},v)$, which takes advantage of the good estimate \eqref{eq:Psi-2}.

\begin{lem}\label{eq:background-r-difference-estimate} For any $\bar \tau\ge 1$ sufficiently large and $\bar\tau_f<\tau_f<\infty$, it holds that
    \begin{equation}\label{eq:varrho-estimate}
     \sup_{J^-(\Gamma(\tau_f))\cap\mathcal D_{\tau_f}}   |\bar r_{\tau_f}-\bar r_{\bar\tau_f}\circ\Psi_{\bar\tau_f,\tau_f}|\les \ve^{2}\bar\tau_f^{-1/2}.
    \end{equation}
\end{lem}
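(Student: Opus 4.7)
The plan is to exploit the fact that both $\bar r_{\tau_f}$ and $\bar r_{\bar\tau_f}$ are area-radius functions of extremal Reissner--Nordstr\"om in Eddington--Finkelstein double null gauge, and hence both admit a completely explicit expression in terms of the tortoise coordinate. By integrating the ODEs $\partial_u\bar r_{\tau_f}=-(1-\bar\mu_{\tau_f})$ and $\partial_v\bar r_{\tau_f}=1-\bar\mu_{\tau_f}$ with the anchoring $\bar r_{\tau_f}(\Gamma(\tau_f))=\Lambda$, one obtains the closed form
\begin{equation*}
    \bar r_{\tau_f}(u,v)=R^\ast(v-u+c_{\tau_f}),\qquad c_{\tau_f}\doteq r_\ast(\Lambda)+\Gamma^{u_{\tau_f}}(\tau_f)-\Gamma^v(\tau_f),
\end{equation*}
where $R^\ast$ is the inverse of the extremal Reissner--Nordstr\"om tortoise coordinate $r_\ast$ from \eqref{eq:tortoise}; an analogous formula holds for $\bar r_{\bar\tau_f}$ with a constant $c_{\bar\tau_f}$ defined in the same way.

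Pulling back along $\Psi_{\bar\tau_f,\tau_f}=(\mathfrak g_{\bar\tau_f,\tau_f},\id)$ then gives
\begin{equation*}
    F(u,v)\doteq \bar r_{\tau_f}(u,v)-(\bar r_{\bar\tau_f}\circ\Psi_{\bar\tau_f,\tau_f})(u,v) = R^\ast(v-u+c_{\tau_f}) - R^\ast(v-\mathfrak g_{\bar\tau_f,\tau_f}(u)+c_{\bar\tau_f}).
\end{equation*}
The whole point of this representation is that the difference of the two arguments of $R^\ast$ is $v$-independent. Since $(R^\ast)'(x)=(1-M/R^\ast(x))^2\in[0,1]$, the mean value theorem immediately yields the pointwise bound
\begin{equation*}
    |F(u,v)|\le |c_{\tau_f}-c_{\bar\tau_f}|+|\mathfrak g_{\bar\tau_f,\tau_f}(u)-u|,
\end{equation*}
and the second term is already $\les \ve^2\bar\tau_f^{-1/2}$ uniformly on $J^-(\Gamma(\tau_f))\cap\mathcal D_{\tau_f}$ by \eqref{eq:Psi-1}.

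The remaining task---and the only step of real substance---is to estimate $|c_{\tau_f}-c_{\bar\tau_f}|$, which I will do by evaluating $F$ at the anchor point $P_0\doteq \Gamma(\bar\tau_f)$ viewed as a point of $\mathcal D_{\tau_f}$. Since $\Psi_{\bar\tau_f,\tau_f}$ sends $P_0$ to $\Gamma(\bar\tau_f)\in\mathcal D_{\bar\tau_f}$, which is precisely the anchor of $\bar r_{\bar\tau_f}$, one has $(\bar r_{\bar\tau_f}\circ\Psi_{\bar\tau_f,\tau_f})(P_0)=\Lambda$ exactly; the bootstrap bound \eqref{eq:boot-r} applied to $\tilde r_{\tau_f}$ at $\Gamma(\bar\tau_f)$, together with $r(\Gamma(\bar\tau_f))=\Lambda$, then gives $|F(P_0)|=|\bar r_{\tau_f}(P_0)-\Lambda|\les \ve^{3/2}\bar\tau_f^{-2+\delta}$. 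Crucially, both $\bar r_{\tau_f}(P_0)$ and $(\bar r_{\bar\tau_f}\circ\Psi_{\bar\tau_f,\tau_f})(P_0)$ lie close to $\Lambda>M$, so the intermediate value $(R^\ast)'(\xi_0)$ arising in the MVT representation of $F(P_0)$ is uniformly bounded below by a positive constant depending only on $M_0$. Inverting this relation produces
\begin{equation*}
    |c_{\tau_f}-c_{\bar\tau_f}|\les |F(P_0)|+|\mathfrak g_{\bar\tau_f,\tau_f}(\Gamma^{u_{\tau_f}}(\bar\tau_f))-\Gamma^{u_{\tau_f}}(\bar\tau_f)|\les \ve^{3/2}\bar\tau_f^{-2+\delta}+\ve^2\bar\tau_f^{-1/2}\les \ve^2\bar\tau_f^{-1/2},
\end{equation*}
where the final inequality uses that $\bar\tau_f$ is sufficiently large relative to $\ve$ and $\delta$. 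Combining with the pointwise bound displayed above yields \eqref{eq:varrho-estimate}. The main conceptual point is that the integrability of the background in Eddington--Finkelstein double null gauge reduces everything to the mean value theorem; the only genuine bookkeeping is verifying that $\Psi_{\bar\tau_f,\tau_f}$ maps the $(u_{\tau_f},v)$-coordinates of $\Gamma(\bar\tau_f)$ to its $(u_{\bar\tau_f},v)$-coordinates, after which the mean value theorem does all the analytic work.
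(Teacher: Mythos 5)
Your proof is correct, and it takes a genuinely different route from the paper's. The paper's proof sets up a local bootstrap assumption on $\varrho$ and integrates the first-order ODEs for $\partial_{u_{\tau_f}}\varrho$ and $\partial_v\varrho$ backwards with integrating factors, exploiting the global redshift sign of the linear coefficient $\frac{2M}{\bar r^3}(\bar r-M)$. You instead observe that in Eddington--Finkelstein double null gauge the background area-radius is completely integrable, $\bar r_{\tau_f}(u,v)=R^*(v-u+c_{\tau_f})$ with $R^*=r_*^{-1}$, so the difference of the two arguments of $R^*$ is \emph{$v$-independent}, and then one application of the mean value theorem (using $0<(R^*)'\le 1$) reduces everything to estimating $|\mathfrak g_{\bar\tau_f,\tau_f}(u)-u|$, already supplied by \eqref{eq:Psi-1}, and the constant $|c_{\tau_f}-c_{\bar\tau_f}|$, which you extract by evaluating at the anchor point $\Gamma(\bar\tau_f)$ and inverting the MVT there (where $(R^*)'\gtrsim 1$ since both radii are near $\Lambda\gg M$). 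Your argument is more elementary and avoids the bootstrap entirely; the paper's ODE argument is more robust (it does not rely on exact integrability of the background) and is the template for the corresponding estimates in the dynamical problem. The $v$-independence of the argument difference that powers your MVT step is the clean conceptual reason the paper's integrating factors close.

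One small inaccuracy: you cite \eqref{eq:boot-r} for $|F(P_0)|\les\ve^{3/2}\bar\tau_f^{-2+\delta}$, which forces ``$\bar\tau_f$ sufficiently large relative to $\ve$'' to absorb it into $\ve^2\bar\tau_f^{-1/2}$. Since $r(\Gamma(\bar\tau_f))=\Lambda$ exactly, you should instead invoke the sharper estimate \eqref{eq:tilde-r-on-Gamma}, which gives $|F(P_0)|=|\tilde r_{\tau_f}(\Gamma(\bar\tau_f))|\les\ve^2\bar\tau_f^{-2+\delta}\les\ve^2\bar\tau_f^{-1/2}$ for any $\bar\tau_f\ge 1$, so that the ``sufficiently large'' threshold is uniform in $\ve$, matching what the paper achieves.
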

\begin{proof}
Let $\varrho\doteq \bar r_{\tau_f}-\bar r_{\bar\tau_f}\circ\Psi_{\bar\tau_f,\tau_f}$. We differentiate in $u_{\tau_f}$ to obtain
\begin{align*}
     \partial_{u_{\tau_f}}\varrho &= \bar\nu_{\tau_f}-(\bar\nu_{\bar\tau_f}\circ\Psi_{\bar\tau_f,\tau_f})\mathfrak g'_{\bar\tau_f,\tau_f}= -(1-\bar\mu_{\tau_f})+(1-\bar\mu_{\bar\tau_f}\circ\Psi_{\bar\tau_f,\tau_f})\mathfrak g'_{\bar\tau_f,\tau_f} \\
    & =\left(1-\frac{M}{\bar r_{\bar\tau_f}\circ\Psi_{\bar\tau_f,\tau_f}}\right)^2-\left(1-\frac{M}{\bar r_{\tau_f}}\right)^2+O(\ve^2\bar\tau_f^{-3/2})
\end{align*}
where we used \eqref{eq:g-est-2} and the Eddington--Finkelstein gauge condition for $\bar r_{\bar\tau_f}$ and $\bar r_{\tau_f}$. Under the bootstrap assumption that $|\varrho|\le \ve \bar \tau_f^{-1/3}$ with $\bar\tau_f$ large, we may Taylor expand to obtain
\begin{equation}
      \partial_{u_{\tau_f}}\varrho = -\frac{2M}{\bar r_{\tau_f}^3}(\bar r_{\tau_f}-M)\varrho +O(\bar r_{\tau_f}^{-3}\varrho^2)+O(\ve^2\bar\tau_f^{-3/2}).\label{eq:varrho-1}
\end{equation}
By \eqref{eq:tilde-r-on-Gamma}, we estimate $|\varrho(\Gamma(\bar\tau_f))|\les \ve^2 \bar\tau_f^{-3+\delta}$, which for $\bar \tau_f$ large shows that the bootstrap assumption is satisfied near $\Gamma(\bar\tau_f)$. Integrating \eqref{eq:varrho-1} backwards from $\Gamma(\bar\tau_f)$ (with the help of an integrating factor, see also \eqref{eq:integrating-factor-bounded}), we estimate 
\begin{equation}\label{eq:varrho-3}
|\varrho(u_{\tau_f},\Gamma^v(\bar\tau_f))|\les \ve^2\bar\tau_f^{-3+\delta}+\int^{\Gamma^{u_{\tau_f}}(\bar\tau_f)}_{u_{\tau_f}} \big(\bar r_{\tau_f}^{-3}\varrho^2+\ve^2\bar\tau_f^{-3/2}\big)\,du_{\tau_f}'\les \ve^2\bar\tau_f^{-1/2}+\ve^2 \bar \tau_f^{-2/3}\les \ve^2\bar \tau^{-1/2}_f.
\end{equation}
For $\bar \tau_f$ sufficiently large, this improves the bootstrap assumption on $\varrho$ and therefore \eqref{eq:varrho-3} holds for $u_{\tau_f}\in [0,\Gamma^{u_{\tau_f}}(\bar\tau_f)]$.

Next, similarly to \eqref{eq:varrho-1}, we derive the equation
\begin{equation*}
    \partial_v\varrho = \frac{2M}{\bar r^3_{\tau_f}}(\bar r_{\tau_f}-M)\varrho + O(\bar r^{-3}_{\tau_f}\varrho^2).
\end{equation*}
This can be integrated backwards from $[0,\Gamma^{u_{\tau_f}}(\bar\tau_f)]\times\{\Gamma^v(\bar\tau_f)\}$ and by using \eqref{eq:varrho-3} and the good sign of the coefficient of the linear term on the right-hand side, we obtain \eqref{eq:varrho-estimate} as desired. 
\end{proof}

\subsection{Taking \texorpdfstring{$\tau_f\to\infty$}{infinity} and completing the proof of \texorpdfstring{\cref{thm:stability}}{Theorem 1}}\label{sec:limiting-argument}

With estimates for the gauges and background solutions in hand, we can now justify passing to the limit $\tau_f\to\infty$ in the constructions and estimates established in \cref{sec:bootstrap,sec:geometry,sec:energy,sec:decay-energy}. Once the ``final'' estimates on $\mathcal D_\infty$ have been established, we can infer the existence of the event horizon $\mathcal H^+$ and complete the proof of \cref{thm:stability}.

\subsubsection{Extracting the final anchored background solution \texorpdfstring{$\bar r_\infty$}{bar r infinity}}

\begin{prop}\label{prop:r-convergence} The limit
\begin{equation}\label{eq:r-star-limit}
    \bar r_\star \doteq \lim_{\tau_f\to\infty} \bar r_{\tau_f}(0,0)
\end{equation} exists and satisfies
\begin{equation}
    |\bar r_\star-\Lambda|\les \ve^{2}.\label{eq:r-star-estimate}
\end{equation}
Let $\bar r_\infty$ denote the $\infty$-anchored background solution on $\mathcal D_\infty$ with bifurcation sphere area-radius $\bar r_\infty(0,0)=\bar r_\star$, as defined in \cref{sec:anchor}. Then it holds that
 \begin{equation}\label{eq:r-infty-est}
        \sup_{J^-(\Gamma(\tau_f))\cap\mathcal D_\infty}|\bar r_\infty-\bar r_{\tau_f}\circ\Psi_{\tau_f,\infty}|\les \ve^2 \tau_f^{-1/2}.
    \end{equation}
\end{prop}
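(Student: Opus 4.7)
The plan is to prove the three assertions in order, with the bulk of the work residing in the uniform estimate \eqref{eq:r-infty-est}.

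First I would establish \eqref{eq:r-star-limit} and \eqref{eq:r-star-estimate}. By inspection of \eqref{eq:teleology-1}--\eqref{eq:teleology-2}, every coordinate change $\Psi_{\bar\tau_f,\tau_f}$ fixes the bifurcation sphere $(0,0)$, so \cref{eq:background-r-difference-estimate} evaluated there reads $|\bar r_{\tau_f}(0,0) - \bar r_{\bar\tau_f}(0,0)|\les \ve^2\bar\tau_f^{-1/2}$ for any $\bar\tau_f<\tau_f$. This makes the sequence Cauchy and produces $\bar r_\star$. To bound $|\bar r_\star-\Lambda|$, I would use that $r(0,0)=\Lambda$ (by the seed data normalization in \cref{prop:seed-data-generation}) together with the on-$\Gamma$ estimate \eqref{eq:tilde-r-on-Gamma} evaluated at $\tau=1$, which yields $|\bar r_{\tau_f}(0,0)-\Lambda| = |\tilde r_{\tau_f}(0,0)| \les \ve^2$ uniformly in $\tau_f$; passing to the limit gives \eqref{eq:r-star-estimate}. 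The existence half of \cref{lem:existence-anchoring}, applied with $(u_0,v_0)=(0,0)$ and $R_0=\bar r_\star$, then produces the $\infty$-anchored background $\bar r_\infty$ on $\mathcal D_\infty$.

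For the uniform estimate, I would work entirely in the eschatological gauge $(u_\infty,v)$. Define the pullbacks $R_{\tau_f}\doteq \bar r_{\tau_f}\circ \Psi_{\tau_f,\infty}$ on $\mathcal D_\infty$, where $\Psi_{\tau_f,\infty}\doteq \Phi_{\tau_f}\circ\hat\Phi_\infty$ is the natural $\tau_f=\infty$ analogue of \eqref{defn:Phi}. The composition identity $\Psi_{\bar\tau_f,\tau_f}\circ\Psi_{\tau_f,\infty}=\Psi_{\bar\tau_f,\infty}$ allows me to pull back \cref{eq:background-r-difference-estimate} to
\begin{equation*}
    |R_{\tau_f}-R_{\bar\tau_f}|\les \ve^2\bar\tau_f^{-1/2}\qquad \text{on }J^-(\Gamma(\tau_f))\cap \mathcal D_\infty
\end{equation*}
for any $\bar\tau_f<\tau_f$. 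Fixing a point in $\mathcal D_\infty$, this shows $R_{\tau_f}$ is Cauchy and converges pointwise to some $R_\infty$ with the rate $|R_{\tau_f}-R_\infty|\les \ve^2\tau_f^{-1/2}$.

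The main remaining step, which I expect to be the key obstacle, is identifying $R_\infty$ with $\bar r_\infty$. Because $\Psi_{\tau_f,\infty}$ acts as $(u_\infty,v)\mapsto(\mathfrak g_{\tau_f,\infty}(u_\infty),v)$ with the $v$-coordinate untouched, the Eddington--Finkelstein conditions on $\bar r_{\tau_f}$ translate directly to
\begin{equation*}
    \partial_v R_{\tau_f}=\left(1-\tfrac{M}{R_{\tau_f}}\right)^{\!2},\qquad \partial_{u_\infty}R_{\tau_f}=-\left(1-\tfrac{M}{R_{\tau_f}}\right)^{\!2}\mathfrak g'_{\tau_f,\infty}(u_\infty).
\end{equation*}
Since $|\mathfrak g'_{\tau_f,\infty}-1|\les \ve^2\tau_f^{-3/2}$ uniformly by \eqref{eq:g-est-2}, and since $R_\infty$ inherits the initial datum $R_\infty(0,0)=\bar r_\star=\bar r_\infty(0,0)$, I would argue that the limit $R_\infty$ satisfies the same Eddington--Finkelstein ODE system as $\bar r_\infty$ in the sense of distributions, which is more than enough regularity for these first-order equations. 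The uniqueness assertion in \cref{lem:existence-anchoring} then forces $R_\infty = \bar r_\infty$, and \eqref{eq:r-infty-est} follows immediately from the already-established rate $|R_{\tau_f}-R_\infty|\les \ve^2\tau_f^{-1/2}$. The delicate point is justifying this limiting procedure despite $\Phi_\infty$ being only $C^1$, but this is handled by the pointwise convergence of the relevant first-order derivatives together with dominated convergence against the uniform bounds from \cref{sec:convergence-to-id}.
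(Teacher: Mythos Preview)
Your proposal is correct and follows essentially the same argument as the paper: Cauchy convergence at the origin via \cref{eq:background-r-difference-estimate}, the bound \eqref{eq:r-star-estimate} via \eqref{eq:tilde-r-on-Gamma}, uniform convergence of the pullbacks $R_{\tau_f}=\bar r_{\tau_f}\circ\Psi_{\tau_f,\infty}$ with rate $\ve^2\tau_f^{-1/2}$, and identification of the limit with $\bar r_\infty$ by verifying the Eddington--Finkelstein gauge conditions using \eqref{eq:g-est-2} and the uniqueness in \cref{lem:existence-anchoring}. The only cosmetic difference is that the paper names the limit $\mathfrak r$ in the initial-data coordinates before composing with $\hat\Phi_\infty$, whereas you work directly in the eschatological gauge throughout.
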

\begin{proof}
The existence of the limit \eqref{eq:r-star-limit} follows immediately from \eqref{eq:varrho-estimate} evaluated at $(0,0)$ and then the estimate \eqref{eq:r-star-estimate} follows from \eqref{eq:tilde-r-on-Gamma}. To prove \eqref{eq:r-infty-est}, first observe that by \eqref{eq:varrho-estimate}, the limit
\begin{equation*}
    \mathfrak r(\hat u,\hat v)\doteq \lim_{\tau_f\to \infty}\bar r_{\tau_f}\circ\Phi_{\tau_f}
\end{equation*}
exists on $\hat{\mathcal D}_\infty$ and satisfies
\begin{equation}\label{eq:mathfrak-r-est}
    \sup_{J^-(\Gamma(\tau_f))\cap\hat{\mathcal D}_\infty}|\mathfrak r-r_{\tau_f}\circ\Phi_{\tau_f}|\les \ve^{3/2}\tau_f^{-1/2}.
\end{equation}
We claim that $\bar r_\infty=\mathfrak r\circ\hat\Phi_\infty$, whence \eqref{eq:r-infty-est} follows from \eqref{eq:mathfrak-r-est}.  As $\bar r_\infty(0,0)=\bar r_\star=\mathfrak r\circ\hat\Phi_\infty(0,0)$, it suffices to show that $\mathfrak r\circ\hat\Phi_\infty$ satisfies the Eddington--Finkelstein gauge conditions in the coordinates $(u_\infty,v)$. Indeed, 
\begin{equation}
    \partial_{u_\infty}(\bar r_{\tau_f}\circ\Psi_{\tau_f,\infty}) = -(1-\bar\mu_{\tau_f}\circ\Psi_{\tau_f,\infty})\mathfrak g'_{\tau_f,\infty}=-\left(1-\frac{M}{\bar r_{\tau_f}\circ\Psi_{\tau_f,\infty}}\right)^2\mathfrak g'_{\tau_f,\infty}
\end{equation}
by the definitions. Using \eqref{eq:g-est-2}, we see that this converges uniformly to $-(1-M/\mathfrak r\circ\hat\Phi_\infty)^2$ as $\tau_f\to\infty$, which verifies the Eddington--Finkelstein gauge condition for $\partial_{u_\infty} (\mathfrak r\circ\hat\Phi_\infty)$. The corresponding argument for $\partial_v (r\circ\hat\Phi_\infty)$ is trivial, which completes the proof.
\end{proof}

\begin{rk}
   Note that $\bar r_\infty$ is smooth on $\mathcal D_\infty$ in the $(u_\infty,v)$ coordinate system, but $\bar r_\infty\circ\Phi_\infty$ is only $C^2$ in $\hat u$ on $\hat{\mathcal D}_\infty$.
\end{rk}

\subsubsection{Convergence of the energy hierarchies}

\begin{prop}\label{prop:energy-convergence} For any $\tau\in[1,\infty)$ and $p\in[0,3-\delta]$, it holds that 
\begin{align*}
      \lim_{\tau_f\to\infty}\mathcal E_p^{\tau_f}(\tau)&=\mathcal E_p^{\infty}(\tau) ,\\    \lim_{\tau_f\to\infty}\underline{\mathcal E}{}_p^{\tau_f}(\tau)&=\underline{\mathcal E}{}_p^{\infty}(\tau) . 
\end{align*}
\end{prop}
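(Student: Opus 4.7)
The plan is to pull every $\tau_f$-dependent object back to the fixed chart $(\hat u,\hat v)$ on $\hat{\mathcal D}_\infty$ and then send $\tau_f\to\infty$ by monotone and dominated convergence. By \cref{lem:v-est} the map $\mathfrak v$ is defined independently of $\tau_f$, so the $v$-coordinate does not move with $\tau_f$; the only $\tau_f$-dependence left in the integrands is through the pulled-back anchored background $\hat{\bar r}_{\tau_f}\doteq \bar r_{\tau_f}\circ\Phi_{\tau_f}$. Combining \cref{lem:coord-change-1} with \cref{prop:r-convergence} and the bootstrap estimate \eqref{eq:boot-nu} gives $\hat{\bar r}_{\tau_f}\to\hat{\bar r}_\infty\doteq \bar r_\infty\circ\Phi_\infty$ and $\partial_{\hat u}\hat{\bar r}_{\tau_f}\to\partial_{\hat u}\hat{\bar r}_\infty$ uniformly on every compact subset of $\hat{\mathcal D}_\infty$.

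For $\mathcal E_p^{\tau_f}(\tau)$ the argument is soft. In the $(\hat u,v)$ chart the outgoing cone $C^{\tau_f}(\tau)$ is the fixed null segment $\{\hat u=\Gamma^{\hat u}(\tau)\}\cap\{v\in[\Gamma^v(\tau),\Gamma^v(\tau_f)]\}$, and the integrand is a manifestly $\tau_f$-independent nonnegative function of $v$. Since $\mathfrak v:[0,\infty)\to[0,\infty)$ is a diffeomorphism and $\Gamma$ approaches $i^+$ as $\tau\to\infty$, we have $\Gamma^v(\tau_f)\to\infty$, so the monotone convergence theorem together with the uniform bootstrap bound $\mathcal E_p^{\tau_f}(\tau)\les A_0\ve^2\tau^{-3+\delta+p}$ yields $\mathcal E_p^{\tau_f}(\tau)\nearrow \mathcal E_p^\infty(\tau)<\infty$.

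For $\underline{\mathcal E}{}_p^{\tau_f}(\tau)$ I would change the integration variable from $u_{\tau_f}$ to $\hat u$, recasting the energy as
\[\underline{\mathcal E}{}_p^{\tau_f}(\tau)=\int_{\Gamma^{\hat u}(\tau)}^{\Gamma^{\hat u}(\tau_f)} G_p^{\tau_f}(\hat u)\,d\hat u,\]
where $G_p^{\tau_f}\ge 0$ is built from the fixed spacetime functions $\phi,\partial_{\hat u}\phi,\partial_{\hat u}\psi$ and the weights $\hat{\bar r}_{\tau_f}-M$ and $-\partial_{\hat u}\hat{\bar r}_{\tau_f}>0$; pointwise convergence $G_p^{\tau_f}\to G_p^\infty$ on $[\Gamma^{\hat u}(\tau),\hat u_{\mathcal H^+})$ then follows from the $C^1$-convergence of $\hat{\bar r}_{\tau_f}$. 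To justify the limit I would write
\[\underline{\mathcal E}{}_p^{\tau_f}(\tau)-\underline{\mathcal E}{}_p^\infty(\tau)=\int_{\Gamma^{\hat u}(\tau)}^{\Gamma^{\hat u}(\tau_f)}(G_p^{\tau_f}-G_p^\infty)\,d\hat u-\int_{\Gamma^{\hat u}(\tau_f)}^{\hat u_{\mathcal H^+}} G_p^\infty\,d\hat u,\]
split the first integral at a cutoff $\hat u_0\in(\Gamma^{\hat u}(\tau),\hat u_{\mathcal H^+})$, estimate the piece on $[\Gamma^{\hat u}(\tau),\hat u_0]$ by uniform convergence, and control both the piece on $[\hat u_0,\Gamma^{\hat u}(\tau_f)]$ and the second integral by a $\tau_f$-uniform integrable dominator. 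That dominator comes from combining the gauge-invariant pointwise estimates \eqref{est:phi_bound}, \eqref{est:psi_bound}, \eqref{est:upsi_bound}, and \eqref{est:uphi_bound} with the bootstrap comparability $-\partial_{\hat u}\hat{\bar r}_{\tau_f}\sim-\hat\nu$ (from \eqref{eq:boot-nu} and the chain rule): under the fixed change of variables $\rho\doteq \bar r_\infty(\hat u,\mathfrak v^{-1}(\Gamma^v(\tau)))$, a diffeomorphism $[\Gamma^{\hat u}(\tau),\hat u_{\mathcal H^+})\to(M,\Lambda]$, one obtains $G_p^{\tau_f}(\hat u)\,d\hat u\les \ve^2\big[(\rho-M)^{2-p}+1\big]\,d\rho$ uniformly in $\tau_f$, which is integrable on $(M,\Lambda]$ for $p<3$ (and analogously, with the Hardy weight $(\rho-M)^{-p}$ in the zeroth-order term, for $p\in[0,1)$).

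The main obstacle I expect is precisely this uniform integrability near the horizon: the $C^1$-convergence of the gauges in \cref{lem:coord-change-1} holds only on compacts $[0,\hat u_0]\times[0,\infty)$ with $\hat u_0<\hat u_{\mathcal H^+}$, so near $\hat u_{\mathcal H^+}$ the comparison of $\partial_{\hat u}\hat{\bar r}_{\tau_f}$ with $\partial_{\hat u}\hat{\bar r}_\infty$ must be replaced by the softer bootstrap comparability with $\hat\nu$. Reparametrizing the ingoing cone by $\rho=\bar r_\infty$ is the clean way to package this, trading the degenerate measure $d\hat u$ for $d\rho$ and absorbing the weight $(\rho-M)^{2-p}$ into a $\tau_f$-independent integrable dominator.
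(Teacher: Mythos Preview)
Your strategy is essentially the same as the paper's---pull back to a fixed chart and apply dominated convergence---with the cosmetic difference that you work in $(\hat u,\hat v)$ whereas the paper works in the eschatological gauge $(u_\infty,v)$. Your treatment of $\mathcal E_p^{\tau_f}$ by monotone convergence is correct; the paper simply remarks that the outgoing energies are the easy case and focuses on the $\partial_u\phi$ term in $\underline{\mathcal E}{}_p^{\tau_f}$.

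For $\underline{\mathcal E}{}_p^{\tau_f}$ there is a gap in your dominator for $p>2$. After the pointwise bounds and $-\partial_{\hat u}\hat{\bar r}_{\tau_f}\sim-\hat\nu\sim-\partial_{\hat u}\hat{\bar r}_\infty$, you arrive at $G_p^{\tau_f}(\hat u)\,d\hat u\les\ve^2(\hat{\bar r}_{\tau_f}-M)^{2-p}(-d\rho)$; to replace the weight by $(\rho-M)^{2-p}$ you need $\hat{\bar r}_{\tau_f}-M\gtrsim\hat{\bar r}_\infty-M$ uniformly in $\tau_f$, and this does \emph{not} follow from comparability of the derivatives alone. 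It is true, but requires the extra observation $\partial_{\hat u}\big((\hat{\bar r}_{\tau_f}-M)^{-1}\big)=\hat{\bar r}_{\tau_f}^{-2}\,\mathfrak u_{\tau_f}'\sim\hat{\bar r}_\infty^{-2}\,\mathfrak u_\infty'$ and an integration from $\Gamma$, which you should make explicit. The paper handles exactly this point by pulling back to $(u_\infty,v)$ and deriving the explicit bound $\bar r_{\tau_f}(u_{\tau_f},v)-M\les(1+u_{\tau_f}-u_{\tau_f}^\Lambda(v))^{-1}$ from the same ODE; combined with $\mathfrak g_{\tau_f,\infty}\sim\mathrm{id}$ this gives the $\tau_f$-independent dominator $\ve^2(1+u_\infty)^{p-4}$ directly in the coordinate variable, which is perhaps a bit cleaner than your $\rho$-reparametrization.
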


\begin{proof}
We give the proof for the $\partial_u\phi$ term in $\underline{\mathcal E}{}_p^{\tau_f}(\tau)$ as every other term in any of the energies is either similarly or less difficult. We use the change of variables formula to write the $du_{\tau_f}$ integral as a $du_\infty$ integral: \begin{multline}   \int_{\underline C^{\tau_f}(\tau)} (\bar r_{\tau_f}-M)^{2-p}\frac{(\partial_{u_{\tau_f}}\phi_{\tau_f})^2}{-\bar\nu_{\tau_f}}\,du_{\tau_f} =  \int_{\underline C^{\tau_f}(\tau)}  (\bar r_{\tau_f}-M)^{2-p}\left(\frac{\partial_{u_{\tau_f}}\phi_{\tau_f}}{-\bar\nu_{\tau_f}}\right)^2(-\bar\nu_{\tau_f})du_{\tau_f} \\
     =\int_{[\Gamma^{u_\infty}(\tau),\Gamma^{u_\infty}(\tau_f)]\times\{v\}}\frac{(\bar r_{\tau_f}\circ\Psi_{\tau_f,\infty}-M)^{4-p} }{(\bar r_{\tau_f}\circ\Psi_{\tau_f,\infty})^{2}}\left(\frac{\partial_{u_\infty}\phi_\infty}{(1-\bar\mu_{\tau_f})\circ\Psi_{\tau_f,\infty}}\right)^2\,du_\infty,\label{eq:energy-conv-aux-1}
\end{multline}
where $v=(\Gamma^v)^{-1}(\tau)$. We shall use the dominated convergence theorem to prove that this integral converges to
\begin{equation}
    \int_{[\Gamma^{u_\infty}(\tau),\infty)\times\{v\}}\frac{(\bar r_\infty-M)^{4-p} }{\bar r_\infty^{2}}\left(\frac{\partial_{u_\infty}\phi_\infty}{1-\bar\mu_{\infty}}\right)^2\,du_\infty =  \int_{\underline C^{\infty}(\tau)} (\bar r_{\infty}-M)^{2-p}\frac{(\partial_{u_{\infty}}\phi_{\infty})^2}{-\bar\nu_{\infty}}\,du_{\infty}\label{eq:energy-conv-aux-2}
\end{equation}
as $\tau_f\to \infty$. 

It is clear from \cref{prop:r-convergence} that the integrand in \eqref{eq:energy-conv-aux-1} converges to the integrand in \eqref{eq:energy-conv-aux-2} pointwise, so we just have to prove a uniform bound by an $L^1$ function of $u_\infty$. For the scalar field, we estimate
\begin{equation}
\left|\frac{\partial_{u_\infty}\phi_\infty}{1-\bar\mu_{\infty}} \right|= \lim_{\tau_f\to\infty}\left|  \frac{\partial_{u_\infty}\phi_\infty}{(1-\bar\mu_{\tau_f})\circ\Psi_{\tau_f,\infty}} \right|=\lim_{\tau_f\to\infty} \left|\mathfrak g'_{\tau_f,\infty}\left(\frac{\partial_{u_{\tau_f}}\phi_{\tau_f}}{-\bar\nu_{\tau_f}}\right)\circ\Psi_{\tau_f,\infty}\right|\les \ve\label{eq:energy-conv-aux-3}
\end{equation}
by \eqref{est:uphi_bound} and \eqref{eq:g-est-2}. To estimate the degenerate factor on the horizon, we note that
\begin{equation*}
    \partial_{u_{\tau_f}}\big((\bar r_{\tau_f}-M)^{-1}\big) = \bar r^{-2}_{\tau_f},
\end{equation*}
from which it readily follows that
\begin{equation}
    \bar r_{\tau_f}(u_{\tau_f},v)-M \les \big(1+u_{\tau_f}-u_{\tau_f}^\Lambda(v)\big)^{-1}\label{eq:energy-conv-aux-4}
\end{equation}
for $(u_{\tau_f},v)\in\mathcal D_{\tau_f}\cap\{r\le\Lambda\}$. Therefore, combining \eqref{eq:energy-conv-aux-3} and \eqref{eq:energy-conv-aux-4}, we have
\begin{equation*}
  \left|  \frac{(\bar r_{\tau_f}\circ\Psi_{\tau_f,\infty}-M)^{4-p} }{(\bar r_{\tau_f}\circ\Psi_{\tau_f,\infty})^{2}}\left(\frac{\partial_{u_\infty}\phi_\infty}{(1-\bar\mu_{\tau_f})\circ\Psi_{\tau_f,\infty}}\right)^2\right|\les \ve^2 \big(1+\mathfrak g_{\tau_f,\infty}(u_\infty)-u_{\tau_f}^\Lambda(v)\big)^{p-4},
\end{equation*}
which is integrable as $p-4<-1$ (using also \eqref{eq:g-est-2} to say that $\mathfrak g_{\tau_f,\infty}(u_\infty)-u_{\tau_f}^\Lambda(v)\sim u_\infty$ for $u_\infty$ large).
\end{proof}

By \cref{prop:continuity,prop:energy-convergence}, the estimates \eqref{eq:boot-rp}--\eqref{eq:boot-flux-2} hold for $\tau_f=\tau_\infty$ and every $\tau\in[1,\infty)$. This proves \eqref{eq:nrg-decay-main} of \cref{thm:stability}.

\subsubsection{Global structure of \texorpdfstring{$\hat{\mathcal Q}_\mathrm{max}$}{Q max}}

We now prove the existence of a black hole region and a regular event horizon in the maximal development $(\hat{\mathcal Q}_\mathrm{max},r,\hat\Omega^2,\phi,e)$ of seed data lying in $\mathfrak M_\mathrm{stab}$.

\begin{prop}\label{prop:existence-of-horizon}
The maximal development of any seed data lying in $\mathfrak M_\mathrm{stab}$ has $\hat{\mathcal Q}_\mathrm{max}=[0,U_*]\times[0,\infty)$ and there exists a $\hat u_{\mathcal H^+}\in (0,U_*)$ satisfying
  \begin{equation}\label{eq:horizon-location-estimate}
        |\hat u_{\mathcal H^+}-\hat u_{\mathcal H^+,0}|\les \ve,
    \end{equation} where $\hat u_{\mathcal H^+,0}$ was defined in \eqref{eq:original-horizon}, such that 
  $  [0,\hat u_{\mathcal H^+}]\times[0,\infty)\subset\hat{\mathcal Q}_\mathrm{max},$
\begin{equation}\label{eq:null-infinity-2}
    \lim_{\hat v\to\infty}r(\hat u,\hat v)=\infty
\end{equation} for every $\hat u\in[0,\hat u_{\mathcal H^+})$, and 
\begin{equation}\label{eq:horizon-2}
    \lim_{\hat v\to\infty} r(\hat u_{\mathcal H^+},\hat v)=    \lim_{\hat v\to\infty}\varpi(\hat u_{\mathcal H^+},\hat v)=M=|e|.
\end{equation}
Therefore, $[0,\hat u_{\mathcal H^+})\times\{\hat v=\infty\}$ may be regarded as future null infinity $\mathcal I^+$ and 
\begin{equation*}
    \mathcal H^+\doteq J^-(\mathcal I^+)= \{\hat u_{\mathcal H^+}\}\times[0,\infty)
\end{equation*}
is the event horizon. The black hole region is
\begin{equation*}
    \mathcal{BH}\doteq \hat{\mathcal Q}_\mathrm{max}\setminus J^-(\mathcal I^+)=[\hat u_{\mathcal H^+},U_*]\times[0,\infty)\ne\emptyset
\end{equation*}
and future null infinity is complete in the sense of Christodoulou \cite{christodoulou1999global}.
\end{prop}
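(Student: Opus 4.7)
The plan is to combine the limiting constructions of \cref{sec:limiting-argument} with the local existence theory of \cref{prop:slab-existence} and a standard extension argument. First, from \cref{lem:u-hat-H+} we already have the existence of $\hat u_{\mathcal H^+} = \lim_{\tau \to \infty} \Gamma^{\hat u}(\tau)$ and the inclusion $\hat{\mathcal D}_\infty = [0, \hat u_{\mathcal H^+}) \times [0, \infty) \subset \hat{\mathcal Q}_\mathrm{max}$. The quantitative bound \eqref{eq:horizon-location-estimate} is obtained by passing to the limit in the identity $\Gamma^{\hat u}(\tau) = \Lambda - r(\Gamma^{\hat u}(\tau), 0)$, which holds exactly because $\hat\nu = -1$ along $\underline C{}_\ing$, together with the observation that on the background extremal Reissner--Nordstr\"om solution the event horizon crosses $\underline C{}_\ing$ at area-radius exactly $M_0$, so that $\hat u_{\mathcal H^+,0} = 99M_0$.

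Next, I would verify the null infinity property \eqref{eq:null-infinity-2} and the Christodoulou completeness of $\mathcal I^+$. For any fixed $\hat u \in [0, \hat u_{\mathcal H^+})$, the corresponding outgoing null cone $\{u_\infty\} \times [0, \infty)$ in the eschatological gauge crosses $\Gamma$ at some finite $v_0$, and for $v \ge v_0$ the estimates \eqref{eq:mu-lower-bound}--\eqref{eq:lambda-lower-bound} give $\lambda_\infty \ge \tfrac12$, so integrating in $v$ yields $r \to \infty$ as $\hat v \to \infty$. Completeness of $\mathcal I^+$ then follows immediately from the Bondi normalization $\partial_{u_\infty} r \to -1$ (a direct consequence of \eqref{eq:gamma-final}) together with the fact that $u_\infty$ sweeps out all of $[0, \infty)$ by construction of $\Phi_\infty$.

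For the horizon asymptotics \eqref{eq:horizon-2}, I would take two successive limits. For fixed $\hat v$, the limit $r(\hat u_{\mathcal H^+}, \hat v) = \lim_{\hat u \to \hat u_{\mathcal H^+}^-} r(\hat u, \hat v)$ exists by continuity and in the eschatological gauge corresponds to $\lim_{u_\infty \to \infty} r_\infty(u_\infty, v)$ with $v = \mathfrak v(\hat v)$ fixed. One computes directly using the explicit tortoise coordinate \eqref{eq:tortoise} that in extremal Reissner--Nordstr\"om Eddington--Finkelstein gauge $\bar r_\infty(u_\infty, v) \to M$ as $u_\infty \to \infty$ with $v$ fixed, while the perturbation obeys $|\tilde r| \lesssim \varepsilon^2 \tau^{-2+\delta}$ and stays bounded. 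Letting now $\hat v \to \infty$, equivalently $v \to \infty$, the function $\tau \to \infty$ along the limiting trajectory, so the error vanishes and $r(\hat u_{\mathcal H^+}, \hat v) \to M$. The convergence $\varpi \to M$ is identical using $|\tilde \varpi| \lesssim \varepsilon^2 \tau^{-3+\delta}$ and $\bar\varpi_\infty = M$.

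The main analytic obstacle is extending the solution from $\hat{\mathcal D}_\infty$ to the full rectangle $[0, U_*] \times [0, \infty)$, since our teleological gauges and background solutions do not directly see the black hole interior and the $\hat v$-decay of the geometric estimates is lost for $\hat u > \hat u_{\mathcal H^+}$. To overcome this, I would first argue that the solution extends continuously and regularly to $\{\hat u_{\mathcal H^+}\} \times [0, \infty)$ in the $(\hat u, \hat v)$-gauge, using the uniform bounds on $r$, $\phi$, $\hat\lambda$, $\hat\kappa \sim 1$ (from \cref{lem:v-est}) and the gauge-invariant pointwise bounds of \cref{sec:pointwise}, together with the Raychaudhuri monotonicity for $\hat\nu/\hat\Omega^2$ to control $\hat\nu$ up to and across $\hat u_{\mathcal H^+}$. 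To continue beyond, I would invoke the generalized extension principle for this model (as in \cite[Proposition~3.17]{KU24} and \cite{Kommemi13}): since $r$ remains bounded below on $\hat{\mathcal D}_\infty \cup \mathcal H^+$ (where it is $\ge M/2$) and the remaining $\hat u$-width $U_* - \hat u_{\mathcal H^+} \lesssim \varepsilon$ is small, and since $\hat \nu < 0$ throughout $\hat{\mathcal Q}_\mathrm{max}$ by \cref{lem:epsilon-loc}, an iterated application of \cref{prop:slab-existence} in the $\hat u$-direction with the horizon values as side data fills in the full rectangle without breakdown. Finally, the event horizon is identified as $\{\hat u_{\mathcal H^+}\} \times [0, \infty)$ via the dichotomy that points with $\hat u < \hat u_{\mathcal H^+}$ lie in $J^-(\mathcal I^+)$ (their outgoing cones reach $\mathcal I^+$) while points with $\hat u > \hat u_{\mathcal H^+}$ do not, and the black hole region $[\hat u_{\mathcal H^+}, U_*] \times [0, \infty)$ is nonempty since $\hat u_{\mathcal H^+} < U_*$.
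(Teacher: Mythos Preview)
Most of your argument---the identification of $\hat u_{\mathcal H^+}$, the estimate \eqref{eq:horizon-location-estimate}, the verification of \eqref{eq:null-infinity-2}, and the horizon asymptotics \eqref{eq:horizon-2}---is correct and essentially follows the paper.  The completeness argument via Bondi normalization is a legitimate alternative to the paper's citation of \cite{dafermos-trapped-surface}.

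There is, however, a genuine gap in your extension argument to the black hole interior.  First, a factual error: the width $U_*-\hat u_{\mathcal H^+}$ is \emph{not} $\lesssim\ve$.  Since $U_*=\tfrac{995}{10}M_0$ and $\hat u_{\mathcal H^+}=99M_0+O(\ve)$, the remaining strip has width $\approx \tfrac12 M_0$, a fixed constant independent of $\ve$.  Second, even setting this aside, an ``iterated application of \cref{prop:slab-existence}'' cannot close: the slab width $\ve_{\mathrm{slab}}$ there depends on the $C^1$ bound $B$ of the data on the current slab boundary, and once you leave the region where the bootstrap estimates of \cref{sec:geometry,sec:decay-energy} apply, you have no uniform-in-$\hat v$ control of $B$.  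The iteration could terminate after finitely many steps.

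The paper circumvents this entirely by invoking a logically independent fact proved later in \cref{sec:putting-together}: namely, $\hat\lambda\ge 0$ on all of $\hat{\mathcal Q}_\mathrm{max}$ (a consequence of the trapped-surface dichotomy and the outermost apparent horizon argument).  Together with $\hat\nu<0$, this gives $r(\hat u,\hat v)\ge r(\hat u,0)=\Lambda-\hat u\ge \Lambda-U_*\approx \tfrac12 M_0$ uniformly on $\hat{\mathcal Q}_\mathrm{max}$, whence the extension principle of \cite{dafermos2005naked} yields $\hat{\mathcal Q}_\mathrm{max}=[0,U_*]\times[0,\infty)$ directly.  The key missing ingredient in your proposal is this lower bound on $r$ in the interior, which you cannot obtain from horizon data alone.
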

\begin{proof} 
By the geometric estimates, $r$ is bounded from below on $\hat{\mathcal D}_\infty$ and $\hat\lambda>0$ by \eqref{eq:lambda-positive}. It follows from the extension principle\footnote{Note that since we actually already have a pointwise estimate for $\phi$ on $\hat{\mathcal D}_\infty$ by \eqref{est:phi_bound}, one could give a much quicker argument for the extension.}  \cite{dafermos2005naked} that $\overline{\hat{\mathcal D}_\infty}\subset\hat{\mathcal{Q}}_\mathrm{max}$, where the closure is taken in the $(\hat  u,\hat v)$-plane. By \cref{lem:u-hat-H+}, $\overline{\hat{\mathcal D}_\infty}\setminus\hat{\mathcal D}_\infty=\{\hat u_{\mathcal H^+}\}\times[0,\infty)$. 

By \eqref{eq:boot-r} and \eqref{eq:r-infty-est}, it holds that 
\begin{equation*}
    |r-\bar r_\infty\circ\Phi_\infty|\les \ve^{3/2}\tau^{-2+\delta}
\end{equation*}
on $\hat{\mathcal D}_\infty$ and since
\begin{equation*}
    \lim_{\hat u\to \hat u_{\mathcal H^+}}\bar r_\infty\circ\Phi_\infty(\hat u,\hat v) = \lim_{u_\infty\to\infty} \bar r_\infty(u_\infty,\mathfrak v(\hat v))=M
\end{equation*}
by the geometry of extremal Reissner--Nordstr\"om, we conclude that
\begin{equation}\label{eq:horizon-location-estimate-2}
    |r(\hat u_{\mathcal H^+},\hat v)-M|\les \ve^{3/2}\tau^{-2+\delta}(\hat u_{\mathcal H^+},\hat v)
\end{equation}
for every $\hat v\in [0,\infty)$. Furthermore, by passing to the limit in \eqref{eq:boot-pi}, we have
\begin{equation}\label{eq:horizon-location-estimate-3}
    |\varpi-M|\les \ve^{3/2}\tau^{-3+\delta}
\end{equation}
on $\overline{\hat{\mathcal D}}_\infty$. Now \eqref{eq:horizon-location-estimate-2} and \eqref{eq:horizon-location-estimate-3} imply \eqref{eq:horizon-2}. 

Evaluating \eqref{eq:horizon-location-estimate-2} at $\hat v=0$ and using the gauge condition $\hat\nu(\hat u,0)=-1$, we find that 
\begin{equation*}
\hat u_{\mathcal H^+}-\hat u_{\mathcal H^+,0} = \hat u_{\mathcal H^+} - (M-\Lambda)+O(\ve) = r(\hat u_{\mathcal H^+},0) - M + O(\ve) = O(\ve),
\end{equation*}
which verifies \eqref{eq:horizon-location-estimate}.

To show that $\hat{\mathcal Q}_\mathrm{max}$ contains the rectangle $(\hat u_{\mathcal H^+},U_*]\times[0,\infty)$, we use the logically independent fact that $\hat\lambda\ge 0$ everywhere on $\hat{\mathcal Q}_\mathrm{max}$, which will be shown in \cref{sec:putting-together} below. Since $r$ is bounded below on $[0,U_*]\times\{0\}$, is it bounded below on $\hat{\mathcal Q}_\mathrm{max}$. The extension principle \cite{dafermos2005naked} therefore implies that $\hat{\mathcal Q}_\mathrm{max}=[0,U_*]\times[0,\infty)$.

Finally, since $r$ is bounded on $\mathcal H^+$, completeness of $\mathcal I^+$ follows from the work of Dafermos \cite{dafermos-trapped-surface} (see also \cite{Kommemi13}).\end{proof}

This completes the proof of Part 2.~of \cref{thm:stability}.

\subsubsection{The final geometric estimates}\label{sec:geometric-limits}

In this section, we record the final geometric estimates in the eschatological gauge $(u_\infty,v)$, with sharp improvements in $\ve$. This will prove \eqref{eq:geo-main-1}--\eqref{eq:geo-main-3} of \cref{thm:stability}.

While the $(u_\infty,v)$ coordinates do not cover the event horizon $\mathcal H^+$, we can formally attach it as the $u_\infty=\infty$ limiting curve (as is done in \cite{Price-law}, for instance). We write this ``extended'' manifold as $\mathcal D_\infty\cup\mathcal H^+$. On $\mathcal D_\infty\cup\mathcal H^+$, we can extend certain background quantities to $\mathcal H^+$ as explained in \cref{sec:geometry-RN}.

\begin{prop}\label{prop:final-decay-geometry} If $\ve_\stab$ is sufficiently small, then in the eschatological gauge $(u_\infty,v)$ with background solution $\bar r_\infty$, it holds that
\begin{equation}\label{eq:gamma-final-2}
    |\gamma_\infty+1|\les \ve^2r^{-1}\tau^{-3+\delta}
\end{equation}
on $\mathcal D_{\infty} \cap\{r\ge\Lambda\}$,
\begin{equation}
    \left|\frac{\nu_\infty}{\bar\nu_\infty}-1\right|\les \ve^{2}\tau^{-1+\delta} \label{eq:nu-final}
\end{equation}
on $\mathcal D_\infty$, and
\begin{align}
 |r-\bar r_\infty|&\les \ve^{2}\tau^{-2+\delta}, \label{eq:r-final}\\
    |\lambda_\infty-\bar\lambda_\infty|&\les \ve^{2}\tau^{-2+\delta},\label{eq:lambda-final}\\
        |\kappa_\infty-1|&\les \ve^2\tau^{-1+\delta},\label{eq:kappa-final}\\
    |\varpi -M|&\les \ve^2\tau^{-3+\delta},\label{eq:varpi-final-2}\\
    |(1-\mu)-(1-\bar\mu_\infty)|&\les \ve^{2}\tau^{-2+\delta},\label{eq:1-mu-final}\\
    | \varkappa - \bar{\varkappa}_{\infty} | & \lesssim \varepsilon^{2} r^{-2} \tau^{-3+\delta} + \varepsilon^{2} r^{-3} \tau^{-2+\delta}  \label{eq:varkappa-final}
\end{align}
on $\mathcal D_\infty\cup\mathcal H^+$.
\end{prop}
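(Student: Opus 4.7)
The plan is to upgrade the uniform bootstrap estimates already obtained on each finite domain $\mathcal D_{\tau_f}$ (with constants of the form $A\varepsilon^{3/2}$) to sharp $\varepsilon^2$ estimates on $\mathcal D_\infty \cup \mathcal H^+$, taking advantage of the defining property of $\mathfrak M_\stab$ to gain the extra factor of $\varepsilon^{1/2}$ in the renormalized Hawking mass. All the other sharp decay rates are then obtained by cascading this improvement through the geometric estimates of \cref{sec:geometry}, and finally passing to the limit using the convergence results of \cref{sec:gauges}. Throughout, the final choice of $\varepsilon_\stab$ small enough to absorb the (now-fixed) bootstrap constant $A_0$ converts $A\varepsilon^2 \les \varepsilon^2$.

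The first step is to prove \eqref{eq:varpi-final-2}. Since $\mathcal S_0(\alpha_\star)\in \mathfrak M_\stab$, by \cref{defn:stable-manifold} there exists a consistent nested sequence $\{\mathfrak A_i\}_{i\ge 0}$ with $\alpha_\star\in \bigcap_{i\ge 0}\mathfrak A_i$, so that by \cref{def:B} one has $|\Pi_i(\alpha_\star)|\le \varepsilon^{3/2}L_i^{-3+\delta}$ for every $i$. Fix a spacetime point with $\tau$-coordinate $\tau<\infty$ and choose $i$ so large that $L_i\gtrsim \varepsilon^{-1/(2(3-\delta))}\tau$; then $|\Pi_i(\alpha_\star)|\le \varepsilon^2\tau^{-3+\delta}$. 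Applying \cref{lem:varpi-est} in the $\tau_f=L_i$ gauge yields $|\tilde\varpi - \Pi_i(\alpha_\star)|\les \varepsilon^2\tau^{-3+\delta}$ at the same spacetime point, and sending $i\to\infty$ (and invoking that $\varpi$ is a gauge-invariant geometric quantity) proves \eqref{eq:varpi-final-2}.

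Next, I feed the improved bound \eqref{eq:varpi-final-2} into the arguments of \cref{sec:closing-geometry}. In \cref{lem:mu} the estimate \eqref{eq:mu-1} immediately yields \eqref{eq:1-mu-final} once both $|\tilde r|$ and $|\tilde\varpi|$ are at the $\varepsilon^2$ level. The error terms in the Taylor expansions \eqref{eq:nu-expansion}, \eqref{eq:lambda-expansion} now satisfy $|E_u|\les \varepsilon^2 \bar r^{-1}\tau^{-3+\delta}$ and $|E_v|\les\varepsilon^2\tau^{-3+\delta}$, since the contributions $|\tilde r|^2/\bar r^3$, $|\tilde\kappa|(\bar r-M)^2$ are already sharp in $\varepsilon^2$ by \cref{lem:kappa-bounds} and the boundary estimate \eqref{eq:tilde-r-on-Gamma} (which is already sharp at $\varepsilon^2$). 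Rerunning the ODE argument of \cref{lem:r-est} with these improved errors and the improved boundary value on $\Gamma$ yields \eqref{eq:r-final}, and then \eqref{eq:lambda-expansion} gives \eqref{eq:lambda-final}. The proof of \cref{lem:nu-estimate} for $\nu/\bar\nu - 1$ now yields \eqref{eq:nu-final} with $\varepsilon^2$ in place of $A^2\varepsilon^{3/2}$. The bound \eqref{eq:gamma-final-2} is just \eqref{eq:gamma-1}, \eqref{eq:kappa-final} is \eqref{eq:kappa-4}, and \eqref{eq:varkappa-final} follows from the definition \eqref{eq:varkappa} and the already-proved improvements of $|\tilde r|$, $|\tilde\varpi|$.

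The third step is to pass these estimates to the limit $\tau_f\to\infty$ in the eschatological gauge. Since the constants in the estimates above are independent of $\tau_f$, and since \cref{prop:final-gauge} and \cref{prop:r-convergence} (together with \eqref{eq:Psi-2} and \eqref{eq:r-infty-est}) guarantee uniform convergence of the coordinate changes $\Psi_{\tau_f,\infty}$ and the anchored backgrounds $\bar r_{\tau_f}\circ \Psi_{\tau_f,\infty}$ to the identity and to $\bar r_\infty$ on compact subsets of $\mathcal D_\infty$, the estimates \eqref{eq:nu-final}--\eqref{eq:varkappa-final} pass directly to $\mathcal D_\infty$. Finally, the extension to $\mathcal H^+=\{u_\infty=\infty\}$ is obtained by taking $u_\infty\to\infty$ along fixed $v$: the uniform decay in $\tau$ of all the relevant differences gives uniform limits, which by the conventions recalled at the end of \cref{sec:geometry-RN} coincide with the appropriate boundary values of extremal Reissner--Nordstr\"om on $\mathcal H^+$. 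Note that \eqref{eq:gamma-final-2} is stated only on $\{r\ge\Lambda\}$ and so does not require extension to $\mathcal H^+$.

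The main obstacle is the first step: one has to verify carefully that the modulation index $i$ can really be chosen arbitrarily large at a \emph{fixed} spacetime point (which requires $\mathfrak B=[1,\infty)$, furnished by \cref{prop:continuity}) while the constant in \cref{lem:varpi-est} remains $\tau_f$-independent. Once this is in hand, the remaining estimates are a mechanical rerun of \cref{sec:closing-geometry} with all occurrences of the bootstrap constant $A\varepsilon^{3/2}$ for $|\tilde\varpi|$ replaced by $\varepsilon^2$.
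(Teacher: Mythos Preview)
Your proposal is correct and arrives at the same conclusions, but it differs from the paper in how the sharp $\varepsilon^2$ estimates for $\tilde r$ and $\nu/\bar\nu$ are obtained. The paper, after proving \eqref{eq:varpi-final-2} exactly as you do, sets up a \emph{fresh} bootstrap (with constant $\check A$) directly in the eschatological gauge $(u_\infty,v)$ with background $\bar r_\infty$: under the assumptions $|r-\bar r_\infty|\le \check A\varepsilon^2\tau^{-2+\delta}$ and $|\nu_\infty/\bar\nu_\infty-1|\le \check A^2\varepsilon^2\tau^{-1+\delta}$, the Taylor expansions of \cref{lem:mu} are rederived with $\bar r_\infty$ in place of $\bar r_{\tau_f}$, and the ODE arguments of \cref{lem:r-est,lem:nu-estimate} are rerun in the final gauge. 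You instead work in the finite-$\tau_f$ gauges (where the already-closed bootstrap furnishes $|\tilde r_{\tau_f}|\les\varepsilon^{3/2}\tau^{-2+\delta}$, hence $|\tilde r_{\tau_f}|^2\les\varepsilon^3\tau^{-4+2\delta}$), obtain $|\tilde r_{\tau_f}|\les\varepsilon^2\tau^{-2+\delta}$ there, and only then pass to $\bar r_\infty$ via the convergence results of \cref{sec:gauges}. Both routes are valid: yours avoids introducing a new bootstrap but pushes more work into the limiting step, while the paper's choice makes the dependence on the final background manifest throughout and sidesteps any issue of how the convergence $\bar r_{\tau_f}\to\bar r_\infty$ interacts with the degenerate weights near $\mathcal H^+$.

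One point in your limiting step deserves care: the ratio $\nu_{\tau_f}/\bar\nu_{\tau_f}$ equals $\hat\nu/\partial_{\hat u}(\bar r_{\tau_f}\circ\Phi_{\tau_f})$, and passing this to $\nu_\infty/\bar\nu_\infty$ requires convergence of the denominator, which is \emph{not} uniform up to $\hat u_{\mathcal H^+}$ (both numerator and denominator degenerate in the $u_\infty$ gauge). This is harmless because \eqref{eq:nu-final} is only asserted on $\mathcal D_\infty$, so pointwise convergence on compacta of $\hat{\mathcal D}_\infty$ suffices; but you should flag that this is why \eqref{eq:nu-final} is not claimed on $\mathcal H^+$. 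Also, in your second paragraph you invoke \eqref{eq:r-final} before the limit has been taken---what you actually obtain at that stage is $|r-\bar r_{\tau_f}|\les\varepsilon^2\tau^{-2+\delta}$, and \eqref{eq:r-final} itself only follows after your third step.
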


\begin{rk}
    Note that compared to the estimates proved in \cref{sec:geometry}, the estimates \eqref{eq:nu-final}, \eqref{eq:r-final}, \eqref{eq:lambda-final}, \eqref{eq:varpi-final-2}, \eqref{eq:1-mu-final}, and \eqref{eq:varkappa-final} have factors of $\ve^2$ instead of $\ve^{3/2}$. This improvement is the main content of the following proof. 

\end{rk}

\begin{proof}[Proof of \cref{prop:final-decay-geometry}]
The estimates \eqref{eq:gamma-final-2} and \eqref{eq:kappa-final} are obtained by simply passing to the limit in \cref{lem:kappa-bounds,lem:gamma}. To prove \eqref{eq:varpi-final-2}, we revisit the estimate \eqref{eq:varpi-aux-0} and apply it to $\alpha=\alpha_\star$. Since $\alpha_\star\in\bigcap_{i\ge 0}\mathfrak A_i$, it holds that $|\Pi_{I(L_i)}(\alpha_\star)|\le \ve^{3/2}L_i^{-3+\delta}$ for every $i\ge 0$. Therefore, by \eqref{eq:varpi-aux-0} we have $    |\varpi-M|\les \ve^{3/2}L_i^{-3+\delta} + \ve^2\tau^{-3+\delta}$
on $\mathcal D_{L_i}$. Sending $i\to \infty$ proves \eqref{eq:varpi-final-2} on $\mathcal D_\infty$.

We now prove the remaining estimates via a bootstrap argument in the $(u_\infty,v)$ gauge on $\mathcal D_\infty$. Let $\check{\mathfrak B}$ denote the set of $\tau_f\in[1,\infty)$ for which 
\begin{align}
 \label{eq:r-boot-2}   |r-\bar r_\infty|&\le \check A \ve^2\tau^{-2+\delta},\\
  \label{eq:nu-boot-2}  \left|\frac{\nu_\infty}{\bar\nu_\infty}-1\right|& \le \check A^2\ve^2\tau^{-1+\delta},
\end{align}
where $\check A\ge1$ is a constant to be determined. 

Let $\tau_f\in\check{\mathfrak B}$. Using these bootstrap assumptions and \eqref{eq:varpi-final-2}, we improve the Taylor expansions \eqref{eq:nu-expansion} and \eqref{eq:lambda-expansion} to
\begin{equation*}
    \partial_u(r-\bar r_\infty) = \frac{2M\gamma_\infty}{\bar r^3_\infty}(\bar r_\infty-M)(r-\bar r_\infty)+ O\big(\ve^2\bar r_\infty^{-1}\tau^{-3+\delta}\big)
\end{equation*}
in $\mathcal D_\infty\cap\{\tau\le\tau_f\}\cap\{r\ge\Lambda\}$ and 
\begin{equation*}
    \partial_v(r-\bar r_\infty)= \frac{2M\kappa_\infty}{\bar r_\infty^3}(\bar r_\infty-M)(r-\bar r_\infty) +O\big(\ve^2\tau^{-3+\delta}\big)
\end{equation*}
in $\mathcal D_\infty\cap\{\tau\le\tau_f\}$. By passing to the limit in \eqref{eq:tilde-r-on-Gamma}, we have
\begin{equation*}
    \big|(r-\bar r_\infty)|_\Gamma\big|\les \ve^2\tau^{-2+\delta}.
\end{equation*}
By now repeating the arguments of \cref{lem:r-est,lem:nu-estimate}, we improve the bootstrap assumptions \eqref{eq:r-boot-2} and \eqref{eq:nu-boot-2} for $\check A$ chosen sufficiently large. Therefore, $\check{\mathfrak B}=[1,\infty)$ and the estimates \eqref{eq:nu-final} and \eqref{eq:r-final} hold. 

Finally, \eqref{eq:1-mu-final} and \eqref{eq:varkappa-final} are proved by repeating the arguments of \cref{lem:mu,lem:varkappa} with our improved estimates at hand.
\end{proof}

Using \eqref{eq:varpi-final-2}, we can prove a sharp-in-$\ve$ estimate for the final modulation parameter $\alpha_\star$.     The following result can be compared with \cite[Remark 6.3.5]{DHRT}.

\begin{prop}
   Let $\mathcal S_0(\alpha_\star)=(\mathring\phi,r_0,M_0+\alpha_\star,e)\in \mathfrak M_\stab(\mathcal S_0,\ve)$. Then
    \begin{equation}
        \alpha_\star=|e|-|e_0|+O(\ve^2)\label{eq:alpha-star-estimate}
    \end{equation}
\end{prop}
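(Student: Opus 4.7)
The proof is essentially a one-line consequence of the sharp decay estimate \eqref{eq:varpi-final-2} already established in \cref{prop:final-decay-geometry}. My plan is to evaluate that estimate at the bifurcation sphere $\Gamma \cap \hat{\mathcal C}$ and match with the initial value of $\varpi$ prescribed by the seed data.

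First, I observe that by the construction in \cref{prop:seed-data-generation}, the renormalized Hawking mass of the initial data set associated to $\mathcal S_0(\alpha_\star) = (\mathring\phi, \Lambda, M_0 + \alpha_\star, e)$ satisfies
\begin{equation*}
    \varpi(\hat u = 0, \hat v = 0) = M_0 + \alpha_\star
\end{equation*}
in the initial-data-normalized gauge $(\hat u, \hat v)$. Next, I would check that the eschatological gauge coincides with the initial gauge at the bifurcation sphere. Indeed, directly from the definitions \eqref{eq:teleology-1}--\eqref{eq:teleology-2}, $\mathfrak u_\infty(0) = 0$ and $\mathfrak v(0) = 0$, so $\Phi_\infty$ fixes the bifurcation sphere. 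Since $\varpi$ is a geometric quantity (independent of the choice of double null parametrization), we therefore have
\begin{equation*}
    \varpi_\infty(u_\infty = 0, v = 0) = M_0 + \alpha_\star.
\end{equation*}

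The timelike curve $\Gamma$ is parametrized so that $\tau(0,0) = 1$, i.e., the bifurcation sphere sits at $\tau = 1$. Applying \eqref{eq:varpi-final-2} at this point yields
\begin{equation*}
    \big| (M_0 + \alpha_\star) - M \big| = \big| \varpi_\infty(0,0) - M \big| \lesssim \ve^2.
\end{equation*}
Since $M = |e|$ and $M_0 = |e_0|$, this rearranges to \eqref{eq:alpha-star-estimate}.

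There is no real obstacle: the entire argument is a sharp restatement of the mass decay estimate at the initial time, and all the substantive work has already been done in constructing $\mathfrak M_\stab$ so that $\alpha_\star \in \bigcap_{i \ge 0} \mathfrak A_i$ and in using this to upgrade the decay of $\tilde\varpi$ from the $\ve^{3/2}$ bootstrap rate to the $\ve^2$ rate of \cref{prop:final-decay-geometry}. The only minor conceptual point to verify is that $\Phi_\infty$ fixes the bifurcation sphere, so that the initial value of $\varpi$ is unaffected by the teleological limit; this is immediate from the explicit formulas defining $\mathfrak u_{\tau_f}$ and $\mathfrak v$.
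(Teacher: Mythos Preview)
Your proof is correct and follows exactly the same approach as the paper: evaluate the sharp $\ve^2$ estimate \eqref{eq:varpi-final-2} for $\varpi-M$ at the bifurcation sphere, where $\varpi=M_0+\alpha_\star$ by the seed data prescription, and rearrange using $M=|e|$, $M_0=|e_0|$. The paper's proof is a single displayed line; your additional remarks about $\Phi_\infty$ fixing the origin and $\varpi$ being gauge-invariant are correct but not strictly needed, since $\varpi$ is a function on the spacetime independent of the double null parametrization.
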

\begin{proof} By \eqref{eq:varpi-final-2}, we have
\begin{equation*}
    \alpha_\star = \varpi(0,0)-M_0 = \varpi(0,0) - M -(M_0-M) = |e|-|e_0| +O(\ve^2).\qedhere
\end{equation*}
\end{proof}

\subsubsection{Putting everything together}\label{sec:putting-together}

Everything is now in place to complete the proof of nonlinear stability of Reissner--Nordstr\"om, \cref{thm:stability}. 

\begin{proof}[Proof of \cref{thm:stability}]
Part 1.~of the theorem is a direct restatement of \cref{prop:codim-one} above. Part 2.~of the theorem is a combination of \cref{prop:final-gauge,prop:existence-of-horizon}, and \eqref{eq:gamma-final-2} implies that $\nu_\infty\to -1$ at $\mathcal I^+$. Orbital stability of the parameters, \eqref{eq:parameter-stability}, follows immediately from the definition of $\mathfrak D$ and the conservation of charge $e$ in the neutral scalar field model. 

\textsc{Proof of orbital stability for the $p=3-\delta$ energy}: Unfortunately, this does not immediately follow from \cref{prop:energy-convergence} applied to $p=3-\delta$ since the smallness parameter $\ve\ge \mathfrak D[\mathcal S_0]$ only bounds the energy $\mathcal E^\infty_{3-\delta}(1)+\underline{\mathcal E}{}_{3-\delta}^\infty(1)$ from \emph{above} (recall \cref{lem:initial-energy-est}). Furthermore, the estimates \eqref{eq:horizon-1} and \eqref{eq:rp} for $p=3-\delta$ have nonlinear errors that depend on $\ve$ and are not bounded by $\mathcal E^\infty_{3-\delta}(1)+\underline{\mathcal E}{}_{3-\delta}^\infty(1)$ as the estimates are currently written. We now sketch how to remove this deficiency. Working on $\mathcal D_\infty$ in the eschatological gauge $(u_\infty,v)$, we define a bootstrap set $\breve{\mathfrak B}$ consisting of $\tau_f\in[1,\infty)$ such that the bootstrap assumptions \eqref{eq:boot-nu}--\eqref{eq:boot-flux-2} hold with $\ve$ replaced by $\breve\ve\doteq \big(\mathcal E^\infty_{3-\delta}(1)+\underline{\mathcal E}{}_{3-\delta}^\infty(1)\big)^{1/2}$ and $A$ replaced by a possibly larger constant $\breve A$. We now repeat \emph{all} of the arguments in \cref{sec:geometry}, \cref{sec:energy}, and \cref{sec:improving-energy} with $\tau_f$ replaced by $\infty$ and $\ve$ by $\breve \ve$ to show that $\breve{\mathfrak B}=[1,\infty)$. Note that the only instance (besides the bootstrap assumptions) where $\ve$ enters into the proofs of the estimates in \cref{sec:geometry}, \cref{sec:energy}, and \cref{sec:improving-energy}, as they are currently written, is in \eqref{eq:data-estimate}, which is now replaced by the definition of $\breve \ve$. We therefore have
\begin{equation*}
    \mathcal E^\infty_{p}(\tau)+\underline{\mathcal E}{}_{p}^\infty(\tau)\les \breve\ve^2\tau^{-3+\delta+p}= \big(\mathcal E^\infty_{3-\delta}(1)+\underline{\mathcal E}{}_{3-\delta}^\infty(1)\big)\tau^{-3+\delta+p}
\end{equation*}
for every $\tau\in[1,\infty)$ and $p\in[0,3-\delta]$. For $p=3-\delta$, this estimate gives \eqref{eq:energy-orbital}, as desired.

\textsc{Proof of orbital stability for the pointwise $C^1$ norm}: Revisiting the proofs of \eqref{est:psi_bound}--\eqref{est:uphi_bound}, we observe that the estimates only depend on energies and pointwise norms coming from initial data of the solution restricted to $J^-(\mathcal I^+)\cap\hat{\mathcal C}$. Therefore, the $\ve$ on the right-hand side of the estimates \eqref{est:psi_bound}--\eqref{est:uphi_bound} can be replaced by the right-hand side of \eqref{eq:pointwise-orbital}.

Part 4.~of the theorem follows from \cref{prop:energy-convergence}, \cref{prop:final-decay-geometry}, \eqref{est:phi_bound}, and \eqref{est:psi_bound}.

\textsc{Proof of the absence of trapped surfaces}: We first recall the notions of \emph{apparent horizon}
    \begin{equation*}
          \mathcal A\doteq\{(\hat u,\hat v)\in\hat{\mathcal Q}_\mathrm{max}:\hat\lambda(\hat u,\hat v)=0\},
    \end{equation*}
   \emph{outermost apparent horizon}
    \begin{equation*}
        \mathcal A'\doteq\{(\hat u,\hat v)\in\hat{\mathcal Q}_\mathrm{max}:\hat\lambda(\hat u,\hat v)=0\text{ and }\hat\lambda(\hat u',\hat v)>0\text{ for every }\hat u'<\hat u\}
    \end{equation*}
    and the set\footnote{Sometimes called the ``regular region.''}
    \begin{equation*}
        \mathcal R\doteq\{(\hat u,\hat v)\in\hat{\mathcal Q}_\mathrm{max}:\hat\lambda(\hat u,\hat v)>0\}.
    \end{equation*}
    We observe from \eqref{eq:varpi-u} and \eqref{eq:varpi-v} that if $(\hat u,\hat v)\in\mathcal R\cup\mathcal A$, then we have the monotonicities
    \begin{equation}
        \partial_{\hat u}\varpi(\hat u,\hat v)\le 0\quad\text{and}\quad \partial_{\hat v}\varpi(\hat u,\hat v)\ge 0,\label{eq:monotone}
    \end{equation}
    
    \textsc{Proof that $\mathcal A'\subset\mathcal H^+$}: Let $(\hat u_0,\hat v_0)\in\mathcal A'$.\footnote{This hypothesis could be empty. In this case, the hypothesis of part 5.~of the theorem is itself empty.} By the monotonicities \eqref{eq:monotone} and the outermost property of $(\hat u_0,\hat v_0)$, it follows that $M\ge\varpi(\hat u_0,\hat v_0)$ and $M\ge r(\hat u_0,\hat v_0)$.
    We also have $1-\mu(\hat u_0,\hat v_0)=0$, which can be solved for $r(\hat u_0,\hat v_0)=\varpi(\hat u_0,\hat v_0)\pm \sqrt{\varpi(\hat u_0,\hat v_0)^2-M^2}$, which implies that $\varpi(\hat u_0,\hat v_0)\ge M$. Hence, $\varpi(\hat u_0,\hat v_0)=M$ and $r(\hat u_0,\hat v_0)=M$. Since $\hat\nu<0$ on $\hat{\mathcal Q}_\mathrm{max}$ (recall \cref{lem:epsilon-loc}), this in turn implies that $(\hat u_0,\hat v_0)\in \mathcal H^+$ and that $\{\hat u_{\mathcal H^+}\}\times[\hat v_0,\infty)\subset\mathcal A'$. By Raychaudhuri's equation \eqref{eq:Ray-v}, it follows that $\partial_{\hat v}\phi$ vanishes identically on $\{\hat u_{\mathcal H^+}\}\times[\hat v_0,\infty)$. Since $\phi$ decays along $\mathcal H^+$ by \eqref{eq:psi-decay-final}, $\phi$ itself vanishes on $\{\hat u_{\mathcal H^+}\}\times[\hat v_0,\infty)$.

    \textsc{Proof that $\hat\lambda>0$ behind $\mathcal H^+$}: Suppose $\hat\lambda(\hat u_0,\hat v_0)\le 0$. By the argument in the previous paragraph, $(\hat u_{\mathcal H^+},\hat v_0)\in \mathcal A'$ and it holds that $r(\hat u_{\mathcal H^+},\hat v_0)=\varpi(\hat u_{\mathcal H^+},\hat v_0)=M$. Therefore, by \eqref{eq:nu-v}, $\partial_{\hat u}\hat\lambda(\hat u_{\mathcal H^+},\hat v_0)=0$. Because $\partial_{\hat u}\varpi(\hat u_{\mathcal H^+},\hat v_0)=0$, we may compute
    \begin{equation*}
        \partial_{\hat u}^2\hat\lambda(\hat u_{\mathcal H^+},\hat v_0)= \frac{2e^2\hat\kappa\hat\nu^2}{r^4}(\hat u_{\mathcal H^+},\hat v_0)>0.
    \end{equation*}
       If $\hat u_0>\hat u_{\mathcal H^+}$, it follows from Taylor's theorem that there exists a $\hat u_*\in (\hat u_{\mathcal H^+},\hat u_0]$ such that $\hat\lambda(\hat u_*,\hat v_0)=0$ and $\hat\lambda\ge 0$ on $[\hat u_{\mathcal H^+},\hat u_*]\times\{\hat v_0\}$. We may then argue again using monotonicity that $M\ge\varpi(\hat u_*,\hat v_0)$ and $M\ge r(\hat u_*,\hat v_0)$. The argument in the previous paragraph now implies that $\varpi(\hat u_*,\hat v_0)=r(\hat u_*,\hat v_0)=M$, which contradicts the assumption that $\hat u_0>\hat u_{\mathcal H^+}$.

       This completes the proof of \cref{thm:stability}.
\end{proof}

\section{Fine properties of the scalar field on and near the event horizon}

In this section, we use the decay estimates for the geometry and the scalar field derived in the previous section to further analyze the behavior of the scalar field on and near the event horizon $\mathcal H^+$ of the solutions given by \cref{thm:stability}. In \cref{sec:aretakis-proof}, we investigate the Aretakis instability on the dynamical geometry, which proves \cref{thm:instability}. In particular, we prove that each of the solutions given by \cref{thm:stability} has an ``asymptotic Aretakis charge'' $H_0[\phi]$, which while not actually constant, is ``almost'' constant in a quantitative sense. In \cref{sec:sharp_asym}, we derive sharp pointwise asymptotics for $\psi$ near $\mathcal H^+$. Finally, in \cref{sec:sharpness}, we show that if the final Aretakis charge $H_0[\phi]$ is ``quantitatively nonvanishing,'' then no nondegenerate integrated energy decay statement is true near $\mathcal H^+$, which proves the sharpness of the horizon hierarchy proved in \cref{sec:energy}. For a discussion of these results, we refer the reader back to \cref{sec:tails}. 

\begin{center}
    \emph{In this section, we will always work with one of the solutions $(\hat{\mathcal Q}_\mathrm{max},r,\hat\Omega^2,\phi,e)$ arising from seed data lying in $\mathfrak M_\stab$. In particular, the conclusions of \cref{thm:stability} hold for this solution.}
\end{center}

\subsection{Proof of the Aretakis instability, \texorpdfstring{\cref{thm:instability}}{Theorem 2}}\label{sec:aretakis-proof}

Recall the gauge-invariant vector field 
\begin{equation*}
    Y\doteq \nu^{-1}\partial_u,
\end{equation*}
which equals $\partial_r$ in standard ingoing Eddington--Finkelstein coordinates $(v,r)$ in Reissner--Nordstr\"om. For a general spherically symmetric spacetime, we derive at once the commutation formula
    \begin{equation*}
        [\partial_v,Y]= -\frac{\partial_u\partial_vr}{\nu}Y,
    \end{equation*}
    which will be useful in deriving equations for $\partial_v(Y\psi)$ and $\partial_v(Y^2\psi)$, where, as before, $\psi\doteq r\phi$.

\begin{lem} Let $(\mathcal Q,r,\Omega^2,\phi,e)$ be a solution of the spherically symmetric Einstein--Maxwell-scalar field system. Then
    \begin{equation}
        \partial_v(Y\psi)+(2\kappa\varkappa) Y\psi =   2\kappa\varkappa\,\phi.\label{eq:dvYpsi}
    \end{equation}
\end{lem}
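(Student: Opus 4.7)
The plan is a direct computation using the commutation formula provided just above the lemma together with the wave equation \eqref{eq:wave-equation-psi} and the evolution equation \eqref{eq:nu-v}. First I would write $\partial_v(Y\psi) = Y(\partial_v\psi) + [\partial_v,Y]\psi$ and handle the two terms separately.

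For the first term, since $Y = \nu^{-1}\partial_u$, I compute
\begin{equation*}
Y(\partial_v\psi) \;=\; \nu^{-1}\partial_u\partial_v\psi \;=\; \nu^{-1}\cdot 2\kappa\nu\varkappa\,\phi \;=\; 2\kappa\varkappa\,\phi,
\end{equation*}
using the wave equation \eqref{eq:wave-equation-psi}. For the commutator term, I invoke the commutation formula stated in the lead-in to the lemma, namely $[\partial_v,Y] = -(\partial_u\partial_v r)\nu^{-1}Y$, together with \eqref{eq:nu-v} in the form $\partial_u\partial_v r = \partial_v\nu = 2\kappa\nu\varkappa$, which gives $-(\partial_u\partial_v r)\nu^{-1} = -2\kappa\varkappa$. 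Hence
\begin{equation*}
[\partial_v,Y]\psi \;=\; -2\kappa\varkappa\, Y\psi.
\end{equation*}

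Adding the two contributions yields
\begin{equation*}
\partial_v(Y\psi) \;=\; 2\kappa\varkappa\,\phi - 2\kappa\varkappa\,Y\psi,
\end{equation*}
which rearranges to the claimed identity \eqref{eq:dvYpsi}. There is essentially no obstacle here: the lemma is a purely algebraic consequence of the already-derived equations \eqref{eq:nu-v} and \eqref{eq:wave-equation-psi} together with the commutator $[\partial_v,Y]$. The only point that warrants a moment's care is ensuring the sign and factor of $\nu$ in the commutator cancel correctly against those in $\partial_v\nu$; this is a one-line check.
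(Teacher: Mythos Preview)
Your proof is correct and follows essentially the same route as the paper: both split $\partial_v(Y\psi)$ via the commutator $[\partial_v,Y]=-\nu^{-1}(\partial_u\partial_v r)Y$, evaluate $Y(\partial_v\psi)$ using the wave equation \eqref{eq:wave-equation-psi}, and then substitute $\partial_u\partial_v r = 2\kappa\nu\varkappa$ from \eqref{eq:nu-v}. There is no meaningful difference.
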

\begin{proof}
    Using the commutation formula and \eqref{eq:phi-wave-1}, we compute
    \begin{equation*}
        \partial_v(Y\psi)=[\partial_v,Y]\psi +Y(\partial_v\psi)=-\frac{\partial_u\partial_vr}{\nu}Y\psi + \frac{\partial_u\partial_vr}{\nu}\phi.
    \end{equation*}
   By \eqref{eq:r-wave}, we conclude \eqref{eq:dvYpsi}.
\end{proof}

\begin{lem} Let $(\mathcal Q,r,\Omega^2,\phi,e)$ be a solution of the spherically symmetric Einstein--Maxwell-scalar field system. Then
\begin{equation}
     \partial_v(Y^2\psi)+(4 \kappa\varkappa) Y^2\psi = -\frac{2\kappa e^2}{r^4}Y\psi+E,\label{eq:dv2Ypsi}
\end{equation}
where
\begin{equation}
    E\doteq \frac{2\kappa e^2}{r^3}\phi+(1-\mu)r\kappa (Y\phi)^3+2\kappa\varkappa\left(r^2(Y\phi)^3-Y\phi\right).
    \label{eq:dv2Ypsi-error}
\end{equation}
\end{lem}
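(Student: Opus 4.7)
The plan is to derive the equation for $\partial_v(Y^2\psi)$ by applying $Y$ to the identity for $\partial_v(Y\psi)$ proved in the previous lemma, in direct parallel with how that previous lemma was obtained by commuting $Y$ through the wave equation for $\psi$. Two structural inputs will be needed: the commutator relation $[\partial_v, Y] = -(\partial_v\nu/\nu)Y = -2\kappa\varkappa Y$, which follows immediately from \eqref{eq:nu-v} by the same manipulation already used to prove \eqref{eq:dvYpsi}; and the identity $Y\psi - \phi = rY\phi$, which holds because $Yr = 1$.

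Applying $Y$ to both sides of \eqref{eq:dvYpsi} and rewriting $Y\partial_v(Y\psi)$ as $\partial_v(Y^2\psi) + 2\kappa\varkappa Y^2\psi$ via the commutator, the linear-in-$Y^2\psi$ terms collect to give the coefficient $4\kappa\varkappa$ on the left, and everything else collapses to
\[
\partial_v(Y^2\psi) + 4\kappa\varkappa Y^2\psi = -Y(2\kappa\varkappa)\,(Y\psi - \phi) + 2\kappa\varkappa Y\phi = -rY\phi\cdot Y(2\kappa\varkappa) + 2\kappa\varkappa Y\phi.
\]
The only genuine calculation left is the expansion of $Y(2\kappa\varkappa)$. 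I would use $Y\kappa = r\kappa(Y\phi)^2$ from \eqref{eq:kappa-u} and $Y\varpi = \tfrac12(1-\mu)r^2(Y\phi)^2$ from \eqref{eq:varpi-u}, together with the definition $\varkappa = \varpi/r^2 - e^2/r^3$ and the elementary identity $Y(r^{-n}) = -nr^{-n-1}$ (since $Yr = 1$), to obtain a four-term expression whose schematic pieces are proportional to $r\kappa\varkappa(Y\phi)^2$, $\kappa(1-\mu)(Y\phi)^2$, $\kappa\varkappa/r$, and $\kappa e^2/r^4$, after using $\varpi = r^2\varkappa + e^2/r$ to eliminate $\varpi$ from the non-$(Y\phi)^2$ part.

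The final step is substitution and rearrangement. Multiplying $Y(2\kappa\varkappa)$ by $-rY\phi$ generates two cubic-in-$Y\phi$ contributions, which become the $(1-\mu)r\kappa(Y\phi)^3$ and $r^2\kappa\varkappa(Y\phi)^3$ terms in $E$, together with a $\kappa\varkappa Y\phi$ term that combines with the explicit $2\kappa\varkappa Y\phi$ already present. The remaining piece $-rY\phi\cdot(2\kappa e^2/r^4) = -(2\kappa e^2/r^3)Y\phi$ is then reorganized using $rY\phi = Y\psi - \phi$ to expose the displayed leading-order coefficient $-\frac{2\kappa e^2}{r^4}Y\psi$, leaving a zeroth-order $\phi$-remainder that furnishes the first term of $E$. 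No step is conceptually hard; the main obstacle will simply be careful sign and coefficient bookkeeping so that the four ``linear'' contributions proportional to $Y\phi$ combine correctly and the $\phi$-pieces produced by the $Y\psi$-reshuffle match the stated form of $E$.
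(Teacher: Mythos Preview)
The proposal is correct and follows essentially the same route as the paper: apply $Y$ to \eqref{eq:dvYpsi}, use the commutator $[\partial_v,Y]=-2\kappa\varkappa Y$, expand $Y(2\kappa\varkappa)$ via \eqref{eq:kappa-u}, \eqref{eq:varpi-u} and the definition of $\varkappa$, and then regroup using $rY\phi=Y\psi-\phi$. The paper merely writes $2\kappa\varkappa$ as $\partial_u\partial_v r/\nu$ throughout, but the computation is identical.
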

\begin{proof}
We rewrite \eqref{eq:dvYpsi} as
\begin{equation*}
    \partial_v(Y\psi)= \frac{\partial_u\partial_vr}{\nu}(\phi-Y\psi)
\end{equation*}
and apply again the commutation formula to obtain
\begin{align*}
    \partial_v(Y^2\psi)&= [\partial_v,Y]Y\psi + Y\partial_v(Y\psi)\\
    &= -\frac{\partial_u\partial_vr}{\nu}Y^2\psi+ Y\left[\frac{\partial_u\partial_vr}{\nu}(\phi-Y\psi)\right]\\
    &= -\frac{\partial_u\partial_vr}{\nu}Y^2\psi+ Y\left(\frac{\partial_u\partial_vr}{\nu}\right)(\phi-Y\psi)+ \frac{\partial_u\partial_vr}{\nu}Y(\phi-Y\psi),\\
    &= -2\frac{\partial_u\partial_vr}{\nu}Y^2\psi +\frac{\partial_u\partial_vr}{\nu}Y\phi+Y\left(\frac{\partial_u\partial_vr}{\nu}\right)(\phi-Y\psi)
\end{align*}
Using \eqref{eq:nu-v}, \eqref{eq:varpi-u}, and \eqref{eq:kappa-u}, we compute
\begin{equation*}
     Y\left(\frac{\partial_u\partial_vr}{\nu}\right)=2r\kappa\varkappa(Y\phi)^2-\frac{4\kappa\varkappa}{r}+(1-\mu)\kappa (Y\phi)^2+\frac{2\kappa e^2}{r^4}.
\end{equation*}
By grouping terms appropriately, we arrive at \eqref{eq:dv2Ypsi} and \eqref{eq:dv2Ypsi-error}.
\end{proof}

\begin{proof}[Proof of \cref{thm:instability}]
We work in one of the spacetimes $(\hat{\mathcal Q}_\mathrm{max},r,\hat \Omega^2,\phi,e)$ given by \cref{thm:stability} with the initial data normalized coordinates $(\hat u,\hat v)$. Recall that the event horizon $\mathcal H^+$ is located at $\hat u=\hat u_{\mathcal H^+}$. 

    \textsc{Almost-conservation of $Y\psi|_{\mathcal H^+}$}: We use an integrating factor to solve \eqref{eq:dvYpsi} on $\mathcal H^+$:
    \begin{equation*}
            Y\psi(\hat u_{\mathcal H^+},\hat v_2)=\exp\left(-\int_{\hat v_1}^{\hat v_2}2\hat\kappa\varkappa\,d\hat v'\right)Y\psi(\hat u_{\mathcal H^+},\hat v_1)
     +\int_{\hat v_1}^{\hat v_2} \exp\left(-\int_{\hat v'}^{\hat v_2}2\hat\kappa\varkappa\, d\hat v''\right)2\hat\kappa\varkappa\,\phi\,d\hat v'
    \end{equation*}
  for every $0\le \hat v_1\le \hat v_2$, where every quantity on the right-hand side is evaluated at $\hat u=\hat u_{\mathcal H^+}$. By integrating \eqref{eq:varkappa-final} along $\mathcal H^+$ and using \eqref{eq:hat-kappa-est}, we find
  \begin{equation*} 
    \int_{\hat v_1}^{\hat v_2}|\hat\kappa\varkappa|_{\mathcal H^+}|\,d\hat v'\les \ve^{2}(1+\hat v_1)^{-1+\delta},\quad  \exp\left(-\int_{\hat v_1}^{\hat v_2}2\hat\kappa\varkappa|_{\mathcal H^+}\,d\hat v'\right) = 1+O\big(\ve^{2}(1+\hat v_1)^{-1+\delta}\big).
  \end{equation*}
  Using these, \eqref{eq:psi-decay-final}, and \eqref{eq:pointwise-orbital}, we estimate
\begin{equation}\label{est:aretakis_charge}
    |Y\psi(\hat u_{\mathcal H^+},\hat v_2)-Y\psi(\hat u_{\mathcal H^+},\hat v_1)|\les \ve^{2}(1+\hat v_1)^{-1+\delta}|Y\psi(u_{\mathcal H^+},\hat v_1)|+ \int_{\hat v_1}^{\hat v_2}|\hat\kappa\varkappa||\phi|\,d\hat v'\les \ve^{3}(1+\hat v_1)^{-1+\delta}.
\end{equation}
By an elementary Cauchy sequence argument, the limit 
\begin{equation*}
    H_0[\phi]\doteq \lim_{\hat v\to\infty} Y\psi(\hat u_{\mathcal H^+},\hat v)
\end{equation*}
exists and the estimate \eqref{eq:Aretakis-main} holds.

  \textsc{Proof that $\mathfrak M_\stab^{\ne 0}$ has nonempty interior}: Let $\mathring \varphi$ be a compactly supported smooth function on $\hat{\mathcal C}$ such that 
  \begin{equation}
     \left. \frac{d}{d\hat u}\big((100M_0-\hat u)\mathring\varphi_\ing\big)\right|_{\hat u=\hat u_{\mathcal H^+,0}}\ne 0.\label{eq:nonempty-interior-aux-1}
  \end{equation}
   Let $0<\ve\le\ve_\stab$ and set $\mathcal S_0'\doteq ( c\ve\mathring\varphi, 100M_0,M_0,e_0 )$, where $c>0$ is a constant chosen so that $\mathfrak D[\mathcal S_0']\le \frac 12\ve$. In particular, $c$ is independent of $\ve$. For $\eta>0$ small, we now set
\begin{equation*}
      \mathfrak O_\eta\doteq\{\mathcal L(\mathcal S_0,\ve):\mathcal S_0\in\mathfrak M_0,\mathfrak D[\mathcal S_0-\mathcal S_0']<\eta\}.
\end{equation*}
Clearly, $\mathfrak O_\eta$ is open in $\mathfrak M$ and we claim that for $\ve$ and $\eta$ sufficiently small, $\emptyset\ne\mathfrak O_\eta\cap \mathfrak M_\stab\subset\mathfrak M_\stab^{\ne 0}$.
  
For $\eta$ small and $\mathcal S_0\in\mathfrak O_\eta$, \cref{thm:stability} applies to $\mathcal L(\mathcal S_0,\ve)$. Let $\mathcal S\in\mathfrak M_\stab(\mathcal S_0,\ve)$, let $H_0[\phi]$ be the associated asymptotic Aretakis charge. By \eqref{eq:horizon-location-estimate} and \eqref{eq:nonempty-interior-aux-1}, there exists a constant $c'>0$ such that \[|Y\psi(\hat u_{\mathcal H^+},0)|\gtrsim \ve - c'\eta^{1/2}\gtrsim \ve\] for $\ve$ and $\eta$ sufficiently small. By \eqref{est:aretakis_charge}, $H_0[\phi]$ differs from $Y\psi(\hat u_{\mathcal H^+},0)$ by an $O(\ve^{3})$ quantity, and is therefore nonzero for $\ve$ sufficiently small. As this applies for every $\mathcal S_0\in\mathfrak O_\eta$, we have proved that $\emptyset\ne\mathfrak O_\eta\cap \mathfrak M_\stab\subset\mathfrak M_\stab^{\ne 0}$, which shows that $\mathfrak M_\stab^{\ne 0}$ has nonempty interior in the subspace topology.
  
\textsc{Linear growth of $Y^2\psi|_{\mathcal H^+}$}: We use an integrating factor to solve \eqref{eq:dv2Ypsi} on $\mathcal H^+$:
\begin{equation*}
     Y^2\psi(\hat u_{\mathcal H^+},\hat v)=\exp\left(-\int_{0}^{\hat v}4\hat\kappa\varkappa\,d\hat v'\right)Y^2\psi(\hat u_{\mathcal H^+},0)
     +\int_{0}^{\hat v} \exp\left(-\int_{\hat v'}^{\hat v_2}4\hat\kappa\varkappa\,d\hat v''\right)\left(-\frac{2\hat\kappa e^2}{r^4}Y\psi+E\right)d\hat v'.
\end{equation*}
The first term is $O(\ve)$ by our assumption on the initial data (recall \eqref{eq:extra-ass}) and by \eqref{eq:1-mu-final}, \eqref{eq:varkappa-final}, \eqref{eq:psi-decay-final}, and \eqref{eq:pointwise-orbital}, we have 
\begin{equation*}
    \big|E|_{\mathcal H^+}\big|\les \ve\tau^{-1+\delta/2}.
\end{equation*} Note that the worst decaying term in $E$ (by far) comes from the zeroth order term, for which we only have the nonintegrable decay estimate \eqref{eq:psi-decay-final}.
Using again the geometric estimates on the horizon and \eqref{est:aretakis_charge}, we have 
\begin{equation*}
    \int_0^{\hat v}\frac{2\hat\kappa e^2}{r^4}Y\psi\,d\hat v'=\int_0^{\mathfrak v(\hat v)}\frac{2\kappa e^2}{r^4}Y\psi\,dv'= \frac{2}{M^2}H_0[\phi]\hat v + O\big(\ve^{3}(1+\hat v)^\delta\big).
\end{equation*}
Putting these estimates together, we have 
\begin{equation*}
    \left|Y^2\psi(\hat u_{\mathcal H^+},\hat v)+\frac{2}{M^2}H_0[\phi]\hat v\right| \les \ve(1+\hat v)^{\delta/2}+\ve^{3}(1+\hat v)^\delta,
\end{equation*}
which implies \eqref{eq:aretakis-main-2}.

\textsc{Behavior of the Ricci tensor along $\mathcal H^+$}: By trace-reversing the Einstein equation, we find
\begin{equation*}
    R_{\mu\nu}= 2T_{\mu\nu}^\mathrm{EM}+2\partial_\mu\phi\partial_\nu\phi
\end{equation*}
and using \eqref{eq:F-sph-sym} we compute
\begin{equation*}
   T^\mathrm{EM}= \frac{\hat\Omega^2e^2}{4r^4}(d\hat u\otimes d\hat v+d\hat v\otimes d\hat u)+\frac{e^2}{2r^2}g_{S^2}. 
\end{equation*}
Therefore, 
\begin{equation*}
    R_{\mu\nu}Y^\mu Y^\nu = 2(Y\phi)^2 = 2r^{-2}(Y\psi)^2-4r^{-3}\psi Y\psi +2r^{-4}\psi^2,
\end{equation*}
from which \eqref{eq:main-Ricci-1} follows readily.

Next, we compute
\begin{equation*}
    \nabla_\rho R_{\mu\nu} = 2\nabla_\rho T_{\mu\nu}^\mathrm{EM}+2\nabla_\rho\nabla_\mu\phi\nabla_\nu\phi+2\nabla_\mu\phi\nabla_\rho\nabla_\nu\phi.
\end{equation*}
Using again the form of $T^\mathrm{EM}$, we have $\nabla_\rho T_{\mu\nu}^\mathrm{EM}Y^\rho Y^\mu Y^\nu=0$. Since \[\Gamma^{\hat u}_{\hat u\hat u}=\partial_{\hat u}{\log\hat\Omega^2}=\partial_{\hat u}{\log(-4\hat\kappa\hat\nu)}=\hat\kappa^{-1}\partial_{\hat u}\hat\kappa+\hat\nu^{-1}\partial_{\hat u}\hat\nu,\] it holds that
\begin{align*}
    \nabla_\rho\nabla_\mu\phi \,Y^\rho Y^\mu& = \hat \nu^{-2} \partial_{\hat u}^2\phi -\hat\nu^{-2}\big(\hat\kappa^{-1}\partial_{\hat u}\hat\kappa+\hat\nu^{-1}\partial_{\hat u}\hat\nu\big)\partial_{\hat u}\phi\\
    &=\hat\nu^{-2}\partial_{\hat u}(\hat\nu Y\phi) - \hat\nu^{-1}\big(r\hat\nu (Y\phi)^2+\hat\nu^{-1}\partial_{\hat u}\hat\nu\big) Y\phi\\
    &= Y^2\phi -r(Y\phi)^3
\end{align*}
and hence 
\begin{equation*}
     \nabla_\rho R_{\mu\nu} Y^\rho Y^\mu Y^\nu = 4 Y\phi Y^2\phi - 4r(Y\phi)^4,
\end{equation*}
from which \eqref{eq:main-Ricci-2} follows readily.
\end{proof}

\subsection{Sharp decay for the scalar field near the horizon}\label{sec:sharp_asym}

In this section, we work in the eschatological gauge $(u_\infty,v)$ on the domain $\mathcal D_\infty$ with the final anchored extremal Reissner--Nordstr\"om solution $\bar r_\infty$. 

We require the following lemma (see also \eqref{eq:tortoise}).

\begin{lem}
    On $\mathcal D_\infty\cap\{\bar r_\infty\le \tfrac 12\Lambda\}$, it holds that
    \begin{equation}\label{eq:r_u_rel2}
      \frac{1}{\bar r_\infty-M}\sim u_\infty-v.
    \end{equation}
\end{lem}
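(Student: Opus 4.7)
\textit{Proof plan.} The plan is to exploit the fact that $\bar r_\infty$ is the extremal Reissner--Nordstr\"om background in Eddington--Finkelstein double null gauge, so that the relation between $\bar r_\infty$ and the coordinates is governed by the tortoise coordinate $r_*$ given by \eqref{eq:tortoise}. First, from the anchoring conditions $\partial_{u_\infty}\bar r_\infty = -(1-\bar\mu_\infty)$ and $\partial_v\bar r_\infty = 1-\bar\mu_\infty$ established in \cref{sec:geometry-RN} and applied to $\bar r_\infty$, one has $\partial_v r_*(\bar r_\infty) = 1 = -\partial_{u_\infty} r_*(\bar r_\infty)$, hence
\[
r_*(\bar r_\infty) = v - u_\infty + C
\]
for a constant $C$, where we are free to fix the integration constant in \eqref{eq:tortoise}. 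The most convenient normalization is $r_*(\bar r_\star) = 0$, which at $(u_\infty,v)=(0,0)$ gives $C=0$ so that $u_\infty - v = -r_*(\bar r_\infty)$ identically.

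Next, I would read off the asymptotic behavior of $r_*$ from \eqref{eq:tortoise}. Writing
\[
-r_*(r) = \frac{M^2}{r-M} - (r-M) - 2M\log(r-M) + K,
\]
where $K$ is the constant absorbing the normalization at $\bar r_\star$, I introduce the function $f(r) \doteq (r-M)\bigl(-r_*(r)\bigr)$, so that
\[
f(r) = M^2 - (r-M)^2 - 2M(r-M)\log(r-M) + K(r-M).
\]
Since $(r-M)\log(r-M) \to 0$ and all other terms are manifestly continuous, $f$ extends continuously to the closed interval $[M,\tfrac12\Lambda]$ with $f(M) = M^2 > 0$.

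The final step is to verify positivity. Because $\bar r_\star > \tfrac12\Lambda$ (for $\ve$ small, using $|\bar r_\star - \Lambda| \lesssim \ve^2$ from \cref{prop:r-convergence}) and $r_*$ is strictly increasing, $-r_*(r) \ge -r_*(\tfrac12\Lambda) > 0$ for all $r \in (M,\tfrac12\Lambda]$. Combined with the limiting value $f(M) = M^2$, this shows $f$ is bounded above and below by positive constants depending only on $M_0$ on the compact interval $[M,\tfrac12\Lambda]$. Evaluating at $r = \bar r_\infty$ then yields $(\bar r_\infty - M)(u_\infty - v) \sim 1$, which is \eqref{eq:r_u_rel2}.

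No serious obstacle is anticipated: the argument is essentially bookkeeping, and the only subtle point is the choice of normalization ensuring that the harmless $O(\ve^2)$ discrepancy between $\bar r_\star$ and $\Lambda$ does not spoil the equivalence. This is handled by fixing $r_*(\bar r_\star)=0$, so that the estimate \eqref{eq:r_u_rel2} comes out with constants depending only on $M_0$ (and $\Lambda = 100 M_0$).
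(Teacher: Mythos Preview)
Your proposal is correct and follows essentially the same approach as the paper: both derive the identity $u_\infty - v = -r_*(\bar r_\infty)$ (with the normalization $r_*(\bar r_\star)=0$) from the Eddington--Finkelstein gauge conditions, then read off \eqref{eq:r_u_rel2} from the explicit tortoise formula \eqref{eq:tortoise}. The only difference is presentational: the paper writes the expansion as $u_\infty - v = \frac{M^2}{\bar r_\infty - M} - 2M\log(\bar r_\infty - M) + O(1)$ and concludes ``by inspection,'' whereas you package the same content as a continuity/compactness argument for $f(r) = (r-M)(-r_*(r))$ on $[M,\tfrac12\Lambda]$.
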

\begin{proof}
   We have the identity
    \begin{equation}\label{eq:tortoise-2}
        v-u_\infty  = \int_{\bar r_\star}^{\bar r_\infty}\frac{d\bar r'}{D(\bar r')}=  -\frac{M^2}{\bar r_\infty-M}+\frac{M^2}{\bar r_\star-M}+2M\log\left(\frac{\bar r_\infty-M}{\bar r_\star-M}\right)+\bar r_\infty-\bar r_\star
    \end{equation} on $\mathcal D_\infty$, where $\bar r_\star$ was defined in \eqref{eq:r-star-limit} and $D(r)\doteq(1-M/r)^2$. Indeed, this is trivially true at the bifurcation sphere $(u_\infty,v)=(0,0)$ by definition of $\bar r_\star$, and holds everywhere because the $\partial_{u_\infty}$ and $\partial_v$ derivatives of $v-u_\infty$ and the integral are equal. Writing \eqref{eq:tortoise-2} as
    \begin{equation}\label{eq:tortoise-3}
        u_\infty-v = \frac{M^2}{\bar r_\infty-M} - 2M\log(\bar r_\infty-M) +O(1),
    \end{equation}
    we infer \eqref{eq:r_u_rel2} by inspection. 
    \end{proof}

Let us consider the curve 
\begin{equation}\label{eq:curve_hor}
\mathcal{\sigma}_{\beta} \doteq \{(u_\infty,v)\in\mathcal D_\infty: u_\infty -v =  u_{\infty}^{\beta} + C_\beta \}, 
\end{equation}
where the constant $C_\beta$ is chosen so that $\sigma_\beta\subset \{\bar r_\infty\le \frac 12\Lambda\}$. We also define the spacetime region
\begin{equation}\label{eq:reg_asym}
\mathfrak{t}_{\beta} \doteq  \{(u_\infty,v)\in\mathcal D_\infty: v \leq u_{\infty}-u_{\infty}^{\beta} - C_\beta \}.
\end{equation}
 Using \eqref{eq:r_u_rel2}, we observe that in the region $\mathfrak{t}_{\beta}$,
\begin{equation}\label{est:aux_sigma}
 \bar{r}_{\infty} - M \lesssim (1+u_{\infty})^{-\beta} \lesssim (1+v)^{-\beta} .
\end{equation}

With these definitions, we have the following result:
\begin{prop}\label{prop:asym_partial_u} Under the assumption \eqref{eq:extra-ass}, for every $\beta\in(\frac 23,1)$, there exists a $\beta'>0$ such that
\begin{equation}\label{est:asym_partial_u}
| u_{\infty}^2 \partial_{u_{\infty}} \psi + M^2 H_0 [ \phi ] | \lesssim \ve^{3} +\ve u_{\infty}^{-\beta'}
\end{equation}
in $\mathfrak t_\beta$, where $H_0[\phi]$ is the asymptotic Aretakis charge of the wave.
\end{prop}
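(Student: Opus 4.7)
The plan is to introduce $W(u_\infty, v) \doteq u_\infty^2 \partial_{u_\infty}\psi + M^2 H_0[\phi]$ and, using the wave equation \eqref{eq:wave-equation-psi} in the form $\partial_v \partial_{u_\infty}\psi = 2\kappa\nu_\infty\varkappa\phi$, write
$$W(u_\infty, v) = W(u_\infty, 0) + \int_0^v 2 u_\infty^2 \kappa\nu_\infty\varkappa\phi\, dv'.$$
It then suffices to bound the initial value $W(u_\infty, 0)$ on $\underline C{}_\ing$ and the propagation integral separately.

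For $W(u_\infty, 0)$, I would exploit that on $\underline C{}_\ing$ the gauge condition $\hat\nu = -1$ gives $Y\psi = -\partial_{\hat u}\psi$, so the assumption \eqref{eq:extra-ass} translates to $|\partial_{\hat u}(Y\psi)|\les \ve$. Taylor expansion in $\hat u$ about $\hat u_{\mathcal H^+}$ together with \eqref{est:aretakis_charge} (applied at $v_1=0$ and $v_2\to\infty$) then yields $|Y\psi(\hat u, 0) - H_0[\phi]|\les \ve|r-M| + \ve^3$. The tortoise relation \eqref{eq:tortoise-3}, together with the geometric estimate \eqref{eq:r-final}, produces $\bar r_\infty - M = M^2/u_\infty + O((\log u_\infty)/u_\infty^2)$ on $v=0$, and then \eqref{eq:nu-final} and \eqref{eq:1-mu-final} yield $u_\infty^2 \nu_\infty(u_\infty, 0) = -M^2 + O(\ve^2 + (\log u_\infty)/u_\infty)$. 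Combined with $|H_0[\phi]|\les \ve$, this gives $|W(u_\infty, 0)|\les \ve^3 + \ve u_\infty^{-1+\eta}$ for any small $\eta > 0$.

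For the propagation integral, the key estimate \eqref{est:aux_sigma} yields in $\mathfrak t_\beta$ the bounds $|\nu_\infty|\les (\bar r_\infty - M)^2 \les u_\infty^{-2\beta}$ and $|\varkappa|\les \bar r_\infty - M\les u_\infty^{-\beta}$, via Taylor expansion of the ERN background at $\bar r_\infty = M$ together with the dynamical estimate \eqref{eq:varkappa-final}. The pointwise bound \eqref{eq:psi-decay-final}, together with $\tau\sim v$ near the horizon (by \cref{lem:tau-properties}), gives $|\phi|\les \ve(1+v)^{-1+\delta/2}$. Altogether $|\partial_v W|\les \ve u_\infty^{2-3\beta}(1+v)^{-1+\delta/2}$, and since $v\le u_\infty$ in $\mathfrak t_\beta$, integration yields a bound $\ve u_\infty^{2-3\beta + \delta/2}$. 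For $\beta > 2/3$ and $\delta$ small, choosing $\beta' \doteq 3\beta - 2 - \delta/2 > 0$ (or any smaller positive number, so as to dominate the $u_\infty^{-1+\eta}$ contribution from Step~1) completes the estimate. The main obstacle is the delicate extraction of the precise leading coefficient $-M^2$ in $u_\infty^2\nu_\infty$ on $\underline C{}_\ing$, which requires inverting the tortoise relation \eqref{eq:tortoise-3} beyond the leading term so as to absorb its logarithmic correction, and combining it consistently with all the dynamical estimates of \cref{sec:geometric-limits}; the threshold $\beta > 2/3$ is precisely the algebraic balance between the factor $u_\infty^2$ and the near-horizon weight $(\bar r_\infty - M)^3\sim u_\infty^{-3\beta}$ appearing in $\nu_\infty\varkappa$.
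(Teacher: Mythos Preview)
Your overall strategy---integrate the wave equation in $v$ from the data surface, and separately bound the initial value $W(u_\infty,0)$ and the propagation integral---is exactly what the paper does. Your treatment of the initial data (Step~1) matches the paper's auxiliary lemma, and your handling of the background $\bar\varkappa$ piece of the propagation integral is correct and gives the threshold $\beta>2/3$.

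The gap is in your claim that $|\varkappa|\les \bar r_\infty - M \les u_\infty^{-\beta}$. Taylor expansion of the background does give $\bar\varkappa_\infty\sim \bar r_\infty - M$ near the horizon, but the dynamical estimate \eqref{eq:varkappa-final} only gives $|\tilde\varkappa|=|\varkappa-\bar\varkappa_\infty|\les \ve^2(1+v)^{-2+\delta}$, and this is \emph{not} dominated by $\bar r_\infty - M$ throughout $\mathfrak t_\beta$: for fixed $v$ and $u_\infty\to\infty$ one has $\bar r_\infty - M\to 0$ while $(1+v)^{-2+\delta}$ stays bounded away from zero. If you separate out the $\tilde\varkappa$ contribution and use your crude bound $(\bar r_\infty - M)^2\les u_\infty^{-2\beta}$ for $|\nu_\infty|$, you obtain
\[
\ve^3 u_\infty^{2-2\beta}\int_0^v (1+v')^{-3+3\delta/2}\,dv' \ \les\ \ve^3 u_\infty^{2-2\beta},
\]
which \emph{grows} polynomially in $u_\infty$ for every $\beta<1$, and is therefore not compatible with \eqref{est:asym_partial_u}.

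The paper resolves this by using the sharper relation $(\bar r_\infty - M)\sim (u_\infty - v')^{-1}$ from \eqref{eq:r_u_rel2} and splitting the integration along the curve $\rho_\beta=\{v=\tfrac12 u_\infty - C_\beta'\}$: in the near-horizon piece $\mathfrak t_\beta^1$ one has $\bar r_\infty - M\les u_\infty^{-1}$, so $u_\infty^2(\bar r_\infty - M)^2\les 1$ and the $\tilde\varkappa$ contribution integrates to $O(\ve^3)$; in the complementary piece $\mathfrak t_\beta^2$ one has $u_\infty\sim v'$, so powers of $u_\infty$ can be traded for decaying powers of $v'$. Your argument as written misses this splitting and therefore does not close.
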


We first verify the proposition along $\underline C{}_\ing$.

    \begin{lem}\label{lem:aux-Taylor}
    Let $\eta>0$. Under the assumption \eqref{eq:extra-ass}, it holds that
    \begin{equation}
    |u^2_\infty\partial_{u_\infty}\psi(u_\infty,0)+M^2H_0[\phi]|\les \ve^3+\ve u_\infty^{-1+\eta}\label{eq:expansion-data}
    \end{equation}
    for all $u_\infty>0$. 
\end{lem}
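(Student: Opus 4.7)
\emph{Proof proposal.} The key observation is that on $\underline C{}_\ing = \{\hat v = 0\} = \{v = 0\}$, the gauge condition $\hat\nu \equiv -1$ gives $Y = -\partial_{\hat u}$ and, since $\partial_{\hat u}\hat\nu\big|_{\underline C_\ing} = 0$, also $Y^2\psi = \partial_{\hat u}^2\psi$ on $\underline C_\ing$. Since $\partial_{u_\infty}\psi = \nu_\infty\, Y\psi$, the algebraic identity
\[
u_\infty^2\partial_{u_\infty}\psi + M^2 H_0[\phi] = (u_\infty^2 \nu_\infty)\big(Y\psi - H_0[\phi]\big) + \big(u_\infty^2 \nu_\infty + M^2\big)\,H_0[\phi]
\]
reduces \eqref{eq:expansion-data} to estimates on the two factors $u_\infty^2 \nu_\infty + M^2$ and $Y\psi(u_\infty,0) - H_0[\phi]$, given that $|H_0[\phi]| \lesssim \ve$ by \eqref{eq:pointwise-orbital} and that $|u_\infty^2 \nu_\infty| \lesssim 1$ will be proved along the way.

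To control the first factor, I would invert the tortoise identity \eqref{eq:tortoise-3} at $v = 0$ to obtain $\bar r_\infty - M = M^2/u_\infty + O(u_\infty^{-2}\log u_\infty)$, whence the explicit formula $\bar\nu_\infty = -(1 - M/\bar r_\infty)^2$ yields $u_\infty^2 \bar\nu_\infty = -M^2 + O(u_\infty^{-1}\log u_\infty)$. Since the ingoing null cone $\{v = 0\}$ meets $\Gamma$ only at the bifurcation sphere $\Gamma(1) = (0,0)$, one has $\tau \equiv 1$ on $\underline C_\ing$, so \eqref{eq:nu-final} specializes to $|\nu_\infty - \bar\nu_\infty| \lesssim \ve^2|\bar\nu_\infty|$ there; multiplying by $u_\infty^2$ gives $|u_\infty^2(\nu_\infty - \bar\nu_\infty)| \lesssim \ve^2$, and combining,
\[
|u_\infty^2 \nu_\infty + M^2| \lesssim \ve^2 + u_\infty^{-1}\log u_\infty .
\]

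To control the second factor, taking $\hat v_1 = 0$ and passing $\hat v_2 \to \infty$ in \eqref{est:aretakis_charge} gives $|Y\psi(\hat u_{\mathcal H^+},0) - H_0[\phi]| \lesssim \ve^3$. A Taylor expansion along $\underline C_\ing$ using the hypothesis \eqref{eq:extra-ass} then yields
\[
|Y\psi(\hat u,0) - Y\psi(\hat u_{\mathcal H^+},0)| = \left|\int_{\hat u}^{\hat u_{\mathcal H^+}} Y^2\psi(\hat u', 0)\,d\hat u'\right| \lesssim \ve\,(\hat u_{\mathcal H^+} - \hat u) .
\]
The displacement equals $r(\hat u,0) - r(\hat u_{\mathcal H^+},0)$ exactly (since $\hat\nu \equiv -1$), and passing $\hat u \to \hat u_{\mathcal H^+}$ (equivalently $u_\infty \to \infty$, in which limit $\bar r_\infty(u_\infty,0) \to M$) in \eqref{eq:r-final} with $\tau \equiv 1$ forces $r(\hat u_{\mathcal H^+},0) = M + O(\ve^2)$. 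Combining with the leading asymptotic for $\bar r_\infty - M$ above gives $\hat u_{\mathcal H^+} - \hat u = M^2/u_\infty + O(\ve^2) + O(u_\infty^{-2}\log u_\infty)$, hence $|Y\psi(u_\infty,0) - H_0[\phi]| \lesssim \ve^3 + \ve u_\infty^{-1}$.

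Plugging both bounds into the algebraic identity and absorbing the logarithmic factor into $u_\infty^{-1+\eta}$ for arbitrary $\eta > 0$ yields \eqref{eq:expansion-data}. The main subtlety is obtaining $\ve^3$ rather than $\ve^2$ in the estimate for $Y\psi(\hat u_{\mathcal H^+},0) - H_0[\phi]$, which is precisely the content of \eqref{est:aretakis_charge} and relies crucially on the horizon-localized hierarchy going all the way to $p = 3 - \delta$; a secondary subtlety is that the a priori bound $r(\hat u_{\mathcal H^+},0) - M = O(\ve)$ coming directly from \eqref{eq:horizon-location-estimate} is too weak and must be upgraded to $O(\ve^2)$ by taking the limit $u_\infty \to \infty$ in \eqref{eq:r-final} along $\underline C_\ing$.
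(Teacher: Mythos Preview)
Your proof is correct and follows essentially the same approach as the paper's. Both arguments Taylor-expand $Y\psi$ along $\underline C{}_\ing$ using \eqref{eq:extra-ass} and \eqref{est:aretakis_charge}, invert the tortoise relation \eqref{eq:tortoise-3} to control $u_\infty^2\bar\nu_\infty + M^2$, and use \eqref{eq:nu-final} to pass from $\bar\nu_\infty$ to $\nu_\infty$; the paper keeps $(\bar r_\infty - M)$ as the small parameter and converts to $u_\infty^{-1+\eta}$ only at the end, while you invert the tortoise relation explicitly from the outset and organize the computation via the algebraic splitting $(u_\infty^2\nu_\infty)(Y\psi - H_0[\phi]) + (u_\infty^2\nu_\infty + M^2)H_0[\phi]$, but these are cosmetic differences.
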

\begin{proof} Using the chain rule and \eqref{eq:nu-final}, we estimate
\begin{equation*}
    \partial_{\hat u}\big((\bar r_\infty-M)\circ\Phi_\infty(\hat u,0)\big) = \bar\nu_\infty\big(\Phi_\infty(\hat u,0)\big) \partial_{\hat u}\Phi(\hat u,0)\sim \nu_\infty\big(\Phi_\infty(\hat u,0)\big) \partial_{\hat u}\Phi(\hat u,0)=\hat\nu(\hat u,0)=-1.
\end{equation*}
Therefore, integrating backwards from $\hat u_{\mathcal H^+}$, we find that
\begin{equation*}
    (\bar r_\infty-M)\circ\Phi_\infty(\hat u,0)\sim \hat u_{\mathcal H^+}-\hat u
\end{equation*}
for $\hat u\in[0,\hat u_{\mathcal H^+}]$. By Taylor's theorem, the estimate \eqref{est:aretakis_charge}, and the assumption \eqref{eq:extra-ass}, we have that 
\begin{equation*}
    Y\psi(\hat u,0)= H_0[\phi] + O\big(\ve(\bar r_\infty-M)\big)+O(\ve^3)
\end{equation*}
for $\hat u\in[0,\hat u_{\mathcal H^+}]$. Of course, $H_0[\phi]=O(\ve)$ as well. 

By \eqref{eq:tortoise-3}, the identity $\bar \nu_\infty = -\bar r_\infty^{-2}(\bar r_\infty-M)^2$, and the Taylor expansion $\bar r^{-2}_\infty = M^{-2}+O(\bar r_\infty-M)$, we have
    \begin{equation*}
        u^2_\infty \bar \nu_\infty = -M^2+O\big((\bar r_\infty-M)[1+\log(\bar r_\infty-M)]\big)
    \end{equation*}
    at $v=0$, and we may therefore compute
\begin{equation*}
    u^2_\infty\partial_{u_\infty}\psi = u^2_\infty \nu_\infty Y\psi =u_\infty^2\bar\nu_\infty\big(1+O(\ve^3)\big)Y\psi 
    =-M^2H_0[\phi] +O\big(\ve(\bar r_\infty-M)[1+\log(\bar r_\infty-M)]\big)+O(\ve^3)
\end{equation*}
at $v=0$. Using the estimate $x\log x\les x^{1-\eta}$, we arrive at \eqref{eq:expansion-data}. \end{proof}

\begin{proof}[Proof of \cref{prop:asym_partial_u}]
Integrating the wave equation \eqref{eq:wave-equation-psi} along the segment $\{u_\infty\}\times[0,v]\subset\mathfrak t_\beta$, we have
$$ u_{\infty}^2 \partial_{u_{\infty}} \psi ( u_{\infty} , v ) = u_{\infty}^2 \partial_{u_{\infty}} \psi ( u_{\infty} , 0 ) + u_{\infty}^2 \int_0^v \frac{2\kappa_\infty\nu_\infty\varkappa}{r} \psi \, dv' . $$
By \eqref{eq:kappa-1}, \eqref{eq:nu-aux-1}, \eqref{eq:varkappa-decay}, and \eqref{eq:kappa-final}, \eqref{eq:nu-final}, \eqref{eq:varkappa-final}, the second term on the right-hand side satisfies
\begin{multline}
 \left|  u_{\infty}^2 \int_0^v \frac{2\kappa_\infty\nu_\infty\varkappa}{r} \psi \, dv'\right| \les u_{\infty}^2 \int_0^v ( \bar{r}_{\infty} - M)^3 | \psi | \, dv' + u_{\infty}^2 \int_0^v \ve^2 (1+v')^{-2+\delta}( \bar{r}_{\infty} - M )^2 | \psi | \, dv' \\
     \les u_{\infty}^{-\beta'}  \int_0^v u_{\infty}^{2+\beta'} ( \bar{r}_{\infty} - M)^3 | \psi | \, dv' + \int_0^v \ve^{2} u_{\infty}^2 (1+v')^{-2+\delta}( \bar{r}_{\infty} - M )^2| \psi | \, dv'.\label{eq:asy-aux-1}
\end{multline}
For the first term  of the last expression we note that by \eqref{eq:psi-decay-final} and \eqref{est:aux_sigma}, we have
\begin{equation*}
    u_{\infty}^{-\beta'}  \int_0^v u_{\infty}^{2+\beta'} ( \bar{r}_{\infty} - M)^3 | \psi | \, dv' \les \ve u_{\infty}^{-\beta'} \int_0^v (1+v' )^{-1+\delta/2 -3\beta + 2 + \beta' } dv' \lesssim \ve u_{\infty}^{-\beta'} ,
\end{equation*}
as $-1+\delta/2 - 3 \beta + 2 + \beta' < -1$ for $\beta \in (2/3 , 1)$ and $\beta' > 0$ sufficiently small.

To handle the second term on the right-hand side of \eqref{eq:asy-aux-1}, we break the region $\mathfrak{t}_{\beta}$ into two pieces. We split $\mathfrak t_\beta$ along the curve \[\rho_\beta\doteq \{(u_\infty,v)\in\mathcal D_\infty:v= \tfrac 12 u_\infty-C_\beta'\},\]
where $C_\beta'$ is chosen so that $\rho_\beta\cap\sigma_\beta=\emptyset$. Denote the region between $\rho_\beta$ and $\mathcal H^+$ by $\mathfrak t^1_\beta$ and the region between $\rho_\beta$ and $\sigma_\beta$ by $\mathfrak t^2_\beta$, so that $\mathfrak t_\beta=\mathfrak t^1_\beta\cup\mathfrak t^2_\beta$. In $\mathfrak{t}^1_{\beta}$, it holds that $\bar{r}_{\infty}-M \lesssim (1+u_{\infty})^{-1}$, which implies that 
\begin{equation*}
    \int_0^v \ve^{2} u_{\infty}^2 (1+v')^{-2+\delta}( \bar{r}_{\infty} - M )^2| \psi |\mathbf{1}_{\mathfrak t_\beta^1} \, dv' \les \ve^3.
\end{equation*}
On the other hand, in $\mathfrak{t}^2_{\beta}$, it holds that $(1+v)^{-1}\les (1+ u_{\infty})^{-1}$, so using \eqref{est:aux_sigma} we estimate
\begin{equation*}
     \int_0^v \ve^{2} u_{\infty}^2 (1+v')^{-2+\delta}( \bar{r}_{\infty} - M )^2| \psi |\mathbf{1}_{\mathfrak t_\beta^2} \, dv'\les \ve^2u_\infty^{2-k-2\beta}\int_0^v(1+v')^{-3+k+3\delta/2} \,dv'\les \ve^3u_\infty^{-\beta'}
\end{equation*}
where $k$ is chosen so that $-2+k+2\beta>\beta'$ and $-3+k+3\delta/2<-1$.

By applying \cref{lem:aux-Taylor} with $\eta$ chosen such that $-1+\eta<\beta'$, we conclude \eqref{est:asym_partial_u}. \end{proof}

We now use the previous proposition to obtain improved decay for $\psi$ in $\mathfrak{t}_{\beta}$.

\begin{prop}\label{prop:asym_psi}
Under the assumptions of \cref{prop:asym_partial_u} and $\beta\in(\frac 23,1-\delta-2\beta')$, it holds that
\begin{equation}\label{est:psi_dec_sharp}
    \left| \psi ( u_{\infty} , v )   + M^2 H_0 [ \phi ] \left(v^{-1} - u_\infty^{-1} \right) \right| \les \ve^3|v^{-1} - u_\infty^{-1} |+\ve (1+v)^{-1-\beta'}
\end{equation}
in $\mathfrak t_\beta$.
\end{prop}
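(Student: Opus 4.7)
The plan is to define the defect
\begin{equation*}
\Psi(u_\infty,v) \doteq \psi(u_\infty,v) + M^2 H_0[\phi]\bigl(v^{-1} - u_\infty^{-1}\bigr)
\end{equation*}
and to estimate it by integrating \cref{prop:asym_partial_u} along constant-$v$ cones. Since $u_\infty^2 \partial_{u_\infty}\Psi = u_\infty^2\partial_{u_\infty}\psi + M^2 H_0[\phi]$, the bound \eqref{est:asym_partial_u} gives $|\partial_{u_\infty}\Psi| \lesssim (\ve^3 + \ve u_\infty^{-\beta'})/u_\infty^2$ throughout $\mathfrak{t}_\beta$. For each $(u_\infty,v)\in\mathfrak{t}_\beta$ with $v\gtrsim 1$, let $u_\infty^\ast=u_\infty^\ast(v)$ be the value such that $(u_\infty^\ast,v)\in \sigma_\beta$; then $u_\infty^\ast - v \sim v^\beta$ and hence $u_\infty^\ast \sim v$ for large $v$. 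Integrating $\partial_{u'}\Psi$ in $u'$ from $u_\infty^\ast$ to $u_\infty$, and using $u_\infty^\ast\ge v$ to convert $(u_\infty^\ast)^{-1}-u_\infty^{-1}\le v^{-1}-u_\infty^{-1}$, yields
\begin{equation*}
\bigl|\Psi(u_\infty,v)-\Psi(u_\infty^\ast,v)\bigr| \lesssim \ve^3\bigl((u_\infty^\ast)^{-1}-u_\infty^{-1}\bigr)+\ve(u_\infty^\ast)^{-1-\beta'} \lesssim \ve^3(v^{-1}-u_\infty^{-1})+\ve(1+v)^{-1-\beta'},
\end{equation*}
which already has the form of the right-hand side of \eqref{est:psi_dec_sharp}. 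The proof therefore reduces to the pointwise estimate $|\Psi(u_\infty^\ast,v)| \lesssim \ve(1+v)^{-1-\beta'}$ along $\sigma_\beta$.

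On $\sigma_\beta$ the correction $M^2 H_0[\phi]\bigl(v^{-1}-(u_\infty^\ast)^{-1}\bigr)$ satisfies $|v^{-1}-(u_\infty^\ast)^{-1}|\lesssim v^{\beta-2}$ and is therefore $\lesssim \ve\, v^{\beta-2}\lesssim \ve\, v^{-1-\beta'}$ as soon as $\beta+\beta'\le 1$, which follows from the hypothesis $\beta<1-\delta-2\beta'$ together with $\beta'>0$. The main step is a pointwise bound on $\psi$ itself on $\sigma_\beta$, and this is where the degenerate-weighted pointwise estimate \eqref{eq:psi-decay-final} enters decisively: writing $|\psi|=r|\phi|\sim M|\phi|$ in the near-horizon region and invoking $|(\bar r_\infty-M)^{1/2}\phi|\lesssim \ve\tau^{-3/2+\delta/2}$ together with $\bar r_\infty-M\sim v^{-\beta}$ on $\sigma_\beta$ (from \eqref{est:aux_sigma}) and $\tau\sim v$ gives
\begin{equation*}
|\psi(u_\infty^\ast,v)|\lesssim \ve\, v^{-3/2+\delta/2+\beta/2}.
\end{equation*}
The inequality $-3/2+\delta/2+\beta/2\le -1-\beta'$ is precisely $\beta+2\beta'\le 1-\delta$, which is exactly the range condition in the hypothesis of the proposition. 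The main analytic content of the argument is therefore concentrated in \cref{prop:asym_partial_u}, already proved; the present proposition follows by a pure fundamental-theorem-of-calculus computation combined with the sharp weighted pointwise bound \eqref{eq:psi-decay-final}, and the restriction on $\beta$ arises from the arithmetic of matching the $(\bar r-M)^{-1/2}$ loss in that bound against the $\bar r-M\sim v^{-\beta}$ behavior of the background on $\sigma_\beta$.
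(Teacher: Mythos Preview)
Your proof is correct and follows essentially the same approach as the paper's: integrate \eqref{est:asym_partial_u} along constant-$v$ cones from the curve $\sigma_\beta$, use the weighted pointwise bound \eqref{eq:psi-decay-final} together with $\bar r_\infty-M\sim v^{-\beta}$ to control $\psi$ on $\sigma_\beta$, and observe that the arithmetic constraint on $\beta$ is exactly $\beta+2\beta'\le 1-\delta$. Your packaging via the defect $\Psi$ is a cosmetic reorganization of the same computation.
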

\begin{proof} Let $u_\beta=u_\beta(v)$ be defined by $(u_\beta,v)\in\sigma_\beta$. Then
\begin{equation}
    |\psi(u_\beta,v)|=(\bar r_\infty-M)^{-1/2}(u_\beta,v)|(\bar r_\infty-M)^{1/2}\psi|(u_\beta,v)\les \ve (1+v)^{-3/2+\delta/2+\beta/2}\les \ve(1+v)^{-1-\beta'}
\end{equation}
by \eqref{eq:psi-decay-final}, the fact that $(\bar r_\infty-M)^{-1}\sim 1+v$ along $\sigma_\beta$, and the assumption on $\beta$. Using \eqref{est:asym_partial_u}, we find for $(u_\infty,v)\in\mathfrak t_\beta$,
\begin{equation*}
    \psi (u_{\infty} , v)-\psi (u_\beta , v) =\int_{u_\beta}^{u_\infty}\partial_{u_\infty}\psi\,du_\infty'=-M^2H_0[\phi](u_\beta^{-1}-u_\infty^{-1})+O\big(\ve^3(u_\beta^{-1}-u_\infty^{-1})+\ve(u_\beta^{-1-\beta'}-u_\infty^{-1-\beta'})\big).
\end{equation*}
Now we note that 
\begin{equation*}
    u_\beta^{-1}-u_\infty^{-1}=(u_\beta^{-1}-v^{-1})+(v^{-1}-u_\infty^{-1})= (v^{-1}-u_\infty^{-1})+O(v^{-2+\beta})
\end{equation*}
and $u_\beta+1\sim v+1$, and \eqref{est:psi_dec_sharp} readily follows.
\end{proof}

\subsection{Sharpness of the horizon hierarchy}\label{sec:sharpness}
We will show that the range of $p$ in the horizon hierarchy of \cref{sec:horizon-hierarchy} is sharp, resulting in failure of boundedness of the integrated non-degenerate energy near the horizon.

\begin{prop}\label{prop:sharp} For any $\eta>0$ and $C_2>0$ there exists a constant $0<\ve_{\eta,C_2}\le \ve_\stab$ such that the following holds. Let $\mathcal S_0(\alpha_\star)\in\mathfrak M_\stab$ with $0< \mathfrak D[\mathcal S_0]\le\ve\le \ve_{\eta,C_2}$. Suppose also that $\mathring\phi$ satisfies the second order smallness condition \eqref{eq:extra-ass} and that the asymptotic Aretakis charge is quantitatively nonvanishing in the sense that 
\begin{equation}\label{eq:aretakis-nonzero}
    |H_0[\phi]|\ge\ve^{2-\eta}.
\end{equation} 
Then for any $R> M$, it holds that 
\begin{equation}\label{eq:blowup}
    \iint_{\mathcal D_\infty\cap\{\bar r_\infty\le R\} }\frac{(\partial_{u_{\infty}} \psi)^2}{-\bar{\nu}_{\infty}} \,du_\infty dv = \infty.
\end{equation}
\end{prop}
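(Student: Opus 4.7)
The plan is to exploit the sharp asymptotic \eqref{est:asym_partial_u} from \cref{prop:asym_partial_u}, combined with the near-horizon behavior of $-\bar\nu_\infty$, to produce a dyadic lower bound for the spacetime integral which is uniform over dyadic scales, and hence sums to $+\infty$.

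First, I fix $\beta\in(2/3,1)$ and work in the region $\mathfrak t_\beta$ introduced in \eqref{eq:reg_asym}. By \cref{prop:asym_partial_u},
\begin{equation*}
  |u_\infty^2\partial_{u_\infty}\psi+M^2H_0[\phi]|\les \ve^3+\ve u_\infty^{-\beta'}
\end{equation*}
on $\mathfrak t_\beta$. Using the quantitative lower bound \eqref{eq:aretakis-nonzero}, $|H_0[\phi]|\ge\ve^{2-\eta}$, we have $\ve^3=\ve^{1+\eta}\cdot\ve^{2-\eta}\le \ve^{1+\eta}|H_0[\phi]|$, so choosing $\ve$ small enough depending on $\eta$ and choosing $U_0=U_0(\ve,\eta,C_2)$ large enough that $\ve U_0^{-\beta'}\le \tfrac{1}{4}M^2|H_0[\phi]|$, I obtain the pointwise lower bound
\begin{equation*}
  |\partial_{u_\infty}\psi|\ge \tfrac{1}{2}M^2|H_0[\phi]|u_\infty^{-2}\quad\text{on }\mathfrak t_\beta\cap\{u_\infty\ge U_0\}.
\end{equation*}

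Next, I use the Eddington--Finkelstein identity \eqref{eq:tortoise-3} to control the degenerate factor. Near $\mathcal H^+$, \eqref{eq:tortoise-3} yields $\bar r_\infty-M\le 2M^2(u_\infty-v)^{-1}$ once $u_\infty-v$ is sufficiently large, hence
\begin{equation*}
  -\bar\nu_\infty=(1-M/\bar r_\infty)^2=\frac{(\bar r_\infty-M)^2}{\bar r_\infty^2}\le \frac{C}{(u_\infty-v)^2},
\end{equation*}
so $1/(-\bar\nu_\infty)\gtrsim (u_\infty-v)^2$ in the part of $\mathfrak t_\beta\cap\{\bar r_\infty\le R\}$ where $u_\infty-v$ is large (which includes all of $\mathfrak t_\beta\cap\{u_\infty\ge U_0'\}$ after further increasing $U_0'$). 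Combining with the previous step gives, on $\mathfrak t_\beta\cap\{u_\infty\ge U_0'\}\cap\{\bar r_\infty\le R\}$,
\begin{equation*}
  \frac{(\partial_{u_\infty}\psi)^2}{-\bar\nu_\infty}\gtrsim H_0[\phi]^2\cdot\frac{(u_\infty-v)^2}{u_\infty^4}.
\end{equation*}

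Finally, I integrate dyadically. For any integer $n$ with $2^n\ge U_0'$, the region $\Sigma_n\doteq\{(u_\infty,v):u_\infty\in[2^n,2^{n+1}],\ 0\le v\le u_\infty-u_\infty^\beta-C_\beta\}$ lies in $\mathfrak t_\beta\cap\{\bar r_\infty\le R\}$ (using that $u_\infty^\beta+C_\beta$ exceeds the threshold $c_R$ determined by $R$ through \eqref{eq:tortoise-3} once $n$ is large). Changing variables to $s=u_\infty-v$ on each $\Sigma_n$,
\begin{equation*}
  \iint_{\Sigma_n}\frac{(u_\infty-v)^2}{u_\infty^4}\,du_\infty dv=\int_{2^n}^{2^{n+1}}\frac{1}{u_\infty^4}\int_{u_\infty^\beta+C_\beta}^{u_\infty}s^2\,ds\,du_\infty\gtrsim \int_{2^n}^{2^{n+1}}\frac{u_\infty^3}{u_\infty^4}\,du_\infty =\log 2,
\end{equation*}
since $\beta<1$. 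Hence each dyadic piece contributes a constant bounded below by $cH_0[\phi]^2\log 2$, and summing the disjoint contributions over $n\ge n_0$ gives $+\infty$, proving \eqref{eq:blowup}.

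The main obstacle is purely bookkeeping: one has to verify that $\Sigma_n\subset \mathfrak t_\beta\cap\{\bar r_\infty\le R\}$ for $n$ sufficiently large (this requires only \eqref{eq:tortoise-3} and the definition of $\mathfrak t_\beta$), and that the ``Aretakis lower bound'' for $\partial_{u_\infty}\psi$ from \eqref{est:asym_partial_u} is effective on all of $\Sigma_n$, which uses exactly the quantitative nondegeneracy \eqref{eq:aretakis-nonzero}.
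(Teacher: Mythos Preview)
Your proof is correct and takes a genuinely more direct route than the paper's. The paper argues by running the $p=3$ case of the horizon hierarchy identity (\cref{lem:horizon-main} with $p=3$) over a truncated region $\mathcal A_{u_f}$: the initial data flux on $\underline C{}_0$ is shown to diverge like $|H_0[\phi]|^2\log u_f$ via the expansion \eqref{eq:expansion-data}, the $\Gamma$ flux is bounded, and the mixed bulk term $\iint (\bar r_\infty-M)^{-1}\varkappa\,\psi\,\partial_{u_\infty}\psi$ must be carefully estimated---this last step requires splitting along $\mathfrak t_\beta$, invoking the $p=5/2$ bulk, and crucially the sharp $\psi$ asymptotic of \cref{prop:asym_psi}; even then the bulk is only shown to be $O(\ve^4\log u_f)$, and the condition \eqref{eq:aretakis-nonzero} is needed to ensure this is beaten by $|H_0[\phi]|^2\log u_f$.

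You instead go straight from the pointwise asymptotic of \cref{prop:asym_partial_u} and the tortoise relation \eqref{eq:r_u_rel2} to a pointwise lower bound for the integrand on $\mathfrak t_\beta\cap\{u_\infty\ge U_0\}$, and then integrate dyadically. This bypasses the energy identity entirely and, in particular, avoids any appeal to \cref{prop:asym_psi} or to the bulk terms in the hierarchy. The paper's approach has the virtue of making explicit that $p=3$ is the borderline exponent in the horizon hierarchy (and fits the narrative of \cref{sec:horizon-hierarchy}), but for the bare statement \eqref{eq:blowup} your argument is both shorter and uses strictly less machinery.
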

\begin{rk}\label{rk:finite}
Note that by the Morawetz estimate \cref{prop:Morawetz}, the integral in \eqref{eq:blowup}, taken instead over $\mathcal D_\infty\cap\{R'\le\bar r_\infty\le R\}$ where $M<R'<R$, is finite.
\end{rk}

\begin{proof}[Proof of \cref{prop:sharp}] We repeat the main calculation for the horizon hierarchy, \cref{lem:horizon-main}, with $p=3$. Let
\begin{equation*}
    \mathcal A_{u_f}\doteq \mathcal D_\infty\cap \{r\le\Lambda\}\cap\{u_\infty\le u_f\},
\end{equation*}
where we will let $u_f\to\infty$ and $v_0$ will be chosen later.

Integrating the expression \begin{equation*}
    \iint_{ \mathcal A_{u_f}} \partial_v\left((\bar r-M)^{-1}\frac{(\partial_{u_\infty}\psi)^2}{-\bar\nu_\infty}\right) du_\infty dv
\end{equation*} by parts and using the wave equation \eqref{eq:wave-equation-psi}, we find that
\begin{multline}\label{eq:p=3}
   - \int_{\underline C{}_{0}\cap \mathcal A_{u_f}} (\bar r_\infty-M)^{-1} \frac{(\partial_{u_\infty}\psi)^2}{-\bar\nu_\infty}\,du_\infty +\int_{\Gamma \cap \mathcal A_{u_f}}n^v_\Gamma (\bar r_\infty-M)^{-1}\frac{(\partial_{u_\infty}\psi)^2}{-\bar\nu_\infty}\,ds\\ =-\iint_{\mathcal A_{u_f}}\left(\frac{2M+\bar r_\infty}{\bar r_\infty^3}\right)\frac{(\partial_{u_\infty}\psi)^2}{-\bar\nu_\infty}\,du_\infty dv -\iint_{\mathcal A_{u_f}}\frac{4\kappa_\infty \nu_\infty}{r\bar \nu_\infty}(\bar r_\infty-M)^{-1}\varkappa\psi\partial_{u_\infty}\psi\,du_\infty dv.
\end{multline}
The term along $\Gamma$ is $\les\ve^2$ using \eqref{eq:Mor-Gamma-1} and \eqref{eq:mor-aux-11}. For $\ve$ sufficiently small using  \eqref{eq:expansion-data} with $\eta=1-\beta'$ and the condition \eqref{eq:aretakis-nonzero}, we have
\begin{equation*}
    \big(u^2_\infty \partial_{u_\infty}\psi(u_\infty,0)\big)^2\gtrsim |H_0[\phi]|^2 -O(\ve^2u_\infty^{-\beta'})
\end{equation*}
Therefore, by \eqref{eq:r_u_rel2}, it holds that
\begin{multline}\label{eq:p=3-1}
       \int_{\underline C{}_{0}\cap \mathcal A_{u_f}} (\bar r_\infty-M)^{-1} \frac{(\partial_{u_\infty}\psi)^2}{-\bar\nu_\infty}\,du_\infty \gtrsim  \int_{\underline C{}_{0}\cap \mathcal A_{u_f}} (\bar r_\infty-M)^{-3} (\partial_{u_\infty}\psi)^2\,du_\infty \gtrsim    \int_{\underline C{}_{0}\cap \mathcal A_{u_f}}u_\infty^3 (\partial_{u_\infty}\psi)^2\,du_\infty\\\gtrsim\int_{\underline C{}_{0}\cap \mathcal A_{u_f}}u_\infty^{-1}\big(|H_0[\phi]|^2-O(\ve^2 u_\infty^{-\beta'})\big)\,du_\infty\gtrsim |H_0[\phi]|^2\log(u_f)-\varepsilon^2,
\end{multline}
where we have used that $u_\infty^{-1-\beta'}$ is integrable.

We now estimate the mixed term on the right-hand side of \eqref{eq:p=3}. We use the geometric estimate \eqref{eq:varkappa-final} and break up the region of integration using $\mathfrak t_\beta$ to estimate \begin{multline}\label{eq:p=3-2}
   \left| \iint_{\mathcal A_{u_f}}\frac{4\kappa_\infty \nu_\infty}{r\bar \nu_\infty}(\bar r_\infty-M)^{-1}\varkappa\psi\partial_{u_\infty}\psi\,du_\infty dv\right|\les \iint_{\mathcal A_{u_f}}|\psi||\partial_{u_\infty}\psi|\,du_\infty dv \\
   +\left(\iint_{\mathcal A_{u_f}\setminus \mathfrak t_\beta}+\iint_{\mathcal A_{u_f}\cap \mathfrak t_\beta}\right)\ve^2(\bar r_\infty-M)^{-1}\tau^{-2+\delta}|\psi||\partial_{u_\infty}\psi|\,du_\infty dv  \doteq \mathrm{I}+\mathrm{II}+\mathrm{III}.
\end{multline}
By \eqref{eq:r_u_rel2}, we have $(\bar r_\infty-M)^{-1}\les u-v\le u$ in $\mathcal A_{u_f}\cap\{\bar r_\infty\le \tfrac 12\Lambda\}$ and by definition of $\sigma_\beta$, $(\bar r_\infty-M)^{-1}\les 1+v$ in $\mathcal A_{u_f}\setminus \mathfrak t_\beta$. It follows that $\mathrm{II}\les \ve^2\mathrm{I}$. Using the bulk terms in the energy estimate \cref{prop:H-h-1} with $p=5/2$ (after passing to $\tau_f\to\infty$, refer also to the proof of \cref{thm:stability} in \cref{sec:putting-together}), we have $\mathrm{I}\les \underline{\mathcal E}^\infty_{5/2}(1)\les\ve^2$. By \cref{prop:asym_psi}, we have $|\psi| \les \ve u^{-1}_\infty +\ve v^{-1}$ in $\mathfrak t_\beta$ and by \cref{prop:asym_partial_u}, we have $|u_\infty^2\partial_{u_\infty}\psi|\les \ve$ in $\mathfrak t_\beta$. Therefore, we estimate
\begin{multline}\label{eq:p=3-3}
    \mathrm{III}\les \iint_{\mathcal A_{u_f}\cap \mathfrak t^1_\beta}\ve^2u_\infty^{-1}v^{-2+\delta}|\psi||u_\infty^2\partial_{u_\infty}\psi|\,du_\infty dv\\\les \ve^4 + \ve^4\iint_{\{1\le u_\infty\le u_f\}\cap\{v\ge 1\}} \big(u_\infty^{-2}v^{-2+\delta}+u_\infty^{-1}v^{-3+\delta}\big) du_\infty dv\les \ve^4(1+\log(u_f)).
\end{multline}

Combining \eqref{eq:p=3}, \eqref{eq:p=3-1}, \eqref{eq:p=3-2}, and \eqref{eq:p=3-3}, we find 
\begin{equation*}
    \iint_{\mathcal A_{u_f}}\frac{(\partial_{u_\infty}\psi)^2}{-\bar\nu_\infty}\,du_\infty dv\gtrsim \big(|H_0[\phi]|^2-\ve^4\big)\log(u_f)-\ve^2\gtrsim |H_0[\phi]|^2\log(u_f)-\ve^2
\end{equation*}
for $\ve$ sufficiently small. Letting $u_f\to\infty$ completes the proof.
\end{proof}

\begin{rk}\label{rk:failure} There is an interesting difference between the proof of \cref{prop:sharp} and the corresponding fact in the uncoupled case. In our case, the mixed bulk term on the right-hand side of \eqref{eq:p=3} is in general \emph{not} bounded as $u_f\to\infty$, while it \emph{is} bounded in the uncoupled case. This is because the dynamical redshift factor $\varkappa$ does not in general exactly cancel out the singular factor of $(\bar r_\infty-M)^{-1}$. Though the growth rate is logarithmic, just as with the first term on the left-hand side of \eqref{eq:p=3}, it comes with an additional smallness factor which allows the proof to go through. The cost is that we can only prove \cref{prop:sharp} under the ``quantitative nonvanishing'' assumption \eqref{eq:aretakis-nonzero} on the asymptotic Aretakis charge $H_0[\phi]$. In the uncoupled case, where $H_0[\phi]$ is constant, it suffices to assume that $H_0[\phi]\ne 0$.

On the other hand, in the uncoupled case on a \textit{subextremal} Reissner--Nordstr\"{o}m background, the same bulk term is also infinite. It should be noted though, in the uncoupled subextremal case, that it is infinite in a way that cancels out the other unbounded term in \eqref{eq:p=3} (the first term on the left-hand side). As a result, in the uncoupled subextremal case, there is no contradiction with the fact that the spacetime integral \eqref{eq:blowup} is finite due to the redshift effect \cite{dafermos2009red}. See \cite[Section 4.6]{AAG-trapping} for more details.
\end{rk}

\printbibliography[heading=bibintoc] 
\end{document}